\newtheorem{theorem}{Theorem}[section]
\newtheorem{lemma}[theorem]{Lemma}
\newenvironment{thmlist}{
  \begin{enumerate}
  \renewcommand{\theenumi}{(\@arabic\c@enumi)}
  
  \renewcommand{\p@enumi}{\thetheorem\ }
}{\end{enumerate}}
\renewcommand{\paragraph}[1]{\textbf{\textit{#1.}} }
\newcommand{\ZKSC}{\textsc{ZK-SecreC}\xspace}
\newcommand{\ZZ}{\mathbb{Z}}
\newcommand{\NN}{\mathbb{N}}
\newcommand{\FF}{\mathbb{F}}
\newcommand{\BB}{\mathbb{B}}
\newcommand{\GG}{\mathbb{G}}
\newcommand{\subtype}{\mathrel{\mathchar"13C\mathchar"3A}}
\newcommand{\set}[1]{\left\{#1\right\}}
\newcommand{\uopl}[1]{\mathop{\mathgroup\symoperators #1}\nolimits}
\DeclareMathSymbol{!}{\mathbin}{operators}{"21}
\DeclareMathSymbol{\simop}{\mathop}{symbols}{"18}
\let\figsize=\footnotesize
\begin{document}

%% Title information
%\title[Short Paper]{Paper}         %% [Short Title] is optional;
\title{\ZKSC: a Domain-Specific Language for Zero-Knowledge Proofs}
                                        %% when present, will be used in
                                        %% header instead of Full Title.
%\titlenote{with title note}             %% \titlenote is optional;
                                        %% can be repeated if necessary;
                                        %% contents suppressed with 'anonymous'
%\subtitle{Subtitle}                     %% \subtitle is optional
%\subtitlenote{with subtitle note}       %% \subtitlenote is optional;
                                        %% can be repeated if necessary;
                                        %% contents suppressed with 'anonymous'

%% Author information
%% Contents and number of authors suppressed with 'anonymous'.
%% Each author should be introduced by \author, followed by
%% \authornote (optional), \orcid (optional), \affiliation, and
%% \email.
%% An author may have multiple affiliations and/or emails; repeat the
%% appropriate command.
%% Many elements are not rendered, but should be provided for metadata
%% extraction tools.

%% Author with single affiliation.
\author{Dan Bogdanov}
%\authornote{with author1 note}          %% \authornote is optional;
                                        %% can be repeated if necessary
%\orcid{}             %% \orcid is optional
\affiliation{
%  \position{Position1}
  \department{Information Security Research Institute}              %% \department is recommended
  \institution{Cybernetica AS}            %% \institution is required
  \streetaddress{Narva Rd. 20}
  \city{Tartu}
%  \state{State1}
  \postcode{51009}
  \country{Estonia}                    %% \country is recommended
}
\email{dan.bogdanov@cyber.ee}          %% \email is recommended

%% Author with single affiliation.
\author{Joosep Jääger}
%\authornote{with author1 note}          %% \authornote is optional;
                                        %% can be repeated if necessary
%\orcid{}             %% \orcid is optional
\affiliation{
%  \position{Position1}
  \department{Information Security Research Institute}              %% \department is recommended
  \institution{Cybernetica AS}            %% \institution is required
  \streetaddress{Narva Rd. 20}
  \city{Tartu}
%  \state{State1}
  \postcode{51009}
  \country{Estonia}                    %% \country is recommended
}
%\email{joosep.jaager@cyber.ee}          %% \email is recommended

%% Author with two affiliations and emails.
\author{Peeter Laud}
%\authornote{with author2 note}          %% \authornote is optional;
                                        %% can be repeated if necessary
\orcid{0000-0002-9030-8142}             %% \orcid is optional
\affiliation{
%  \position{Position2a}
  \department{Information Security Research Institute}             %% \department is recommended
  \institution{Cybernetica AS}           %% \institution is required
  \streetaddress{Narva Rd. 20}
  \city{Tartu}
%  \state{State2a}
  \postcode{51009}
  \country{Estonia}                   %% \country is recommended
}
\email{peeter.laud@cyber.ee}         %% \email is recommended
%\affiliation{
%  \position{Position2b}
%  \department{Institute of Computer Science}             %% \department is recommended
%  \institution{University of Tartu}           %% \institution is required
%  \streetaddress{Narva Rd. 18}
%  \city{Tartu}
%  \state{State2b}
%  \postcode{51009}
%  \country{Estonia}                   %% \country is recommended
%}
%\email{peeter.laud@ut.ee}         %% \email is recommended

%% Author with single affiliation.
\author{Härmel Nestra}
%\authornote{with author1 note}          %% \authornote is optional;
                                        %% can be repeated if necessary
\orcid{0000-0001-7050-7171}             %% \orcid is optional
\affiliation{
%  \position{Position1}
  \department{Information Security Research Institute}              %% \department is recommended
  \institution{Cybernetica AS}            %% \institution is required
  \streetaddress{Narva Rd. 20}
  \city{Tartu}
%  \state{State1}
  \postcode{51009}
  \country{Estonia}                    %% \country is recommended
}
\email{harmel.nestra@cyber.ee}          %% \email is recommended

%% Author with single affiliation.
\author{Martin Pettai}
%\authornote{with author1 note}          %% \authornote is optional;
                                        %% can be repeated if necessary
%\orcid{}             %% \orcid is optional
\affiliation{
%  \position{Position1}
  \department{Information Security Research Institute}              %% \department is recommended
  \institution{Cybernetica AS}            %% \institution is required
  \streetaddress{Narva Rd. 20}
  \city{Tartu}
%  \state{State1}
  \postcode{51009}
  \country{Estonia}                    %% \country is recommended
}
\email{martin.pettai@cyber.ee}          %% \email is recommended

%% Author with single affiliation.
\author{Jaak Randmets}
%\authornote{with author1 note}          %% \authornote is optional;
                                        %% can be repeated if necessary
%\orcid{}             %% \orcid is optional
\affiliation{
%  \position{Position1}
  \department{Information Security Research Institute}              %% \department is recommended
  \institution{Cybernetica AS}            %% \institution is required
  \streetaddress{Narva Rd. 20}
  \city{Tartu}
%  \state{State1}
  \postcode{51009}
  \country{Estonia}                    %% \country is recommended
}
\email{jaak.randmets@cyber.ee}          %% \email is recommended

%% Author with single affiliation.
\author{Ville Sokk}
%\authornote{with author1 note}          %% \authornote is optional;
                                        %% can be repeated if necessary
%\orcid{}             %% \orcid is optional
\affiliation{
%  \position{Position1}
  \department{Information Security Research Institute}              %% \department is recommended
  \institution{Cybernetica AS}            %% \institution is required
  \streetaddress{Narva Rd. 20}
  \city{Tartu}
%  \state{State1}
  \postcode{51009}
  \country{Estonia}                    %% \country is recommended
}
%\email{ville.sokk@cyber.ee}          %% \email is recommended

%% Author with single affiliation.
\author{Kert Tali}
%\authornote{with author1 note}          %% \authornote is optional;
                                        %% can be repeated if necessary
%\orcid{}             %% \orcid is optional
\affiliation{
%  \position{Position1}
  \department{Information Security Research Institute}              %% \department is recommended
  \institution{Cybernetica AS}            %% \institution is required
  \streetaddress{Narva Rd. 20}
  \city{Tartu}
%  \state{State1}
  \postcode{51009}
  \country{Estonia}                    %% \country is recommended
}
\email{kert.tali@cyber.ee}          %% \email is recommended

%% Author with single affiliation.
\author{Sandhra-Mirella Valdma}
%\authornote{with author1 note}          %% \authornote is optional;
                                        %% can be repeated if necessary
%\orcid{}             %% \orcid is optional
\affiliation{
%  \position{Position1}
  \department{Information Security Research Institute}              %% \department is recommended
  \institution{Cybernetica AS}            %% \institution is required
  \streetaddress{Narva Rd. 20}
  \city{Tartu}
%  \state{State1}
  \postcode{51009}
  \country{Estonia}                    %% \country is recommended
}
\email{sandhra-mirella.valdma@cyber.ee}          %% \email is recommended

\lstset{language=ZK-SecreC,basicstyle=\figsize\sffamily,columns=fullflexible,tabsize=4,breaklines=true,breakatwhitespace=true}
\ebproofset{rule margin=0.25ex,separation=1em}

%% Abstract
%% Note: \begin{abstract}...\end{abstract} environment must come
%% before \maketitle command
\begin{abstract}
We present \ZKSC, a domain-specific language for zero-knowledge proofs. We present the rationale for its design, its syntax and semantics, and demonstrate its usefulness on the basis of a number of non-trivial examples. The design features a type system, where each piece of data is assigned both a confidentiality and an integrity type, which are not orthogonal to each other. We perform an empiric evaluation of the statements produced by its compiler in terms of their size. We also show the integration of the compiler with the implementation of a zero-knowledge proof technique, and evaluate the running time of both Prover and Verifier.
\end{abstract}

%% 2012 ACM Computing Classification System (CSS) concepts
%% Generate at 'http://dl.acm.org/ccs/ccs.cfm'.

\begin{CCSXML}
<ccs2012>
   <concept>
       <concept_id>10003752.10010124.10010125.10010130</concept_id>
       <concept_desc>Theory of computation~Type structures</concept_desc>
       <concept_significance>500</concept_significance>
       </concept>
   <concept>
       <concept_id>10003752.10010124.10010125.10010127</concept_id>
       <concept_desc>Theory of computation~Functional constructs</concept_desc>
       <concept_significance>300</concept_significance>
       </concept>
   <concept>
       <concept_id>10002978.10002991.10002995</concept_id>
       <concept_desc>Security and privacy~Privacy-preserving protocols</concept_desc>
       <concept_significance>500</concept_significance>
       </concept>
 </ccs2012>
\end{CCSXML}

\ccsdesc[500]{Theory of computation~Type structures}
\ccsdesc[300]{Theory of computation~Functional constructs}
\ccsdesc[500]{Security and privacy~Privacy-preserving protocols}

%% End of generated code

%% Keywords
%% comma separated list
\keywords{domain-specific languages, type and effect systems, zero-knowledge proofs}  %% \keywords are mandatory in final camera-ready submission

%% \maketitle
%% Note: \maketitle command must come after title commands, author
%% commands, abstract environment, Computing Classification System
%% environment and commands, and keywords command.
\maketitle

\section{Introduction}\label{intro}

%This is a paper as seminal as Wadler~\cite{DBLP:journals/mscs/Wadler92}.

Zero-knowledge proofs (ZKP)~\cite{DBLP:conf/stoc/GoldwasserMR85} are two-party protocols between Prover and Verifier, where the former attempts to convince the latter that he has a piece of knowledge that validates a statement, while not revealing anything about this knowledge. Here this statement is seen as a binary relation $R$ that takes as inputs the \emph{instance} --- common knowledge of Prover and Verifier ---, and the \emph{witness} --- Prover's private knowledge ---, and decides whether the latter validates the former. Among the first practical instances of ZKP was privacy-preserving identification~\cite{DBLP:journals/joc/Schnorr91}, where both the client and the server knew client's public key $h$ --- an element in a cyclic group $\GG$ of size $p$ with a hard discrete logarithm problem ---, and the client proved to the server that he knew the discrete logarithm $x$ of $h$ (to the basis of a generator $g$). Considering $g$ a public parameter, the relation $R$ here is a subset of $\GG\times\ZZ_p$, where $\ZZ_p=\{0,1,\ldots,p-1\}$ is the ring of integers modulo $p$. We have $(h,x)\in R$ iff $g^x=h$. Both the client and the server knew the first component, but only the client knew the second one. The protocol convinced the server of having established communication with someone who knows $x$ such that $(h,x)\in R$, but gave no further information about $x$.

Later advances in ZKP~\cite{DBLP:conf/stoc/IshaiKOS07,ZKBoo,DBLP:conf/eurocrypt/Groth16,bunz2018bulletproofs} allow the creation of ZKP for statements that are significantly bigger and more complex than the described identification protocol, with privacy-preserving distributed ledgers~\cite{DBLP:conf/sp/MiersG0R13,narula2018zkledger} showing a large variety. More applications from a heterogeneous variety of areas are expected in the future~\cite{ZKProofApplicationsTrackProceeding}. All these applications need tools for expressing the binary relation.

All cryptographic techniques for ZKP expect the relation to be expressed as an arithmetic circuit over some finite ring (for most techniques, a field, often with additional constraints), handling only a very limited number of operations. For example, even if there existed a field $\FF$ that suitably embedded $\GG$ and $\ZZ_p$ from our first example, and there were ZKP techniques for circuits over $\FF$, the relation $R$ probably would not be directly representable as an arithmetic circuit over $\FF$, because exponentiation is not among the supported operations. Additions and multiplications are supported, hence we could express $R$ if we represented $x$ as a sequence of bits. These bits can be computed by Prover, and added to the second component of $R$ (changing its type in the process). If $R$ is but a part of a larger relation, then we may need $x$ being represented in various ways. Prover can add all these representations to the inputs of that relation. But then Verifier needs to be convinced that all these representations correspond to the same value of $x$, hence there need to be checks for that. When expressing a relation for the subsequent use of ZKP techniques, we want to state what extra values should be added to the witness and the instance in order to make the circuit simpler (or computable at all), and which checks should be included.

Describing and encoding a circuit and the \emph{expansion} of its inputs directly is error-prone; it is difficult to specify the circuit, as well as to understand what it does. Such state of affairs may be acceptable if ZK proofs are expected to be given only for a small number of relations $R$, but a high-level domain-specific language and associated compilation tools for specifying the relations $R$ are desirable for wide-spread adoption. The specification should mostly be in terms used by common programming languages, exposing only those ZKP-specific details that are highly significant for obtaining a circuit that is handled efficiently by the cryptographic technique and does not reveal Prover's inputs. The specified relation should be automatically translated into the arithmetic circuit, while being optimized for the performance profiles of ZKP techniques. Besides the construction of the arithmetic circuit, the language and compilation tools must help in the preparation of common and Prover's inputs for it. The features of the language have to support the execution by the two parties and a common ZKP protocol between them, with these components not fully trusting each other.

In this paper, we propose \ZKSC --- a programming language for specifying relations between instances and witnesses, together with a toolchain that produces circuits suitable as inputs for ZKP techniques. In their design, we have aimed to tackle the following issues.

\paragraph{Execution at multiple locations} A \ZKSC program specifies, and the \ZKSC compiler produces the description of a circuit, which both Prover and Verifier use as one of the inputs to a cryptographic technique for ZK proofs. Besides running the cryptographic protocol, both parties may need to run some parts of the program locally for the purpose of increasing the efficiency of computations. A \ZKSC program can thus specify local computations, and the compiler can produce code that Verifier will and Prover should execute with their inputs.

\paragraph{Compilation into a circuit} Having the arithmetic circuit as an intermediate representation makes our compiler agnostic towards the used cryptographic techniques, and allows the compilation result to be retargeted easily, so that it may be created once and then used multiple times by Prover to convince Verifier that it knows the witnesses for several different instances. Even though the circuit may only contain a very restricted set of operations, targeting only a particular ZKP technology will not significantly increase the supported set. Our type system for \ZKSC makes sure that non-supported operations cannot be added to the arithmetic circuit, and that the shape of the circuit and the operations in its nodes are public.

\paragraph{Witness and instance expansion} Adding more inputs to the circuit and verifying that they are correctly related to previous inputs is a pervasive technique for improving the efficiency of computations of the relation. \ZKSC allows to freely mix computations on circuit and off circuit. The results of the latter become additional inputs of the circuit. Tackling these issues allows us to support the deployment model that we consider likely for many use-cases of ZKP; the model is depicted in Fig.~\ref{fig:deploymentmodel}.
% The type system makes sure that the construction of circuits requires no data that may be unavailable.

\begin{figure}
\includegraphics[width=\textwidth]{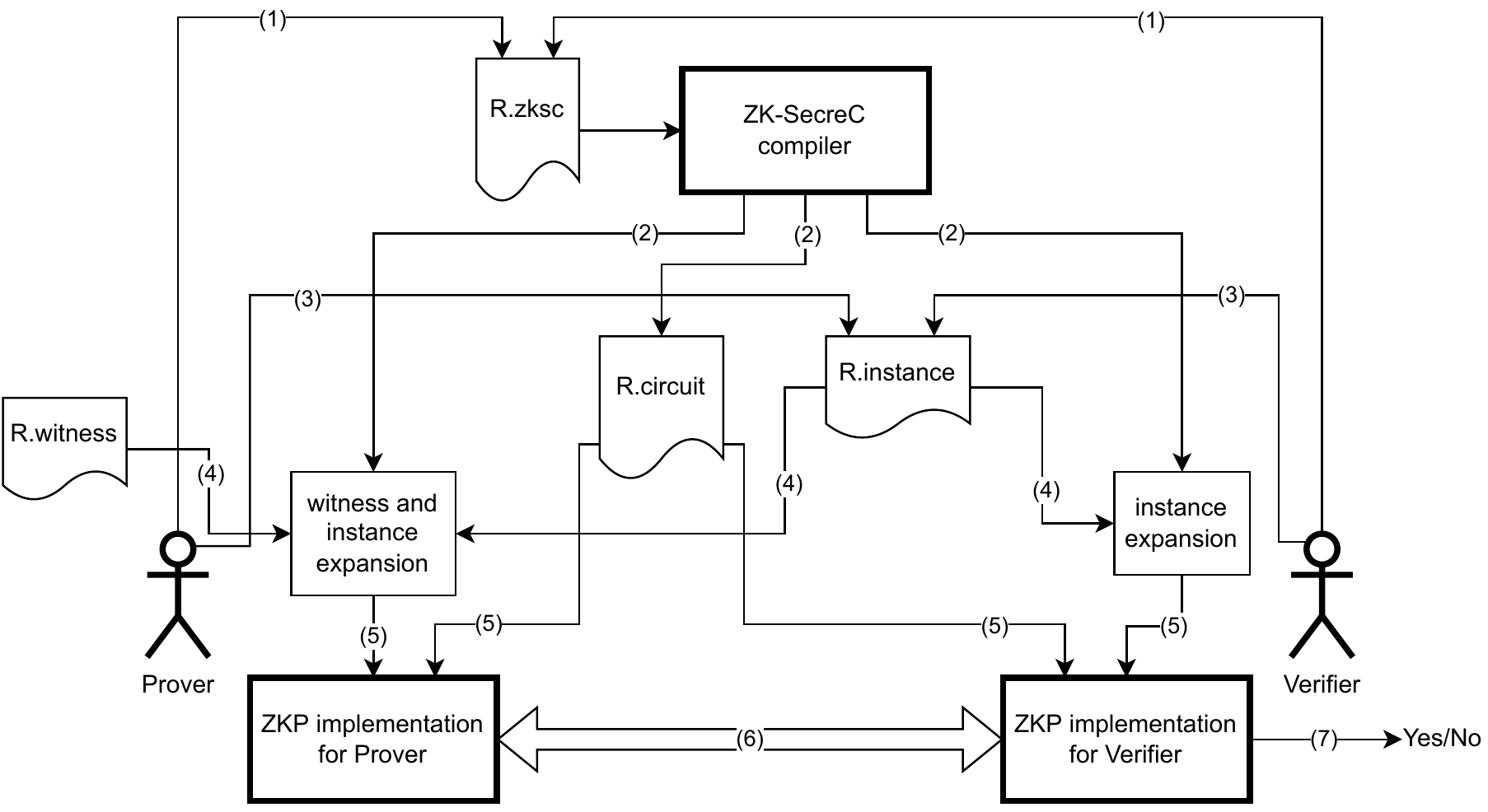}

\caption{Deploying ZK proofs with \ZKSC tools. (1) Prover and Verifier agree on the description of the relation $R$. (2) The relation $R$ is compiled, perhaps by a service provider, perhaps by Prover and Verifier themselves, producing the description of the circuit, and executables for processing the instance and witness. (3) Prover and Verifier agree on the instance for which the proof will be made; the Prover must also have a witness. (4) Prover and Verifier expand the given instance and witness. (5) The implementation of a (cryptographic) ZKP technology gets as input the circuit, and the instance and the witness in the form ingested by the circuit. (6) ZKP protocol is executed. (7) Verifier obtains the verification result.}\label{fig:deploymentmodel}
\end{figure}

\paragraph{Semantics} The statement to be proven is commonly seen as a binary relation, but its meaning is really the left projection of that relation. The semantics of \ZKSC precisely defines the meaning of its programs%, and their equivalence
.

The language constructions and the type system of \ZKSC capture the essential details common to all ZK proofs. A ZK proof aims to give both confidentiality and integrity guarantees. \ZKSC precisely fixes the possible movements of data between different domains, enforcing the ``no read up'' and ``no write down'' properties without exceptions. The viability of such strong restrictions shows that the confidentiality guarantees are similar for all ZK proofs, and do not really depend on the particular relation, to which the ZK proofs technique has been applied. On the other hand, the necessary checks for making sure that the relation is satisfied are very much a part of the description of that relation; their inclusion is the responsibility of the programmer encoding this relation in \ZKSC. Hence, the language cannot offer much of \emph{formal} support in verifying that the desired integrity properties have been specified by the programmer.

We start this paper in Sect.~\ref{sec:zkscexample} with an example program in \ZKSC, showing off its features. We continue with the description of the syntax of \ZKSC in Sect.~\ref{sec:syntax} and its type system in Sect.~\ref{sec:typesystem}. The execution of a  well-typed program can be split between different domains in the manner that we desire, with necessary data available at each domain. An arithmetic circuit, suitable as an input to a ZKP technique, can also be statically extracted from a well-typed program. In Sect.~\ref{sec:semantics} we give a formal semantics of \ZKSC, stating the language that is accepted by a \ZKSC program. Following up on it, we describe the compilation into an arithmetic circuit in Sect.~\ref{sec:compilation}. We continue with the evaluation of expressivity and efficiency of \ZKSC, first showing in Sect.~\ref{sec:extensions} how some data structures and methods useful for ZKP can be straightforwardly encoded, and then discussing in Sect.~\ref{sec:evaluation} the circuits output by \ZKSC compiler for various example problems. We finish the paper in Sect.~\ref{sec:relwork} by comparing \ZKSC against other existing languages and means of specifying statements proved in ZK. The appendices provide full proofs of the lemmas and theorems.

%\todo[inline]{The structure of this paper... how the sections cover the listed issues.}
%[Walﬁsh, M., Blumberg, A.J.: Verifying computations without reexecuting them. CACM 58(2), 74–84 (2015)]

\section{\ZKSC on an Example}\label{sec:zkscexample}
%\todo[inline]{Peeter rahulikult m\~otleb selle jaotise l\"abi}
Let us start the description of the language with an example. Suppose that Prover and Verifier both know a large integer~$z$. Prover wants to convince Verifier that he also knows a factor~$x$ of~$z$ such that $1<x<z$. In ZKP terms, $z$~is the \emph{instance} and $x$~is the \emph{witness}. The relation between $z$ and~$x$ can be specified in \ZKSC as shown in Fig.~\ref{language:factor}. This program specifies an arithmetic circuit that Prover and Verifier have to execute on top of a ZKP technology of their choice, as well as the local computations they both should perform. % for the execution to be successful.

\begin{figure}[tbp]
\begin{lstlisting}
fn bitextract_pre[@D](x : uint[N] $pre @D, fbw : uint $pre @public) -> list[bool[N] $pre @D] {
	let rec xx = for i in 0 .. fbw {                         // generates a list of length fbw
		if (i == 0) { x } else { xx[i - 1] / 2 }
	};
	for i in 0 .. fbw { // generates a list of booleans, which is returned
		let b = xx[i] % 2;
		b == 1
	}
}
fn check_bitextract[@D](x : uint[N] $post @D, xb : list[bool[N] $post @D]) {
	let mut s = xb[length(xb) - 1] as uint[N]; // variable s is mutable
	for i in 0 .. length(xb) - 1 { s = 2 * s + xb[length(xb) - i - 2] as uint[N]; };
	assert_zero(x - s);
}
fn bitextract[@D](x : uint[N] $post @D, fbw : uint $pre @public) -> list[bool[N] $post @D] {
	let xb_pre = bitextract_pre(x as $pre, fbw);
	let xb = for i in 0 .. length(xb_pre) { wire { xb_pre[i] } };
	if (@prover <= @D) { check_bitextract(x, xb); };
	xb
}
fn less_than[@D1, @D2, @D](x : uint[N] $post @D1, y : uint[N] $post @D2, fbw : uint $pre @public) -> bool[N] $post @D
	where @D1 <= @D, @D2 <= @D {
	let xb = bitextract(x as @D, fbw);
	let yb = bitextract(y as @D, fbw);
	// lexicographic comparison of lists of bits xb and yb omitted
}
fn main() {
	let fbw : uint $pre @public = get_public("fixed bit width");
	let z : uint[N] $post @verifier = wire { get_instance("z") };
	let x : uint[N] $post @prover = wire { get_witness("x") };
	let y = wire { z as $pre as @prover / x as $pre };
	assert_zero(x * y - (z as @prover));
	assert(less_than(x, z, fbw));
	assert(less_than(y, z, fbw));
}
\end{lstlisting}
\caption{\label{language:factor}
A \ZKSC program for verifying that Prover knows a proper factor of a given positive integer}
\end{figure}

Execution of the program starts from the function \lstinline{main}. It first loads a public constant \lstinline+fbw+, determining the size of the inputs handled by the circuit. As next, the two inputs $z$ and $x$ are loaded, with the former being visible to both Prover and Verifier, while the latter is seen by Prover only. Both inputs of the program are made inputs to the arithmetic circuit that the compilation of the program produces, using the \lstinline+wire+ construct. The \emph{stage} \lstinline+$pre+ denotes that the value is only available for the local computations by Prover and Verifier, while \lstinline+$post+ denotes its availability in the circuit. The \emph{domain} \lstinline+@prover+ denotes that only Prover knows this value, while \lstinline+@verifier+ denotes that Verifier also has knowledge of it, and \lstinline+@public+ means that this value is known at compile time. Note that the domain only informs about the availability of the value in \emph{local} computations.

The ZKP technology will interpret the operations of the arithmetic circuit over some finite field, and its structure may be important in specifying the relation. We have found that hiding the size of the field from the programmer is not a sensible choice, as too much depends on it. However, the code can be polymorphic over that size. Hence the \emph{data type} \lstinline+uint[N]+ denotes unsigned integers modulo an integer \lstinline+N+ (a compile-time parameter). Unbounded integers, available only in \lstinline+$pre+ stage, have the data type \lstinline+uint+.

In the next line, Prover \emph{expands the witness}. In order to show that $x$ divides~$z$, Prover has to come up with a value~$y$ satisfying $x\cdot y=z$. The division operation is only available for local computations. Hence, in the program, the values \lstinline+z+ and \lstinline+x+ are turned back into values available for local computations using the \lstinline+as $pre+ operation, and divided; the result of the division is turned to another input of the arithmetic circuit. The domain of \lstinline+z as $pre+ is still \lstinline+@verifier+, hence it is cast up to \lstinline+@prover+, because the division operator expects its arguments to have equal types. The type of~\lstinline+y+ is automatically inferred as \lstinline+uint[N] $post @prover+. The types of \lstinline+fbw+, \lstinline+z+, and \lstinline+x+ could have been inferred automatically, too.

The next line in the function \lstinline+main+ specifies the check that the expression \lstinline+x*y-z+ is evaluated to zero, i.e., the product of \lstinline+x+ and \lstinline+y+ indeed equals \lstinline+z+. Multiplication and subtraction operations are available in \lstinline+$post+ stage (both expect arguments to have equal types), likewise is the check that a number is zero. Even though the previous line has stated that Prover should compute $y$ as $z/x$, Verifier cannot trust that it was computed like this, hence this check is necessary. The computations done in the circuit are trusted by Verifier.

The last two lines in \lstinline+main+ check that $x<z$ and $y<z$. The comparison is made by obtaining the bitwise representations (of width \lstinline+fbw+) of both arguments, which can be straightforwardly compared. We see that the function \lstinline+less_than+ is polymorphic in its argument and result domains. The size \lstinline+N+ also matters for booleans, because they are represented as integers when translated to the arithmetic circuit.
%: each argument may have an arbitrary domain, and the resulting value has a domain of at least as high privacy as the domains of both arguments.

The bit representations are computed by the function \lstinline+bitextract+. Well, they are actually computed in the stage \lstinline+$pre+ by the function \lstinline+bitextract_pre+, making use of operations not available in the arithmetic circuit. The result is then added as inputs to the circuit by \lstinline+bitextract+ that also checks its correctness if needed. The correctness check is necessary only if the argument and the result are in the domain \lstinline+@prover+. In the case of domains of lower privacy, Verifier can check correctness of the result directly. Representing some integer as a sequence of bits is a typical instance/witness expansion.

As this example shows, \ZKSC (intentionally) enables interleaving circuit computation with local computation and specifying both of them within one language. Although one cannot be sure that computations outside the circuit are performed exactly like in the code, specifying the intended behavior using the same notation is good for readability and reduces the amount of code (as \ZKSC allows stage polymorphism). As mingling of circuit computation and local computation can be arbitrarily complex, attempts to keep the corresponding pieces of code separate lead to big difficulties.

%In types, domain and stage qualifiers are written postfix. Hence \lstinline{uint $pre @public} is the type of unsigned integers outside the circuit in the public domain. Stages and domains always start with character \lstinline{$} and \lstinline{@}, respectively, even in the case of variables (like in \lstinline{@D}). The \lstinline{[N]} parameter in brackets specifies the modulus under which integer operations are performed.

%Representing some integer as a sequence of bits is a typical instance/witness expansion. Another one is representing a permutation as a sequence of switch states in a permutation network. We are planning optimizations of the compiler that wrap values together with their representation into a data structure that is examined whenever the representation is needed (instead of recomputing the representation via \lstinline{bitextract} and friends). 

\newcommand{\mebbep}[1]{\mebbe{(#1)}} % maybe with grouping parentheses
\newcommand{\mebbe}[1]{{#1}^{?}}
\newcommand{\mnterm}[1]{\mebbe{\nterm{#1}}}
\newcommand{\mterm}[1]{\mebbe{\term{#1}}}
\newcommand{\alt}{\mid}
\newcommand{\term}[1]{\texttt{#1}}
\newcommand{\nterm}[1]{\textit{#1\/}}
\newcommand{\pp}{\mathrel{::=}}
\newcommand{\lizt}[1]{{#1}^{\star}}
\newcommand{\csl}[1]{{#1}^{\mbox{\tiny csl}}}
\newcommand{\inpars}[1]{\mbox{\lstinline{(}}#1\mbox{\lstinline{)}}}
\newcommand{\inbracks}[1]{\mbox{\lstinline{[}}#1\mbox{\lstinline{]}}}
\newcommand{\inbraces}[1]{\mbox{\lstinline!\{!}\ #1\ \mbox{\lstinline!\}!}}
\newcommand{\prestg}{\mbox{\lstinline+$pre+}}
\newcommand{\poststg}{\mbox{\lstinline+$post+}}
\newcommand{\publicdom}{\mbox{\lstinline+@public+}}
\newcommand{\verifierdom}{\mbox{\lstinline+@verifier+}}
\newcommand{\proverdom}{\mbox{\lstinline+@prover+}}
\newcommand{\unitty}{\mbox{\lstinline+()+}}
\newcommand{\listty}[1]{\mbox{\lstinline+list+}\inbracks{#1}}
\newcommand{\qualty}[3]{#1\ #2\ #3}
\newcommand{\qual}[2]{#1\ #2}
\newcommand{\ofty}[2]{#1\ \mbox{\lstinline+:+}\ #2}
\newcommand{\uintty}{\mbox{\lstinline+uint+}}
\newcommand{\boolty}{\mbox{\lstinline+bool+}}
\newcommand{\uintmodty}[1]{\uintty\inbracks{#1}}
\newcommand{\boolmodty}[1]{\boolty\inbracks{#1}}
\newcommand{\letexpr}[2]{\mbox{\lstinline+let+}\ #1\ \mbox{\lstinline+=+}\ #2}
\newcommand{\ifexpr}[3]{\mbox{\lstinline+if+}\ #1\ \inbraces{#2}\ \mbox{\lstinline+else+}\ \inbraces{#3}}
\newcommand{\forexpr}[4]{\mbox{\lstinline+for+}\ #1\ \mbox{\lstinline+in+}\ #2\ \mbox{\lstinline+..+}\ #3\ \inbraces{#4}}
\newcommand{\wireexpr}[1]{\mbox{\lstinline+wire+}\ \inbraces{#1}}
\newcommand{\assignexpr}[2]{#1\ \mbox{\lstinline+=+}\ #2}
\newcommand{\loadexpr}[2]{#1\inbracks{#2}}
\newcommand{\opexpr}[3]{#1\ #2\ #3}
\newcommand{\castexpr}[2]{#1\ \mbox{\lstinline+as+}\ #2}

\section{The Syntax of \ZKSC}\label{sec:syntax}

Figure~\ref{syntax:ebnf} describes the syntax of a subset of \ZKSC that contains its most important features via the extended Backus-Naur forms (EBNF). We use the following conventions:
\begin{itemize}
\item The l.h.s. and r.h.s. of productions are separated by the symbol $\pp$;
\item Non-terminals and terminals are written in italic and typewriter font, respectively;
\item Zero or more repetitions of a term~$t$ is denoted by $\lizt{t\/}$;
\item An optional occurrence of a term~$t$ is denoted by $\mebbe{t}$.
\end{itemize}

\begin{figure}[tbp]
\figsize\newcommand{\newitem}[1]{\multicolumn{3}{@{}l}{\bullet\hspace{1em}\mbox{#1}}}
\[\setlength{\arraycolsep}{4pt}\begin{array}{@{}ll@{}}
\begin{array}[t]{lcl}
\newitem{Top-level structure of \ZKSC programs:}\\
&&\\
\nterm{prog}
&\pp&\lizt{\nterm{fundef}}\\ % constants, type and struct definitions, imports omitted
\nterm{fundef}
&\pp&\nterm{sig}\ \inbraces{\nterm{ex}}\\
\nterm{sig}
&\pp&\nterm{name}\ \inpars{\lizt{\nterm{param}}}\ \nterm{rettype}\\
\nterm{param}
&\pp&\ofty{\nterm{name}}{\nterm{type}}\\ % ref parameters omitted
\nterm{rettype}
&\pp&\mbox{\lstinline{->}}\ \nterm{type}\\
&&\\
\newitem{Types:}\\
&&\\
\nterm{type}
&\pp&\nterm{qualtype}\ \mnterm{rettype}\\
\nterm{qualtype}
&\pp&\nterm{datatype}\ \mnterm{stage}\ \mnterm{domain}\\
\nterm{datatype}
&\pp&\nterm{valtype}
\alt\unitty
\alt\listty{\nterm{type}}\\
\nterm{valtype}
&\pp&\uintty
\alt\uintmodty{\nterm{mod}}
\alt\boolty
\alt\boolmodty{\nterm{mod}}\\
\nterm{mod}
&\pp&\nterm{uintlit}\\
%\nterm{tyargs}
%&\pp&\inbracks{\csl{\nterm{type}}}\\
\nterm{stage}
&\pp&\prestg
\alt\poststg\\
\nterm{domain}
&\pp&\publicdom
\alt\verifierdom
\alt\proverdom
\end{array}&\begin{array}[t]{lcl@{}}
\newitem{Expressions and statements:}\\
&&\\
\nterm{seqex}
&\pp&\nterm{ex}\ \mbox{\lstinline{;}}\ \nterm{ex}\\
\nterm{vdex}
&\pp&\nterm{vardef}\ \mbox{\lstinline{;}}\ \nterm{ex}\\
\nterm{vardef}
&\pp&\letexpr{\mebbe{\mbox{\lstinline{mut}}}\ \nterm{name}\ \mebbep{\mbox{\lstinline{:}}\ \nterm{type}}}{\nterm{ex}}\\
\nterm{ifex}
&\pp&\mbox{\lstinline{if}}\ \nterm{ex}\ \inbraces{\nterm{ex}}\ \mbox{\lstinline{else}}\ \inbraces{\nterm{ex}}\\ % else if ... omitted
\nterm{forex}
&\pp&\forexpr{\nterm{name}}{\nterm{ex}}{\nterm{ex}}{\nterm{ex}}\\
\nterm{wireex}
&\pp&\wireexpr{\nterm{ex}}\\ % shape argument omitted
\nterm{assignex}
&\pp&\assignexpr{\nterm{lvalex}}{\nterm{ex}}\\
\nterm{lvalex}
&\pp&\nterm{name}
\alt\nterm{loadex}\\
\nterm{loadex}
&\pp&\loadexpr{\nterm{lvalex}}{\nterm{ex}}\\ % slices omitted
\nterm{opex}
&\pp&\opexpr{\nterm{ex}}{\nterm{oper}}{\nterm{ex}}\\ % Extension in the left (assignment allowed)
\nterm{callex}
&\pp&\nterm{name}\ \inpars{\lizt{\nterm{ex}}}\\ % type arguments and ref parameters omitted
\nterm{castex}
&\pp&\castexpr{\nterm{ex}}{\nterm{casttype}}\\ % Extension (assignment allowed)
\nterm{casttype}
&\pp&\nterm{qualtype}
\alt\nterm{stage}
\alt\nterm{domain}
\end{array}\\
\multicolumn{2}{c}{
\begin{array}{lcl@{}}
&&\\
%\newitem{General classification of expressions:}\\
%&&\\
\nterm{ex}
&\pp&\nterm{seqex}
\alt\nterm{vdex}
\alt\nterm{ifex}
\alt\nterm{forex}
\alt\nterm{wireex}
\alt\nterm{assignex}
\alt\nterm{loadex}
\alt\nterm{opex} % unary omitted
\alt\nterm{callex}
\alt\nterm{castex}
\alt\nterm{name}
\alt\nterm{uintlit}
\alt\nterm{boollit}
\end{array}
}
\end{array}
\]
\caption{\label{syntax:ebnf}
The syntax of a subset of \ZKSC}
\end{figure}

A \ZKSC \emph{program} consists of \emph{function definitions}, including the main function. Every function definition contains its \emph{signature} and \emph{body}, the latter of which is an expression. Although \ZKSC supports type inference, due to which the types of local variables can be omitted, declaring the types of parameters in function definitions is still required.
% Explicit signatures of top-level definitions make the code more self-explaining. The return type of a function can be omitted; doing this is equivalent to declaring the unit type $\unitty$ as return type.

The type system of \ZKSC supports both parametric and ad-hoc polymorphism. The example in Fig.~\ref{language:factor} showed the use of \emph{type parameters} and \emph{type predicates} in the signatures of functions. An example of ad-hoc polymorphism occurs in the function \lstinline{bitextract}, where the control flow depends on the value of the type parameter \lstinline{@D}. Polymorphism is important in the reuse of \ZKSC code (i.e. libraries), but since it is orthogonal to the ZKP aspects of the language, we leave it out of consideration in Sects.~\ref{sec:syntax}--\ref{sec:compilation}. The polymorphic language can be converted to an internal monomorphic form using standard monomorphization techniques (code duplication, instantiation) since we do not support polymorphic recursion.

The outermost structure of a \emph{type} represents it as a (curried) function type with zero or more argument types, each of which is a \emph{qualified type}. All qualified types in \ZKSC are triples consisting of a \emph{data type}, a \emph{stage} (prefixed with \lstinline{$}) and a \emph{domain} (prefixed with \lstinline{@}); the latter two are called \emph{qualifiers}. Stage and domain are allowed to be omitted; an omitted stage is inferred, whereas an omitted domain is read as the domain \lstinline{@public}.

Currently, the primitive types in \ZKSC are \emph{Booleans}, \emph{(unsigned) integers}, and the unit type~$\unitty$ consisting of a single value. A list is a data structure of linear shape where all elements have equal type; the element type is given to the list type as parameter. The qualifiers of a list type do not necessarily coincide with those of the element type. For example, a list of domain $\publicdom$ can contain elements of domain $\proverdom$, meaning that the shape of the list is known to the compiler while the elements are known to Prover only. 

%We classify expressions as \emph{large} and \emph{small} according to whether they must contain a statement block or not. A \emph{block} consists of a sequence of zero or more statements, optionally followed by a small expression. The value of the block is the value of the expression if it is present and the single representative of the unit type otherwise.

The \emph{expressions} and \emph{statements} (which are not distinguished in \ZKSC; e.g. the if-then-else construct can be used both as a statement with side effects in the branches, or as a pure expression similar to the ternary conditional operator in C-style languages) are largely self-explanatory. Each expression returns a value (which may be of type~$\unitty$ in particular). Sequential execution ignores the return value of the first expression. A \emph{variable definition} introduces and initializes an immutable or a mutable (if the modifier \lstinline{mut} is present) variable; its type may be omitted, in which case it will be inferred.
%
%A \emph{statement} is either an expression or a \emph{variable definition}. A definition can optionally use the modifier \lstinline{mut} that makes the defined variable mutable. Mutability status extends to all constituents, e.g., components of an immutable list are immutable. Immutability has been chosen as the default status to reduce the likelihood of programming errors. The type of the defined variable can be omitted due to type inference. 
%
A \emph{for loop} of the form $\forexpr{x}{e_1}{e_2}{e_3}$ introduces a new variable~$x$, executes the loop body $e_3$ for each value of~$x$ in the half-open segment between the values of $e_1$ and~$e_2$, and returns all values of the loop body as a list.
% The bounds are half open, i.e., the lower bound is the smallest value included and the higher bound is the smallest value excluded (among those not smaller than the lower bound). \ZKSC also has special syntax for list construction that we skip in this paper.
A \emph{wire expression} transforms local values to circuit inputs. %; only constituents of primitive types are transformed in the case of lists.

% Can be omitted if necessary
%As the grammar shows, the notions of expressions and statements overlap in \ZKSC. For instance, the if-then-else construct can be used both as a statement with side effects in the branches and as a pure expression similar to the ternary conditional operator in C-style languages.
%For example, the code snippet
%\begin{lstlisting}
%	let mut x : uint $post @verifier = 0;
%	if (b) { x = 1 } else { x = 2 };
%\end{lstlisting}
%can be equivalently rewritten as
%\begin{lstlisting}
%	let mut x : uint $post @verifier = if (b) { 1 } else { 2 };
%\end{lstlisting}

% Small expressions are furthermore classified as \emph{composed} and \emph{primary}, the latter term incorporating expressions that behave as atoms in the shunting-yard precedence solving algorithm (identifiers, integer and Boolean literals, and arbitrary expressions in parentheses).

We have omitted definitions for names and literals as they are intuitive. Most of the given definitions of composed expressions are intuitive, too. The nonterminal $\nterm{opex}$ expands to expressions constructed via binary operator application. Standard operator precedence is assumed which allows some pairs of parentheses to be omitted (it is not made explicit in the grammar). A \emph{cast expression} enables the programmer to convert between different types. One can provide either a complete type expression or just the domain or the stage instead.

\def\vd{\mbox{\ $\vdash$\ }}
\newcommand*{\defeq}{\stackrel{\text{def}}{=}}
\newcommand*{\mapstom}{\stackrel{*}{\mapsto}}
\newcommand*{\eff}{!}
\newcommand{\type}{\uopl{datatype}}
\newcommand{\assert}{\uopl{assert}}
\newcommand{\mut}{\uopl{mut}}
\newcommand{\var}{\uopl{var}}
\newcommand{\allpre}{\uopl{allpre}}
\newcommand{\gen}[1]{\left\langle #1\right\rangle}
\newcommand{\getexpr}[2]{\mbox{\lstinline{get}}_{#1}\inpars{#2}}
\newcommand{\addexpr}[2]{\opexpr{#1}{\mbox{\lstinline{+}}}{#2}}
\newcommand{\assertexpr}[1]{\mbox{\lstinline{assert}}\inpars{#1}}
\newcommand{\stmtcomp}[2]{\opexpr{#1}{\mbox{\lstinline{;} }}{#2}}
\newcommand{\singleton}{\mbox{\textoneoldstyle}}
\newcommand{\lookup}[2]{#1\left(#2\right)}
\newcommand{\vars}{\uopl{vars}}
\newcommand{\vals}{\uopl{vals}}

\section{Static Semantics}\label{sec:typesystem}

The security guarantees of \ZKSC are established by its type system. Types are checked (and inferred in certain cases) during compile time, hence the type system is part of the static semantics of \ZKSC.
In this section, we describe a static semantics that traces also effects (e.g., assertions and mutable variable updates) that expression evaluation can cause. % Therefore it is really a type and effect system.
% Tracing of effects is yet to be implemented in \ZKSC.
The type system makes sure that the program can be translated into the circuit, and into Prover's and Verifier's local computations.

In Sec.~\ref{sec:syntax}, we explained our non-treatment of type parameters. For analogous reasons, we skip function definitions and user-defined function calls in our treatment; a few most important or representative built-in functions are considered.

Assertions in the type rules are of the form $\Gamma \vd e : t \eff D$. Here, $\Gamma$~is a type environment, $e$~is an expression, $t$~is a qualified type, and $D$~is an upward closed set of domains, assuming the ordering $\publicdom\subtype\verifierdom\subtype\proverdom$ of growing privacy. A type assertion states that under the constraints imposed by~$\Gamma$, the expression~$e$ has type~$t$ and running it can cause effects in domains belonging to the set~$D$. For example, an assignment to a mutable variable whose domain is $\proverdom$ causes an effect in domain $\proverdom$. In discussions, we will sometimes omit effects from type assertions if the effects are not important.

A type environment is a finite association list consisting of the following kinds of components:
\begin{itemize}
\item Variable typings written in the form $x:q$, where $x$~is a variable and $q$ is a qualified type;
%\item Function typings in the form $f:t\eff D$, where $f$~is a function, $t$~is its function type, and $D$~is the set of domains where $f$~can cause effects;
\item Mutability statements in the form $\mut x:b$ where $x$ is a variable and $b$~is its mutability status ($0$ or $1$).
\end{itemize}
We write $(z:w),\Gamma$ to denote a new type environment containing the association $(z:w)$ followed by all associations in~$\Gamma$.
We also write $\lookup{\Gamma}{z}$ for lookup of~$z$ in the type environment~$\Gamma$, i.e., if 
$\Gamma=((z_1:w_1),\ldots,(z_n:w_n))$ then $\lookup{\Gamma}{z}=w_i$ where $i$ is the least index such that $z_i=z$. Note that $z$ is a variable possibly equipped with $\mut$, and $w$ is either a qualified type or a mutability status, respectively. If the associations of variables to qualified types of $\Gamma$ are $(x_1:q_1),\ldots,(x_n:q_n)$, in this order, then we denote $\vars\Gamma=(x_1,\ldots,x_n)$.

\begin{figure*}\figsize\renewcommand{\arraystretch}{2.2}
\begin{tabular}{@{}c@{}}
\begin{prooftree}
    \infer0{\Gamma \vd \epsilon : \qualty{\unitty}{\prestg}{\publicdom} \eff \varnothing}
\end{prooftree}\quad
%\begin{prooftree}
%    \hypo{\operatorname{type}(c) = t}
%    \infer1{\Gamma \vd c : t\ \mbox{\lstinline{$post}}\ \mbox{\lstinline{@public}} \eff \varnothing}
%\end{prooftree}\quad
%\begin{prooftree}
%\hypo{n\in\mathbb{N}}
%\infer1{\Gamma \vd \overline{n} : \mbox{\lstinline{uint} \lstinline+$pre+ $d$} \eff \varnothing}
%\end{prooftree}\quad
\begin{prooftree}
\hypo{n\in\mathbb{N}}
\infer1{\Gamma \vd \overline{n} : \qualty{\uintmodty{\mbox{\lstinline{N}}}}{s}{d} \eff \gen{s}}
\end{prooftree}\quad
%\begin{prooftree}
%\hypo{b\in\mathbb{B}}
%\infer1{\Gamma \vd \overline{b} : \mbox{\lstinline{bool} \lstinline+$pre+ $d$} \eff \varnothing}
%\end{prooftree}\quad
\begin{prooftree}
\hypo{b\in\mathbb{B}}
\infer1{\Gamma \vd \overline{b} : \qualty{\boolmodty{\mbox{\lstinline{N}}}}{s}{d} \eff \gen{s}}
\end{prooftree}\quad
\begin{prooftree}
    \hypo{\lookup{\Gamma}{x}=(\qualty{t}{s}{d})}
    \infer1{\Gamma \vd x : \qualty{t}{s}{d} \eff \varnothing}
\end{prooftree}\\
\begin{prooftree}
  \hypo{\Gamma \vd e_1 : \qualty{\uintmodty{\mbox{\lstinline{N}}}}{s}{d} \eff D_1}
  \hypo{\Gamma \vd e_2 : \qualty{\uintmodty{\mbox{\lstinline{N}}}}{s}{d} \eff D_2}
  \infer2{\Gamma \vd \addexpr{e_1}{e_2} : \qualty{\uintmodty{\mbox{\lstinline+N+}}}{s}{d} \eff \gen{s}\cup D_1\cup D_2}
\end{prooftree}\quad
\begin{prooftree}
    \hypo{\Gamma \vd e : \qualty{\boolmodty{\mbox{\lstinline{N}}}}{\poststg}{d} \eff D}
    \infer1{\Gamma \vd \assertexpr{e} : \qualty{\unitty}{\prestg}{\publicdom} \eff \gen{\publicdom}}
\end{prooftree}\\
%\begin{prooftree}
%    \hypo{\Gamma \vd e : \mathtt{list[}t\ s\ d\mathtt{]}\ s'\ d' \eff D}
%    \infer1{\Gamma \vd e.\mathtt{length} : \mathtt{uint}\ s'\ d' \eff D}
%\end{prooftree}\quad
%\begin{prooftree}
%    \hypo{\Gamma \vd e_1 : t\ s\ d \eff D_1}
%    \hypo{\ldots}
%    \hypo{\Gamma \vd e_n : t\ s\ d \eff D_n}
%    \infer3{\Gamma \vd \mbox{\inbracks{$e_1,\ldots,e_n$}}: \mbox{\lstinline{list}\inbracks{$t\ s\ d$}}\ \mbox{\lstinline{$pre}}\ \mbox{\lstinline{@public}} \eff D_1 \cup \ldots\cup D_n}
%\end{prooftree}\\
%\begin{prooftree}
%    \hypo{\Gamma \vd e_1 : \mbox{\lstinline{uint}\inbracks{\lstinline{N}}}\ \mbox{\lstinline{$pre}}\ d \eff D_1}
%    \hypo{\Gamma \vd e_2 : \mbox{\lstinline{uint}\inbracks{\lstinline{N}}}\ \mbox{\lstinline{$pre}}\ d \eff D_2}
%    \infer2{\Gamma \vd \mbox{\inbracks{$e_1$ \lstinline{..} $e_2$}}: \mbox{\lstinline{list}\inbracks{$\mbox{\lstinline{uint}\inbracks{\lstinline{N}}}\ \mbox{\lstinline{$pre}}\ d$}}\ \mbox{\lstinline{$pre}}\ d \eff D_1 \cup D_2}
%\end{prooftree}\quad
%\begin{prooftree}
%    \hypo{\Gamma \vd e_1 : t\ s\ d \eff D_1}
%    \hypo{\Gamma \vd e_2 : \mbox{\lstinline{uint}\inbracks{\lstinline{N}} \lstinline{$pre} $d'$} \eff D_2 }
%    \hypo{\gen{d'}\supseteq \gen{s}\cup\gen{d}}
%    \infer3{\Gamma \vd \mbox{\inbracks{$e_1$; $e_2$}}: \mbox{\lstinline{list}\inbracks{$t\ s\ d$}}\ \mbox{\lstinline{$pre}}\ d' \eff D_1 \cup D_2}
%\end{prooftree}\\
\begin{prooftree}
    \hypo{\allpre_{d'}(\qualty{t}{s}{d})}
    \infer1{\Gamma \vd (\ofty{\getexpr{d'}{k}\!}{\!\qualty{t}{\!s\!}{d}})\!:\!\qualty{t}{\!s\!}{d}\eff\!\varnothing}
\end{prooftree}\quad
\begin{prooftree}
    \hypo{\Gamma \vd e_1 : \qualty{\boolmodty{\mbox{\lstinline{N}}}}{\prestg}{d'} \eff D_1}
    \hypo{\Gamma \vd e_i : \qualty{t}{s}{d} \eff D_i\ (i=2,3)}
    \hypo{\gen{d'}\supseteq\gen{s}\cup\gen{d} \cup D_2 \cup D_3}
    \infer3{\Gamma \vd \ifexpr{e_1}{e_2}{e_3} : \qualty{t}{s}{d} \eff D_1 \cup D_2 \cup D_3}
\end{prooftree}\\
\begin{prooftree}
    \hypo{\Gamma \vd e_i : \qualty{\uintty}{\prestg}{d'} \eff D_i\ (i=1,2)}
    \hypo{(x : \qualty{\uintty}{\prestg}{d'}), \Gamma \vd e_3 : \qualty{t}{s}{d} \eff D_3}
    \hypo{\gen{d'} \supseteq \gen{s}\cup\gen{d}\cup D_3}
    \infer3{\Gamma \vd \forexpr{x}{e_1}{e_2}{e_3} : \qualty{\listty{\qualty{t}{s}{d}}}{\prestg}{d'} \eff D_1 \cup D_2 \cup D_3}
\end{prooftree}\\
\begin{prooftree}
    \hypo{\Gamma \vd e : \qualty{t}{\prestg}{d} \eff D}
    \hypo{t\in\set{\uintmodty{\mbox{\lstinline{N}}},\boolmodty{\mbox{\lstinline{N}}}}}
    \infer2{\Gamma \vd \wireexpr{e} : \qualty{t}{\poststg}{d}\eff \gen{\publicdom}}
\end{prooftree}\qquad
%\begin{prooftree}
%    \hypo{\Gamma \vd e : \qualty{t}{s}{d} \eff D}
%    \hypo{s \subtype s'}
%    \infer2{\Gamma \vd \castexpr{e}{s'} : \qualty{t}{s'}{d} \eff D}
%\end{prooftree}\quad
%\begin{prooftree}
%    \hypo{\Gamma \vd e : \qualty{t}{s}{d} \eff D}
%    \hypo{d \subtype d'}
%    \hypo{\gen{d'}\supseteq\gen{t}}
%    \infer3{\Gamma \vd \castexpr{e}{d'} : \qualty{t}{s}{d'} \eff D}
%\end{prooftree}\\
\begin{prooftree}
    \hypo{\Gamma \vd e : \qualty{t}{s}{d} \eff D}
    \hypo{s \subtype s'}
    \hypo{d \subtype d'}
    \hypo{\gen{d'}\supseteq\gen{t}}
    \infer4{\Gamma \vd \castexpr{e}{\qualty{t}{s'}{d'}} : \qualty{t}{s'}{d'} \eff D}
\end{prooftree}\\
\begin{prooftree}
    \hypo{\Gamma \vd e : \qualty{t}{s}{d} \eff D}
    \hypo{\Gamma \vd l : \qualty{t}{s}{d} \eff D'}
    \hypo{\lookup{\Gamma}{\mut(\var l)}=1}
    \infer3{\Gamma \vd \assignexpr{l}{e} : \qualty{\unitty}{\prestg}{\publicdom} \eff \gen{s}\cup\gen{d} \cup D \cup D'}
\end{prooftree}\quad
\begin{prooftree}
    \hypo{\Gamma \vd l : \qualty{\listty{\qualty{t}{s}{d}}}{s'}{d'} \eff D'}
    \hypo{\Gamma \vd e : \qualty{\uintty}{s'}{d'} \eff D}
    \infer2{\Gamma \vd \loadexpr{l}{e} : \qualty{t}{s}{d} \eff D \cup D'}
\end{prooftree}\\
\begin{prooftree}
    \hypo{\Gamma \vd e_1 : \qualty{t_1}{s_1}{d_1} \eff D_1}
    \hypo{(\mut x : b), (x : \qualty{t_1}{s_1}{d_1}), \Gamma \vd e_2 : \qualty{t_2}{s_2}{d_2} \eff D_2}
    \infer2{\Gamma \vd \stmtcomp{\letexpr{\mbox{\lstinline{mut}}^b\ x}{e_1}}{e_2} : \qualty{t_2}{s_2}{d_2} \eff \gen{d_1} \cup D_1 \cup D_2}
\end{prooftree}\quad
\begin{prooftree}
    \hypo{\Gamma \vd e_1 : \qualty{t_1}{s_1}{d_1} \eff D_1}
    \hypo{\Gamma \vd e_2 : \qualty{t_2}{s_2}{d_2} \eff D_2}
    \infer2{\Gamma \vd \stmtcomp{e_1}{e_2} 
    : \qualty{t_2}{s_2}{d_2} \eff D_1 \cup D_2}
\end{prooftree}
\end{tabular}
\caption{\label{typesystem:exprstmt}
Typing rules of expressions and statements of a subset of \ZKSC without parametric polymorphism}
\end{figure*}

%\begin{figure*}\figsize\renewcommand{\arraystretch}{2.5}
%\begin{prooftree}
%    \hypo{(x_1 : t_1\ s_1\ d_1), \ldots, (x_n : t_n\ s_n\ d_n), \Gamma \vd e : t\ s\ d \eff D}
%    \infer1{\Gamma \vd \mbox{\lstinline{fn}}\ f(x_1 : t_1\ s_1\ d_1, \ldots, x_n : t_n\ s_n\ d_n) \to t\ s\ d\ \mbox{\inbraces{$e$}} \eff D}
%\end{prooftree}\qquad
%\begin{prooftree}
%    \hypo{\Gamma \vdash F \eff D}
%    \hypo{(\operatorname{fun-name}(F) : \operatorname{fun-type}(F) \eff D), \Gamma \vdash P}
%    \infer2{\Gamma \vdash F\ ; P}
%\end{prooftree}\qquad
%\begin{prooftree}
%    \infer0{\Gamma \vdash \varnothing}
%\end{prooftree}
%\caption{\label{typesystem:prog}
%Typing rules for \ZKSC programs}
%\end{figure*}
%
The static semantics is presented in Fig.~\ref{typesystem:exprstmt}. The data type derivation parts of the rules are standard. Hence we mainly comment on stages, domains and effects. There exist four upward closed domain sets linearly ordered by inclusion: $\varnothing\subset\gen{\proverdom}\subset\gen{\verifierdom}\subset\gen{\publicdom}$, where $\gen{d}$ denotes the set consisting of domain~$d$ and all larger domains. Thus the union of upward closed sets always equals the largest set in the union. We also order stages as $\poststg\subtype\prestg$, reflecting that data computed in the circuit are also computed locally in the corresponding domain in order to be ready to provide expanded instances/witnesses to the circuit. \ZKSC requires conversion to supertype to be made explicit using the \lstinline{as} keyword.
%(This subtyping relation is reasonable also in the trustability aspect: Leaking trustable data into the untrustable world causes no harm.)

The first rule in Fig.~\ref{typesystem:exprstmt} handles missing expression~$\epsilon$; it is needed for the case where the last expression of a sequential execution is absent. Integer and Boolean literals are denoted by overlined constants. We show only rules for $\uintmodty{\mbox{\lstinline{N}}}$ and $\boolmodty{\mbox{\lstinline{N}}}$; rules for $\uintty$ and $\boolty$ in stage $\prestg$ are similar. Literals can be typed with any stage and domain (i.e., no type cast is required). The effect of literals depends on their actual stage: In stage $\prestg$ they do not have any effect, while in stage $\poststg$, they have public effect since there they contribute to constructing of the circuit. In general, any operation in the circuit is considered a public effect. To specify the possible effects concisely in the rules, we extend the $\gen{\cdot}$ notation to stages by $\gen{\prestg}=\varnothing$, $\gen{\poststg}=\gen{\publicdom}$, and also to data types by $\gen{t}=\varnothing$ if $t$~{}is a primitive type and $\gen{t}=\gen{t'}\cup\gen{s'}\cup\gen{d'}$ if $t=\listty{\qualty{t'}{s'}{d'}}$. (This way, the result of the $\gen{\cdot}$ operation is always an upward closed set of domains.)

Types of variables are read directly from the type environment without an effect. The rule for addition allows this operation to be performed in any domain and stage, but the domain of the arguments and the result must be the same and similarly for stages. The rule for \lstinline+assert+ establishes the result data type to be the unit type. The unit type, as well as list types, is always in the stage $\prestg$ since the circuit does not deal with values of these types. 

In any domain, the slice of the program available to local computation of that domain can read input data. The \ZKSC functions for that are \lstinline{get_public}, \lstinline{get_instance} and \lstinline{get_witness} which read the public constants, the instance, and the witness, respectively (all non-expanded). Here we denote these functions uniformly by $\mbox{\lstinline{get}}_d$ where $d=\publicdom$, $d=\verifierdom$ and $d=\proverdom$, respectively, so the rule for $\mbox{\lstinline+get+}_d$ captures all three cases. An argument of the function $\mbox{\lstinline{get}}_d$ is a key of a dictionary and the function returns the corresponding value in the dictionary. String literals are used as keys.

The type of an expression of the form $\getexpr{d'}{k}$ cannot be derived from its constituents as the dictionary need not be available for the type checker. Therefore, the type rule for such expressions requires them to be explicitly typed. The only restriction we impose on the result type is $\allpre_{d'}(\qualty{t}{s}{d})$ which is to denote that all types occurring in $\qualty{t}{s}{d}$ must be qualified with $\qual{\prestg}{d'}$. More precisely, $\allpre_{d'}(\qualty{t}{s}{d})$ is true iff $s=\prestg$, $d=d'$ and in the case $t=\listty{q'}$ also $\allpre_{d'}(q')$. % In \ZKSC, expressions of the form $\getexpr{d'}{k}$ need not be explicitly typed if the type can be inferred from the context.
%A list given via its elements is in domain $\publicdom$ as the structure of such list is determined by the code. A list given as a range of integers must live in the same domain as its elements because its length depends on the bounds, likewise must the list elements be in $\prestg$. Similarly, the list in the third list construction rule must be in the same stage and domain as its length. 

The rules for conditional expressions and loops restrict the stage of guards to $\prestg$, meaning that branchings happen only in local computations. For conditional expressions, the restriction $\gen{d'}\supseteq\gen{s}\cup\gen{d}\cup D_2\cup D_3$ serves the following purposes:
\begin{itemize}
\item $\gen{d'}\supseteq\gen{d}$ (i.e., $d'\subtype d$) makes the output of the conditional expression to have at least as high privacy level as the guard (the \emph{no-read-up} property);
% guarantees that the value of the branch chosen can give no hint about the value of the boolean condition to a less private domain;
\item $\gen{d'}\supseteq D_2\cup D_3$ disallows information flows through side effects of the branches (the \emph{no-write-down} property);
%does not allow the effects of the branch chosen to give similar hints;
\item $\gen{d'}\supseteq\gen{s}$ ensures that, if the expression is computed by the circuit, then the branch that must be taken is known at the compile-time (also an instance of \emph{no-write-down}, as all computations by the circuit have visible side-effects).
\end{itemize}
The restriction $\gen{d'}\supseteq\gen{s}\cup\gen{d}\cup D_3$ in the loop rule provides similar guarantees. For example, if the body of a loop performs assertions then the loop bounds must be in the domain $\publicdom$. % We omit a rule for conditionals without an else branch as $\mbox{\lstinline{if} $e_1$ \inbraces{$e_2$}}$ is a syntactic sugar for $\ifexpr{e_1}{e_2}{}$.

%The rule for the \lstinline{wire} construct uses an auxiliary function $\wire$ that gives the stage $\poststg$ to all primitive types occurring inside the argument type:
%\[\figsize
%\begin{array}{@{}l@{\;\,}c@{\;\,}l}
%\wire(\mbox{\lstinline{bool}\inbracks{\lstinline{N}} \lstinline{$pre} $d$})&=&\mbox{\lstinline{bool}\inbracks{\lstinline{N}} \lstinline{$post} $d$}\\
%\wire(\mbox{\lstinline{uint}\inbracks{\lstinline{N}} \lstinline{$pre} $d$})&=&\mbox{\lstinline{uint}\inbracks{\lstinline{N}} \lstinline{$post} $d$}\\
%\wire(\mbox{\lstinline{list}\inbracks{$q$} \lstinline{$pre} \lstinline{@public}})&=&\mbox{\lstinline{list}\inbracks{$\wire(q)$} \lstinline{$pre} \lstinline{@public}}
%\end{array}
%\]
%All types occurring in the $\wire$ function argument must have stage $\prestg$ and list types must have domain $\publicdom$, otherwise a type error will occur.
%
A type cast can extend the type of a given expression but not reduce it. % We present the rule for the most general variant of type cast as the others are analogous.
Moreover, the condition $\gen{d'}\supseteq\gen{t}$ of the cast rule guarantees a type invariant that prevents list elements from revealing information about the list structure to domains of lower privacy (via lookups). This invariant is formally established in Definition~\ref{typesystem:def} and Theorem~\ref{typesystem:thm}.

The l.h.s. of an assignment can be complex, consisting of a variable followed by an index vector. Let $\var l$ for any L-value expression~$l$ denote the variable whose mutability permits assignment to this L-value, i.e., $\var l=l$ if $l$~is a variable, and $\var l=\var l'$ if $l$ is $\mbox{$l'$\inbracks{$e$}}$. The assignment rule states that the return value of an assignment expression is of the unit data type and the effect of an assignment belongs to the domain $\publicdom$ if the operands are in stage $\poststg$, otherwise the effect belongs to the same domain~$d$ as the operands. The type structure and effects of the contents of the l.h.s. is the topic of Lemma~\ref{typesystem:lhslemma}.

In the list element access rule, the index must be in the same domain as the list structure since accessing an element via its index may reveal information about the length of the list. Reading does not introduce new effects.

In the rule for let statements, $\mbox{\lstinline{mut}}^1$ stands for $\mbox{\lstinline{mut}}$ keyword and $\mbox{\lstinline{mut}}^0$ means empty string, so the rule applies to both immutable and mutable variable definitions. Mutability information is reflected in the type environment when type checking the expression~$e_2$. We consider variable definition to be effectful, whence~$\gen{d_1}$ is added into the set of effects. The last rule similarly handles statement sequences but applies to the case where the first statement does not define new variables.

%The rest of the type system (namely, typing of function definitions and programs) can be specified in a standard way, taking into account that the order of %function definitions in a \ZKSC program is not important and \ZKSC allows both direct and indirect recursion in function definitions.
%
The following definition~\ref{typesystem:def} and theorem~\ref{typesystem:thm} are essential. Proof of the theorem goes by induction on the structure of~$e$. The details are given in Appendix~\ref{typesystem-proofs}.

\begin{definition}\label{typesystem:def}
Let $q=\qualty{t}{s}{d}$ be a qualified type. We call $q$ \emph{well-structured} if either $t$~{}is a primitive type, whereby $s=\prestg$ in the case of $t=\unitty$; or $s=\prestg$ and  $t=\listty{\qualty{t'}{s'}{d'}}$ such that $\gen{d}\supseteq\gen{s'}\cup\gen{d'}$ and $\qualty{t'}{s'}{d'}$ is well-structured. Call a type environment $\Gamma$ \emph{well-structured} if all qualified types occurring in it are well-structured.
\end{definition}

\begin{theorem}\label{typesystem:thm}
If\/ $\Gamma$ is well-structured and $\Gamma\vd e : q\eff D$, then $q$ is well-structured.
\end{theorem}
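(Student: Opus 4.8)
The plan is to proceed by induction on the derivation of $\Gamma \vd e : q \eff D$, equivalently on the structure of~$e$, handling one typing rule from Fig.~\ref{typesystem:exprstmt} per case. Throughout, $\Gamma$ is assumed well-structured, and whenever a premise assigns a type to a subexpression I invoke the induction hypothesis to obtain that this premise type is well-structured; the goal in each case is to show that the conclusion type~$q$ is well-structured in the sense of Definition~\ref{typesystem:def}.

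Most cases are immediate from the first clause of Definition~\ref{typesystem:def}. Whenever the conclusion has a primitive data type --- the literal, addition and $\mbox{\lstinline+wire+}$ rules (yielding $\uintmodty{\mbox{\lstinline{N}}}$ or $\boolmodty{\mbox{\lstinline{N}}}$), and the empty-expression, $\mbox{\lstinline+assert+}$, and assignment rules (yielding $\unitty$ in stage $\prestg$) --- well-structuredness holds directly, the side condition $s = \prestg$ for the unit type being met by inspection. For the variable rule, $q = \lookup{\Gamma}{x}$ occurs in~$\Gamma$, so well-structuredness is inherited from the assumption on~$\Gamma$. For the conditional, bare-sequence and let rules the conclusion type is literally the type of a subexpression (the branches, or $e_2$), so the claim follows from the induction hypothesis; in the let case I first check that the extended environment $(\mut\ x : b), (x : \qualty{t_1}{s_1}{d_1}), \Gamma$ is again well-structured, which holds because $\qualty{t_1}{s_1}{d_1}$ is well-structured by the induction hypothesis on~$e_1$. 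The list-element-access rule is similar: the induction hypothesis makes the list type $\qualty{\listty{\qualty{t}{s}{d}}}{s'}{d'}$ well-structured, whence the second clause of Definition~\ref{typesystem:def} immediately gives well-structuredness of the element type $\qualty{t}{s}{d} = q$.

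The informative cases are the for-loop, the $\getexpr{d'}{k}$ expression, and the cast. For the for-loop the conclusion is the list type $\qualty{\listty{\qualty{t}{s}{d}}}{\prestg}{d'}$: its stage is~$\prestg$, its element type is well-structured by the induction hypothesis on~$e_3$ (whose environment, extended by the primitive typing $(x : \qualty{\uintty}{\prestg}{d'})$, stays well-structured), and the required inclusion $\gen{d'} \supseteq \gen{s} \cup \gen{d}$ is exactly the part of the loop premise $\gen{d'} \supseteq \gen{s}\cup\gen{d}\cup D_3$ that we need. For $\getexpr{d'}{k}$ I will show that the side condition $\allpre_{d'}(q)$ alone forces $q$ to be well-structured, by a short auxiliary induction on the data type: $\allpre$ pins every stage to $\prestg$ and every domain to~$d'$, so in a list case the inclusion $\gen{d}\supseteq\gen{s'}\cup\gen{d'}$ collapses to $\gen{d'}\supseteq\gen{d'}$.

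The cast rule is where I expect the real content to lie, and it is precisely where the side condition $\gen{d'}\supseteq\gen{t}$ of that rule becomes indispensable. Here $e$ has well-structured type $\qualty{t}{s}{d}$ and we must establish well-structuredness of $\qualty{t}{s'}{d'}$, where the data type~$t$ is unchanged but the stage and domain are raised ($s\subtype s'$, $d\subtype d'$). For $t$ primitive this is routine, except that for $t=\unitty$ one re-derives the stage condition: well-structuredness of $\qualty{\unitty}{s}{d}$ gives $s=\prestg$, and $s\subtype s'$ then forces $s'=\prestg$ since $\prestg$ is the top of the stage order. The subtle point is the list case $t=\listty{\qualty{t_0}{s_0}{d_0}}$: because the domain is raised, we have $\gen{d'}\subseteq\gen{d}$, so the inclusion $\gen{d}\supseteq\gen{s_0}\cup\gen{d_0}$ inherited from the old type is \emph{too weak} to rebuild the definition at the new domain~$d'$. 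Instead one reads the needed $\gen{d'}\supseteq\gen{s_0}\cup\gen{d_0}$ directly off the cast premise $\gen{d'}\supseteq\gen{t}$, after unfolding $\gen{t}=\gen{t_0}\cup\gen{s_0}\cup\gen{d_0}$. Checking that this premise is exactly strong enough to re-establish Definition~\ref{typesystem:def} once the domain has been lifted from~$d$ to~$d'$ is the main obstacle, and it explains why the cast rule carries the constraint $\gen{d'}\supseteq\gen{t}$ in the first place.
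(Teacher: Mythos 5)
Your proposal is correct and follows essentially the same route as the paper's proof: structural induction on~$e$, an auxiliary observation that $\allpre_{d'}$ forces well-structuredness (the paper isolates this as a separate lemma), and the same resolution of the cast case by unfolding the premise $\gen{d'}\supseteq\gen{t}$ rather than relying on the (too weak) inclusion inherited from the source type. No gaps.
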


Proofs of many theorems of the following sections rely on Lemma~\ref{typesystem:lhslemma}, which is itself proved by induction on the length of index vector:

\begin{lemma}\label{typesystem:lhslemma}
Let $e=\loadexpr{\loadexpr{\loadexpr{x}{y_1}}{y_2}\ldots}{y_n}$ where $x$ is a variable.
Let $\Gamma\vd e:q\eff D$, where $q=(\qualty{t}{s}{d})$ and $\Gamma$ is well-structured.
Then there exist domains $d_1,\ldots,d_n$ and upward closed domain sets $D_1,\ldots,D_n$ such that $\Gamma\vd y_i:\qualty{\uintty}{\prestg}{d_i}\eff D_i$ for each $i=1,\ldots,n$ and
\[
\Gamma\vd x:\qualty{\listty{\ldots\listty{\qualty{\listty{\qualty{\listty{q}}{\prestg}{d_n}}}{\prestg}{d_{n-1}}}\ldots}}{\prestg}{d_1}\eff\varnothing\mbox{,}
\]
whereby $d_1\subtype\ldots\subtype d_{n-1}\subtype d_n\subtype d$ and $D=D_1\cup\ldots\cup D_n$.
\end{lemma}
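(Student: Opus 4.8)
The plan is to prove the statement by induction on $n$, the number of indices applied to $x$, exactly as the paper suggests. For the base case $n=0$ the expression $e$ is the bare variable $x$; the only typing rule applicable to a variable reads its type off $\Gamma$ with empty effect, so $\Gamma\vd x:q\eff\varnothing$, forcing $D=\varnothing$. With no indices present, the empty domain chain, the empty union of the $D_i$, and the unnested type of $x$ all hold vacuously, so the base case is immediate.

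For the inductive step I would write $e=\loadexpr{e'}{y_n}$ with $e'=\loadexpr{\loadexpr{\loadexpr{x}{y_1}}{y_2}\ldots}{y_{n-1}}$ and invert the derivation of $\Gamma\vd e:q\eff D$. Since $e$ is a list element access, its derivation must end with the list-access rule, which supplies a stage $s_n$, a domain $d_n$ and effect sets $D',D_n$ such that $\Gamma\vd e':\qualty{\listty{q}}{s_n}{d_n}\eff D'$, $\Gamma\vd y_n:\qualty{\uintty}{s_n}{d_n}\eff D_n$, and $D=D'\cup D_n$. It is worth recording here that the rule forces the index $y_n$ to share the stage and domain of the list structure, since that is exactly where the required typing of $y_n$ will come from.

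The crux of the argument, and the step where the well-structuredness hypothesis does the real work, is to pin down $s_n$ and to relate $d_n$ to $d$. I would apply Theorem~\ref{typesystem:thm} to the subderivation for $e'$: since $\Gamma$ is well-structured it yields that $\qualty{\listty{q}}{s_n}{d_n}$ is well-structured, and Definition~\ref{typesystem:def} then forces $s_n=\prestg$ together with $\gen{d_n}\supseteq\gen{s}\cup\gen{d}$, hence in particular $d_n\subtype d$. This simultaneously delivers the index typing $\Gamma\vd y_n:\qualty{\uintty}{\prestg}{d_n}\eff D_n$ in the required $\prestg$ form and the last link $d_n\subtype d$ of the domain chain. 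Neither fact is derivable from the typing rules alone, so appealing to the well-structuredness invariant is the main (and essentially the only) obstacle.

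It then remains to apply the induction hypothesis to $e'$, typed by the qualified type $\qualty{\listty{q}}{\prestg}{d_n}$: this produces $d_1,\ldots,d_{n-1}$ and sets $D_1,\ldots,D_{n-1}$ with $\Gamma\vd y_i:\qualty{\uintty}{\prestg}{d_i}\eff D_i$ for $i<n$, the chain $d_1\subtype\ldots\subtype d_{n-1}\subtype d_n$, the equality $D'=D_1\cup\ldots\cup D_{n-1}$, and the nested list typing of $x$ whose innermost component is $\qualty{\listty{q}}{\prestg}{d_n}$. Concatenating with $d_n\subtype d$ and with $D=D'\cup D_n$ yields the full chain $d_1\subtype\ldots\subtype d_n\subtype d$ and the equality $D=D_1\cup\ldots\cup D_n$, while the nested type of $x$ is already exactly the one asserted, which closes the induction.
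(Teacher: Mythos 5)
Your proposal is correct and follows essentially the same route as the paper: induction on $n$, inversion of the list-access rule, the induction hypothesis for the prefix, and an appeal to Theorem~\ref{typesystem:thm} together with Definition~\ref{typesystem:def} to force the $\prestg$ stage and the inclusion $d_n\subtype d$. The only cosmetic difference is that you extract $d_n\subtype d$ from well-structuredness of the type of $e'$ directly, whereas the paper reads it off the well-structured nested type of $x$; both are the same invariant applied one level apart.
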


The details of the proof are given in Appendix~\ref{typesystem-proofs}.

\newcommand{\sem}[1]{{\left\llbracket{#1}\right\rrbracket}}
\newcommand{\semd}[1]{\sem{#1}_d}
\newcommand{\m}{T}
\newcommand{\semm}[1]{\sem{#1}^\m}
\newcommand{\tru}{\mbox{\bfseries tt}}
\newcommand{\fls}{\mbox{\bfseries ff}}
\newcommand{\return}{\uopl{pure}}
\newcommand{\throw}{\uopl{throw}}
\newcommand{\modify}{\uopl{modify}}
\newcommand{\update}{\uopl{upd}}
\newcommand{\guard}{\uopl{guard}}
\newcommand{\forM}{\uopl{for}}
\newcommand{\monadic}[1]{\hat{\ifthenelse{\equal{#1}{i}}{\imath}{#1}}}
\newcommand{\lam}[2]{\leftthreetimes #1.\,#2}
\newcommand{\mcomp}[1]{\uopl{do}\{#1\}}
\newcommand{\mcompv}[1]{\begin{array}[t]{@{}l@{}l@{}}\uopl{do}\{&\begin{array}[t]{@{}l@{}}#1\end{array}\\\}&\end{array}}
\newcommand{\hstop}{;\;}
\newcommand{\ap}{\,}
\newcommand{\Env}{\mathbf{Env}}
\newcommand{\In}{\mathbf{In}}
\newcommand{\Out}{\mathbf{Out}}
\let\omicron=o
\newcommand{\allpure}{\uopl{allpure}}
\newcommand{\head}{\uopl{head}}
\newcommand{\tail}{\uopl{tail}}
\newcommand{\new}{\uopl{new}}
\newcommand{\coin}[2]{\mathbin{\mbox{$\sim\hspace{-0.9em}\substack{#1\\#2}\hspace{0.15em}$}}}
\newcommand{\coincirc}[1]{\coin{#1}{{}}}
\newcommand{\coinpure}[1]{\coin{\phantom{q}}{#1}}
\newcommand{\branching}[1]{\set{\renewcommand{\arraystretch}{1}\begin{array}{@{}l@{\;}l@{}}#1\end{array}}}

\section{Dynamic Semantics}\label{sec:semantics}

We present dynamic semantics of our language in a denotational style. In fact, the overall setting assumes four different dynamic semantics, loosely corresponding to the views of three domains and the circuit. We call the three semantics corresponding to the domains \emph{local} since they describe what is computed by different parties locally. For example, Prover's view contains all computations that are performed as $\qual{\prestg}{\proverdom}$, Verifier's view contains all computations performed as $\qual{\prestg}{\verifierdom}$, etc. Values of all expressions and statements in Prover's domain are unknown from Verifier's point of view; we denote the unknown value by~$\top$. Likewise, the local semantics for the public domain evaluates all expressions and statements in the higher domains to~$\top$. Computations in the public domain are performed by the compiler.

As $\prestg$ and $\proverdom$ are the topmost elements of the stage and domain hierarchy, Prover's view encompasses the whole program. Basically, this view describes \emph{what should actually happen}, nevertheless ignoring the special way computation is performed in the circuit due to limited supply of operations.% For instance, operations on lists with elements in $\poststg$ stage are described naturally as list operations in the semantics although the circuit has no means for operating with data structures.

The circuit semantics describes computations performed as $\qual{\poststg}{\mbox{\lstinline{@D}}}$ for any \lstinline{@D}, and also everything in $\qual{\prestg}{\publicdom}$. Although not computed by the circuit, values in $\qual{\prestg}{\publicdom}$ are inevitably needed in performing branching computations as conditions of if expressions and loop bounds belong to stage $\prestg$. In reality, the compiler unrolls conditionals and loops for the circuit as the latter has no means for branching.

We define all three local semantics via a common set of equations. The differences arise from domain inclusion conditions that can be either true or false depending on the party and can introduce~$\top$. The type system ensures that computing the program parts of the lower domains is not impeded by not knowing the values of the higher domains.

The main notation and types are summarized in Fig.~\ref{semantics:types}. The set of values that our semantics can produce consists of non-negative integers, booleans $\tru$ and~$\fls$, the only value of the unit type, and finite sequences of (possibly unknown) values. We need two sets of values, $U$ and $V$, both defined recursively as the least fixpoint satisfying the corresponding equation. Values in the set~$U$ (which we sometimes call the ``core'' values) are built without making use of~$\top$, while the definition of~$V$ involves also $\top$. Core values are used only for representing the input of local computation; we can be sure that each party can fully read its input whence $\top$ will never occur there. By $\mathds{1}+A$, we denote the disjoint sum of a singleton set and set~$A$; in the case of~$M\ap V$, we assume $\mathds{1}=\set{\top}$. We refer to the elements of the main summand (i.e., not $\mathds{1}$) as \emph{pure}.

The local semantics of an expression takes a value environment and a triple of input dictionaries (one for each domain) as arguments, and normally produces a triple containing the value of the expression, an updated environment and a pair of finite sequences of values to be delivered to the circuit (one sequence for each of $\proverdom$ and $\verifierdom$; the circuit does not take public input). In exceptional cases, the semantics can fail, which is shown by the addend~$\mathds{1}$ in the equation for $C_d\ap A$ and means a runtime error. For simplicity, we ignore runtime errors other than assertion failures in the semantics; in practice, all other runtime errors are considered semantically equivalent to a failed assertion. 

%Let $\mathbb{F}$ be a finite field that very least contains a subset that can represent $\mathbb{Z}_{2^{32}}$. 
%We define the set of values $\sem{t}$ inductively over type $t$ as follows:
%\begin{eqnarray*}
%    \sem{t} & : & \mathbf{Set} \\
%    \sem{\mathtt{unit}} & = & \mathds{1} \\
%    \sem{\mathtt{bool}} & = & \mathbb{B}^{\top} \\
%    \sem{\mathtt{uint}} & = & \mathbb{F}^{\top} \\
%    \sem{(t_1\ d_1, \ldots, t_n\ d_n)} & = & \sem{t_1} \times \ldots \times \sem{t_n} \\
%    % \sem{\mathtt{list[}t\ d\mathtt{]}} & = & \mu \alpha. (\mathds{1} + \sem{t} \times \alpha) \\
%    \sem{\mathtt{list[}t\ d\mathtt{]}} & = & \sum_{n : \mathbb{N}}{\sem{t}^n} \enspace . \\
%\end{eqnarray*}
%Here the semantics of lists is a \emph{dependent sum}. Its elements should be thought of as pairs, where the first component is $n\in\mathbb{N}$, and the second %component is an element of $\sem{t}^n$.

\begin{figure}\figsize
\renewcommand{\arraycolsep}{2pt}
\begin{tabular}{@{}l@{}}
$\begin{array}{lcl}
\mathbb{N}&=&\set{0,1,2,\ldots}\\
\mathbb{B}&=&\set{\tru,\fls}\\
()&=&\set{\singleton}
\end{array}$\qquad
$\begin{array}{lcl}
X&&\mbox{the set of variables}\\
K&&\mbox{the set of input keys}\\
U&=&\mathbb{N}\cup\mathbb{B}\cup()\cup U^*\quad\mbox{the set of core values}\\
V&=&\mathbb{N}\cup\mathbb{B}\cup()\cup(M\ap V)^*\quad\mbox{the set of values}
\end{array}$\\[4.5ex]
$\begin{array}{lcl}
M\ap A&=&\mathds{1}+A\\
C_d\ap A&=&\Env\to\In^3\to\mathds{1}+A\times\Env\times\Out^2\\
C\ap A&=&\Env\to\In^3\to\Out^2\to\mathds{1}+A\times\Env\times\Out^2
\end{array}$\\[4.5ex]
$\begin{array}{lcl@{\quad}l}
\Env&=&(X\times M\ap V)^+&\mbox{value environments}\\
\In&=&K\to U&\mbox{inputs of local computation}\\
\Out&=&(M\ap(\mathbb{N}\cup\mathbb{B}))^*&\mbox{streams of values delivered to the circuit}
\end{array}$\\[4.5ex]
$\begin{array}{lcl@{\quad}l}
\semd{e}&:&C_d\ap(M\ap V)&\mbox{the local semantics of expression~$e$ in domain~$d$}\\
\sem{e}&:&C\ap(M\ap V)&\mbox{the circuit semantics of expression~$e$}
\end{array}$
\end{tabular}
\caption{\label{semantics:types}
Types of semantic objects}
\end{figure}

A value environment is a finite association list. It is operated as a stack. We use the following notation for lookup and update of variable~$x$ in any environment~$\gamma=((x_1,a_1),\ldots,(x_n,a_n))$, where $i$ is the least index such that $x_i=x$:
\[
\lookup{\gamma}{x}=a_i\mbox{,}\quad
[x\mapsto a]\gamma=((x_1,a_1),\ldots,(x_{i-1},a_{i-1}),(x_i,a),(x_{i+1},a_{i+1}),\ldots,(x_n,a_n))\mbox{.}
\]
Note that an update is performed in the stack element where the variable~$x$ is defined. If a new definition of~$x$ is desired, we instead write $(x,a),\gamma$ which means pushing a new association to the stack. To omit the topmost element of the stack~$\gamma$, we write $\tail\gamma$. If $\gamma=((x_1,a_1),\ldots,(x_n,a_n))$ then we write $\vars\gamma=(x_1,\ldots,x_n)$.

The definition of local semantics is given in Fig.~\ref{semantics:exprstmt}. The notation of syntactic objects coincides with that in the type rules (e.g., $x$ stands for a variable etc.); in addition, $\gamma$, $\phi$ and $\omicron$ denote value environments, inputs and outputs, respectively. To avoid the need to study exceptional cases separately, we use the monad comprehension syntax of the functional programming language Haskell. This notation was first advocated by Wadler~\cite{DBLP:journals/mscs/Wadler92} for succinct description of computations that may involve side effects. We only use the notation for the \emph{maybe monad} $A\mapsto\mathds{1}+A$. For example, the sum of values $\monadic{v}_1,\monadic{v}_2\in M\ap V$ (the static type system ensures that their types are correct, but either value can be unknown) is written as
\[
\mcomp{i_1\gets\monadic{v}_1\hstop i_2\gets\monadic{v}_2\hstop\return(i_1+i_2)}\mbox{.}
\]
Here, the first two clauses define $i_1$ and $i_2$ as pure representatives of $\monadic{v}_1$ and $\monadic{v}_2$, respectively, and the last clause specifies the sum $i_1+i_2$ as the final outcome. The latter is wrapped into a monadic value (i.e., an element of $M\ap V$) by function $\return$. Any of the clauses evaluating to~$\top$ turns the final result $\top$ immediately. In general, monad comprehension can contain any finite number of clauses, all of which except the last one may bind new pure values. Evaluation is strict and progresses from left to right. Note that here and below, we denote monadic values by letters with hat for clarity. 

Since $C_d$ also involves exceptional cases, we use monad comprehension for~$C_d$, too. So we have a two-layer monadic specification of semantics ($M$ is inside and $C_d$ outside). In the outer layer, we use function $\guard$ that on a false condition raises an exception (and jumps out of comprehension) and has no effect otherwise. Due to the two-layer representation, the unknown value~$\top$ causes no exception in the outer layer.

\begin{figure*}\figsize\renewcommand{\arraycolsep}{2pt}\renewcommand{\arraystretch}{1.2}
\[\begin{array}{lcl}
\semd{\epsilon}\gamma\phi&=&\return(\return\singleton,\gamma,\epsilon)\\
\semd{\overline{n}}\gamma\phi&=&\branching{\return(\return n,\gamma,\epsilon)&\mbox{if $\overline{n}$ is in domain~$d$ or lower}\\\return(\top,\gamma,\epsilon)&\mbox{otherwise}}\\
\semd{\overline{b}}\gamma\phi&=&\branching{\return(\return b,\gamma,\epsilon)&\mbox{if $\overline{b}$ is in domain~$d$ or lower}\\\return(\top,\gamma,\epsilon)&\mbox{otherwise}}\\
\semd{x}\gamma\phi&=&\return(\lookup{\gamma}{x},\gamma,\epsilon)\\
\semd{\addexpr{e_1}{e_2}}\gamma_0\phi&=&\mcompv{(\monadic{v}_1,\gamma_1,\omicron_1)\gets\semd{e_1}\gamma_0\phi\hstop\\(\monadic{v}_2,\gamma_2,\omicron_2)\gets\semd{e_2}\gamma_1\phi\hstop\\\return(\mcomp{v_1\gets\monadic{v}_1\hstop v_2\gets\monadic{v}_2\hstop\return(v_1+v_2)},\gamma_2,\omicron_1\omicron_2)}\\
\semd{\assertexpr{e}}\gamma_0\phi&=&\mcomp{(\monadic{v},\gamma_1,\omicron_1)\gets\semd{e}\gamma_0\phi\hstop\guard(\monadic{v}\ne\return\fls)\hstop\return(\return\singleton,\gamma_1,\omicron_1)}\\
\semd{\getexpr{d'}{k}}\gamma\phi&=&\branching{\return(\allpure(\phi_{d'}(k)),\gamma,\epsilon)&\mbox{if $d'\subtype d$}\\\return(\top,\gamma,\epsilon)&\mbox{otherwise}}\\
\semd{\ifexpr{e_1}{e_2}{e_3}}\gamma_0\phi&=&\mcompv{(\monadic{v}_1,\gamma_1,\omicron_1)\gets\semd{e_1}\gamma_0\phi\hstop\\\branching{\mcomp{(\monadic{v}_2,\gamma_2,\omicron_2)\gets\semd{e_2}\gamma_1\phi\hstop\return(\monadic{v}_2,\gamma_2,\omicron_1\omicron_2)}&\mbox{if $\monadic{v}_1=\return\tru$}\\\mcomp{(\monadic{v}_3,\gamma_3,\omicron_3)\gets\semd{e_3}\gamma_1\phi\hstop\return(\monadic{v}_3,\gamma_3,\omicron_1\omicron_3)}&\mbox{if $\monadic{v}_1=\return\fls$}\\\return(\top,\gamma_1,\omicron_1)&\mbox{otherwise}}}\\
\semd{\forexpr{x}{e_1}{e_2}{e_3}}\gamma_0\phi&=&\begin{array}[t]{@{}l@{}}\mcompv{(\monadic{v}_1,\gamma_1,\omicron_1)\gets\semd{e_1}\gamma_0\phi\hstop\\(\monadic{v}_2,\gamma_2,\omicron_2)\gets\semd{e_2}\gamma_1\phi\hstop\\\branching{\uopl{main}(\max(0,i_2-i_1))&\mbox{where $\monadic{v}_j=\return i_j$, $j=1,2$}\\\return(\top,\gamma_2,\omicron_1\omicron_2)&\mbox{if $\monadic{v}_1=\top$ or $\monadic{v}_2=\top$}}}\end{array}\\
\multicolumn{2}{r}{\mbox{where}}&\uopl{main}(n)=\mcompv{(\monadic{a}_1,\gamma'_1,\omicron'_1)\gets\semd{e}((x,\return i_1),\gamma_2)\phi\hstop\\\forall k=2,\ldots,n:(\monadic{a}_k,\gamma'_k,\omicron'_k)\gets\semd{e_3}([x\mapsto\return(i_1\!\!+\!\!k\!\!-\!\!1)]\gamma'_{k-1})\phi\hstop\\\return(\return(\monadic{a}_1,\ldots,\monadic{a}_n),\tail\gamma'_n,\omicron_1\omicron_2\omicron'_1\ldots\omicron'_n)}\\
\semd{\wireexpr{e}}\gamma_0\phi&=&\mcomp{(\monadic{v},\gamma_1,\omicron_1)\gets\semd{e}\gamma_0\phi\hstop\return(\monadic{v},\gamma_1,\lam{d'}{\branching{(\omicron_1)_{d'}\monadic{v}&\mbox{if $e$ is in domain~$d'$}\\(\omicron_1)_{d'}&\mbox{otherwise}}})}\\
\semd{\castexpr{e}{d'}}\gamma_0\phi&=&\branching{\semd{e}\gamma_0\phi&\mbox{if $d'\subtype d$}\\\mcomp{(\monadic{v}_1,\gamma_1,\omicron_1)\gets\semd{e}\gamma_0\phi\hstop\return(\top,\gamma_1,\omicron_1)}&\mbox{otherwise}}\\
\semd{\assignexpr{l}{e}}\gamma_0\phi&=&\begin{array}[t]{@{}l@{}}\mcompv{\monadic{a}\gets\return(\lookup{\gamma_0}{x})\hstop\\\forall k=1,\ldots,n:(\monadic{i}_k,\gamma_k,\omicron_k)\gets\semd{y_k}\gamma_{k-1}\phi\hstop\\(\monadic{v},\gamma',\omicron')\gets\semd{e}\gamma_n\phi\hstop\\\return(\return\singleton,[x\mapsto\update(\monadic{a},\monadic{i}_1\ldots\monadic{i}_n,\monadic{v})]\gamma',\omicron_1\ldots\omicron_n\omicron')}\end{array}\\
\multicolumn{2}{r}{\mbox{where}}&\loadexpr{\loadexpr{\loadexpr{x}{y_1}}{y_2}\ldots}{y_n}=l\\
\semd{\loadexpr{l}{e}}\gamma_0\phi&=&\mcompv{(\monadic{a},\gamma_1,\omicron_1)\gets\semd{l}\gamma_0\phi\hstop\\(\monadic{i},\gamma_2,\omicron_2)\gets\semd{e}\gamma_1\phi\hstop\\\return(\mcomp{a\gets\monadic{a}\hstop i\gets\monadic{i}\hstop a_i},\gamma_2,\omicron_1\omicron_2)}\\
\semd{\stmtcomp{\letexpr{x}{e_1}}{e_2}}\gamma_0\phi&=&\mcompv{(\monadic{v}_1,\gamma_1,\omicron_1)\gets\semd{e_1}\gamma_0\phi\hstop\\(\monadic{v}_2,\gamma_2,\omicron_2)\gets\semd{e_2}((x,\monadic{v}_1),\gamma_1)\phi\hstop\\\return(\monadic{v}_2,\tail\gamma_2,\omicron_1\omicron_2)}\\
\semd{\stmtcomp{e_1}{e_2}}\gamma_0\phi&=&\mcomp{(\monadic{v}_1,\gamma_1,\omicron_1)\gets\semd{e_1}\gamma_0\phi\hstop(\monadic{v}_2,\gamma_2,\omicron_2)\gets\semd{e_2}\gamma_1\phi\hstop\return(\monadic{v}_2,\gamma_2,\omicron_1\omicron_2)}
\end{array}
\]
\caption{\label{semantics:exprstmt}
Local dynamic semantics of expressions and statements}
\end{figure*}

The semantics of~$+$ is included as an example of a built-in operator. Arithmetic is implicitly performed modulo some positive integer if that is required by the type ($\uintmodty{\mbox{\lstinline{N}}}$).

The only case that uses core values is that of $\mbox{\lstinline{get}}$. To transform a core value to a value in $M\ap V$, we use the function $\allpure:U\to M\ap V$ defined as
\[
\allpure v=\branching{\return v&\mbox{if $v$ belongs to a primitive type}\\\return(\allpure v_1,\ldots,\allpure v_n)&\mbox{if $v=(v_1,\ldots,v_n)$}}\mbox{.}
\]

Concerning type casts, we show only the variant with domain cast as only domain matters here.

The assignment case uses an auxiliary function $\update:M\ap V\times(M\ap V)^*\times M\ap V\to M\ap V$ that takes a value that can be a list with $0$ or more dimensions (i.e., a primitive value or a list of primitives or a matrix etc.), an index vector, and a value, and returns a new list where the cell indicated by the index vector has been updated with the given value. More formally,
\[
\update(\monadic{a},\monadic{i}_1\ldots\monadic{i}_n,\monadic{v})=\branching{\monadic{v}&\mbox{if $n=0$}\\\mcomp{a\gets\monadic{a}\hstop i_1\gets\monadic{i}_1\hstop\return([i_1\mapsto\update(a_{i_1},\monadic{i}_2\ldots\monadic{i}_n,\monadic{v})]a)}&\mbox{if $n>0$}}\mbox{.}
\]

%We ignore scopes in the semantics for simplicity. Variables once inserted into the environment remain there forever. Provided that variables have unique names, this does not affect the values of variables in scope.
% In the implementation, it is straightforward to achieve this via a single pass variable renaming.
%
The circuit semantics is defined mostly analogously; the definition is given in Fig.~\ref{semantics:circuit}. Again, we present only one case of type cast; the other cases are defined in similar lines. The most important difference from local semantics is concerning the \lstinline{wire} construct that reads non-public values from the output streams of local computations. For this reason, the circuit semantics takes a pair of streams as a supplementary argument. Execution of each wire expression in a non-public domain removes the first value from the stream corresponding to the domain of that expression; the updated pair of streams is included in the result. 

\begin{figure*}\figsize\renewcommand{\arraycolsep}{2pt}\renewcommand{\arraystretch}{1.2}
\[
\begin{array}{lcl}
\sem{\epsilon}\gamma\phi\omicron&=&\return(\return\singleton,\gamma,\omicron)\\
\sem{\overline{n}}\gamma\phi\omicron&=&\branching{\return(\return n,\gamma,\omicron)&\mbox{if $\overline{n}$ is in \lstinline+$post+ or \lstinline+@public+}\\\return(\top,\gamma,\omicron)&\mbox{otherwise}}\\
\sem{\overline{b}}\gamma\phi\omicron&=&\branching{\return(\return b,\gamma,\omicron)&\mbox{if $\overline{b}$ is in \lstinline+$post+ or \lstinline+@public+}\\\return(\top,\gamma,\omicron)&\mbox{otherwise}}\\
\sem{x}\gamma\phi\omicron&=&\return(\lookup{\gamma}{x},\gamma,\omicron)\\
\sem{\addexpr{e_1}{e_2}}\gamma_0\phi\omicron_0&=&\mcompv{(\monadic{v}_1,\gamma_1,\omicron_1)\gets\sem{e_1}\gamma_0\phi\omicron_0\hstop\\(\monadic{v}_2,\gamma_2,\omicron_2)\gets\sem{e_2}\gamma_1\phi\omicron_1\hstop\\\return(\mcomp{v_1\gets\monadic{v}_1\hstop v_2\gets\monadic{v}_2\hstop\return(v_1+v_2)},\gamma_2,\omicron_2)}\\
\sem{\assertexpr{e}}\gamma_0\phi\omicron_0&=&\mcomp{(\monadic{v},\gamma_1,\omicron_1)\gets\sem{e}\gamma_0\phi\omicron_0\hstop\guard(\monadic{v}\ne\return\fls)\hstop\return(\return\singleton,\gamma_1,\omicron_1)}\\
\sem{\getexpr{d}{k}}\gamma\phi\omicron&=&\branching{\return(\allpure(\phi_{d}(k)),\gamma,\omicron)&\mbox{if $d=\publicdom$}\\\return(\top,\gamma,\omicron)&\mbox{otherwise}}\\
\sem{\ifexpr{e_1}{e_2}{e_3}}\gamma_0\phi\omicron_0&=&\mcomp{(\monadic{v}_1,\gamma_1,\omicron_1)\gets\sem{e_1}\gamma_0\phi\omicron_0\hstop\branching{\sem{e_2}\gamma_1\phi\omicron_1&\mbox{if $\monadic{v}_1=\return\tru$}\\\sem{e_3}\gamma_1\phi\omicron_1&\mbox{if $\monadic{v}_1=\return\fls$}\\\return(\top,\gamma_1,\omicron_1)&\mbox{otherwise}}}\\
\sem{\forexpr{x}{e_1}{e_2}{e_3}}\gamma_0\phi\omicron_0&=&\begin{array}[t]{@{}l@{}}\mcompv{(\monadic{v}_1,\gamma_1,\omicron_1)\gets\sem{e_1}\gamma_0\phi\omicron_0\hstop\\(\monadic{v}_2,\gamma_2,\omicron_2)\gets\sem{e_2}\gamma_1\phi\omicron_1\hstop\\\branching{\uopl{main}(\max(0,i_2-i_1))&\mbox{where $\monadic{v}_j=\return i_j$, $j=1,2$}\\\return(\top,\gamma_2,\omicron_2)&\mbox{if $\monadic{v}_1=\top$ or $\monadic{v}_2=\top$}}}\end{array}\\
\multicolumn{2}{r}{\mbox{where}}&\uopl{main}(n)=\mcompv{(\monadic{a}_1,\gamma'_1,\omicron'_1)\gets\sem{e_3}((x,\return i_1),\gamma_2)\phi\omicron_2\hstop\\\forall k=2,...,n:(\monadic{a}_k,\!\gamma'_k,\!\omicron'_k)\gets\sem{e_3}([x\mapsto\return(i_1\!\!+\!\!k\!\!-\!\!1)]\gamma'_{k-1})\phi\omicron'_{k-1}\hstop\\\return(\return(\monadic{a}_1,\ldots,\monadic{a}_n),\tail\gamma'_n,\omicron'_n)}\\
\sem{\wireexpr{e}}\gamma\phi\omicron&=&\mcompv{(\monadic{v},\gamma',\omicron')\gets\sem{e}\gamma\phi\omicron\hstop\\\return(\branching{\monadic{v}&\mbox{if $d=\publicdom$}\\\head\omicron'_d&\mbox{otherwise}},\gamma',\lam{d'}{\branching{\tail\omicron'_{d'}&\mbox{if $d'=d$}\\\omicron'_{d'}&\mbox{otherwise}}})}\\
\multicolumn{2}{r}{\mbox{where}}&\mbox{$d$ is the domain of~$e$}\\
\sem{\castexpr{e}{\qualty{t}{s}{d}}}\gamma\phi\omicron&=&\branching{\sem{e}\gamma\phi\omicron&\mbox{if $s=\poststg$ or $d=\publicdom$}\\\mcomp{(\monadic{v},\gamma',\omicron')\gets\sem{e}\gamma\phi\omicron\hstop\return(\top,\gamma',\omicron')}&\mbox{otherwise}}\\
\sem{\assignexpr{l}{e}}\gamma_0\phi\omicron_0&=&\begin{array}[t]{@{}l@{}}\mcompv{\monadic{a}\gets\return(\lookup{\gamma_0}{x})\hstop\\\forall k=1,\ldots,n:(\monadic{i}_k,\gamma_k,\omicron_k)\gets\sem{y_k}\gamma_{k-1}\phi\omicron_{k-1}\hstop\\(\monadic{v},\gamma',\omicron')\gets\sem{e}\gamma_n\phi\omicron_n\hstop\\\return(\return\singleton,[x\mapsto\update(\monadic{a},\monadic{i}_1\ldots\monadic{i}_n,\monadic{v})]\gamma',\omicron')}\end{array}\\
\multicolumn{2}{r}{\mbox{where}}&\loadexpr{\loadexpr{\loadexpr{x}{y_1}}{y_2}\ldots}{y_n}=l\\
\sem{\loadexpr{l}{e}}\gamma_0\phi\omicron_0&=&\mcompv{(\monadic{a},\gamma_1,\omicron_1)\gets\sem{l}\gamma_0\phi\omicron_0\hstop\\(\monadic{i},\gamma_2,\omicron_2)\gets\sem{e}\gamma_1\phi\omicron_1\hstop\\\return(\mcomp{a\gets\monadic{a}\hstop i\gets\monadic{i}\hstop a_i},\gamma_2,\omicron_2)}\\
\sem{\stmtcomp{\letexpr{x}{e_1}}{e_2}}\gamma_0\phi\omicron_0&=&\begin{array}[t]{@{}l@{}}\mcompv{(\monadic{v}_1,\gamma_1,\omicron_1)\gets\sem{e_1}\gamma_0\phi\omicron_0\hstop\\(\monadic{v}_2,\gamma_2,\omicron_2)\gets\sem{e_2}((x,\monadic{v}_1),\gamma_1)\phi\omicron_1\hstop\\\return(\monadic{v}_2,\tail\gamma_2,\omicron_2)}\end{array}\\
\sem{\stmtcomp{e_1}{e_2}}\gamma_0\phi\omicron_0&=&\mcomp{(\monadic{v}_1,\gamma_1,\omicron_1)\gets\sem{e_1}\gamma_0\phi\omicron_0\hstop\sem{e_2}\gamma_1\phi\omicron_1}
\end{array}
\]
\caption{\label{semantics:circuit}
Circuit semantics}
\end{figure*}

We can prove Theorems \ref{semantics:compthm}--\ref{semantics:correctness} below. Theorem~\ref{semantics:localthm} implies (via repeated application) that every party can compute all data that belong to its domain or lower domains despite not knowing values of the higher domains. Theorem~\ref{semantics:effthm} states that evaluating an expression can change only those values of the value environment that live in domains where the expression is effectful according to the type system. Theorem~\ref{semantics:outputthm} states that an expression can output values to the circuit only if the expression is effectful in $\publicdom$. Theorem~\ref{semantics:effcircthm} is similar to Theorem~\ref{semantics:effthm} but is concerning values in the environment that are visible to the circuit. Theorem~\ref{semantics:safety} states that inputs of higher domains do not influence computation results in the lower domains.
Theorem~\ref{semantics:soundness} states that executions of the same code in different domains agree on values visible in the lower domain.  Theorem~\ref{semantics:correctness} establishes that if a program succeeds in Prover's semantics then it succeeds in the circuit semantics, provided that it is given the same input and Prover's and Verifier's correct output. Proofs of the theorems use induction on the structure of the expression; the details are given in Appendix~\ref{semantics-proofs}.

Before the theorems can be precisely formulated, a few notions must be introduced which the formulations rely on. The notions basically specify, for a fixed domain's or the circuit's point of view, what are good relationships between monadic values and types, and between two monadic values. 

\begin{definition}\label{semantics:exposeddef}
Let a predicate~$P$ on qualified types be fixed. For any well-structured qualified type $q=\qualty{t}{s}{d}$ and $\monadic{v}\in M\ap V$, we say that $\monadic{v}$ is \emph{$q$-exposed in~$P$} if one of the following alternatives holds:
\begin{enumerate}
\item $P(q)$ is true and $t$~{}is a primitive type and $\monadic{v}=\return v$ where $v\in t$ (e.g., if $t=\boolmodty{\mbox{\lstinline{N}}}$ then $v\in\set{\tru,\fls}$);
\item $P(q)$ is true and $t=\listty{q'}$ and $\monadic{v}=\return(\monadic{v}_1,\ldots,\monadic{v}_n)$ where $n\in\NN$ and all $\monadic{v}_1,\ldots,\monadic{v}_n\in M\ap V$ are $q'$-exposed in~$P$;
\item $P(q)$ is false.
\end{enumerate}
\end{definition}

\begin{definition}\label{semantics:exactdef}
Let a predicate~$P$ on qualified types be fixed. For any well-structured qualified type $q=\qualty{t}{s}{d}$ and $\monadic{v}\in M\ap V$, we say that $\monadic{v}$ is \emph{$q$-exact in~$P$} if one of the following alternatives holds:
\begin{enumerate}
\item $P(q)$ is true and $t$~{}is a primitive type and $\monadic{v}=\return v$ where $v\in t$ (e.g., if $t=\boolmodty{\mbox{\lstinline{N}}}$ then $v\in\set{\tru,\fls}$);
\item $P(q)$ is true and $t=\listty{q'}$ and $\monadic{v}=\return(\monadic{v}_1,\ldots,\monadic{v}_n)$ where $n\in\NN$ and all $\monadic{v}_1,\ldots,\monadic{v}_n\in M\ap V$ are $q'$-exact in~$P$;
\item $P(q)$ is false and $\monadic{v}=\top$.
\end{enumerate}
\end{definition}

\begin{definition}\label{semantics:coincidentdef}
Let a predicate~$P$ on qualified types be fixed. For any well-structured qualified type $q=\qualty{t}{s}{d}$ and $\monadic{v},\monadic{v}'\in M\ap V$, we say that $\monadic{v}$ and $\monadic{v}'$ are \emph{$q$-coincident in~$P$} and write $\monadic{v}\coin{q}{P}\monadic{v}'$ iff one of the following alternatives holds:
\begin{enumerate}
\item $P(q)$ is true and $t$~{}is a primitive type and $\monadic{v}=\monadic{v}'=\return v$ where $v\in t$;
\item $P(q)$ is true and $t=\listty{q'}$ and $\monadic{v}=\return(\monadic{v}_1,\ldots,\monadic{v}_n)$, $\monadic{v}'=\return(\monadic{v}'_1,\ldots,\monadic{v}'_n)$ where $n\in\NN$ and $\monadic{v}_i$, $\monadic{v}'_i$ are $q'$-coincident in~$P$ for every $i=1,\ldots,n$;
\item $P(q)$ is false.
\end{enumerate}
%\item For any $\monadic{v},\monadic{v}'\in M\ap V$, we write $\monadic{v}\sqsubseteq\monadic{v}'$ (or $\monadic{v}'\sqsupseteq\monadic{v}$) iff $\monadic{v}=\monadic{v}'$ or $\monadic{v}'=\top$ or $\monadic{v}=(\monadic{v}_1,\ldots,\monadic{v}_n)$, $\monadic{v}'=(\monadic{v}'_1,\ldots,\monadic{v}'_n)$ for some $n\in\NN$ and $\monadic{v}_1,\ldots,\monadic{v}_n$, $\monadic{v}'_1,\ldots,\monadic{v}'_n$ such that $\monadic{v}_i\sqsubseteq\monadic{v}'_i$ for every $i=1,\ldots,n$.
\end{definition}

\begin{definition}
Let $\Gamma$ be a well-structured type environment and $P$ be a predicate defined on qualified types. 
\begin{enumerate}
\item We say that $\gamma\in\mathbf{Env}$ is \emph{$\Gamma$-exposed in~$P$} iff $\vars\Gamma=\vars\gamma$ and, for every association $(x_i:q_i)$ occurring in~$\Gamma$, the value in the corresponding association $(x_i,\monadic{v}_i)$ in $\gamma$ is $q_i$-exposed in~$P$. 
\item We say that $\gamma\in\mathbf{Env}$ is \emph{$\Gamma$-exact in~$P$} iff $\vars\Gamma=\vars\gamma$ and, for every association $(x_i:q_i)$ occurring in $\Gamma$, the value in the corresponding association $(x_i,\monadic{v}_i)$ in $\gamma$ is $q_i$-exact in~$P$. 
\item We say that $\gamma,\gamma'\in\Env$ are \emph{$\Gamma$-coincident in~$P$} and write $\gamma\coin{\Gamma}{P}\gamma'$ iff $\vars\Gamma=\vars\gamma=\vars\gamma'$ and, for every association $(x_i:q_i)$ occurring in $\Gamma$, the values in the corresponding associations $(x_i,\monadic{v}_i)$ and $(x_i,\monadic{v}'_i)$ in $\gamma$ and $\gamma'$, respectively, are $q_i$-coincident in~$P$.
\end{enumerate}
\end{definition}

\begin{definition}\label{semantics:interpretdef}
For any fixed domain~$d'$, we shall say ``-exposed in~$d'$'', ``-exact in~$d'$'' and ``\mbox{-}co\-incident in~$d'$'' instead of ``-exposed in~$P$'', ``-exact in~$P$'' and ``-coincident in~$P$'' where $P(\qualty{t}{s}{d})\rightleftharpoons(d\subtype d')$. We shall say ``-exact in circuit'' and ``-coincident in circuit'' instead of ``-exact in~$P$'' and ``\mbox{-}co\-incident in~$P$'' where $P(\qualty{t}{s}{d})\rightleftharpoons(s=\poststg\vee d=\publicdom)$. We write $\coin{q}{d}$ and $\coin{\Gamma}{d}$ for coincidence in~$d$, and $\coincirc{q}$ and $\coincirc{\Gamma}$ for coincidence in circuit.
\end{definition}

\begin{definition}
Call a predicate~$P$ defined on qualified types \emph{data insensitive} if $P(\qualty{t_1}{s}{d})=P(\qualty{t_2}{s}{d})$ for all data types $t_1,t_2$, stage~$s$ and domain~$d$.
\end{definition}

Note that all predicates used in Definition~\ref{semantics:interpretdef} are data insensitive.

\begin{definition}
Let $d$ be a fixed domain. 
\begin{enumerate}
\item Let $\omicron\in\Out^2$. We say that $\omicron$ is \emph{exact in~$d$} iff, for any $d'\in\set{\proverdom,\verifierdom}$, each component of $\omicron_{d'}$ is of the form $\return v$ for $v\in\NN\cup\BB$ if $d'\subtype d$ and $\top$ if $d'$ is a strict superdomain of~$d$.
\item Let $\omicron,\omicron'\in\Out^2$. We say that $\omicron$ and $\omicron'$ are \emph{coincident in~$d$} and write $\omicron\coinpure{d}\omicron'$ iff, for any $d'\in\set{\proverdom,\verifierdom}$, the lengths of $\omicron_{d'}$ and $\omicron'_{d'}$ are equal and if $d'\subtype d$ then $\omicron_{d'}=\omicron'_{d'}$.
\end{enumerate}
\end{definition}

\begin{theorem}\label{semantics:compthm}
Let $\Gamma \vd e : q\eff D$ with well-structured~$\Gamma$ and $\gamma\in\mathbf{Env}$ be $\Gamma$-exact in~$d$ for some domain~$d$. Assume that for all subexpressions of~$e$ of the form $\ofty{\getexpr{d'}{k}}{q'}$ where $d'\subtype d$, the value $\allpure(\phi_{d'}(k))$ is $q'$-exact in~$d$. Assume that $\semd{e}\gamma\phi=\return(\monadic{v},\gamma',\omicron)$. Then:
\begin{thmlist}
\item\label{semantics:localthm}
$\monadic{v}$~{}is $q$-exact, $\gamma'$~{}is $\Gamma$-exact and $\omicron$~{}is exact in~$d$;
\item\label{semantics:effthm}
$\gamma\coin{\Gamma}{d''}\gamma'$ for any domain~$d''$ such that $d''\subtype d$ and $d''\notin D$;
\item\label{semantics:outputthm}
If $\publicdom\notin D$ then $\omicron=\epsilon$;
\item\label{semantics:effcircthm}
If $d=\proverdom$ and $\publicdom\notin D$ then $\gamma\coincirc{\Gamma}\gamma'$.
\end{thmlist}
\end{theorem}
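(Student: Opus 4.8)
The plan is to prove all four claims simultaneously by structural induction on~$e$, since the induction hypothesis applied to a subexpression must supply every part at once: claim~(1) is what lets us re-establish the running hypotheses (that the threaded environment $\gamma'$ is still $\Gamma$-exact, and that outputs stay well-formed) before invoking the induction hypothesis on the next subexpression, while claims~(2)--(4) are what we compose to obtain the effect-, output- and circuit-level conclusions for the compound expression. Before the case analysis I would record a few routine facts: that $\coin{\Gamma}{d''}$ and $\coincirc{\Gamma}$ are reflexive and transitive, so coincidences compose along sequential evaluation; that every $\getexpr{d'}{k}$-subexpression of a subexpression of~$e$ is again a $\getexpr{d'}{k}$-subexpression of~$e$, so the side hypothesis on get-expressions is inherited by recursive calls; and that by Theorem~\ref{typesystem:thm} the derived type~$q$ and all types in~$\Gamma$ are well-structured, which (via Definition~\ref{typesystem:def}) ties the domain of a list to the qualifiers of its elements. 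For the L-value cases I will invoke Lemma~\ref{typesystem:lhslemma}.

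In the base cases ($\epsilon$, the literals, a bare variable, and $\getexpr{d'}{k}$) all four claims follow by matching the relevant equation of Fig.~\ref{semantics:exprstmt} against the exactness definitions. The literal and variable cases match $q$-exactness through the case split on whether the type-domain is $\subtype d$; the get-case uses precisely the side hypothesis that $\allpure(\phi_{d'}(k))$ is $q'$-exact, together with the $\allpre_{d'}$ side-condition of its typing rule, to conclude that when $d'\not\subtype d$ the returned~$\top$ is again $q$-exact. The genuinely inductive but routine cases --- addition, $\assertexpr{\cdot}$, $\wireexpr{\cdot}$, the domain cast, the let-binding and the bare sequence --- all follow one recipe: apply the induction hypothesis to the first subexpression, use its claim~(1) to check that the threaded environment is still $\Gamma$-exact and thereby discharge the premises of the next induction-hypothesis call, then combine values, outputs and effect sets. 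Whenever $\publicdom\in D$ (as for $\assertexpr{\cdot}$ and $\wireexpr{\cdot}$) claims~(2)--(4) are vacuous, since $\gen{\publicdom}$ already contains every domain; for the cast I use that exactness depends only on the per-level domains, so raising the stage, or raising the top domain while staying within the visibility threshold~$d$, leaves exactness intact.

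The two delicate cases are the conditional and the loop. In both, the pivotal move is the dichotomy delivered by claim~(1) on the guard (respectively the bounds): either its value is a pure integer/boolean, and we descend into the taken branch (respectively unfold the iterations), or it is~$\top$ and the whole expression evaluates to~$\top$. To see that this~$\top$ is $q$-exact, note that the guard's type carries stage~$\prestg$ and domain~$d'$, so it is pure exactly when $d'\subtype d$; for the conditional, writing $d_{\mathrm{res}}$ for the branch/result domain, the typing constraint gives $\gen{d'}\supseteq\gen{d_{\mathrm{res}}}$, i.e. $d'\subtype d_{\mathrm{res}}$, whence $d'\not\subtype d$ forces $d_{\mathrm{res}}\not\subtype d$ and exactness demands~$\top$; for the loop the result list already carries the bound domain~$d'$, so the same conclusion is immediate. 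Coincidence, output-emptiness and circuit-coincidence then compose across the guard and the taken branch by transitivity. The loop additionally needs an inner induction on the iteration count $n=\max(0,i_2-i_1)$: the invariant carried through $\uopl{main}$ is that after $k$ iterations the environment (with the loop variable bound to $\return(i_1+k-1)$) is $\Gamma$-exact, the accumulated prefix $(\monadic{a}_1,\ldots,\monadic{a}_k)$ consists of $q$-exact values, the output remains exact, and both coincidences hold; the final $\tail$ restores the original stack. Setting up this iteration invariant, and checking that the loop-variable binding is itself exact in the right domain, is the main bookkeeping obstacle.

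The assignment and load cases are where I expect the real difficulty. From Lemma~\ref{typesystem:lhslemma} I obtain the nested list type of~$x$ and the domains $d_1\subtype\cdots\subtype d_n\subtype d_{\mathrm{res}}$ of the successive indices. The crux is that $\update$ threads through the maybe-monad, so a single unknown index collapses the whole result to~$\top$; I must show this is consistent with $\Gamma$-exactness of the updated environment. Here well-structuredness is essential: it forces the domain at each list level to sit below the qualifiers of its contents, so that as soon as an index $y_i$ is unknown (its domain $d_i\not\subtype d$) the list at that level is itself at a domain $\not\subtype d$ and hence already~$\top$, exactly matching what $\update$ returns; and when all the indices are known, a short sub-induction on index depth shows that replacing the addressed cell by the $q$-exact value~$\monadic{v}$ (from the induction hypothesis on the right-hand side) preserves exactness of the whole value of~$x$. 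The effect set $\gen{s}\cup\gen{d}\cup D\cup D'$ is precisely what absorbs the domains touched by the indices and the written value, so claims~(2)--(4) follow once more by the composition pattern.
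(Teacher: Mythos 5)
Your proposal is correct and follows essentially the same route as the paper: structural induction on~$e$ with the same supporting apparatus (reflexivity/transitivity of the coincidence relations, Lemma~\ref{typesystem:lhslemma} for L-values, well-structuredness via Theorem~\ref{typesystem:thm}, an inner induction on iteration count for loops, and a sub-induction on index depth showing that $\update$ preserves exactness and --- for claims (2) and (4) --- coincidence when the written cell lives at a domain above the observer), and with the same key observations in the delicate cases, e.g.\ deriving that a $\top$ guard forces the result domain above the evaluation domain from $\gen{d'}\supseteq\gen{d}$. The only difference is organizational: you run one simultaneous induction for all four claims, whereas the paper proves claim~(1) first as a standalone induction and then establishes (2)--(4) each by its own induction citing~(1); the case analyses are otherwise the same.
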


%\begin{theorem}
%Let $\Gamma \vd e : q\eff D$ with well-structured~$\Gamma$ and $\gamma\in\mathbf{Env}$ be $\Gamma$-exact in~$d$ for some domain~$d$. Assume that for all subexpressions of~$e$ of the form $\ofty{\getexpr{d'}{k}}{q'}$ where $d'\subtype d$, the value $\allpure(\phi_{d'}(k))$ is $q'$-exact in~$d$. If $\semd{e}\gamma\phi=\return(\monadic{v},\gamma',\omicron)$ then, for any domain~$d''$ such that $d''\subtype d$ and $d''\notin D$, we have $\gamma\coin{\Gamma}{d''}\gamma'$.
%\end{theorem}
%
%\begin{theorem}
%Let $\Gamma \vd e : q\eff D$ with well-structured~$\Gamma$ and $\gamma\in\mathbf{Env}$ be $\Gamma$-exact in~$d$ for some domain~$d$. Assume that for all subexpressions of~$e$ of the form $\ofty{\getexpr{d'}{k}}{q'}$ where $d'\subtype d$, the value $\allpure(\phi_{d'}(k))$ is $q'$-exact in~$d$. If $\semd{e}\gamma\phi=\return(\monadic{v},\gamma',\omicron)$ and $\publicdom\notin D$ then $\omicron=\epsilon$.
%\end{theorem}
%
%\begin{theorem}
%Let $\Gamma \vd e : q\eff D$ with well-structured~$\Gamma$ and $\gamma\in\mathbf{Env}$ be $\Gamma$-exact in $d$ for $d=\proverdom$. Assume that for all subexpressions of~$e$ of the form $\ofty{\getexpr{d'}{k}}{q'}$, the value $\allpure(\phi_{d'}(k))$ is $q'$-exact in~$d$. If $\semd{e}\gamma\phi=\return(\monadic{v},\gamma',\omicron)$ and $\publicdom\notin D$ then $\gamma\coincirc{\Gamma}\gamma'$.
%\end{theorem}
%
\begin{theorem}\label{semantics:safety}
If $\semd{e}\gamma\phi=\return(\monadic{v},\gamma',\omicron)$ and $\phi'_{d'}=\phi_{d'}$ for every $d'\subtype d$ then $\semd{e}\gamma\phi'=\semd{e}\gamma\phi$.
\end{theorem}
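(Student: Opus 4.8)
The plan is to prove the equation $\semd{e}\gamma\phi' = \semd{e}\gamma\phi$ by structural induction on~$e$, generalising the statement over all value environments~$\gamma$ so that the induction hypothesis is available at the intermediate environments arising during the evaluation of compound expressions. The guiding observation is that, among all the defining equations of $\semd{\cdot}$ in Fig.~\ref{semantics:exprstmt}, the input triple~$\phi$ is genuinely \emph{inspected} in only one rule --- the one for $\getexpr{d'}{k}$ --- whereas every other rule merely threads~$\phi$ unchanged into the recursive calls on its subexpressions. Hence the entire content of the theorem is concentrated in that single base case, and the remainder of the argument is bookkeeping verifying that this threading preserves equality.

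For the base case $e=\getexpr{d'}{k}$ I would split on the domain test $d'\subtype d$. If $d'\subtype d$, the hypothesis $\phi'_{d'}=\phi_{d'}$ yields $\phi'_{d'}(k)=\phi_{d'}(k)$, hence $\allpure(\phi'_{d'}(k))=\allpure(\phi_{d'}(k))$, so both evaluations return the same triple $\return(\allpure(\phi_{d'}(k)),\gamma,\epsilon)$. If $d'\not\subtype d$, both return $\return(\top,\gamma,\epsilon)$ irrespective of~$\phi$. The remaining base cases ($\epsilon$, the literals, and variable lookup) never consult~$\phi$, so their two evaluations are syntactically identical.

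For the inductive cases I would argue uniformly. Each compound form evaluates its immediate subexpressions from left to right, passing the \emph{same}~$\phi$ (respectively~$\phi'$) together with the environment returned by the preceding subexpression into the next. By the induction hypothesis applied at each such intermediate environment, the corresponding sub-evaluations under~$\phi$ and~$\phi'$ return identical triples of value, environment and output stream; and since the surrounding monad-comprehension combinators and the auxiliary functions $\update$, $\allpure$, $\head$, $\tail$ never look at~$\phi$, the assembled results coincide too. The only cases needing slightly more care are those where control flow is data-dependent: for a conditional the selected branch is determined by the guard value~$\monadic{v}_1$, and for a loop the iteration count $\max(0,i_2-i_1)$ is determined by the bounds computed from $e_1$ and~$e_2$. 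In each case the deciding value is produced by a subexpression that, by the induction hypothesis, evaluates identically under~$\phi$ and~$\phi'$; consequently both runs select the same branch, respectively perform the same number of iterations over matching environments (the latter handled by a secondary induction on the iteration count inside the helper $\uopl{main}$).

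I do not anticipate a genuine obstacle: the sole channel by which~$\phi$ could transmit information is the $\getexpr{d'}{k}$ rule, and its built-in domain guard already discards every component $\phi_{d'}$ with $d'\not\subtype d$ by collapsing it to~$\top$. Two points must nevertheless be kept honest. First, the induction hypothesis must be stated with~$\gamma$ universally quantified, so that it may be reapplied at the intermediate environments threaded through compound expressions (and, for the loop, across iterations). Second, although the theorem is phrased under the success hypothesis $\semd{e}\gamma\phi=\return(\monadic{v},\gamma',\omicron)$, this is not in fact restrictive: the maybe-monad bind short-circuits at the first failing clause, and since the sub-results under~$\phi$ and~$\phi'$ agree clause by clause, any failure occurs at exactly the same position in both runs; thus the claimed equation holds as an equation of lifted values whether or not evaluation succeeds.
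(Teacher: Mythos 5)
Your proposal is correct and follows essentially the same route as the paper's proof: a structural induction on $e$ in which the only case genuinely depending on $\phi$ is $\getexpr{d'}{k}$ (resolved by the hypothesis $\phi'_{d'}=\phi_{d'}$ for $d'\subtype d$ and by the $\top$ branch otherwise), with all compound cases handled by applying the induction hypothesis to subexpressions at the intermediate environments and reassembling identical triples. Your extra remarks about quantifying over $\gamma$ and about the success hypothesis being inessential are sound but do not change the argument.
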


\begin{theorem}\label{semantics:soundness}
Let $\Gamma\vd e:q\eff D$ with well-structured $\Gamma$. Let $d,d'$ be domains such that $d'\subtype d$. Let $\gamma_d,\gamma_{d'}\in\mathbf{Env}$ be $\Gamma$-exact in~$d$ and $d'$, respectively, and let $\gamma_d\coin{\Gamma}{d'}\gamma_{d'}$. Assume that for all subexpressions of~$e$ of the form $\ofty{\getexpr{d''}{k}}{q'}$, the value $\allpure(\phi_{d''}(k))$ is $q'$-exact in~$d''$. Assume that there exist $\monadic{v}_d,\gamma'_d,\omicron$ such that $\semd{e}\gamma_d\phi=\return(\monadic{v}_d,\gamma'_d,\omicron)$. Then there exist $\monadic{v}_{d'},\gamma'_{d'},\omicron'$ such that $\sem{e}_{d'}\gamma_{d'}\phi=\return(\monadic{v}_{d'},\gamma'_{d'},\omicron')$, whereby $\monadic{v}_d\coin{q}{d'}\monadic{v}_{d'}$, $\gamma_d\coin{\Gamma}{d'}\gamma_{d'}$ and $\omicron\coinpure{d'}\omicron'$.
\end{theorem}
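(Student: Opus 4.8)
The plan is to prove the statement by structural induction on $e$, exactly as the companion theorems are proved, and to reuse Theorem~\ref{semantics:compthm} (all of its parts) freely: once the induction hypothesis has established that $\sem{e}_{d'}$ succeeds, part~\ref{semantics:localthm} supplies $q$-exactness of the produced value in $d'$, which is what lets us pin down the shape of values invisible at $d'$ (they must be $\top$). The recurring mechanism in every case is the same. Each construct is governed by a domain --- the domain of a literal, of a variable, of the guard of a conditional, of the bounds of a loop, of the list in a lookup. When that domain is $\subtype d'$, coincidence in $d'$ forces the relevant monadic values to be equal \emph{pure} values (clauses (1)--(2) of Definition~\ref{semantics:coincidentdef}), so the two runs do literally the same thing and coincidence of the outcome is immediate. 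When that domain is not $\subtype d'$, clause~(3) makes coincidence at the result type vacuous, and the only real work is to show that (a) the $d'$-run does not fail and (b) the environment and the output streams stay coincident in $d'$. I would first record the trivial observation that $\coin{q}{d'}$ and $\coin{\Gamma}{d'}$ are equivalence relations (equality of pure values where the predicate holds, the full relation elsewhere); symmetry and transitivity are used constantly to glue together the steps below.

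The leaf and sequential cases are direct. For a variable, coincidence of the looked-up values is exactly the hypothesis $\gamma_d \coin{\Gamma}{d'} \gamma_{d'}$. For literals the governing domain is the annotated domain, and both branches of the case analysis are immediate. For $\getexpr{d''}{k}$, if $d''\subtype d'$ both runs return the same value $\allpure(\phi_{d''}(k))$, which is pure by the $\phi$-hypothesis and hence self-coincident; otherwise coincidence is vacuous. Addition, $\wireexpr{e}$, $\loadexpr{l}{e}$, $\assignexpr{l}{e}$, variable definitions and statement composition are handled by applying the induction hypothesis to the subexpressions in evaluation order, threading the (coincident) intermediate environments from one subexpression into the next, and then checking that the top-level combinator preserves coincidence. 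Two points deserve care: for $\wireexpr{e}$ the decision of which output stream receives the value is a static property of the subexpression (the same in both runs), so the appended streams keep equal length, while equality on the $d'$-visible streams follows from coincidence of the wired value; for $\loadexpr{l}{e}$ I use that the list type is well-structured (Theorem~\ref{typesystem:thm}), so the list domain is $\subtype$ the element domain --- therefore, if the list is invisible at $d'$ it evaluates to $\top$, the index-into-$\top$ short-circuits to $\top$ with no bounds check, and the result type is itself invisible at $d'$, so coincidence is vacuous. The assignment case additionally uses Lemma~\ref{typesystem:lhslemma} to read off the domains of the index vector and of the assigned variable, after which $\update$ is seen to preserve coincidence.

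The real obstacle is the divergent control flow of $\ifexpr{e_1}{e_2}{e_3}$ and $\forexpr{x}{e_1}{e_2}{e_3}$. Write $d_g$ for the domain of the guard (resp.\ of the loop bounds) and $q=\qualty{t}{s}{d_o}$ for the result type. When $d_g\subtype d'$ the two runs agree on the guard (resp.\ bounds) and thus take the same branch (resp.\ perform the same number of iterations), and the claim reduces to the induction hypothesis on the chosen branch (for loops, to an inner induction on the iteration count, threading coincident environments). The hard subcase is $d_g\subtype d$ but $d_g\not\subtype d'$: here the $d$-run takes a genuine branch (runs the body) while the $d'$-run, seeing $\top$ for the guard/bounds, takes the ``otherwise'' arm and returns $\top$ after evaluating only the guard/bounds, so success of the $d'$-run is immediate. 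Result coincidence is vacuous: the no-read-up restriction $\gen{d_g}\supseteq\gen{d_o}$ together with $d_g\not\subtype d'$ forces $d_o\not\subtype d'$ (for loops the result domain is already $d_g$), so the predicate at $q$ is false. The two remaining obligations are exactly what the no-write-down restriction $\gen{d_g}\supseteq\gen{s}\cup D_2\cup D_3$ (resp.\ $\gen{d_g}\supseteq\gen{s}\cup D_3$) buys us. Since $d_g\not\subtype d'$ forces $d_g\ne\publicdom$, we get $\publicdom\notin\gen{d_g}$, hence $\publicdom$ lies in no branch-effect; Theorem~\ref{semantics:outputthm} then shows the extra branch/body produces no output, so the two output streams differ only by the empty contribution of the skipped code and remain coincident in $d'$. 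Likewise $d'\notin\gen{d_g}$, so $d'\notin D_2\cup D_3$; Theorem~\ref{semantics:effthm} then shows the $d$-run's branch leaves every $d'$-visible variable unchanged, and the environments stay coincident in $d'$. The guard/bounds themselves are evaluated in both runs, so the induction hypothesis already reconciles their effects and outputs.

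Finally, preservation of success at $\assertexpr{e}$ is the one place where failure could in principle be introduced in the $d'$-run: the $d$-run passed its guard $\monadic{v}_d\ne\return\fls$, and we must exclude $\monadic{v}_{d'}=\return\fls$. If the domain of $e$ is $\subtype d'$ then $\monadic{v}_{d'}=\monadic{v}_d\ne\return\fls$ by coincidence; otherwise $q$-exactness of $\monadic{v}_{d'}$ (Theorem~\ref{semantics:localthm} applied to the subexpression $e$ in $d'$, whose premises are met once the induction hypothesis gives success) forces $\monadic{v}_{d'}=\top\ne\return\fls$, so the guard passes either way, and the unit result together with its public effect is trivially coincident. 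I expect the bookkeeping of this last subcase --- and more generally the interleaving of the coincidence argument with appeals to Theorem~\ref{semantics:compthm}, which needs the $d'$-run to succeed before it can be invoked yet is used to justify that very success at $\assertexpr{e}$ --- to be the most delicate part to write out; the branching cases above are where the type-system invariants (no-read-up and no-write-down) do the essential work.
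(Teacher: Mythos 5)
Your proposal is correct and follows essentially the same route as the paper's proof: structural induction on $e$, with the case split on whether the governing domain is $\subtype d'$, vacuous coincidence via no-read-up in the invisible branch, and Theorems~\ref{semantics:effthm} and~\ref{semantics:outputthm} discharging the environment- and output-coincidence obligations for the code skipped by the $d'$-run; the assert case is resolved exactly as you describe (exactness of the $d'$-value, obtained after the induction hypothesis yields success of the $d'$-run on the subexpression, rules out $\return\fls$). The only detail you gloss over — that $\update$ preserves coincidence in the assignment case — is isolated in the paper as Lemma~\ref{semantics:updexactcoinlemma}, but your claim is the right one.
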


\begin{theorem}\label{semantics:correctness}
Let $\Gamma \vd e : q\eff D$ with well-structured~$\Gamma$. Let $\gamma_d$, $\gamma\in\mathbf{Env}$ be $\Gamma$-exact in $\proverdom$  and in circuit, respectively, such that $\gamma_d\coincirc{\Gamma}\gamma$. Assume that, for all subexpressions of~$e$ of the form $\ofty{\getexpr{d'}{k}}{q'}$, the value $\allpure(\phi_{d'}(k))$ is $q'$-exact in $\proverdom$. Assume that, for $d=\proverdom$, there exist $\monadic{v}_d,\gamma'_d,\omicron$ such that $\semd{e}\gamma_d\phi=\return(\monadic{v}_d,\gamma'_d,\omicron)$. If $\rho$~{}is any pair of stream continuations (one for each of $\proverdom$ and $\verifierdom$) then $\sem{e}\gamma\phi(\omicron\rho)=\return(\monadic{v},\gamma',\rho)$, where $\monadic{v}$ is $q$-exact and $\gamma'$ is $\Gamma$-exact in circuit. Thereby, $\monadic{v}_d\coincirc{q}\monadic{v}$ and $\gamma'_d\coincirc{\Gamma}\gamma'$. (Here, $\omicron\rho$ denotes the pointwise concatenation of $\omicron$ and $\rho$.)
\end{theorem}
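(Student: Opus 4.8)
The plan is to proceed by structural induction on~$e$, establishing all the conclusions of the theorem simultaneously: that $\sem{e}\gamma\phi(\omicron\rho)$ succeeds and returns exactly~$\rho$ as the residual stream (so the circuit consumes precisely the output~$\omicron$ that Prover produced), that the resulting value~$\monadic{v}$ and environment~$\gamma'$ are exact in circuit, and that they are circuit-coincident with Prover's $\monadic{v}_d$ and~$\gamma'_d$. Throughout I use that $\proverdom$ is the top of the domain order, so in Prover's semantics every availability guard $d'\subtype\proverdom$ holds and no~$\top$ is introduced for availability reasons; hence, by $\Gamma$-exactness in $\proverdom$ and Theorem~\ref{semantics:localthm}, all of Prover's intermediate values are genuinely pure. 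I also record at the outset two bookkeeping facts: circuit-coincidence $\coincirc{\cdot}$ is an equivalence relation on the circuit-visible part of values (its defining predicate is data insensitive), so coincidences supplied by the IH may be chained with coincidences coming from Theorem~\ref{semantics:effcircthm}; and for a fully public qualified type the notions ``exact in $\proverdom$'' and ``exact in circuit'' coincide, which lets the hypothesis on $\getexpr{d'}{k}$-subexpressions be transported from $\proverdom$ to circuit in the public case.

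The base cases ($\epsilon$, literals, variables, $\getexpr{d'}{k}$) are immediate from the defining equations together with these remarks; in each the output is empty, so the residual stream is~$\rho$. The purely compositional cases --- binary operators, sequencing, \lstinline{let}, list access, assignment, and casts --- follow by threading the stream: writing Prover's output as a concatenation $\omicron=\omicron_1\omicron_2$ of the parts produced by the successive subevaluations, I feed the first subexpression the stream $\omicron_1(\omicron_2\rho)$ and invoke the IH to learn that the circuit consumes~$\omicron_1$ and leaves $\omicron_2\rho$, then apply the IH to the second subexpression with residual~$\rho$; the environments and get-hypotheses needed to re-enter the IH are supplied by Theorem~\ref{semantics:localthm} on the Prover side and by the IH itself on the circuit side. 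The assertion case is where correctness is actually used: success in Prover's semantics means $\monadic{v}_d\ne\return\fls$, and since the asserted expression is in~$\poststg$ its type satisfies the circuit predicate, so circuit-coincidence forces $\monadic{v}=\monadic{v}_d$ and the circuit's guard also succeeds. For assignment and list access I additionally appeal to Lemma~\ref{typesystem:lhslemma} and well-structuredness (Theorem~\ref{typesystem:thm}) to see that all index subexpressions feeding a circuit-visible cell are themselves circuit-visible, so that the two runs update the same cell and $\update$ preserves exactness and coincidence.

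The conceptually central case is \lstinline{wire}, which realises the read/write correspondence between the streams. Prover \emph{appends} the body value~$\monadic{v}_d$ to the output stream of the body's domain~$d$, whereas the circuit \emph{reads} a value off the head of that stream. After splitting the circuit's input stream so that the body consumes its own portion and leaves, on the $d$-component, exactly $\monadic{v}_d$ followed by~$\rho_d$, the circuit's $\head$ returns precisely~$\monadic{v}_d$ and its $\tail$ restores~$\rho$; thus the wire reads back the very value Prover wrote, coincidence holds by reflexivity, and exactness follows because \lstinline{wire} applies only to the primitive types $\uintmodty{\mbox{\lstinline{N}}}$ and $\boolmodty{\mbox{\lstinline{N}}}$ (the public case, where nothing is streamed, is dispatched directly by the IH).

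The main obstacle is the treatment of \lstinline{if} and \lstinline{for}, where the two semantics may take different control-flow paths, and I split on the domain~$d'$ of the guard (resp.\ loop bounds). If $d'=\publicdom$, then by circuit-coincidence the circuit evaluates the guard/bounds to the same pure values as Prover, takes the same branch (resp.\ runs the same iterations), and the IH closes the case after the usual stream threading. If $d'\neq\publicdom$, the circuit obtains~$\top$ for the guard/bounds and short-circuits to~$\top$, evaluating \emph{no} branch or iteration, while Prover does evaluate one branch (resp.\ several iterations); reconciling stream consumption, value, and environment is driven entirely by the side conditions $\gen{d'}\supseteq\gen{s}\cup\gen{d}\cup D_2\cup D_3$ (resp.\ $\gen{d'}\supseteq\gen{s}\cup\gen{d}\cup D_3$). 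Since $\publicdom\notin\gen{d'}$, these give $\publicdom\notin D_2\cup D_3$ (resp.\ $\publicdom\notin D_3$), $s=\prestg$, and $d\neq\publicdom$. From $\publicdom\notin D_i$ and Theorem~\ref{semantics:outputthm} the branch/iteration bodies produce empty output, so Prover's total output reduces to that of the guard/bounds, which is exactly what the circuit consumes, leaving~$\rho$. From $d=\proverdom$, $\publicdom\notin D_i$, and Theorem~\ref{semantics:effcircthm} those bodies leave the circuit-visible part of the environment unchanged, so Prover's final environment is circuit-coincident with its environment after the guard/bounds; by the IH the latter is circuit-coincident with, and the circuit's (final) environment after the guard/bounds is circuit-exact, so chaining yields $\gamma'_d\coincirc{\Gamma}\gamma'$ with $\gamma'$ exact in circuit. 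Finally $s=\prestg$ and $d\neq\publicdom$ make the result type fail the circuit predicate, so the required value is~$\top$, matching the circuit and making value-coincidence vacuous. Assembling these observations with the stream bookkeeping of the compositional cases completes the induction.
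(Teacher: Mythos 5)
Your proposal is correct and follows essentially the same route as the paper's proof: structural induction with stream threading in the compositional cases, the head/tail correspondence for \lstinline{wire}, coincidence in $\poststg$ forcing the assertion guard to agree, and Theorems~\ref{semantics:outputthm} and~\ref{semantics:effcircthm} (chained via the equivalence Lemma~\ref{semantics:coinlemma}) to reconcile the divergent control flow in \lstinline{if}/\lstinline{for} when the guard is non-public. The auxiliary facts you invoke (transport of the \lstinline{get} hypothesis to the circuit in the public case, preservation of exactness and coincidence by $\update$) correspond to Lemmas~\ref{semantics:approxcirclemma} and~\ref{semantics:updexactcoinlemma} in the paper.
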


%For a definition $F$ of a function $f$ that has $n$ arguments
%$x_1 : t_1\ d_1, \ldots, x_n : t_n\ d_n$ and is defined via
%expression $e$ we define its semantics as $\semm{f} (v_1, \ldots, v_n) = \semm{e}_\gamma$ where
%$\gamma$ maps each argument $x_i$ to a value $v_i$.
%

\newcommand{\semc}[1]{\sem{#1}_\mathrm{C}}
\newcommand{\Vc}{V_\mathrm{C}}
\newcommand{\Cc}{C_\mathrm{C}}
\newcommand{\Envc}{\mathbf{Env}_\mathrm{C}}
\newcommand{\Inc}{\mathbf{In}_\mathrm{C}}
\newcommand{\Outc}{\mathbf{Out}_\mathrm{C}}
\newcommand{\wire}{\uopl{wireCirc}}
\newcommand{\mknode}[1]{\uopl{node}{\![#1]}}
\newcommand{\mkoper}{\uopl{op}}
\newcommand{\mkoutput}{\uopl{out}}
\newcommand{\mkinput}{\uopl{in}}
\newcommand{\mkconst}{\uopl{con}}
\newcommand{\allpuretop}{\uopl{allpuretop}}
\newcommand{\drop}{\uopl{drop}}
\newcommand{\map}{\uopl{map}}
\newcommand{\updatec}{\uopl{upd}_{\mathrm{C}}}

\section{Compilation}\label{sec:compilation}

\ZKSC programs are compiled into arithmetic circuits corresponding to the circuit semantics defined in Sect.~\ref{sec:semantics}. An \emph{arithmetic circuit $\mathcal{C}$ over a ring $R$} is a directed acyclic graph, the nodes of which are partitioned into \emph{input}, \emph{constant}, and \emph{operation} nodes, such that each operation node has exactly two incoming arcs and other nodes have none. Additionally, $\mathcal{C}$ assigns an element of~$R$ to each constant node, an operation --- either addition or multiplication --- to each operation node, and a domain --- either \lstinline+@prover+ or \lstinline+@verifier+ --- to each input node in it, and specifies a subset of nodes as output nodes. Also, $\mathcal{C}$~defines an enumeration of its input nodes of each domain. 

Let $\mathcal{V}$ be the set of all nodes in~$\mathcal{C}$ and $\mathcal{I}\subseteq\mathcal{V}$ the set of all input nodes. An assignment $\alpha\in R^{\mathcal{I}}$ of values to the input nodes extends naturally to an assignment $\alpha^\star\in R^{\mathcal{V}}$ to all nodes (the values for the constant nodes are given in the definition of~$\mathcal{C}$ and the value for each operation node is found by applying the operation in it to the values of its predecessors). We say that $\mathcal{C}$ \emph{accepts input $\alpha\in R^{\mathcal{I}}$}, if $\alpha^\star$ assigns~$0$ to all output nodes. If $R$ is a finite field of characteristic \lstinline+N+, then such circuits can be evaluated by various ZKP techniques (perhaps with additional restrictions on \lstinline+N+).

\begin{figure}[b]\figsize
\renewcommand{\arraycolsep}{2pt}
\begin{tabular}{@{}l@{}}
$\begin{array}{lcl}
\Vc&=&\mathbb{N}\cup\mathbb{B}\cup()\cup(M\ap\Vc\times M\ap T)^*\quad\mbox{the set of composite values}
\end{array}$\\[1.5ex]
$\begin{array}{lcl}
\Cc\ap A&=&\Envc\to\In^3\to\NN^2\to\mathds{1}+A\times\Envc\times\Outc\times\NN^2\\
\end{array}$\\[1.5ex]
$\begin{array}{lcl@{\quad}l}
\Envc&=&(X\times(M\ap\Vc\times M\ap T))^+&\mbox{composite value environments}\\
\Outc&=&T^*&\mbox{streams of subcircuits}
\end{array}$\\[4.5ex]
$\begin{array}{lcl@{\quad}l}
\semc{e}&:&\Cc\ap(M\ap\Vc\times M\ap T)&\mbox{the result of compilation of expression~$e$}\\
\end{array}$
\end{tabular}
\caption{\label{compilation:types}
Types of semantic objects}
\end{figure}

Denote the set of all circuits by~$T$. Compilation to an arithmetic circuit proceeds in the lines of the dynamic semantics, using a new monad~$\Cc$. To get all types right, we replace the sets $V$, $\Env$, $\Out$ defined in Fig.~\ref{semantics:types} by $\Vc$, $\Envc$, $\Outc$ defined in Fig.~\ref{compilation:types}. The main difference is that $M\ap V$ is replaced with $M\ap\Vc\times M\ap T$ at most places, meaning that value-circuit pairs occur here as results of computation. We call these pairs \emph{composite values}. The compiler still has to carry values along with circuits for making stage casts from $\poststg$ to $\prestg$ if necessary. The value and the circuit component of a composite value can independently of each other be missing. For instance, if $e$ is in $\qual{\prestg}{\publicdom}$ then the value component is known but there is no circuit but if $e$ is in $\qual{\poststg}{\proverdom}$ then the circuit exists but the value is unknown. In $\qual{\poststg}{\publicdom}$, both components are given. The set $\Outc$ contains output streams of subtrees of the circuit under construction rooted at its output nodes (this is to define the output nodes). The monad~$\Cc$ has an extra inner state not occurring in dynamic semantics, a pair of natural numbers, for counting how many input nodes of domains $\proverdom$ and $\verifierdom$ have been created.

\begin{figure*}\figsize
\renewcommand{\arraycolsep}{2pt}\renewcommand{\arraystretch}{1.2}
\[
\begin{array}{lcl}
\semc{\epsilon}\gamma\phi\nu&=&\return((\return\singleton,\top),\gamma,\epsilon,\nu)\\
\semc{\overline{n}}\gamma\phi\nu&=&\return((\branching{\return n&\mbox{if $d=\publicdom$}\\\top&\mbox{otherwise}},\branching{\return(\mknode{\mkconst(n)})&\mbox{if $s=\poststg$}\\\top&\mbox{otherwise}}),\gamma,\epsilon,\nu)\\
\semc{\overline{b}}\gamma\phi\nu&=&\return((\branching{\return b&\mbox{if $d=\publicdom$}\\\top&\mbox{otherwise}},\branching{\return(\mknode{\mkconst(|b|)})&\mbox{if $s=\poststg$}\\\top&\mbox{otherwise}}),\gamma,\epsilon,\nu)\\
\semc{x}\gamma\phi\nu&=&\return(\lookup{\gamma}{x},\gamma,\epsilon,\nu)\\
\semc{\addexpr{e_1}{e_2}}\gamma_0\phi\nu_0&=&\mcompv{((\monadic{v}_1,\monadic{c}_1),\gamma_1,\omicron_1,\nu_1)\gets\semc{e_1}\gamma_0\phi\nu_0\hstop\\((\monadic{v}_2,\monadic{c}_2),\gamma_2,\omicron_2,\nu_2)\gets\semc{e_2}\gamma_1\phi\nu_1\hstop\\\return((\monadic{s},\monadic{t}),\gamma_2,\omicron_1\omicron_2,\nu_2)}\\
\multicolumn{2}{r}{\mbox{where}}&\monadic{s}=\mcomp{i_1\gets\monadic{v}_1\hstop i_2\gets\monadic{v}_2\hstop\return(i_1+i_2)},\\&&\monadic{t}=\mcomp{c_1\gets\monadic{c}_1\hstop c_2\gets\monadic{c}_2\hstop\return(\mknode{\mkoper{(+)}}(c_1,c_2))}\\
\semc{\assertexpr{e}}\gamma_0\phi\nu_0&=&\mcomp{((\monadic{v}_1,\!\monadic{c}_1),\!\gamma_1,\!\omicron_1,\!\nu_1)\gets\semc{e}\gamma_0\phi\nu_0\hstop\guard(\monadic{c}_1\ne\top)\hstop\return((\return\singleton,\!\top),\!\gamma_1,\!\omicron_1c_1,\!\nu_1)}\\
\multicolumn{2}{r}{\mbox{where}}&\monadic{c}_1=\return(c_1)\\
\semc{\getexpr{d}{k}}\gamma\phi\nu&=&\branching{\return(\allpuretop(\phi_d(k)),\gamma,\epsilon,\nu)&\mbox{if $d=\publicdom$}\\\return((\top,\top),\gamma,\epsilon,\nu)&\mbox{otherwise}}\\
\semc{\castexpr{e}{\qualty{t}{s}{d}}}\gamma_0\phi\nu_0&=&\mcompv{((\monadic{v}_1,\monadic{c}_1),\gamma_1,\omicron_1,\nu_1)\gets\semc{e}\gamma_0\phi\nu_0\hstop\\\return((\branching{\monadic{v}_1&\mbox{if $d=\publicdom$}\\\top&\mbox{otherwise}},\branching{\monadic{c}_1&\mbox{if $s=\poststg$}\\\top&\mbox{otherwise}}),\gamma_1,\omicron_1,\nu_1)}\\
\semc{\assignexpr{l}{e}}\gamma_0\phi\nu_0&=&\mcompv{(\monadic{a},\monadic{c})\gets\return(\lookup{\gamma_0}{x})\hstop\\\forall k=1,\ldots,n:((\monadic{i}_k,\monadic{c}_k),\gamma_k,\omicron_k,\nu_k)\gets\semc{y_k}\gamma_{k-1}\phi\nu_{k-1}\hstop\\((\monadic{v},\monadic{c}'),\gamma',\omicron',\nu')\gets\semc{e}\gamma_n\phi\nu_n\hstop\\\return((\return\singleton,\top),[x\mapsto\updatec((\monadic{a},\monadic{c}),\monadic{i}_1\ldots\monadic{i}_n,(\monadic{v},\monadic{c}'))]\gamma',\omicron_1\ldots\omicron_n\omicron',\nu')}\\
\multicolumn{2}{r}{\mbox{where}}&\loadexpr{\loadexpr{\loadexpr{x}{y_1}}{y_2}\ldots}{y_n}=l\\
\semc{\loadexpr{l}{e}}\gamma_0\phi\nu_0&=&\mcompv{((\monadic{a},\monadic{c}_1),\gamma_1,\omicron_1,\nu_1)\gets\semc{l}\gamma_0\phi\nu_0\hstop\\((\monadic{i},\monadic{c}_2),\gamma_2,\omicron_2,\nu_2)\gets\semc{e}\gamma_1\phi\nu_1\hstop\\\monadic{r}\gets\return(\mcomp{a\gets\monadic{a}\hstop i\gets\monadic{i}\hstop\return a_i})\hstop\\\branching{\return((\monadic{v},\monadic{c}),\gamma_2,\omicron_1\omicron_2,\nu_2)&\mbox{if $\monadic{r}=\return(\monadic{v},\monadic{c})$}\\\return((\top,\top),\gamma_2,\omicron_1\omicron_2,\nu_2)&\mbox{otherwise}}}
\end{array}
\]
\caption{\label{fig:compilationsmall}
Compilation to an arithmetic circuit: ``Small'' expressions}
\end{figure*}

\begin{figure*}\figsize
\renewcommand{\arraycolsep}{2pt}\renewcommand{\arraystretch}{1.2}
\[
\begin{array}{lcl@{}}
\semc{\ifexpr{e_1}{e_2}{e_3}}\gamma_0\phi\nu_0&=&\mcompv{((\monadic{v}_1,\monadic{c}_1),\gamma_1,\omicron_1,\nu_1)\gets\semc{e_1}\gamma_0\phi\nu_0\hstop\\\branching{\mcomp{(\!(\monadic{v}_2,\!\monadic{c}_2),\!\gamma_2,\!\omicron_2,\!\nu_2)\gets\semc{e_2}\gamma_1\phi\nu_1\hstop\return(\!(\monadic{v}_2,\!\monadic{c}_2),\!\gamma_2,\!\omicron_1\omicron_2,\!\nu_2)}&\mbox{if $\monadic{v}_1\!=\!\return\tru$}\\\mcomp{(\!(\monadic{v}_3,\!\monadic{c}_3),\!\gamma_3,\!\omicron_3,\!\nu_3)\gets\semc{e_3}\gamma_1\phi\nu_1\hstop\return(\!(\monadic{v}_3,\!\monadic{c}_3),\!\gamma_3,\!\omicron_1\omicron_3,\!\nu_3)}&\mbox{if $\monadic{v}_1\!=\!\return\fls$}\\\return(\!(\top,\!\top),\!\gamma_1,\!\omicron_1,\!\nu_1)&\mbox{otherwise}}}\\
\semc{\forexpr{x}{e_1}{e_2}{e_3}}\gamma_0\phi\nu_0&=&\mcompv{((\monadic{v}_1,\monadic{c}_1),\gamma_1,\omicron_1,\nu_1)\gets\semc{e_1}\gamma_0\phi\nu_0\hstop\\((\monadic{v}_2,\monadic{c}_2),\gamma_2,\omicron_2,\nu_2)\gets\semc{e_2}\gamma_1\phi\nu_1\hstop\\\branching{\uopl{main}(\max(0,i_2-i_1))&\mbox{where $\monadic{v}_j=\return i_j$}\\\return((\top,\top),\gamma_2,\omicron_1\omicron_2,\nu_2)&\mbox{if $\monadic{v}_1=\top$ or $\monadic{v}_2=\top$}}}\\
\multicolumn{2}{r@{}}{\mbox{where}}&\uopl{main}(n)=\mcompv{((\monadic{a}_1,\monadic{c}'_1),\gamma'_1,\omicron'_1,\nu'_1)\gets\semc{e_3}((x,(\return i_1,\top)),\gamma_2)\phi\nu_2\hstop\\\forall k\!\!=\!\!2,\!...,\!n\!:\!(\!(\!\monadic{a}_{k},\!\monadic{c}'_{\!k}\!),\!\gamma'_{\!k},\!\omicron'_{\!k},\!\!\nu'_{\!k})\!\!\gets\!\!\semc{\!e_3\!}\!(\!(x,\!(\return(\!i_1\!+\!k\!-\!1)\!,\!\top)\!),\gamma'_{k\!-\!1})\phi\nu'_{k\!-\!1}\!\hstop\\\return((\return((\monadic{a}_1,\monadic{c}'_1),\!...,\!(\monadic{a}_n,\monadic{c}'_n)),\top),\tail\gamma'_n,\omicron_1\omicron_2\omicron'_1...\omicron'_n,\nu'_n)}\\
\semc{\wireexpr{e}}\gamma\phi\nu&=&\mcompv{((\monadic{v}',\monadic{c}'),\gamma',\omicron',\nu')\gets\semc{e}\gamma\phi\nu\hstop\\\return((\monadic{v}',\branching{\return(\mknode{\mkconst(n)})&\mbox{if $\monadic{v}'=\return n$, $n\in\NN$}\\\return(\mknode{\mkconst(|b|)})&\mbox{if $\monadic{v}'=\return b$, $b\in\BB$}\\\return(\mknode{\mkinput_d(\nu'_d)})&\mbox{if $\monadic{v}'=\top$}}),\gamma',\omicron',\nu'')}\\
\multicolumn{2}{r}{\mbox{where}}&\mbox{$d$ is the domain of~$e$ and $\nu''=\lam{d'}{\branching{\nu'_{d'}+1&\mbox{if $d'=d$}\\\nu'_{d'}&\mbox{otherwise}}}$}\\
\semc{\stmtcomp{\letexpr{x}{e_1}}{e_2}}\gamma_0\phi\nu_0&=&\mcompv{((\monadic{v}_1,\monadic{c}_1),\gamma_1,\omicron_1,\nu_1)\gets\semc{e_1}\gamma_0\phi\nu_0\hstop\\((\monadic{v}_2,\monadic{c}_2),\gamma_2,\omicron_2,\nu_2)\gets\semc{e_2}((x,(\monadic{v}_1,\monadic{c}_1)),\gamma_1)\phi\nu_1\hstop\\\return((\monadic{v}_2,\monadic{c}_2),\tail\gamma_2,\omicron_1\omicron_2,\nu_2)}\\
\semc{\stmtcomp{e_1}{e_2}}\gamma_0\phi\nu_0&=&\mcompv{((\monadic{v}_1,\monadic{c}_1),\gamma_1,\omicron_1,\nu_1)\gets\semd{e_1}\gamma_0\phi\nu_0\hstop\\((\monadic{v}_2,\monadic{c}_2),\gamma_2,\omicron_2,\nu_2)\gets\semd{e_2}\gamma_1\phi\nu_1\hstop\\\return((\monadic{v}_2,\monadic{c}_2),\gamma_2,\omicron_1\omicron_2)}
\end{array}
\]
\caption{\label{fig:compilationlarge}
Compilation to an arithmetic circuit: ``Large'' expressions and statements}
\end{figure*}

Our compilation semantics $\semc{\cdot}$ is presented in Figures~\ref{fig:compilationsmall} and~\ref{fig:compilationlarge}. We denote by $\mknode{\mathit{t}}(c_1,\ldots,c_n)$ the circuit with a node of the given class~$t$ as root and the circuits referenced by $c_1,\ldots,c_n$ as subtrees. Here, $n$ is $0$ or $2$ depending on~$t$, and $t$ can be $\mkconst(v)$ (a node of constant value~$v$), $\mkoper(\oplus)$ (a node of operation~$\oplus$), or $\mkinput_d(k)$ (the input node number~$k$ of domain~$d$). So the compiler builds up the arithmetic circuit from small pieces as subtrees. As explained above, we consider a node to be an output node iff it is a root of some output subcircuit. Output occurs in the \lstinline{assert} case only. 

Any Boolean~$b$ is replaced with its representative integer~$|b|$ where $|\tru|=0$, $|\fls|=1$ because the circuit only does (modular) arithmetic. Representing~$\tru$ by~$0$ arises from the fact that a \ZKSC program succeeds if all arguments of occurrences of \lstinline{assert} evaluate to~$\tru$ whereas the circuit checks if all its outputs are zero.

Input nodes are created if a \lstinline{wire} construct is applied to a value in domain \lstinline{@prover} or \lstinline{@verifier}. This way, Verifier's and Prover's private values passed to the circuit via the \lstinline{wire} construct form an assignment that the circuit either accepts or not.

Note that the $\updatec$ function used in the assignment clause is here defined slightly differently from $\update$:
\[\renewcommand{\arraycolsep}{2pt}
\begin{array}{@{}lcl@{}}
\updatec((\monadic{v},\monadic{c}),\monadic{i}_1\ldots\monadic{i}_n,(\monadic{v}',\monadic{c}'))&=&\branching{(\monadic{v}',\monadic{c}')&\mbox{if $n=0$}\\\left(\monadic{v}'',\top\right)&\mbox{if $n>0$}}\\
\multicolumn{2}{r}{\mbox{where}}&\monadic{v}''=\mcomp{a\gets\monadic{a}\hstop i_1\gets\monadic{i}_1\hstop\return([i_1\!\mapsto\!\updatec(a_{i_1},\!\monadic{i}_2\ldots\monadic{i}_n,\!(\monadic{v}',\!\monadic{c}'))]a)}\mbox{.}
\end{array}
\]

In the clause for \lstinline+get+, we use a new variant of the $\allpure$ operation, $\allpuretop$, that lifts a value in $U$ to $M\ap\Vc\times M\ap T$ in a straightforward manner by inserting $\return$ and $\top$ at each level:
\[
\allpuretop(v)=\branching{(\return v,\top)&\mbox{if $v$ is of a primitive type}\\(\return(\allpuretop(v_1),\ldots,\allpuretop(v_n)),\top)&\mbox{if $v=(v_1,\ldots,v_n)$}}
\]

In the following, we refer to the following auxiliary operation $|\cdot|:T\to M\ap\NN$ that evaluates a given circuit:
\[\figsize
|c|=\branching{\return n&\mbox{if $c=\mknode{\mkconst(n)}$}\\\mcomp{i_1\gets|c_1|\hstop i_2\gets|c_2|\hstop\return(i_1\oplus i_2)}&\mbox{if $c=\mknode{\mkoper(\oplus)}(c_1,c_2)$}\\\top&\mbox{if $c=\mknode{\mkinput_d(k)}$}}
\]
The results must be monadic since the input nodes do not refer to particular values and must be mapped to~$\top$. We also need an analogous operation that takes the circuit input into account. For any circuit~$c$ and $\pi\in(\NN^*)^2$ (a sequence of input values for both Prover and Verifier), we define
\[\figsize
c(\pi)=\branching{n&\mbox{if $c=\mknode{\mkconst(n)}$}\\c_1(\pi)\oplus c_2(\pi)&\mbox{if $c=\mknode{(\mkoper(\oplus))}(c_1,c_2)$}\\\pi_d(i)&\mbox{if $c=\mknode{\mkinput_d(i)}$}}
\]
This operation always results in a pure value. If the output of compilation of an expression of a statement is $\omicron$ then the resulting circuit accepts an input~$\pi$ iff $\omicron_i(\pi)=0$ for all~$i$ such that $\omicron_i$ exists (where $\omicron_i$ denotes the $i$th component of the sequence~$\omicron$).

We can prove Theorems~\ref{thm:can_compile} and~\ref{thm:correct_compilation} below. Theorem~\ref{thm:can_compile} establishes conditions under which well-typed programs can be compiled into a circuit. Theorem~\ref{thm:correct_compilation} states that compilation into a circuit preserves semantics. For establishing the claims in a mathematically precise form, we need several new notions. Definition~\ref{compilation:exact} introduces one more exactness property that is analogous to those in Sect.~\ref{sec:semantics}. Definition~\ref{compilation:eval} introduces an operation that calculates a monadic value that represents the given composite value. It uses the first component of the given composite value if it is pure, and finds the value of the circuit (in the second component of the composite value) on the given input if the circuit exists. This operation is needed to establish correspondence between composite values the compiler is computing with and the monadic values in the circuit semantics. Definition~\ref{compilation:equiv} introduces an equivalence relation on $M\ap V$ that identifies integer and Boolean values that are encoded the same in the circuit. Definition~\ref{compilation:env} lifts the introduced operations pointwise to value environments.

\begin{definition}\label{compilation:exact}
For any well-structured qualified type $q=\qualty{t}{s}{d}$ and $(\monadic{v},\monadic{c})\in M\ap\Vc\times M\ap T$, we say that $(\monadic{v},\monadic{c})$ is \emph{$q$-exact} if all the following implications hold:
\begin{enumerate}
\item If $s=\poststg$ then there is a circuit~$c$ such that $\monadic{c}=\return c$% and either $|c|=\top=\monadic{v}$ or there exists $n\in\NN$ such that $|c|=\return n$, whereby $\monadic{v}=\return n$ if $t=\uintmodty{\mbox{\lstinline+N+}}$ and $\monadic{v}=\return b$ such that $|b|=n$ if $t=\boolmodty{\mbox{\lstinline+N+}}$;
, whereby $\monadic{v}=\return n$ with $n\in\NN$ implies $|c|=\return n$, and $\monadic{v}=\return b$ with $b\in\BB$ implies $|c|=\return|b|$, and in both latter cases $|c|=\return(c(\pi))$ for any input sequence~$\pi$.
\item If $d=\publicdom$ and $t$~{}is a primitive type then $\monadic{v}=\return v$ where $v\in t$;
\item If $d=\publicdom$ and $t=\listty{q'}$ then $\monadic{v}=\return((\monadic{v}_1,\monadic{c}_1),\ldots,(\monadic{v}_n,\monadic{c}_n))$ where $n\in\NN$ and all $(\monadic{v}_1,\monadic{c}_1),\ldots,(\monadic{v}_n,\monadic{c}_n)\in M\ap\Vc\times M\ap T$ are $q'$-exact;
\item If $s=\prestg$ then $\monadic{c}=\top$;
\item If $d\ne\publicdom$ then $\monadic{v}=\top$.
\end{enumerate}
\end{definition}

\begin{definition}\label{compilation:eval}
Define $(\_,\_)\bullet\_:(M\ap\Vc\times M\ap T)\times(\NN^*)^2\to M\ap V$ as follows: For every pair $(\monadic{v},\monadic{c})\in M\ap\Vc\times M\ap T$ and circuit input $\pi\in(\NN^*)^2$,
\[
(\monadic{v},\monadic{c})\bullet\pi=\branching{\return n&\mbox{if $\monadic{v}=\return n$, $n\in\NN$}\\\return|b|&\mbox{if $\monadic{v}=\return b$, $b\in\BB$}\\\return\singleton&\mbox{if $\monadic{v}=\return\singleton$}\\\return((\monadic{v}_1,\monadic{c}_1)\bullet\pi,\ldots,(\monadic{v}_n,\monadic{c}_n)\bullet\pi)&\mbox{if $\monadic{v}=\return((\monadic{v}_1,\monadic{c}_1),\ldots,(\monadic{v}_n,\monadic{c}_n))$}\\\return(c(\pi))&\mbox{if $\monadic{v}=\top$ but $\monadic{c}=\return c$}\\\top&\mbox{if $\monadic{v}=\top$ and $\monadic{c}=\top$}}
\]
\end{definition}

\begin{definition}\label{compilation:equiv}
Define a binary predicate $\sim$ on $M\ap V$ as follows: For every pair $(\monadic{v},\monadic{v}')\in M\ap V\times M\ap V$, we write $\monadic{v}\sim\monadic{v}'$ iff one of the following alternatives holds:
\begin{enumerate}
\item $\monadic{v}=\return v$ and $\monadic{v}'=\return v'$ where $v=v'$ or $|v|=v'$ or $v=|v'|$ or $|v| = |v'|$ (where $|\cdot|$ is applied to Boolean values but not to integers);
\item $\monadic{v}=\return\singleton=\monadic{v}'$;
\item $\monadic{v}=\return(\monadic{v}_1,\ldots,\monadic{v}_n)$ and $\monadic{v}'=\return(\monadic{v}'_1,\ldots,\monadic{v}'_n)$ where $\monadic{v}_1\sim\monadic{v}'_1, \ldots, \monadic{v}_n\sim\monadic{v}'_n$;
\item $\monadic{v}=\top=\monadic{v}'$.
\end{enumerate}
\end{definition}

%\begin{definition}\label{compilation:equiv}
%Let $\monadic{v}\in M\ap V$ with $V$ defined as in Fig.~\ref{semantics:types}, and $v_\mathrm{C}\in T\cup M\ap V$ with $V$ defined as in the current section. Call $\monadic{v}$ and $v_\mathrm{C}$ \emph{equivalent} iff one of the following holds:
%\begin{itemize}
%\item $v_\mathrm{C}\in T$ and $\monadic{v}=|v_\mathrm{C}|$;
%\item $v_\mathrm{C}\in M\ap V$ and $\monadic{v}=v_{\mathrm{C}}$ and the value is not structural;
%\item Both $v_\mathrm{C}$ and $\monadic{v}$ are lists, their lengths equal and the corresponding members are equivalent.
%\end{itemize}
%\end{definition}
%
\begin{definition}\label{compilation:env}
\begin{enumerate}
\item Let $\Gamma$ be a well-structured type environment. We say that $\gamma\in\Envc$ is \emph{$\Gamma$-exact} iff $\vars\gamma=\vars\Gamma$ and, for every association $(x_i:q_i)$ occurring in $\Gamma$, the value in the corresponding association $(x_i,(\monadic{v}_i,\monadic{c}_i))$ is $q_i$-exact.
\item For $\gamma\in\Envc$ and $\pi\in(\NN^*)^2$, we define $\gamma\bullet\pi\in\Env$ by lifting the bullet operation pointwise to association lists.
\item For $\gamma,\gamma'\in\Env$, we write $\gamma\sim\gamma'$ iff $\vars\gamma=\vars\gamma'$ and $\lookup{\gamma}{x}\sim\lookup{\gamma'}{x}$ for each defined variable~$x$.
\end{enumerate}
\end{definition}

We also extend $\sim$ pointwise to sequences (of equal length) of monadic values and, furthermore, to tuples of such sequences. For any natural number~$n$ and list~$l$, we write $\drop n\ap l$ to denote the part of list~$l$ remaining if the first $n$ elements are removed. We write $\map f\ap l$ for the result of applying~$f$ to all elements of~$l$. Using these auxiliary operations, we extend the bullet notation to denote $\nu\bullet\pi=\lam{d}{\map\return\left(\drop\nu_d\pi_d\right)}$ where $\nu\in\NN^2$, $\pi\in(\NN^*)^2$ and $d$ ranges over $\set{\proverdom,\verifierdom}$. 

\begin{theorem}\label{thm:can_compile}
Let $\Gamma \vd e : q\eff D$ with well-structured~$\Gamma$. Let $\gamma\in\Envc$ be $\Gamma$-exact. Let $\phi\in\In^3$ such that, for all subexpressions of~$e$ of the form $\ofty{\getexpr{d}{k}}{q'}$ where $d=\publicdom$, the value $\allpure(\phi_{d}(k))$ is $q'$-exact in~$d$. Let $\nu\in\NN^2$. Unless an array lookup fails due to an index being out of bounds, we have $\semc{e}\gamma\phi\nu=\return((\monadic{v},\monadic{c}),\gamma',\omicron,\nu')$ for some $\monadic{v}$, $\monadic{c}$, $\gamma'$, $\omicron$ and $\nu'$, whereby $(\monadic{v},\monadic{c})$ is $q$-exact and $\gamma'$ is $\Gamma$-exact.
\end{theorem}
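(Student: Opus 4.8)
The plan is to prove the statement by structural induction on the expression~$e$, mirroring the proof of Theorem~\ref{semantics:localthm} but now tracking the composite-value exactness of Definition~\ref{compilation:exact} in place of the exactness of Definition~\ref{semantics:exactdef}. The typing derivation $\Gamma\vd e:q\eff D$ pins down the qualified type of every subexpression, so each inductive hypothesis is applied at a determined type. In every case I would establish the ``progress'' claim (that $\semc{e}\gamma\phi\nu$ returns a tuple rather than failing) together with the two ``preservation'' claims ($q$-exactness of the returned composite value and $\Gamma$-exactness of the returned environment) simultaneously; the only failure permitted is an out-of-bounds access in the $\loadexpr{l}{e}$ clause, which is exactly the escape hatch in the statement.

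Before the case analysis I would isolate two auxiliary facts. First, a small lemma for the $\getexpr{d}{k}$ base case: if $q'$ satisfies $\allpre_{\publicdom}(q')$ and $\allpure(\phi_{\publicdom}(k))$ is $q'$-exact in~$\publicdom$ (Definition~\ref{semantics:exactdef}, as supplied by the hypothesis), then $\allpuretop(\phi_{\publicdom}(k))$ is $q'$-exact in the sense of Definition~\ref{compilation:exact}. This is immediate, because $\allpre_{\publicdom}$ forces every level to be $\qual{\prestg}{\publicdom}$, so clauses~(4)--(5) hold via the uniformly-$\top$ circuit components while clauses~(2)--(3) follow from the shape of the core value. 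Second, a preservation lemma for $\updatec$, proved by induction on the index-vector length~$n$: if the stored composite value is $q_x$-exact for the nested list type of Lemma~\ref{typesystem:lhslemma}, each index value is a pure natural number exactly when its structure domain is $\publicdom$ (and $\top$ otherwise), and the written value is $q$-exact, then $\updatec$ returns a $q_x$-exact composite value. The observation that makes this go through is that, by Lemma~\ref{typesystem:lhslemma}, the structure domains satisfy $d_1\subtype\cdots\subtype d_n\subtype d$, so once a level is non-public every deeper level and the element type are non-public; there the stored value is already $\top$ by clause~(5) and $\updatec$ merely propagates $\top$, consistently with the written value being $\top$ as well.

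With these in hand the case analysis is mostly mechanical. The base cases $\epsilon$, $\overline n$, $\overline b$ and~$x$ verify the five clauses of Definition~\ref{compilation:exact} directly: for $\overline n,\overline b$ the value component is masked to~$\top$ precisely when $d\ne\publicdom$ and the circuit component is created precisely when $s=\poststg$, matching clauses~(2),(5) and~(1),(4), while for~$x$ exactness is inherited from $\Gamma$-exactness of~$\gamma$. For $\addexpr{e_1}{e_2}$, $\assertexpr{e}$, $\castexpr{e}{\qualty{t}{s}{d}}$ and $\wireexpr{e}$ I apply the inductive hypothesis to the immediate subexpressions and re-check Definition~\ref{compilation:exact}. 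The non-routine points are: the arithmetic consistency $|\mknode{\mkoper{(+)}}(c_1,c_2)|=\return(n_1+n_2)$ demanded by clause~(1) of the sum, which holds because circuit evaluation $|\cdot|$ and the value-level operation are the same ring operation of characteristic \lstinline+N+; that the $\assertexpr{e}$ guard $\guard(\monadic c_1\ne\top)$ never fails, since clause~(1) of the inductive hypothesis supplies the circuit for the $\poststg$ argument; and the $\wireexpr{e}$ case, where clause~(1) is discharged by noting that a pure value yields a $\mkconst$ node evaluating to that value, whereas $\monadic v'=\top$ (forced by $d\ne\publicdom$ via clause~(5) of the hypothesis) yields an $\mkinput$ node, for which clause~(1) imposes no evaluation constraint. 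For $\ifexpr{e_1}{e_2}{e_3}$ and $\forexpr{x}{e_1}{e_2}{e_3}$ the essential step is that whenever the guard or the bounds fail to be pure --- which by the hypothesis happens exactly when their common domain~$d'$ is non-public --- the typing side-condition $\gen{d'}\supseteq\gen{s}\cup\gen{d}$ (with $D_3$ also included in the loop rule) forces $s=\prestg$ and $d\ne\publicdom$, so the $(\top,\top)$ returned on the ``otherwise'' branch is $q$-exact; when they are pure ($d'=\publicdom$) I invoke the hypothesis on the taken branch, or on each unrolled iteration with the loop variable bound to a $\qualty{\uintty}{\prestg}{\publicdom}$-exact value, stripping it with~$\tail$ at the end.

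The $\loadexpr{l}{e}$, $\assignexpr{l}{e}$ and sequencing cases finish the induction. For $\loadexpr{l}{e}$, if the list value is~$\top$ then well-structuredness of its type (Definition~\ref{typesystem:def}) together with clause~(5) forces the element type to be non-public and $\prestg$, so the returned $(\top,\top)$ is $q$-exact; otherwise the list and index are public and pure, and the selected element is $q$-exact by clause~(3), with the out-of-bounds possibility being the stated escape hatch. The $\assignexpr{l}{e}$ case is where I expect the real work: I use Lemma~\ref{typesystem:lhslemma} to extract the nested list type of~$x$ and the index typings, apply the inductive hypothesis to the indices and to the right-hand side, and then invoke the $\updatec$-preservation lemma to conclude that the updated environment stays $\Gamma$-exact, the returned $(\return\singleton,\top)$ being trivially $\qualty{\unitty}{\prestg}{\publicdom}$-exact. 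The $\letexpr{x}{e_1}$ and sequence statements are pure environment threading: the inductive hypotheses give $\Gamma$-exactness (resp.\ $((x:q_1),\Gamma)$-exactness) of the intermediate environments, and~$\tail$ restores $\Gamma$-exactness once the scope of~$x$ closes. I expect the $\updatec$-preservation lemma, and the alignment of the domain chain $d_1\subtype\cdots\subtype d_n\subtype d$ with the exactness clauses, to be the main obstacle; everything else is a direct, if tedious, clause-by-clause verification of Definition~\ref{compilation:exact}.
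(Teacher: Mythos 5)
Your proposal is correct and follows essentially the same route as the paper: structural induction on~$e$ driven by the typing derivation, supported by exactly the two auxiliary lemmas the paper uses (the $\allpuretop$ lemma, Lemma~\ref{compilation:alltoplemma}, and the $\updatec$-preservation lemma, Lemma~\ref{compilation:updexactlemma}), with the same case-by-case verification of the clauses of Definition~\ref{compilation:exact} and the same use of the typing side-conditions to justify the $(\top,\top)$ branches of conditionals, loops, and lookups. The only cosmetic difference is that you motivate the $\updatec$ lemma via the domain chain of Lemma~\ref{typesystem:lhslemma}, whereas the paper's lemma needs no ordering hypothesis because clause~(5) of exactness already forces the stored value to be $\top$ at any non-public level.
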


In the following theorem:
\begin{itemize}
\item Letters without tilde denote the original states and those with tilde denote result states;
\item Letters without prime denote compiler states while those with prime denote states in circuit semantics.
\end{itemize}

\begin{theorem}\label{thm:correct_compilation}
Let $\Gamma \vd e:\qualty{t}{s}{d}\eff D$ with well-structured~$\Gamma$. Let $\gamma\in\Envc$ be $\Gamma$-exact, $\phi\in\In^3$ be a triple of type correct input dictionaries, and $\nu\in\NN^2$. Let $\semc{e}\gamma\phi\nu=\return((\monadic{v},\monadic{c}),\tilde{\gamma},\tilde{\omicron},\tilde{\nu})$. Let $\pi\in(\NN^*)^2$, $\gamma'\in\Env$, $\omicron'\in\Out^2$ be such that $\gamma'$ is $\Gamma$-exact in circuit, $\gamma\bullet\pi\sim\gamma'$ and $\nu\bullet\pi\sim\omicron'$. Then:
\begin{enumerate}
\item If $\sem{e}\gamma'\phi\omicron'=\return(\monadic{v}',\tilde{\gamma}',\tilde{\omicron}')$ then $\tilde{\omicron}$ accepts $\pi$; moreover, $\tilde{\gamma}\bullet\pi\sim\tilde{\gamma}'$ and $\tilde{\nu}\bullet\pi\sim\tilde{\omicron}'$, and $d=\publicdom$ or $s=\poststg$ implies $(\monadic{v},\monadic{c})\bullet\pi\sim\monadic{v}'$;
\item If $\sem{e}\gamma'\phi\omicron'$ fails (i.e., it is not of the form $\return(\monadic{v}',\tilde{\gamma}',\tilde{\omicron}')$),  while all circuits that arise during compilation need only inputs in~$\pi$, then $\tilde{\omicron}$~{}does not accept $\pi$.
\end{enumerate}
\end{theorem}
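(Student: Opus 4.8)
The plan is to prove both parts simultaneously by structural induction on~$e$, matching each defining clause of $\semc{\cdot}$ in Figures~\ref{fig:compilationsmall}--\ref{fig:compilationlarge} against the corresponding clause of $\sem{\cdot}$ in Figure~\ref{semantics:circuit}. For each syntactic form I would unfold both semantics, invoke the induction hypothesis on every immediate subexpression, and then reassemble the four conclusions of part~(1) together with the conclusion of part~(2). The premises needed for each recursive call are supplied from two sources: Theorem~\ref{thm:can_compile} guarantees that the intermediate compiler environments remain $\Gamma$-exact, and the essential bookkeeping is to thread the two correspondences $\gamma\bullet\pi\sim\gamma'$ and $\nu\bullet\pi\sim\omicron'$ from the incoming states into each recursive call and back out to the result states, so that the outgoing $\tilde{\gamma}\bullet\pi\sim\tilde{\gamma}'$ and $\tilde{\nu}\bullet\pi\sim\tilde{\omicron}'$ are restored.

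The base cases (empty expression, literals, variable reference, and the public \lstinline{get}) reduce to direct computation: one evaluates $(\monadic{v},\monadic{c})\bullet\pi$ and checks it is $\sim$-related to the value produced by $\sem{\cdot}$, using the Boolean encoding $|\tru|=0$, $|\fls|=1$ that is built into both $\sim$ and the circuit-evaluation operation $c(\pi)$. The arithmetic, cast, load, assignment, let, and sequence cases are then routine once the recursive correspondences are in hand: the compiled circuit for $e_1+e_2$ is $\mknode{\mkoper(+)}(c_1,c_2)$ and its evaluation on~$\pi$ commutes with addition, $\updatec$ mirrors $\update$ cell by cell, and the output streams concatenate in the same left-to-right order on both sides.

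The two delicate cases are \lstinline{assert} and \lstinline{wire}, and this is where I expect the main obstacle. For \lstinline{assert} the compiler appends the subcircuit $c_1$ to the output stream, so ``$\tilde{\omicron}$ accepts~$\pi$'' contributes exactly the demand $c_1(\pi)=0$. The induction hypothesis gives $(\monadic{v}_1,\monadic{c}_1)\bullet\pi\sim\monadic{v}'$, and since the argument of \lstinline{assert} lies in stage $\poststg$ this forces $c_1(\pi)=|b|$ when $\monadic{v}'=\return b$; hence $c_1(\pi)=0$ precisely when the circuit-semantics guard $\monadic{v}'\ne\return\fls$ holds. This equivalence is exactly the split between part~(1) (guard passes, new output node is accepting) and part~(2) (guard fails, so the appended node evaluates to $|\fls|=1\ne 0$ and $\tilde{\omicron}$ cannot accept~$\pi$). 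For \lstinline{wire} in a non-public domain the compiler emits the input node $\mknode{\mkinput_d(\nu'_d)}$ and increments the counter to $\nu'_d+1$, whereas $\sem{\cdot}$ reads $\head\omicron'_d$ and drops it. Here I would use the invariant $\nu'\bullet\pi\sim\omicron'$ obtained from the recursive call to identify $\mknode{\mkinput_d(\nu'_d)}(\pi)=\pi_d(\nu'_d)$ with $\head\omicron'_d$, and to see that incrementing the counter matches dropping that head, thereby restoring $\tilde{\nu}\bullet\pi\sim\tilde{\omicron}'$.

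For the failure statement~(2) the uniform principle is that output streams are built purely by concatenation and that ``$\omicron$ does not accept~$\pi$'' is monotone: once any emitted subcircuit evaluates to a nonzero value on~$\pi$, no further concatenation can repair acceptance. Thus when $\sem{e}\gamma'\phi\omicron'$ fails, the failure originates at some \lstinline{assert} whose guard is violated; applying part~(1) of the hypothesis to the earlier (successful) subexpressions re-establishes the invariants up to that point, applying part~(2) to the failing subexpression localizes a nonzero component in the corresponding segment of $\tilde{\omicron}$, and monotonicity propagates non-acceptance outward through every compound case. The \lstinline{for} case additionally demands an inner induction on the iteration count mirroring the $\uopl{main}(n)$ auxiliary, threading the environment, counter, and stream correspondences across successive iterations; this step is mechanical but is by far the most notation-heavy, and carrying the two $\bullet$-correspondences cleanly through the accumulating states is the part I would take most care over.
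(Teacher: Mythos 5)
Your plan matches the paper's proof essentially step for step: structural induction on $e$, threading the two $\bullet$-correspondences through each recursive call, with the \lstinline{assert} case reducing acceptance of the appended subcircuit to $c_1(\pi)=0$ via the coincidence with the guard value, and the \lstinline{wire} case using $\nu\bullet\pi\sim\omicron'$ to identify $\pi_d(\nu_d)$ with $\head\,\omicron'_d$ and counter increments with stream drops. The only cosmetic difference is that the paper argues part~(2) in the contrapositive (acceptance of $\tilde{\omicron}$ implies all subevaluations succeed) rather than localizing the failing \lstinline{assert} directly, which is the same argument.
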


The theorems are proved by structural induction on~$e$. The details are given in Appendix~\ref{compilation-proofs}.

\section{Useful constructions}\label{sec:extensions}\label{ssec:dictionaries}

Our choice of the details of the type system is validated by the ease of implementing certain constructions that often occur in the statements to be proved in ZK. We will describe them next, after two simple extensions to \ZKSC.

%\paragraph{Records}
First, whenever $s_1,\ldots,s_n$ are identifiers and $t_1,\ldots,t_n$ are qualified types, then $\{s_1:t_1,\ldots,s_n:t_n\}$ is a qualified type. Records are defined by specifying the values of all fields, and the values of the fields can be read. The record itself is implicitly qualified \lstinline+$pre @public+.
\ZKSC allows to declare type synonyms for record types using the keyword \lstinline{struct}.

%\paragraph{Passing function parameters by reference}
Second, parameters of functions can also be passed by reference. Changes to a by-reference parameter inside the called function are visible in the calling function.
% By-reference parameter passing can be seen as syntactic sugar for returning all updated values from the called function (as fields of a record), and applying the updates in the calling function, right after making the call.
A parameter preceded by the keyword \lstinline+ref+ in a function declaration is passed by reference.

\paragraph{Bit extraction}
Arithmetic circuits naturally support addition and multiplication of values. To compare two values, they have to be split into bits as in Fig.~\ref{language:factor}. We see that each invocation of \lstinline+less_than+ causes the bits of both of its arguments being computed. Hence the value of variable \lstinline+z+ is split into bits twice, resulting in different lists of bits created by the \lstinline+wire+ expression in the function \lstinline+bitextract+, but with equal values.

It is conceivable that an optimizing compiler is able to detect that these lists of bits have to be equal. However, in a DSL like \ZKSC, we prefer to be able to explicitly indicate the availability of such language-specific optimizations. If the value of a variable \lstinline+v+ of type \lstinline+uint[N] $post @D+ could be split into bits, then we could define it with the following type instead (note polymorphism over \lstinline+N+, as well as \lstinline+@D+):
\begin{lstlisting}
struct uint_with_bits[N : Nat, @D] {
    value : uint[N] $post @D,
    hasBits : bool $pre @public,
    bits : list[bool[N] $post @D]
}
\end{lstlisting}
When creating or updating the value, we define the field \lstinline+value+ and set \lstinline+hasBits+ to \lstinline+false+. The function \lstinline+bitextract+ checks the field \lstinline+hasBits+ of its argument (which is passed by reference) and returns the content of the field \lstinline+bits+ if it is true. The function performs the actual splitting into bits only if the field \lstinline+hasBits+ is false; in this case it updates \lstinline+bits+ and sets \lstinline+hasBits+.
%\ZKSC contains syntactic sugar that supports such creation of values together with their splittings into bits.

%\todo[inline]{Should we discuss the integrity issues, if the modulus is not a power of $2$??? The additional check that is needed is a nice example of usefulness of ad-hoc polymorphism}

\paragraph{Dictionaries}
Dictionaries generalize arrays, allowing the keys (indices) to come from any set, not just from a segment of integers, and supporting the operations \lstinline+load(key)+ and \lstinline+store(key,value)+. In \ZKSC, lists (which play the role of arrays) have rather restrictive typing rules associated with them, making sure that computations with them can be converted into circuit operations. Having the keys in \lstinline+$post+ and in an arbitrary domain requires the use of \emph{Oblivious RAM} (ORAM)~\cite{DBLP:journals/jacm/GoldreichO96}, which has had a number of solutions proposed in the context of ZK proofs~\cite{DBLP:conf/ccs/Neff01,buffet,DBLP:conf/asiacrypt/BootleCGJM18}. A general method for ORAM in ZKP context~\cite{buffet} performs no correctness checks while the \lstinline+load+ and \lstinline+store+ operations are executed. Rather, all checks for a particular dictionary will be performed when it is no longer used; these checks involve sorting the list of operations by the values of keys, and checking that the equality of certain keys implies the equality of accompanying values. We refer to~\cite{buffet} for details.

The type and the supported operations of a dictionary are given in Fig.~\ref{fig:dictionaries}. During the execution, the performed operations have to be recorded. Both the \lstinline+load+ and \lstinline+store+ operations log the key, the value, and the performed operation (where \lstinline+false+ means loading, and \lstinline+true+ means storing). When logging, the key and the value (which may be private) are kept in the stage \lstinline+$post+, thereby fixing the values in the circuit. The performed operation, however, is public, because \lstinline+load+ and \lstinline+store+ can only be invoked in a public context. The functions \lstinline+find+ and \lstinline+update+ work with values in the stage \lstinline+$pre+ only, looking up a binding or updating them; their specification is mundane. The function \lstinline+finalize+ is run on the field \lstinline+log+ of a dictionary before that dictionary goes out of scope, it implements the checks in~\cite{buffet}. \ZKSC contains syntactic sugar for dictionary creation, and loading and storing of values. It also automatically adds the calls to \lstinline+finalize+ at the end of the block containing the creation of the dictionary.
\begin{figure}
\begin{lstlisting}
struct keyValuePair[N : Nat,$S,@D] {
    key : uint[N] $S @D,
    val : uint[N] $S @D
}
struct operation[N : Nat,@D] {
    kv : keyValuePair[N,$post,@D],
    op : bool $pre @public
}
struct map[N : Nat,@D] {
    bindings : list[keyValuePair[N,$pre,@D]],
    log : list[operation[N,@D]]
}
fn create[N : Nat,@D]() -> map[N,@D] {
    { bindings = [], log = [] }
}
fn load[N : Nat,@D](ref d : map[N,@D], k : uint[N] $post @D) -> uint[N] $post @D {
    let v : uint[N] $post @D = wire { find(d.bindings, k as $pre) };
    d.log = cons({ kv = { key = k, val = v }, op = false }, d.log);
    v
}
fn store[N : Nat,@D](ref d : map[N,@D], k : uint[N] $post @D, v : uint[N] $post @D) {
	d.bindings = update(d.bindings, k as $pre, v as $pre);
    d.log = cons({ kv = { key = k, val = v }, op = true }, d.log);
}
fn finalize[N : Nat,@D](ops : list[operation[N,@D]]) {
    // sorting and assertions omitted
}
\end{lstlisting}
\caption{Implementing dictionaries}\label{fig:dictionaries}
\end{figure}

Depending on the number of loads and stores, as well as other characteristics of the operations of the dictionary, different ways of performing finalization may be most efficient~\cite{xJSnark}. If these characteristics can be derived or predicted from public data, the best way can be chosen in \lstinline+finalize+.

%\todo[inline]{The proof that is done when the dictionary goes out of scope may depend on the length of the dictionary and the number of operations (and perhaps also the nature of operations)}

\section{Evaluation}\label{sec:evaluation}

We have implemented \ZKSC compiler in about 20 kLoC of Haskell code, which includes the parser, type-checker, \lstinline+@public+ precomputation engine, and translator to circuits. With it, we illustrate the expressiveness and efficiency of the \ZKSC language on a number of small examples. We have implemented them, and translated them to arithmetic circuits. The \ZKSC source of our examples is given in the supplementary material. Table~\ref{tbl:circsize} shows the size of generated circuits, including the number of inputs (both for the instance, and the witness), and different arithmetic operations. We distinguish between linear and non-linear (i.e. multiplication) operations, as only the latter are costly for a cryptographic ZKP technique. If our generated circuits compute modulo $2^{61}-1$, then they can be ingested by EMP toolkit~\cite{emp-toolkit}, in which case we also report the running time, network traffic, and memory use of Prover and Verifier. The Verifier process, running on a laptop, talks to a Prover process running on a server. They are connected over a link with 5 ms latency and bandwidth of around 450 Mbit/s.

\begin{table}
\caption{Size of circuits}\label{tbl:circsize}
\figsize
\begin{tabular}{l@{}|r@{}r|r@{\;}r|r@{\;}r||r|rr|r}
Ex.\,\null& \multicolumn{2}{c|}{size} & inst. & wit. & \multicolumn{2}{c||}{operations} & \multicolumn{1}{c|}{exec.} & \multicolumn{2}{c|}{RAM (MB)} & n/w \\
& & & size & size & \multicolumn{2}{c||}{\hfill lin. \hfill n.-lin.} & t. (s) & \multicolumn{1}{c}{P} & \multicolumn{1}{c|}{V} & MB \\ \hline
F. & \multicolumn{2}{c|}{N/A} & 123 & 124 & 2.1k & 845 & 1.47 & 327 & 329 & 7.3 \\ \hline
& \multicolumn{1}{c}{A1} & \multicolumn{1}{c|}{A2} & \multicolumn{2}{c|}{} & \multicolumn{2}{c||}{} \\ \cline{2-3}
& 10 & 10 & 0 & 1.4k & 16k & 3.9k & 1.35 & 328 & 330 & 8.1 \\
& 10 & 50 & 0 & 4.0k & 46k & 11k & 1.39 & 332 & 330 & 8.8  \\
& 50 & 50 & 0 & 6.5k & 75k & 18k & 1.39 & 338 & 332 & 9.5  \\
& 500 & 1k & 0 & 96k & 1.1M & 278k & 3.05 & 468 & 466 & 32  \\
\rotatebox{90}{\rlap{~Millionaires}}&
 10k & 50k & 0 & 3.8M & 45M & 11M & $*$ & $*$ & $*$ & $*$ \\ \hline
& \multicolumn{1}{c}{A1} & \multicolumn{1}{c|}{A2} & \multicolumn{2}{c|}{} & \multicolumn{2}{c||}{} \\ \cline{2-3}
& 10 & 10 & 10 & 1.4k & 30k & 5.8k & 1.54 & 328 & 330 & 8.5  \\
& 10 & 50 & 50 & 4.4k & 97k & 20k & 1.47 & 334 & 333 & 10  \\
& 50 & 50 & 50 & 7.4k & 168k & 36k & 1.63 & 341 & 338 & 11  \\
& 500 & 1k & 1k & 117k & 2.9M & 703k & 3.47 & 570 & 525 & 76  \\
\rotatebox{90}{\rlap{~~Subset check}}&
 10k & 50k & 50k & 5.0M & 137M & 37M & $*$ & $*$ & $*$ & $*$ \\ \hline
& \multicolumn{2}{c|}{Curve}  & \multicolumn{2}{c|}{} & \multicolumn{2}{c||}{} \\ \cline{2-3}
& \multicolumn{2}{c|}{P-256} & 2 & 1.4k & 22k & 4.9k \\
& \multicolumn{2}{c|}{P-384} & 2 & 1.9k & 33k & 7.2k \\
\rotatebox{90}{\rlap{~ECDSA}}& \multicolumn{2}{c|}{P-521} &
 2 & 2.5k & 44k & 9.7k \\ \hline
& \multicolumn{1}{c}{$n$} & \multicolumn{1}{c|}{$m$}  & \multicolumn{2}{c|}{} & \multicolumn{2}{c||}{}\\ \cline{2-3}
& 10 & 10 & 0 & 5.1k & 79k & 17k & 1.26 & 331 & 335 & 8.9  \\
& 10 & 50 & 0 & 13k & 175k & 38k & 1.35 & 346 & 335 & 11  \\
& 50 & 50 & 0 & 40k & 662k & 139k & 1.69 & 392 & 370 & 22  \\
& 500 & 1k & 0 & 769k & 13M & 2.7M & $*$ & $*$ & $*$ & $*$  \\
\rotatebox{90}{\rlap{$\qquad$SSSP}}&
 10k & 50k & 0 & 35M & 590M & 123M & $*$ & $*$ & $*$ & $*$  \\ \hline
\end{tabular}
\begin{minipage}{\columnwidth}
    \vspace{1em}
    $*$ did not compile with EMP tools in reasonable time and resources
\end{minipage}
\end{table}

\paragraph{Knowing the factors} Our first example is the program in Fig.~\ref{language:factor}, with performance reported in row ``F'' of Table~\ref{tbl:circsize}.
%, we report the size of the circuit in terms of its number of inputs (both for the instance and the witness) and its number of operations (both linear and non-linear; for a cryptographic ZKP technique, only the latter are costly). We also report the cost of running this proof on EMP toolkit, reporting the execution time, network traffic, and memory use for both the prover and the verifier. We report the same numbers for the examples below.

% \todo[inline]{Knowing the factors}

\paragraph{Historical millionaires} In this example~\cite{Viaduct}, two millionaires are comparing their net worths over time, and want to figure out whose minimum net worth was larger. In~\cite{Viaduct}, this was an example for secure two-party computation. In our setting, we let both arrays of historical net worths (expressed as integers) to be a part of the witness, and check that the smallest element of the first array is larger than the smallest element of the second array. As such, the instance of the relation appears to be empty, and the relation itself trivially satisfied. The setting is still interesting if we consider it to be part of some larger application where Prover has committed to both arrays, the commitments are part of the instance, and the verification that the witness matches the commitments happens elsewhere.

%In Table~\ref{tbl:circsize} we report the size of the circuit and the running times of the proof for various lengths (in columns labeled ``A1'' and ``A2'') of the two arrays.

%\todo[inline]{What \emph{Viaduct} paper calls ``historical millionaires''. Given two arrays of numbers. Prove that the minimum of the first array is larger than the minimum of the second. Several sizes.}

\paragraph{Subset check} In this example, which is part of a larger application developed in \ZKSC, the input consists of two arrays, one held by Verifier and the other one by Prover, and Prover wants to convince Verifier that all elements of the first array are also elements of the second. We show this by sorting the concatenation of two arrays and checking that each element which originated from the first array is followed by an element which is equal to it.
% Table~\ref{tbl:circsize} reports the sizes of the circuit and running times for various lengths of the two arrays.

%\todo[inline]{Subset check, non-intersectedness check. Several sizes.}

\paragraph{ECDSA verification} In this example (again part of a larger application), Verifier has a public key for the ECDSA digital signature scheme~\cite{kerry2013fips}. Prover wants to convince Verifier that he has a message digest and a signature that verifies with respect to the public key. We express this by implementing the ECDSA verification procedure in \ZKSC. The main part of this verification is the computation of two scalar multiplications in the elliptic curve group, where the scalars originate from the signature and the digest. In one of them, the multiplied point is the public key. We use the standard double-and-add method for computing the scalar multiple. In the second multiplication, the multiplied point is the generator of the group. The second point is public, and we use a windowed method to compute its scalar multiples. The implementation of point doublings and additions follows their standard definitions, which require the inversion of certain elements of the field, which we have implemented with the standard compute-and-verify technique.

In ECDSA verification, computations are performed in two different fields. The computing of scalar multiplications takes place in the field $\ZZ_P$, over which the elliptic curve has been defined. But some multiplications and comparisons also take place in the field $\ZZ_Q$, where $Q$ is the cardinality of the elliptic curve group. \ZKSC thus supports the use of several different moduli (e.g. \lstinline+uint[P]+ and \lstinline+uint[Q]+) in the same program. In this case, the compilation procedure creates several circuits. In order to relate the computations by different moduli, there is an operation for asserting the equality of two values in \lstinline+$post+ stage, these values may belong to different types \lstinline+uint[N]+. These assertions are translated into assertions about the equality of values in the wires of different circuits. It is up to the cryptographic ZKP technique to correctly interpret the produced circuits and relationships between them. Techniques like MPC-in-the-head~\cite{DBLP:conf/stoc/IshaiKOS07,ZKBoo} can likely handle them, using the ring conversions developed for secure multiparty computation protocols~\cite{DBLP:journals/ijisec/BogdanovNTW12}.

In Table~\ref{tbl:circsize} we report the sizes of the circuit for different standardized curves~\cite{kerry2013fips}, adding up the numbers of operations for both moduli. As EMP toolkit does not support fields of this size, we do not report any running times.

%\todo[inline]{ECDSA verification. Measure it for various P-XXX curves}

\paragraph{Single-Source Shortest Paths (SSSP)} In this example, the input, known by Prover, is a directed graph with weighted edges. The graph has an \emph{initial vertex}, from which all other vertices are reachable. The graph is represented in sparse manner, i.e. the representation consists of the number of vertices $n$, the number of edges $m$ (also known to Verifier), and, for each edge, its starting vertex, its ending vertex, and its length (all known only to Prover). Prover wants to convince Verifier that the distances from a the initial vertex to all other vertices have certain values.

Here Prover finds the shortest distances by himself, as well as the shortest-path tree represented by giving for each vertex (except the initial) the last edge on the shortest path from the origin to this vertex. He makes both of these sets of data available to the computation, after which the circuit checks that they match with the edges. The check consists of three parts. First, for each vertex we check that its reported distance from the origin is equal to the length of the last edge on the shortest path (found through the shortest-path tree), plus the distance of the source vertex of that last edge. Second, for each edge we check that it is \emph{relaxed
}, i.e. the distance to its starting vertex, plus its length, is not less than the distance to its ending vertex. Third, we check that the shortest-path tree is indeed a tree rooted in the origin vertex. We do this by pointer-jumping~\cite[Sec.~30.1]{cormen2009introduction} the shortest-path tree for $\lceil \log n\rceil$ times, and check that in the result, all pointers point to the origin vertex.

Due to the sparse representation of both the graph itself and the shortest-path tree, the SSSP example makes extensive use of dictionaries (Sec.~\ref{ssec:dictionaries}). It is a nice example of interleaving computation and verification. The shortest distances, and the shortest-path tree are verified in the circuit, which involves the computation of permutations sorting the keys used in the dictionaries, which are verified by comparing the neighboring elements in the sorted list of keys, which requires the computation of bit extractions of these elements, the correctness of which is again verified.
% The sizes of the circuits for various sizes of the graph, and the running times of EMP toolkit are again reported in Table~\ref{tbl:circsize}.

%\todo[inline]{Single-source shortest paths. Can vary on (1) whether the representation of the graph is sparse or dense; (2) if sparse, does the verifier know the sources and targets of edges. Several sizes.}

\section{Related Work}\label{sec:relwork}

The ZKP use-case proliferation has brought with it a number of tools for either generating the circuits for some cryptographic ZK technique, or for directly describing the computation that runs under a ZKP technique. If the toolset contains a domain-specific language, then its features may often be given the following description in terms \ZKSC. The included DSL has mainly imperative features, it is strongly typed, and the possible qualifiers of the types are \lstinline+$post @prover+ and \lstinline+$pre @public+. The latter type is used in computations defining the structure of the circuit (e.g. for loop counters), while the computations with the former are translated into the circuit or invoked using the ZKP technique. The instance and witness to the computation both receive the qualifiers \lstinline+$post @prover+. The language also allows local computations. Typically, it allows casts from \lstinline+$pre @public+ to \lstinline+$post @prover+, and perhaps to use \lstinline+$post @prover+ values also in the computations at \lstinline+$pre @public+, the results of which must be cast back to \lstinline+$post @prover+.

Such languages include \textsc{circom}~\cite{circom} and ZoKrates~\cite{zokrates}. The latter also includes branchings over \lstinline+$post @prover+ conditions, translating them into executions of both branches, followed by an oblivious choice. We have purposefully excluded such construction, believing it will confuse the developer. The languages to generate circuits (or rank-1 constraint systems) also include \textsc{Sn\r{a}rkl}~\cite{DBLP:conf/padl/StewartML18}, a DSL embedded in Haskell. The necessary local computations are introduced during compilation, in the style of \textsc{Pinocchio}~\cite{DBLP:journals/cacm/ParnoHG016} and Setty et al.~\cite{DBLP:conf/uss/SettyVPBBW12}.

\textsc{Pinocchio} is an early example of a system for \emph{verifiable computation}, aiming to make the correctness of the execution of C programs verifiable in a manner that is cheaper than re-running the program. This approach, refined in~\cite{Geppetto}, as well as in~\cite{ben2014succinct,DBLP:conf/sp/WahbyHGSW16} for hardware-like descriptions, uses a common high-level language to define the circuit, with any \lstinline+$pre+-stage computation inserted through compiler optimizations.

We consider the language of the xJSnark system~\cite{xJSnark} to be the closest to \ZKSC. It follows the description given above. For local computations, it offers blocks of code which take values with both \lstinline+$post @prover+ and \lstinline+$pre @public+ qualifiers, and return the results back to \lstinline+$post @prover+. While it is more expressive than~\cite{circom,zokrates}, we consider our type system to be superior to its method for mixing local computations with those on the circuit. Indeed, our type system offers distinction between \lstinline+$pre @prover+ and \lstinline+$pre @verifier+, and the integrity properties inherent in them. We made use of this distinction in Fig.~\ref{language:factor}. Additionally, in \ZKSC, values in the \lstinline+$pre+-stage can be long-lived, the usefulness of which we showed in Sec.~\ref{ssec:dictionaries}. Finally, we can be polymorphic over stages, allowing same or similar computations performed either locally or in the circuit to be expressed only once.

%xJSnark~\cite{xJSnark} --- blocks of code for \$pre. Less flexible, because
%\begin{itemize}
%\item We can differentiate between \$pre @prover and \$pre @verifier
%\item Our \$pre values can be long-lived. We use this in language extensions
%\begin{itemize}
%\item With blocks of code, long-lived values could be stored on the wires of the circuit. But this is unnatural.
%\end{itemize}
%\item We have the possibility to be polymorphic over stages.
%\end{itemize}

%Cinderella (S\&P 2016)

Hastings et al.~\cite{DBLP:conf/sp/HastingsHNZ19} review 11 different MPC suites from the point of view of the language support the offer for the specification of secure MPC protocols. While the proposed languages often distinguish between private and public values, only Wysteria~\cite{Wysteria,Wys-star} offers constructions to specify the computations done by one or several parties, either locally or using a secure multiparty computation protocol. Their handling of parties is very expressive and general, but the notion of malicious parties is lacking. The recent Viaduct suite~\cite{Viaduct} expands on Wysteria's type system with integrity types, and incorporates also ZK proofs and other secure computation techniques besides MPC in its back-end. However, Viaduct's type system does not distinguish between \lstinline+@verifier+ and \lstinline+@public+, and thus does not support the compilation into a circuit. The distinctions between other qualifiers of \ZKSC are present in Viaduct. Additionally, the handling of data structures by Viaduct's type system is simplified, with e.g. no distinction between read- and write-access to arrays.
\begin{acks}                            %% acks environment is optional
                                        %% contents suppressed with 'anonymous'
  %% Commands \grantsponsor{<sponsorID>}{<name>}{<url>} and
  %% \grantnum[<url>]{<sponsorID>}{<number>} should be used to
  %% acknowledge financial support and will be used by metadata
  %% extraction tools.
This research has been funded by the \grantsponsor{Spon1}{Defense Advanced Research Projects Agency}{https://www.darpa.mil}
(DARPA) under contract \grantnum{Spon1}{HR0011-20-C-0083}. The views, opinions, and/or findings
expressed are those of the author(s) and should not be interpreted as representing
the official views or policies of the Department of Defense or the U.S. Government.
This research has also been supported by \grantsponsor{Spon2}{European Regional Development Fund}{https://www.europarl.europa.eu/factsheets/et/sheet/95/euroopa-regionaalarengu-fond-erf-}
through the \grantnum{Spon2}{Estonian Centre of Excellence in ICT Research (EXITE)}.
\end{acks}

%% Bibliography
\bibliography{provenance}

%% Appendix
% No appendix, but supplementary material
\appendix
\section{Proof of Theorem~\ref{typesystem:thm} and Lemma~\ref{typesystem:lhslemma}}\label{typesystem-proofs}

Before proving Theorem~\ref{typesystem:thm}, we have to establish that $\allpre_{d'}$ holds on well-structured types only.

\begin{lemma}\label{typesystem:lemma}
Let $q=\qualty{t}{s}{d}$ be a qualified type and $d'$ be a domain. If $\allpre_{d'}q$ then $q$ is well-structured.
\end{lemma}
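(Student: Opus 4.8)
The plan is to argue by structural induction on the data type~$t$ (equivalently, on the qualified type~$q$), exploiting the fact that $\allpre_{d'}$ pins down both qualifiers uniformly: every qualified type occurring inside~$q$ is forced to carry stage~$\prestg$ and domain~$d'$. This uniformity is exactly what makes the side condition in Definition~\ref{typesystem:def} degenerate.

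First I would treat the base case, where $t$~is a primitive type. By the definition of $\allpre_{d'}$, the assumption $\allpre_{d'}q$ already gives $s=\prestg$; in particular, if $t=\unitty$ then $s=\prestg$ holds, which is the only extra requirement that Definition~\ref{typesystem:def} imposes on primitive types. Hence $q$~is well-structured immediately.

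For the inductive step, suppose $t=\listty{q'}$ with $q'=\qualty{t'}{s'}{d''}$. Unfolding $\allpre_{d'}q$ yields $s=\prestg$, $d=d'$, and $\allpre_{d'}q'$; the latter in turn gives $s'=\prestg$ and $d''=d'$. To establish well-structuredness I must verify the inequality $\gen{d}\supseteq\gen{s'}\cup\gen{d''}$ together with well-structuredness of~$q'$. Since $s'=\prestg$ we have $\gen{s'}=\varnothing$, and since $d''=d'=d$ we have $\gen{d''}=\gen{d}$; thus the right-hand side collapses to $\gen{d}$ and the inequality $\gen{d}\supseteq\gen{d}$ holds trivially. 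Well-structuredness of~$q'$ follows from the induction hypothesis applied to $\allpre_{d'}q'$, completing the step.

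There is no real obstacle here: the whole content of the lemma is that the privacy-ordering side condition of well-structuredness is vacuously satisfied once $\allpre_{d'}$ has forced every nested domain to equal~$d'$ and every nested stage to be~$\prestg$. The only point demanding a little care is to match the recursion in the definition of $\allpre_{d'}$ with the recursion in Definition~\ref{typesystem:def}, and to recall the convention $\gen{\prestg}=\varnothing$ when discharging the inequality.
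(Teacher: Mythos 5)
Your proof is correct and follows essentially the same route as the paper's: structural induction on the data type, with the base case discharged by $s=\prestg$ and the list case by observing that $\allpre_{d'}$ forces $\gen{s'}=\varnothing$ and $\gen{d''}=\gen{d}$, so the side condition of Definition~\ref{typesystem:def} holds trivially. Nothing is missing.
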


\begin{proof}
By induction on the structure of~$t$:
\begin{itemize}
\item If $t$ is a primitive type then $q=\qualty{t}{\prestg}{d'}$ which is well-structured by Definition~\ref{typesystem:def}.
\item Let $t=\listty{q'}$. Then $q=\qualty{\listty{q'}}{\prestg}{d'}$, i.e., $s=\prestg$ and $d=d'$, and $\allpre_{d'}q'$. By the induction hypothesis, $q'$ is well-structured. We know by the definition of $\allpre_{d'}$ that $q'=\qualty{t'}{\prestg}{d'}$ for some datatype $t'$. Hence $\gen{d}=\gen{d'}=\varnothing\cup\gen{d'}=\gen{\prestg}\cup\gen{d'}$ which completes the proof.
\end{itemize}
\end{proof}

\begin{theorem}[Theorem~\ref{typesystem:thm}]
If\/ $\Gamma\vd e : q\eff D$ with a well-structured~$\Gamma$ then $q$ is well-structured.
\end{theorem}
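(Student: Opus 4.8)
The plan is to prove the statement by structural induction on the expression~$e$, following the typing rules of Figure~\ref{typesystem:exprstmt}. The induction hypothesis asserts that the theorem already holds for every proper subexpression of~$e$, typed against a possibly extended but still well-structured environment. I group the rules according to how the conclusion type~$q$ is obtained. The first group are the rules whose conclusion assigns a primitive type and thus discharge themselves: literals, addition and $\wireexpr{\cdot}$ produce a $\uintmodty{\mbox{\lstinline{N}}}$ or $\boolmodty{\mbox{\lstinline{N}}}$ result, while $\epsilon$, $\assertexpr{\cdot}$ and assignment produce $\qualty{\unitty}{\prestg}{\publicdom}$; in each case the data type is primitive and, where it is $\unitty$, the stage is $\prestg$, so well-structuredness is immediate from Definition~\ref{typesystem:def}. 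The variable rule reads~$q$ directly out of~$\Gamma$, so $q$ is well-structured because $\Gamma$~is. For $\getexpr{d'}{k}$ the side condition $\allpre_{d'}(q)$ yields well-structuredness through Lemma~\ref{typesystem:lemma}.

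Next come the rules that inherit~$q$, or its relevant component, from a subexpression. For the conditional, the statement sequence and the let statement, $q$ is exactly the type of the second or branch subexpression, so the induction hypothesis closes the case; in the let case I first observe that the environment extended by $(\mut x : b),(x:\qualty{t_1}{s_1}{d_1})$ is again well-structured, since the mutability component is not a qualified type and $\qualty{t_1}{s_1}{d_1}$ is well-structured by the induction hypothesis applied to~$e_1$. The list-element-access rule is dual: the induction hypothesis makes the list type $\qualty{\listty{q}}{s'}{d'}$ of~$l$ well-structured, and by Definition~\ref{typesystem:def} its element type~$q$ is then well-structured, as required.

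The two genuinely interesting cases are the ones that build or re-shape a list type, and they are precisely where the nontrivial side conditions of the rules are used. In the for-loop rule the result is $\qualty{\listty{\qualty{t}{s}{d}}}{\prestg}{d'}$; its stage is already $\prestg$, its element type is well-structured by the induction hypothesis (applied in the environment extended by the primitive binding $x:\qualty{\uintty}{\prestg}{d'}$, which keeps the environment well-structured), and the remaining obligation $\gen{d'}\supseteq\gen{s}\cup\gen{d}$ is exactly the portion of the premise $\gen{d'}\supseteq\gen{s}\cup\gen{d}\cup D_3$ that is needed. The main obstacle is the cast rule, where the domain is raised from~$d$ to~$d'$ and the invariant must be re-established for $\qualty{t}{s'}{d'}$. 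The difficulty is that $d\subtype d'$ means $\gen{d'}\subseteq\gen{d}$, so well-structuredness of the premise type does \emph{not} transfer along the cast by itself. When $t=\listty{\qualty{t''}{s''}{d''}}$, well-structuredness of the premise type gives $s=\prestg$, hence $s'=\prestg$ (since $s\subtype s'$ and $\prestg$ is the top stage), and the element type is unchanged; the outstanding obligation $\gen{d'}\supseteq\gen{s''}\cup\gen{d''}$ then follows from the cast side condition $\gen{d'}\supseteq\gen{t}$ together with the definition $\gen{t}=\gen{t''}\cup\gen{s''}\cup\gen{d''}$. This is exactly the role for which $\gen{d'}\supseteq\gen{t}$ was included in the cast rule, so the case goes through and the induction is complete.
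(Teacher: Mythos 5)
Your proof is correct and takes essentially the same route as the paper: structural induction on $e$, with Lemma~\ref{typesystem:lemma} discharging the $\mbox{\lstinline{get}}_{d'}$ case and the cast side condition $\gen{d'}\supseteq\gen{t}$ doing the real work in the only nontrivial case. The minor differences (grouping the primitive-result rules together, and noting the cast's list subcase explicitly while leaving its primitive subcase implicit) do not change the argument.
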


\begin{proof}
Let $q=\qualty{t}{s}{d}$. We proceed by induction of the structure of~$e$:
\begin{itemize}
\item If $e=\epsilon$, $e=\overline{n}$, $e=\overline{b}$, $e=\mbox{\lstinline+assert+}(e')$, $e=\mbox{\lstinline+wire+ }\inbraces{e'}$ or $e=(\mbox{$l$ \lstinline+=+ $e'$})$ then $t$ is primitive, whereby $t=()$ implies $s=\prestg$. Hence $q$ is well-structured.
\item If $e=x$ then the claim follows from the assumption that $\Gamma$ is well-structured.
\item If $e=\ofty{\getexpr{d}{k}}{q}$ then $\allpre_dq$ and the claim follows from Lemma~\ref{typesystem:lemma}.
\item Suppose $e=\mbox{$e_1$ \lstinline{+} $e_2$}$. Then $t=\mbox{\lstinline{uint[N]}}$, and $\Gamma\vd e_1:t\ s\ d\eff D_1$ and $\Gamma\vd e_2:t\ s\ d\eff D_2$. Hence by the induction hypothesis, $t\ s\ d$ is well-structured, and the desired claim follows.
\item Suppose $e=\mbox{\lstinline+if+ $e_1$ \inbraces{$e_2$} \lstinline+else+ \inbraces{$e_3$}}$. Then $\Gamma\vd e_2:t\ s\ d\eff D_2$ and $\Gamma\vd e_3:t\ s\ d\eff D_3$. Hence by the induction hypothesis, $t\ s\ d$ is well-structured, and the desired claim follows.
\item Suppose $e=\mbox{\lstinline+for+ $x$ \lstinline+in+ $e_1$ \lstinline+..+ $e_2$ }\inbraces{e_3}$. Then $t=\mbox{\lstinline+list+ }\inbracks{t'\ s'\ d'}$, $s=\mbox{\lstinline+$pre+}$ and $\Gamma\vd e_i:\mbox{\lstinline{uint[N]} \lstinline+$pre+}\ d\eff D_i$ ($i=1,2$), $(x:\mbox{\lstinline+uint[N]+ \lstinline+$pre+}\ d),\Gamma\vd e_3:t'\ s'\ d'\eff D_3$, whereby $\gen{d}\supseteq\gen{s'}\cup\gen{d'}\cup D_3$. By the induction hypothesis for $e_1$ and $e_2$, the qualified type $\mbox{\lstinline{uint[N]} \lstinline+$pre+}\ d$ is well-structured. Therefore, the extended type environment $(x:\mbox{\lstinline+uint[N]+ \lstinline+$pre+}\ d),\Gamma$ is well-structured. Hence the induction hypothesis for $e_3$ implies that $t'\ s'\ d'$ is well-structured. This completes checking the assumptions of $\mbox{\lstinline+list+ }\inbracks{t'\ s'\ d'}\ s\ d$ being well-structured by Definition~\ref{typesystem:def}, hence the desired claim follows.
\item Suppose $e=e'\mbox{ \lstinline{as} }t\ s\ d$. Then $\Gamma\vd e':t\ s'\ d'$, $s'\subtype s$, $d'\subtype d$, and $\gen{d}\supseteq\gen{t}$. By the induction hypothesis, $t\ s'\ d'$ is well-structured. If $t$ is a primitive type then $t\ s\ d$ is well-structured, too, as $s'=\mbox{\lstinline+$pre+}$ implies $s=\mbox{\lstinline+$pre+}$. Now let $t=\mbox{\lstinline+list+ }\inbracks{t''\ s''\ d''}$; by $t\ s'\ d'$ being well-structured, we have $s'=\mbox{\lstinline+$pre+}$, $\gen{d'}\supseteq\gen{s''}\cup\gen{d''}$ and 
$t''\ s''\ d''$ well-structured. Therefore $s=\mbox{\lstinline+$pre+}$. Moreover, using $\gen{d}\supseteq\gen{t}$ and the definition of $\gen{t}$, we obtain $\gen{d}\supseteq\gen{t''}\cup\gen{s''}\cup\gen{d''}\supseteq\gen{s''}\cup\gen{d''}$. Hence $t\ s\ d$ is well-structured by Definition~\ref{typesystem:def}.
\item Suppose $e=l\inbracks{e'}$. Then $\Gamma\vd l:t'\ s'\ d'\eff D'$ where $t'=\mbox{\lstinline+list+}\inbracks{t\ s\ d}$. By the induction hypothesis, $t'\ s'\ d'$ is well-structured. Definition~\ref{typesystem:def} now implies $t\ s\ d$ also being well-structured.
\item Suppose $e=\mbox{\lstinline+let+ }x:t'\ s'\ d'\mbox{ \lstinline+=+ }e_1\mbox{\lstinline{;}}\ e_2$. Then $\Gamma\vd e_1:t'\ s'\ d'\eff D_1$ and $(x:t'\ s'\ d'),\Gamma\vd e_2:t\ s\ d\eff D_2$. By the induction hypothesis for $e_1$, qualified type $t'\ s'\ d'$ is well-structured. Therefore, the extended type environment $(x:t'\ s'\ d'),\Gamma$ is well-structured. Hence the induction hypothesis for $e_2$ implies that $t\ s\ d$ is well-structured.
\item If $e=e_1\mbox{\lstinline{;}}\ e_2$ then $\Gamma\vd e_2:t\ s\ d\eff D_2$, whence the induction hypothesis implies $t\ s\ d$ being well-structured.
\end{itemize}
\end{proof}

\begin{lemma}[Lemma~\ref{typesystem:lhslemma}]
Let $\Gamma\vd e:q\eff D$, where $q=(\qualty{t}{s}{d})$ and $\Gamma$ is well-structured. Let $e=\loadexpr{\loadexpr{\loadexpr{x}{y_1}}{y_2}\ldots}{y_n}$ where $x$ is a variable. Then there exist domains $d_1,\ldots,d_n$ and upward closed domain sets $D_1,\ldots,D_n$ such that $\Gamma\vd y_i:\qualty{\uintty}{\prestg}{d_i}\eff D_i$ for each $i=1,\ldots,n$ and
\[
\Gamma\vd x:\qualty{\listty{\ldots\listty{\qualty{\listty{\qualty{\listty{q}}{\prestg}{d_n}}}{\prestg}{d_{n-1}}}\ldots}}{\prestg}{d_1}\eff\varnothing\mbox{,}
\]
whereby $d_1\subtype\ldots\subtype d_{n-1}\subtype d_n\subtype d$ and $D=D_1\cup\ldots\cup D_n$.
\end{lemma}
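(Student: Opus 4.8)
The plan is to induct on the length $n$ of the index vector, i.e.\ on the number of nested $\loadexpr{\cdot}{\cdot}$ applications in~$e$. In the base case $n=0$ we have $e=x$, and the only applicable rule is the variable rule, which gives $\Gamma\vd x:q\eff\varnothing$ with $q$ read directly from~$\Gamma$. This matches the claimed conclusion: there are no indices $y_i$, the stated type of~$x$ is $q$ itself (zero list wrappers), the effect is $\varnothing$ (the empty union), and the domain chain is vacuous.

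For the inductive step $n\geq1$ I would write $e=\loadexpr{l}{y_n}$ with $l=\loadexpr{\loadexpr{\loadexpr{x}{y_1}}{y_2}\ldots}{y_{n-1}}$ and invert the list-element-access rule at the root of~$e$. This yields a stage $s_n$ and a domain $d_n$ with $\Gamma\vd l:\qualty{\listty{q}}{s_n}{d_n}\eff D'$ and $\Gamma\vd y_n:\qualty{\uintty}{s_n}{d_n}\eff D_n$, where $D=D_n\cup D'$. Since $y_n$ is typed with the unbounded data type $\uintty$, which the type system admits only in stage $\prestg$, we get $s_n=\prestg$, so $l$ has type $\qualty{\listty{q}}{\prestg}{d_n}$. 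Applying the induction hypothesis to~$l$ (its index vector has length $n-1$, and its result type $\qualty{\listty{q}}{\prestg}{d_n}$ plays the role of the ``$q$'' of the shorter instance) produces domains $d_1,\ldots,d_{n-1}$ and effect sets $D_1,\ldots,D_{n-1}$ with $\Gamma\vd y_i:\qualty{\uintty}{\prestg}{d_i}\eff D_i$ for $i<n$, the corresponding $(n-1)$-fold nested list typing of~$x$ whose innermost element is $\qualty{\listty{q}}{\prestg}{d_n}$, the chain $d_1\subtype\ldots\subtype d_{n-1}\subtype d_n$, and $D'=D_1\cup\ldots\cup D_{n-1}$. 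Substituting $\qualty{\listty{q}}{\prestg}{d_n}$ back in turns the $(n-1)$-fold nesting into the claimed $n$-fold nesting around~$q$, and combining effects gives $D=D_1\cup\ldots\cup D_n$.

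What remains, and what I expect to be the only non-mechanical point, is the final link $d_n\subtype d$ of the domain chain, which the access rule alone does not supply. Here I would invoke the well-structuredness invariant: since $\Gamma$ is well-structured and $\Gamma\vd l:\qualty{\listty{q}}{\prestg}{d_n}$, Theorem~\ref{typesystem:thm} gives that $\qualty{\listty{q}}{\prestg}{d_n}$ is well-structured, and then the list clause of Definition~\ref{typesystem:def} (requiring $\gen{d_n}\supseteq\gen{s}\cup\gen{d}$ for the element $q=\qualty{t}{s}{d}$) forces $\gen{d_n}\supseteq\gen{d}$, i.e.\ $d_n\subtype d$. This closes the chain $d_1\subtype\ldots\subtype d_n\subtype d$ and completes the step. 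The only subtleties to watch are (i)~justifying $s_n=\prestg$ from the index having type $\uintty$, and (ii)~feeding the correct result type of~$l$ into the induction hypothesis so that the nesting levels line up; beyond these, the argument is a direct unwinding of the typing rule and the hypothesis.
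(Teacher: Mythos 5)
Your proposal is correct and follows essentially the same route as the paper's proof: induction on the length of the index vector, inversion of the list-element-access rule, the induction hypothesis applied to the inner L-value with $\qualty{\listty{q}}{\prestg}{d_n}$ as the new target type, and well-structuredness (Theorem~\ref{typesystem:thm} together with Definition~\ref{typesystem:def}) to obtain the final link $d_n\subtype d$. The only cosmetic differences are that you justify $s_n=\prestg$ via the index's $\uintty$ type and invoke well-structuredness of the type of~$l$ rather than of~$x$; both variants are valid and equivalent to what the paper does.
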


\begin{proof}
We prove the claim by induction on~$n$. If $n=0$ then $x=e$ and the claim holds trivially. 
Now suppose that $n>0$ and the claim holds for $n-1$. As $\Gamma\vd\loadexpr{\loadexpr{\loadexpr{x}{y_1}}{y_2}\ldots}{y_n}:q\eff D$, we must have a domain $d_n$ and upward closed domain sets~$D',D_n$ such that
\[
\begin{array}{l}
\Gamma\vd\loadexpr{\loadexpr{\loadexpr{x}{y_1}}{y_2}\ldots}{y_{n-1}}:\qualty{\listty{q}}{\prestg}{d_n}\eff D'\mbox{,}\\
\Gamma\vd y_n:\qualty{\uintty}{\prestg}{d_n}\eff D_n\mbox{,}\\
D=D'\cup D_n\mbox{.}
\end{array}
\]
By the induction hypothesis, $\Gamma\vd y_i:\qualty{\uintty}{\prestg}{d_i}\eff D_i$ for each $i=1,\ldots,n-1$, and
\[
\Gamma\vd x:\qualty{\listty{\ldots\listty{\qualty{\listty{\qualty{\listty{q}}{\prestg}{d_n}}}{\prestg}{d_{n-1}}}\ldots}}{\prestg}{d_1}\eff\varnothing\mbox{,}
\]
whereby $d_1\subtype\ldots\subtype d_{n-1}\subtype d_n$ and $D'=D_1\cup\ldots\cup D_{n-1}$. As $D=D'\cup D_n$, the latter implies $D=D_1\cup\ldots\cup D_n$. By $\Gamma$ being well-structured and Theorem~\ref{typesystem:thm}, the type of~$x$ is well-structured, whence also $d_n\subtype d$. 
\end{proof}

\section{Proofs of Theorems \ref{semantics:compthm}--\ref{semantics:correctness}}\label{semantics-proofs}

We start with proving some auxiliary lemmas.

%\begin{lemma}\label{semantics:allprelemma}
%Let $d,d''$ be domains such that $d''\subtype d$. Let $\allpre_{d''}(\type(v))=q$ where $v\in P$. Then $\allpure v$ is $q$-exact in~$d$.
%\end{lemma}
%
%\begin{proof}
%Let $\type v=u$. We proceed by induction on the structure of type~$u$. If $u$ is primitive then $q=\allpre_{d''}(u)=\qualty{u}{\prestg}{d''}$. Moreover, $\allpure v=\return v$. By clause~1 of Definition~\ref{semantics:exactdef}, $\return v$ is $q$-exact in~$d$. Suppose now that $u=\listty{u_1}$ and the claim holds for $u_1$. Then $q=\allpre_{d''}(u)=\qualty{\listty{\allpre_{d''}(u_1)}}{\prestg}{d''}$. Moreover, $v=(v_1,\ldots,v_n)$ for some $n\in\NN$, whereby $\type(v_i)=u_1$ for all $v_i$, implying that $\allpure(v)=\return(\allpure(v_1),\ldots,\allpure(v_n))$. By the induction hypothesis, $\allpure(v_i)$ is $\allpre_{d''}(u_1)$-exact in~$d$ for all $i=1,\ldots,n$. Hence by clause~2 of Definition~\ref{semantics:exactdef}, $\allpure(v)$ is $q$-exact in~$d$.
%\end{proof}
%
\begin{lemma}\label{semantics:reintlemma}
\begin{enumerate}
\item Let $q=(\qualty{t}{s}{d})$ be a qualified type and $P$ be a predicate defined on qualified types. If $\monadic{v}\in M\ap V$ is $q$-exact in~$P$ then $\monadic{v}$ is $q$-exposed in~$P$.
\item Let $\Gamma$ be a well-structured type environment and $P$ be a predicate defined on qualified types. If $\gamma\in\Env$ is $\Gamma$-exact in~$P$ then $\gamma$ is $\Gamma$-exposed in~$P$.
\end{enumerate}
\end{lemma}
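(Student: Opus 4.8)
The plan is to prove part~(1) by structural induction on the well-structured qualified type $q=\qualty{t}{s}{d}$, and then obtain part~(2) as an immediate componentwise consequence. The observation that drives the entire argument is that Definitions~\ref{semantics:exposeddef} and~\ref{semantics:exactdef} agree verbatim in their first two alternatives and differ only in the third: $q$-exposedness merely requires $P(q)$ to be false, whereas $q$-exactness additionally demands $\monadic{v}=\top$. Thus exactness is strictly stronger than exposedness, and each defining clause of $q$-exactness entails the corresponding clause of $q$-exposedness. Since both predicates are defined only on well-structured types, I keep $q$ well-structured throughout the induction, which is exactly what lets me decompose a list type into a well-structured element type.

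For part~(1) I would split on the shape of $t$. If $t$ is a primitive type (including $\unitty$), then $q$-exactness of $\monadic{v}$ holds either via alternative~(1), namely $P(q)$ true and $\monadic{v}=\return v$ with $v\in t$, which is literally alternative~(1) of exposedness; or via alternative~(3), namely $P(q)$ false, which yields alternative~(3) of exposedness regardless of what $\monadic{v}$ is. If instead $t=\listty{q'}$, then $q'$ is itself well-structured by Definition~\ref{typesystem:def}, so the induction hypothesis applies to it. If $\monadic{v}$ is $q$-exact via alternative~(2), then $\monadic{v}=\return(\monadic{v}_1,\ldots,\monadic{v}_n)$ with each $\monadic{v}_i$ being $q'$-exact; the induction hypothesis converts each $\monadic{v}_i$ into a $q'$-exposed value, giving alternative~(2) of exposedness, and alternative~(3) is discharged exactly as in the primitive case.

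Part~(2) then follows with essentially no further work. Assuming $\gamma$ is $\Gamma$-exact in~$P$, we have $\vars\Gamma=\vars\gamma$, and for each association $(x_i:q_i)$ in~$\Gamma$ the corresponding value $\monadic{v}_i$ in~$\gamma$ is $q_i$-exact. Because $\Gamma$ is well-structured, each $q_i$ is well-structured, so part~(1) applies and shows $\monadic{v}_i$ is $q_i$-exposed. Together with $\vars\Gamma=\vars\gamma$ this is precisely the definition of $\gamma$ being $\Gamma$-exposed in~$P$.

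There is no genuinely hard step here: the statement is a routine weakening of ``exact'' to ``exposed''. The only points that require care are that the induction is carried over the finite list structure of the well-structured type rather than over the (possibly unbounded) value $\monadic{v}\in M\ap V$, and that well-structuredness of $q$ (respectively of every type in $\Gamma$) is what keeps the subtype $q'$ inside the domain on which Definitions~\ref{semantics:exposeddef} and~\ref{semantics:exactdef} are defined.
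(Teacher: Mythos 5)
Your proof is correct and follows essentially the same route as the paper's: structural induction on the data type of $q$, using the observation that the first two clauses of exactness and exposedness coincide while the third clause of exactness is strictly stronger, followed by a componentwise argument for part~(2). The only cosmetic difference is that the paper dispatches the case $P(q)$ false up front before inducting, whereas you handle it inside each case of the induction.
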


\begin{proof}
\begin{enumerate}
\item If $P(q)$ is false then the claim holds trivially by Definition~\ref{semantics:exposeddef}. Assume $P(q)$ being true. We proceed by induction on the structure of the data type~$t$. If $t$ is a primitive type then the claim again follows directly from Definitions~\ref{semantics:exposeddef} and \ref{semantics:exactdef}. If $t=\listty{q'}$ then, by Definition~\ref{semantics:exactdef}, $\monadic{v}=\return(\monadic{v}_1,\ldots,\monadic{v}_n)$ where all $\monadic{v}_i$ are $q'$-exact in~$P$. By the induction hypothesis, $\monadic{v}_i$ are $q'$-exposed in~$P$. Hence by Definition~\ref{semantics:exposeddef}, $\monadic{v}$ is $q$-exposed in~$P$.
\item Let $\vars\Gamma=(x_1,\ldots,x_n)$ and the corresponding types be $q_1,\ldots,q_n$. By $\gamma$ being $\Gamma$-exact in~$P$, $\gamma=((x_1,\monadic{v}_1),\ldots,(x_n,\monadic{v}_n))$ for some $\monadic{v}_1,\ldots,\monadic{v}_n$ such that, for every $i=1,\ldots,n$, the corresponding value $\monadic{v}_i$ is $q_i$-exact in~$P$. By part~(1), $\monadic{v}_i$ is $q_i$-exposed in~$P$. Hence $\gamma$ is \mbox{$\Gamma$-exposed} in~$P$.
\end{enumerate}
\end{proof}

\begin{lemma}\label{semantics:approxlemma}
\begin{enumerate}
\item Let $q=(\qualty{t}{s}{d})$ be a qualified type and $P,Q$ be predicates on qualified types such that $Q(q')\Rightarrow P(q')$ for all qualified types~$q'$. If $\monadic{v}\in M\ap V$ is $q$-exposed in~$P$ then $\monadic{v}$ is $q$-exposed in~$Q$.
\item Let $\Gamma$ be a well-structured type environment and $P,Q$ be predicates on qualified types such that $Q(q')\Rightarrow P(q')$ for all qualified types~$q'$. If $\gamma\in\Env$ is $\Gamma$-exposed in~$P$ then $\gamma$ is $\Gamma$-exposed in~$Q$.
\end{enumerate}
\end{lemma}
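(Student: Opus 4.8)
The plan is to prove part~(1) by structural induction on the data type~$t$ of $q=\qualty{t}{s}{d}$, and then to obtain part~(2) from part~(1) by the same pointwise argument already used for Lemma~\ref{semantics:reintlemma}(2). The one conceptual point worth isolating at the outset is the \emph{direction} of the hypothesis: $Q(q')\Rightarrow P(q')$ gives, by contraposition, that $P(q')$ false implies $Q(q')$ false. Hence every time Definition~\ref{semantics:exposeddef} discharges a requirement for~$P$ through its third alternative (namely $P(q')$ being false), the very same alternative is available for~$Q$. Intuitively, exposedness in~$P$ is the \emph{stronger} notion because $P$ holds at no fewer places than~$Q$, so a value exposed in~$P$ is constrained at least as much as is needed for exposedness in~$Q$.

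For the induction I would first case-split on whether $P(q)$ is true. If $P(q)$ is false, then $Q(q)$ is false, so $\monadic{v}$ is $q$-exposed in~$Q$ immediately by the third alternative of Definition~\ref{semantics:exposeddef}, without invoking the induction hypothesis. If $P(q)$ is true, then $\monadic{v}$ satisfies the first or second alternative for~$P$, and I would further split on whether $Q(q)$ holds. When $Q(q)$ is false, the third alternative again gives exposedness in~$Q$ at once. When $Q(q)$ is true, I match whichever alternative was used for~$P$: if $t$ is primitive, then $\monadic{v}=\return v$ with $v\in t$ directly satisfies the first alternative for~$Q$; if $t=\listty{q'}$, then $\monadic{v}=\return(\monadic{v}_1,\ldots,\monadic{v}_n)$ with each $\monadic{v}_i$ being $q'$-exposed in~$P$, and since $q$ is well-structured the component type $q'$ is well-structured with a structurally smaller data type, so the induction hypothesis applied to each $\monadic{v}_i$ yields that each $\monadic{v}_i$ is $q'$-exposed in~$Q$, establishing the second alternative for~$Q$.

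For part~(2), writing $\vars\Gamma=(x_1,\ldots,x_n)$ with corresponding types $q_1,\ldots,q_n$, the assumption that $\gamma$ is $\Gamma$-exposed in~$P$ means $\gamma=((x_1,\monadic{v}_1),\ldots,(x_n,\monadic{v}_n))$ with each $\monadic{v}_i$ being $q_i$-exposed in~$P$. Applying part~(1) componentwise gives that each $\monadic{v}_i$ is $q_i$-exposed in~$Q$, whence $\gamma$ is $\Gamma$-exposed in~$Q$. This mirrors the reduction of Lemma~\ref{semantics:reintlemma}(2) to Lemma~\ref{semantics:reintlemma}(1) exactly.

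There is no genuine obstacle here; the proof is a routine structural induction. The only place requiring care is keeping the two independent case-splits (on $P(q)$ and on $Q(q)$) straight and observing that the induction hypothesis is needed \emph{only} in the single branch where both $P(q)$ and $Q(q)$ are true and $t$ is a list type. The remaining branches close by the third alternative of the definition, which is exactly what the contrapositive of $Q\Rightarrow P$ makes available.
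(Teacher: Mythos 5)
Your proof is correct and follows essentially the same structural induction as the paper; the only cosmetic difference is that the paper case-splits on $Q(q)$ first (dispatching the false case immediately and inferring $P(q)$ from $Q(q)$), whereas you split on $P(q)$ first and then on $Q(q)$, which reaches the same three outcomes. Part~(2) is handled identically to the paper's pointwise reduction.
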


\begin{proof}
\begin{enumerate}
\item If $Q(q)$ does not hold then the desired claim holds by clause~3 of Definition~\ref{semantics:exposeddef}. Assume now $Q(q)$. Then also $P(q)$. We proceed by induction on the structure of the data type~$t$. If $t$ is a primitive type then clause~1 of Definition~\ref{semantics:exposeddef} implies $\monadic{v}=\return v$, $v\in t$, which in turn gives $\monadic{v}$ being $q$-exposed in~$Q$. If $t=\listty{q'}$ then clause~2 of Definition~\ref{semantics:exposeddef} implies $\monadic{v}=\return(\monadic{v}_1,\ldots,\monadic{v}_n)$ where $\monadic{v}_i$ being $q'$-exposed in~$P$ for all $i=1,\ldots,n$. By the induction hypothesis, all $\monadic{v}_i$ are $q'$-exposed in~$Q$. So, by clause~2 of Definition~\ref{semantics:exposeddef}, $\monadic{v}$ is $q$-exposed in~$Q$.
\item Let $\vars\Gamma=(x_1,\ldots,x_n)$ and the corresponding types be $q_1,\ldots,q_n$. Let $\gamma$ be $\Gamma$-exposed in~$P$. Then $\gamma=((x_1,\monadic{v}_1),\ldots,(x_n,\monadic{v}_n))$ where, for every $i=1,\ldots,n$, $\monadic{v}_i$ is $q_i$-exposed in~$P$. By part~(1), $\monadic{v}_i$ is $q_i$-exposed in~$Q$. Hence $\gamma$ is $\Gamma$-exposed in~$Q$.
\end{enumerate}
\end{proof}

\begin{lemma}\label{semantics:approxcirclemma}
Let $q$ be a qualified type such that $\allpre_{d_0}q$ where $d_0=\publicdom$. If $\monadic{v}$ is $q$-exact in $\proverdom$ then $\monadic{v}$ is $q$-exact in circuit.
\end{lemma}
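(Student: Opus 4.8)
The plan is to prove the statement by structural induction on the data type component $t$ of $q = \qualty{t}{s}{d}$, after first unwinding the two exactness predicates via Definition~\ref{semantics:interpretdef}. The key preliminary observation is that \emph{both} predicates are identically true on every qualified type that can occur inside $q$. For ``$q$-exact in $\proverdom$'' the underlying predicate is $P_1(\qualty{t}{s}{d}) \rightleftharpoons (d \subtype \proverdom)$, which holds for every domain since $\proverdom$ is the greatest element of the ordering $\publicdom \subtype \verifierdom \subtype \proverdom$. For ``$q$-exact in circuit'' the predicate is $P_2(\qualty{t}{s}{d}) \rightleftharpoons (s = \poststg \vee d = \publicdom)$; the hypothesis $\allpre_{\publicdom} q$ forces $d = \publicdom$, so the second disjunct holds and $P_2(q)$ is true as well.

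Crucially, this must hold not only for $q$ but for every nested qualified type, and here I would appeal to the recursive clause of $\allpre_{\publicdom}$: if $t = \listty{q'}$ then $\allpre_{\publicdom} q'$ holds by definition, so descending through the list structure every qualified type appearing in $q$ has domain $\publicdom$ and hence satisfies $P_2$. (Well-structuredness of $q$, which Definition~\ref{semantics:exactdef} presupposes before it may even be invoked, follows from Lemma~\ref{typesystem:lemma}.) Since $P_1$ and $P_2$ are thus both true throughout, clause~3 of Definition~\ref{semantics:exactdef} is never reached on either side, and the two notions of exactness collapse to the same structural recursion: clause~1 for primitive $t$ and clause~2 for list types.

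With this in place the induction is routine. In the base case $t$ is primitive, so $q$-exactness in $\proverdom$ gives $\monadic{v} = \return v$ with $v \in t$ by clause~1, and since $P_2(q)$ holds the same $\monadic{v}$ witnesses $q$-exactness in circuit. In the inductive case $t = \listty{q'}$, $q$-exactness in $\proverdom$ gives $\monadic{v} = \return(\monadic{v}_1, \ldots, \monadic{v}_n)$ with each $\monadic{v}_i$ being $q'$-exact in $\proverdom$ (clause~2); the induction hypothesis applied to $q'$ (which still satisfies $\allpre_{\publicdom}$) yields that each $\monadic{v}_i$ is $q'$-exact in circuit, and clause~2 together with $P_2(q)$ then gives that $\monadic{v}$ is $q$-exact in circuit.

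I expect no serious obstacle: the entire content of the lemma lies in the observation that $\allpre_{\publicdom}$ propagates the domain $\publicdom$ to every level of $q$, which simultaneously makes the $\proverdom$-predicate and the circuit-predicate true everywhere. The only point requiring mild care is to set up the induction so that the recursive descent through $\allpre_{\publicdom}$ keeps pace with the recursion in Definition~\ref{semantics:exactdef}; once the two predicates are seen to agree at every level, the matching of clauses~1 and~2 is immediate.
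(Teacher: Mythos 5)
Your proof is correct and takes essentially the same route as the paper's: structural induction on the data type $t$, using in the base case that $\allpre_{\publicdom}q$ forces $d=\publicdom$ (so the circuit predicate holds), and in the list case that $\allpre_{\publicdom}q$ propagates to $\allpre_{\publicdom}q'$ so the induction hypothesis applies to the elements. Your additional observation that the $\proverdom$-predicate is identically true is a correct (if implicit in the paper) justification for why clause~3 of Definition~\ref{semantics:exactdef} never arises on the hypothesis side.
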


\begin{proof}
Let $q=(\qualty{t}{s}{d})$. We proceed by induction on the structure of~$t$. If $t$ is a primitive type then, by $\monadic{v}$ being $q$-exact in $\proverdom$, $\monadic{v}=\return v$ where $v\in t$. But $\allpre_{d_0}q$ implies $d=d_0=\publicdom$. Hence $\return v$ is $q$-exact in circuit. Now assume $t=\listty{q'}$. By $\monadic{v}$ being $q$-exact in $\proverdom$, $\monadic{v}=\return(\monadic{v}_1,\ldots,\monadic{v}_n)$ where $n\in\NN$ and all $\monadic{v}_i$ are $q'$-exact in $\proverdom$. Note that $\allpre_{d_0}q$ implies $\allpre_{d_0}q'$. Hence by the induction hypothesis, all $\monadic{v}_i$ are $q'$-exact in circuit. Consequently, $\monadic{v}$ is $q$-exact in circuit.
\end{proof}

%\begin{lemma}\label{semantics:castexactlemma}
%Let $\qualty{t_0}{s_0}{d_0}$ be a qualified type and let $d'_0,d$ be domains such that $d'_0\subtype d_0\subtype d$. Let $\monadic{v}$ be $(\qualty{t_0}{s_0}{d_0})$-exact in~$d$. Then $\monadic{v}$ is also $(\qualty{t_0}{s_0}{d'_0})$-exact in~$d$.
%\end{lemma}
%
%\begin{proof}
%By assumptions, both $d_0\subtype d$ and $d'_0\subtype d$ hold. Hence the clause of Definition~\ref{semantics:exactdef} that makes $\monadic{v}$ being $(\qualty{t_0}{s_0}{d_0})$-exact in~$d$ makes $\monadic{v}$ also $(\qualty{t_0}{s_0}{d'_0})$-exact in~$d$.
%\end{proof}
%
\begin{lemma}\label{semantics:castcoinlemma}
Let $q_0=\qualty{t}{s}{d_0}$ and $q_1=\qualty{t}{s}{d_1}$ where $d_1\subtype d_0$. Let $d$ be any domain. If $\monadic{v}\coin{q_1}{d}\monadic{v}'$ then $\monadic{v}\coin{q_0}{d}\monadic{v}'$.
\end{lemma}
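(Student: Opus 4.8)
The plan is to unfold the definition of coincidence in~$d$ (Definition~\ref{semantics:interpretdef}) into its underlying predicate and then reason directly from Definition~\ref{semantics:coincidentdef}; no induction will actually be required. Writing $P$ for the predicate that makes coincidence in~$d$ precise, Definition~\ref{semantics:interpretdef} gives $P(\qualty{t'}{s'}{d'})\rightleftharpoons(d'\subtype d)$, so $P$ depends only on the domain component of its argument. The two types in the statement, $q_0=\qualty{t}{s}{d_0}$ and $q_1=\qualty{t}{s}{d_1}$, share the same data type~$t$ and stage~$s$ and differ only in their outermost domain. The one arithmetic fact I need is that $P(q_0)$ means $d_0\subtype d$, and since $d_1\subtype d_0$ by hypothesis, transitivity of $\subtype$ yields $d_1\subtype d$, i.e.\ $P(q_0)\Rightarrow P(q_1)$.

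First I would case-split on the truth of $P(q_0)$. If $P(q_0)$ is false, clause~(3) of Definition~\ref{semantics:coincidentdef} applies to~$q_0$ at once and gives $\monadic{v}\coin{q_0}{d}\monadic{v}'$, with nothing further to check. If $P(q_0)$ is true, then by the observation above $P(q_1)$ is also true, so the assumed coincidence $\monadic{v}\coin{q_1}{d}\monadic{v}'$ must have been obtained through clause~(1) or clause~(2), according to whether $t$ is primitive or a list type.

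I would then dispatch the two shapes of~$t$. If $t$ is primitive, clause~(1) for~$q_1$ forces $\monadic{v}=\monadic{v}'=\return v$ with $v\in t$; since $P(q_0)$ holds and $t$ is primitive, the very same data witnesses clause~(1) for~$q_0$, so $\monadic{v}\coin{q_0}{d}\monadic{v}'$. If $t=\listty{q'}$, clause~(2) for~$q_1$ decomposes $\monadic{v}=\return(\monadic{v}_1,\ldots,\monadic{v}_n)$ and $\monadic{v}'=\return(\monadic{v}'_1,\ldots,\monadic{v}'_n)$ with $\monadic{v}_i\coin{q'}{d}\monadic{v}'_i$ for every~$i$. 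Crucially, the inner type $q'$ is literally the same for $q_0$ and~$q_1$ (only the outer domain was altered), so these componentwise coincidences are exactly what clause~(2) for~$q_0$ demands; together with $P(q_0)$ being true this yields $\monadic{v}\coin{q_0}{d}\monadic{v}'$.

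The reason no structural induction is needed---and the only point I regard as genuinely requiring care---is precisely this last observation: replacing $d_1$ by~$d_0$ touches only the top-level domain and leaves $q'$ unchanged, so the recursive obligations of clause~(2) transfer verbatim rather than having to be re-derived. The remaining minor bookkeeping is to note that Definition~\ref{semantics:coincidentdef} presupposes well-structured types; I would rely on the fact that in every use of this lemma $q_0$ (the target type of a well-typed cast) is well-structured, exactly as $q_1$ is, so that both sides of the statement are well defined.
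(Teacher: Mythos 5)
Your proof is correct and follows essentially the same route as the paper's: a case split on whether $d_0\subtype d$, with the false case discharged vacuously by clause~(3) and the true case handled by transitivity of $\subtype$ plus the observation that the element type $q'$ of a list is untouched by the outer domain change, so the componentwise coincidences carry over without induction. The only differences are presentational (order of cases and your explicit remark on why induction is unnecessary).
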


\begin{proof}
Suppose $d_0\subtype d$. Then also $d_1\subtype d$. If $t$ is a primitive type then $\monadic{v}\coin{q_1}{d}\monadic{v}'$ implies $\monadic{v}=\return v=\monadic{v}'$ where $v\in t$. Hence $\monadic{v}\coin{q_0}{d}\monadic{v}'$. If $t=\listty{q'}$ then $\monadic{v}\coin{q_1}{d}\monadic{v}'$ implies $\monadic{v}=\return(\monadic{v}_1,\ldots,\monadic{v}_n)$, $\monadic{v}'=\return(\monadic{v}'_1,\ldots,\monadic{v}'_n)$ where $\monadic{v}_i\coin{q'}{d}\monadic{v}'_i$ for every $i=1,\ldots,n$. Consequently, $\monadic{v}\coin{q_0}{d}\monadic{v}'$.

If $d_0$ is a strict superdomain of~$d$ then the claim holds vacuously.
\end{proof}

\begin{lemma}\label{semantics:coinlemma}
\begin{enumerate}
\item Let $q=(\qualty{t}{s}{d})$ be a qualified type and $P$~be a predicate defined on qualified types. The relation $\coin{q}{P}$ is an equivalence on monadic values that are $q$-exposed in~$P$.
\item Let $\Gamma$ be a well-structured type environment and $P$~be a predicate defined on qualified types. The relation $\coin{\Gamma}{P}$ is an equivalence on value environments that are $\Gamma$-exposed in~$P$.
\end{enumerate}
\end{lemma}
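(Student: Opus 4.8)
The plan is to establish part~(1) by structural induction on the data type~$t$ of $q=\qualty{t}{s}{d}$, and then to reduce part~(2) to part~(1) applied componentwise. I would handle reflexivity, symmetry and transitivity of $\coin{q}{P}$ inside a single induction, following the three clauses of Definition~\ref{semantics:coincidentdef}. The case $P(q)$ false is immediate: clause~3 makes $\coin{q}{P}$ hold for \emph{every} pair of monadic values, so all three properties hold vacuously. When $P(q)$ is true and $t$ is primitive, coincidence forces $\monadic{v}=\monadic{v}'=\return v$ with $v\in t$, from which symmetry and transitivity are trivial, and reflexivity follows precisely because a $q$-exposed value is of this shape (clause~1 of Definition~\ref{semantics:exposeddef}).

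The list case $t=\listty{q'}$ is where the induction does the work. For reflexivity, a $q$-exposed $\monadic{v}=\return(\monadic{v}_1,\ldots,\monadic{v}_n)$ has each $\monadic{v}_i$ being $q'$-exposed, so the induction hypothesis gives $\monadic{v}_i\coin{q'}{P}\monadic{v}_i$ and hence $\monadic{v}\coin{q}{P}\monadic{v}$. Symmetry is the componentwise mirror. For transitivity, $\monadic{v}\coin{q}{P}\monadic{v}'$ and $\monadic{v}'\coin{q}{P}\monadic{v}''$ force all three to be $\return$ of tuples of the same length (the common middle term $\monadic{v}'$ pins the length on both sides), and then the induction hypothesis applied to each coordinate yields $\monadic{v}_i\coin{q'}{P}\monadic{v}''_i$, whence $\monadic{v}\coin{q}{P}\monadic{v}''$.

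For part~(2) I would write $\vars\Gamma=(x_1,\ldots,x_n)$ with associated types $q_1,\ldots,q_n$. An environment that is $\Gamma$-exposed in~$P$ has $\vars\gamma=\vars\Gamma$ and each stored value $q_i$-exposed, so reflexivity of $\coin{\Gamma}{P}$ follows from part~(1) applied to each component; symmetry and transitivity follow the same way, with the variable-list equalities $\vars\Gamma=\vars\gamma=\vars\gamma'$ (extended by a third environment for transitivity) chaining so that the associations line up coordinatewise before invoking part~(1).

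The one point that needs care --- and the only genuine obstacle --- is that $\coin{q}{P}$ is \emph{not} reflexive on all of $M\ap V$; reflexivity truly depends on $\monadic{v}$ being $q$-exposed, which is exactly why the lemma restricts the carrier set to exposed values. Concretely, in the primitive and list cases the induction hypothesis for reflexivity can be invoked only because exposedness propagates to the components, so I would keep the exposedness hypothesis explicit throughout the induction rather than attempting to prove reflexivity in isolation. Symmetry and transitivity, by contrast, hold for arbitrary coincident pairs and do not require the hypothesis.
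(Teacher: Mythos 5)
Your proposal is correct and follows essentially the same route as the paper's proof: a structural induction on the data type of $q$ that dispatches the $P(q)$-false case vacuously, uses exposedness only where reflexivity needs it, handles the list case componentwise via the induction hypothesis, and reduces part~(2) to part~(1) coordinatewise over the type environment. Your closing observation that symmetry and transitivity hold without the exposedness hypothesis while reflexivity genuinely requires it matches how the paper's argument actually uses (and does not use) that assumption.
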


\begin{proof}
In both parts, we have to establish reflexivity, transitivity and symmetry. 

\begin{enumerate}
\item If $P(q)$ does not hold then all $\monadic{v}\in M\ap V$ are $q$-coincident in~$P$, whence all required properties hold. Now assume $P(q)$. We proceed by induction on the structure of the data type~$t$. 

Let $t$ be a primitive type. For reflexivity, take $\monadic{v}$ which is $q$-exposed in~$P$. Then $\monadic{v}=\return v$ where $v\in t$. As $\monadic{v}=\monadic{v}$, this implies $\monadic{v}\coin{q}{P}\monadic{v}$. For transitivity, assume $\monadic{v}\coin{q}{P}\monadic{v}'$ and $\monadic{v}'\coin{q}{P}\monadic{v}''$. Then $\monadic{v}=\monadic{v}'=\monadic{v}''=\return v$ where $v\in t$. Consequently, $\monadic{v}\coin{q}{P}\monadic{v}''$. For symmetry, assume $\monadic{v}\coin{q}{P}\monadic{v}'$. Then $\monadic{v}=\monadic{v}'=\return v$ where $v\in t$. Consequently, $\monadic{v}'\coin{q}{P}\monadic{v}$.

Now let $t=\listty{q'}$ and $q'=(\qualty{t'}{s'}{d'})$. For reflexivity, take $\monadic{v}$ which is $q$-exposed in~$P$. Then $\monadic{v}=\return(\monadic{v}_1,\ldots,\monadic{v}_n)$ where $\monadic{v}_i$ is $q'$-exposed in~$P$ for every $i=1,\ldots,n$. By the induction hypothesis, reflexivity applies to each~$i$ and gives $\monadic{v}_i\coin{q'}{P}\monadic{v}_i$. Hence $\monadic{v}\coin{q}{P}\monadic{v}$. For transitivity, assume $\monadic{v}\coin{q}{P}\monadic{v}'$ and $\monadic{v}'\coin{q}{P}\monadic{v}''$. Then $\monadic{v}=\return(\monadic{v}_1,\ldots,\monadic{v}_n)$, $\monadic{v}'=\return(\monadic{v}'_1,\ldots,\monadic{v}'_n)$, $\monadic{v}''=\return(\monadic{v}''_1,\ldots,\monadic{v}''_n)$, where $\monadic{v}_i\coin{q'}{P}\monadic{v}'_i$ and $\monadic{v}'_i\coin{q'}{P}\monadic{v}''_i$ for every $i=1,\ldots,n$. By the induction hypothesis, transitivity applies to each~$i$ and gives $\monadic{v}_i\coin{q'}{P}\monadic{v}''_i$. Hence $\monadic{v}\coin{q}{P}\monadic{v}''$. For symmetry, assume $\monadic{v}\coin{q}{P}\monadic{v}'$. Then $\monadic{v}=\return(\monadic{v}_1,\ldots,\monadic{v}_n)$, $\monadic{v}'=\return(\monadic{v}'_1,\ldots,\monadic{v}'_n)$ where $\monadic{v}_i\coin{q'}{P}\monadic{v}'_i$ for every $i=1,\ldots,n$. By the induction hypothesis, symmetry applies to each~$i$ and gives $\monadic{v}'_i\coin{q'}{P}\monadic{v}_i$. Hence $\monadic{v}'\coin{q'}{P}\monadic{v}$.

\item Let $\vars\Gamma=(x_1,\ldots,x_n)$ and the corresponding types be $q_1,\ldots,q_n$. 

For reflexivity, take $\gamma$ being $\Gamma$-exposed in~$P$. Then $\gamma=((x_1,\monadic{v}_1),\ldots,(x_n,\monadic{v}_n))$ where $\monadic{v}_i$ is $q_i$-exposed for every $i=1,\ldots,n$. By part~(1), $\monadic{v}_i\coin{q_i}{P}\monadic{v}_i$ for every $i=1,\ldots,n$. The desired claim follows.

For transitivity, assume $\gamma\coin{\Gamma}{P}\gamma'$ and $\gamma'\coin{\Gamma}{P}\gamma''$. Then we must have $\gamma=((x_1,\monadic{v}_1),\ldots,(x_n,\monadic{v}_n))$, $\gamma'=((x_1,\monadic{v}'_1),\ldots,(x_n,\monadic{v}'_n))$ and $\gamma''=((x_1,\monadic{v}''_1),\ldots,(x_n,\monadic{v}''_n))$ where $\monadic{v}_i\coin{q_i}{P}\monadic{v}'_i$ and $\monadic{v}'_i\coin{q_i}{P}\monadic{v}''_i$ for every $i=1,\ldots,n$. By part~(1), $\monadic{v}_i\coin{q_i}{P}\monadic{v}''_i$ for every $i=1,\ldots,n$. The desired claim follows.

For symmetry, assume $\gamma\coin{\Gamma}{P}\gamma'$. Then $\gamma=((x_1,\monadic{v}_1),\ldots,(x_n,\monadic{v}_n))$ and $\gamma'=((x_1,\monadic{v}'_1),\ldots,(x_n,\monadic{v}'_n))$ where $\monadic{v}_i\coin{q_i}{P}\monadic{v}'_i$ for every $i=1,\ldots,n$. By part~(1), $\monadic{v}'_i\coin{q_i}{P}\monadic{v}_i$ for every $i=1,\ldots,n$. The desired claim follows.
\end{enumerate}
\end{proof}

\begin{lemma}
Let $\Gamma$ be a well-structured type environment. Let $x$ be a variable occurring in $\vars\Gamma$ and let $q=\lookup{\Gamma}{x}$. 
\begin{thmlist}
\item\label{semantics:lookupexposedlemma}
Let $P$ be a predicate defined on qualified types. If $\gamma\in\Env$ is $\Gamma$-exposed in~$P$ then $\lookup{\gamma}{x}$ is $q$-exposed in~$P$.
\item\label{semantics:lookupexactlemma}
Let $P$ be a predicate defined on qualified types. If $\gamma\in\Env$ is $\Gamma$-exact in~$P$ then $\lookup{\gamma}{x}$ is $q$-exact in~$P$.
\item\label{semantics:lookupcoinlemma}
Let $P$, $Q$ be predicates defined on qualified types such that $Q(q')\Rightarrow P(q')$ for every qualified type~$q'$. Let $\gamma,\gamma'\in\mathbf{Env}$ be $\Gamma$-exact in~$P$ and $Q$, respectively. If $\gamma\coin{\Gamma}{Q}\gamma'$ then $\lookup{\gamma}{x}\coin{q}{Q}\lookup{\gamma'}{x}$.
\end{thmlist}
\end{lemma}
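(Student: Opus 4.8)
The plan is to treat all three parts uniformly, observing that each is an immediate consequence of the defining condition of $\Gamma$-exposedness, $\Gamma$-exactness, and $\Gamma$-coincidence once the least-index convention of the lookup operation is accounted for. No induction is required here; the entire content is a matter of matching up indices in the two (or three) association lists.

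First I would fix common notation. Write $\vars\Gamma=(x_1,\ldots,x_n)$ with corresponding qualified types $q_1,\ldots,q_n$, and let $j$ be the least index with $x_j=x$; such a $j$ exists because $x$ is assumed to occur in $\vars\Gamma$. By the definition of lookup in a type environment we then have $q=\lookup{\Gamma}{x}=q_j$. The one observation I would state once and reuse throughout is this: whenever a value environment $\gamma$ satisfies $\vars\gamma=\vars\Gamma$ (which each of the three hypotheses guarantees), the variable sequence of $\gamma$ is literally $(x_1,\ldots,x_n)$, so the least index at which $x$ appears in $\gamma$ is again $j$; hence $\lookup{\gamma}{x}$ returns precisely the monadic value stored at position $j$.

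For part~(\ref{semantics:lookupexposedlemma}), $\Gamma$-exposedness of $\gamma$ in $P$ gives $\gamma=((x_1,\monadic{v}_1),\ldots,(x_n,\monadic{v}_n))$ with each $\monadic{v}_i$ being $q_i$-exposed in $P$. By the index observation $\lookup{\gamma}{x}=\monadic{v}_j$, which is $q_j$-exposed, i.e.\ $q$-exposed, in $P$. Part~(\ref{semantics:lookupexactlemma}) is the verbatim same argument with ``exact'' in place of ``exposed''. For part~(\ref{semantics:lookupcoinlemma}), the hypothesis $\gamma\coin{\Gamma}{Q}\gamma'$ unfolds directly to $\vars\gamma=\vars\gamma'=\vars\Gamma$ together with $\monadic{v}_i\coin{q_i}{Q}\monadic{v}'_i$ for every $i$; specializing to $i=j$ and reading off the two lookups via the index observation yields $\lookup{\gamma}{x}\coin{q}{Q}\lookup{\gamma'}{x}$, as required.

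I expect the only genuine point of care — the nearest thing to an obstacle — to be the least-index bookkeeping: one must check that the single index $j$ simultaneously governs the lookup in $\Gamma$, in $\gamma$, and (for part~(\ref{semantics:lookupcoinlemma})) in $\gamma'$, which is exactly what the conditions $\vars\gamma=\vars\Gamma$ and $\vars\gamma'=\vars\Gamma$ supply. It is worth remarking that the conclusion of part~(\ref{semantics:lookupcoinlemma}) already follows from the coincidence hypothesis alone; the auxiliary assumptions that $\gamma,\gamma'$ are $\Gamma$-exact and that $Q(q')\Rightarrow P(q')$ are not strictly needed for this lemma in isolation, but are carried along because they are the ambient hypotheses under which coincidence is meaningful and invoked elsewhere (cf.\ Lemma~\ref{semantics:coinlemma}, where $\coincirc{\cdot}$-style coincidence is shown to be an equivalence on exposed values).
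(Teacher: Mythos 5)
Your proof is correct and is essentially identical to the paper's: both arguments simply unfold the definitions of $\Gamma$-exposedness, $\Gamma$-exactness, and $\Gamma$-coincidence, use $\vars\gamma=\vars\Gamma$ to identify the least index governing the lookups, and read off the componentwise property at that index. Your side remark that the exactness hypotheses in part~(3) are not strictly needed is accurate but harmless.
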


\begin{proof}
Let $\vars\Gamma=(x_1,\ldots,x_n)$ and the corresponding types be $q_1,\ldots,q_n$.
\begin{enumerate}
\item Let $\gamma$ be $\Gamma$-exposed in~$P$. Then $\gamma=((x_1,\monadic{v}_1),\ldots,(x_n,\monadic{v}_n))$ where $\monadic{v}_i$ is $q_i$-exposed in~$P$ for every $i=1,\ldots,n$. Let $k$ be the least index such that $x_k=x$. Then $q=\lookup{\Gamma}{x}=q_k$ and $\lookup{\gamma}{x}=\monadic{v}_k$. Hence $\lookup{\gamma}{x}$ is $q$-exposed in~$P$.
\item Similar to the proof of Lemma~\ref{semantics:lookupexposedlemma}.
\item Let $\gamma,\gamma'\in\mathbf{Env}$ be $\Gamma$-exact in~$P$ and $Q$, respectively, such that $\gamma\coin{\Gamma}{Q}\gamma'$. Then we have $\gamma=((x_1,\monadic{v}_1),\ldots,(x_n,\monadic{v}_n))$ and $\gamma'=((x_1,\monadic{v}'_1),\ldots,(x_n,\monadic{v}'_n))$ where, for every $i=1,\ldots,n$, $\monadic{v}_i$ and $\monadic{v}'_i$ are $q_i$-exact in $P$ and $Q$, respectively, and $\monadic{v}_i\coin{q_i}{Q}\monadic{v}'_i$. Let $k$ be the least index such that $x_k=x$. Then $q=\lookup{\Gamma}{x}=q_k$ and $\lookup{\gamma}{x}=\monadic{v}_k$, $\lookup{\gamma'}{x}=\monadic{v}'_k$. Hence $\lookup{\gamma}{x}\coin{q}{Q}\lookup{\gamma'}{x}$.
\end{enumerate}
\end{proof}

\begin{lemma}\label{semantics:updexactlemma}
Let $P$ be any data insensitive predicate defined on qualified types. Let $d_1,\ldots,d_n$ be domains and $q$ be a qualified type. Moreover, let $q_k=(\qualty{\uintty}{\prestg}{d_k})$ for every $k=1,\ldots,n$ and  $q'=(\qualty{\listty{\ldots\listty{\qualty{\listty{\qualty{\listty{q}}{\prestg}{d_n}}}{\prestg}{d_{n-1}}}\ldots}}{\prestg}{d_1})$. Assume $\monadic{a}$ being $q'$-exact in~$P$, let $\monadic{i}_k$ be $q_k$-exact in~$P$ for every $k=1,\ldots,n$, and let $\monadic{v}$ be $q$-exact in~$P$. Then $\update(\monadic{a},\monadic{i}_1\ldots\monadic{i}_n,\monadic{v})$ (assuming that it is well defined) is $q'$-exact in~$P$.
\end{lemma}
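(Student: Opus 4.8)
The plan is to induct on $n$, the length of the index vector, following the recursive definition of $\update$. Throughout, well-structuredness of $q'$ is in force (it is implicit in the exactness notions), and by Definition~\ref{typesystem:def} it propagates to every nested element type, so each shorter suffix of the nesting is itself well-structured and the clauses of Definition~\ref{semantics:exactdef} apply at every level.

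For the base case $n=0$, the type $q'$ is just $q$ and $\update(\monadic{a},\epsilon,\monadic{v})=\monadic{v}$, so the conclusion is exactly the hypothesis that $\monadic{v}$ is $q$-exact in~$P$.

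For the inductive step, write $q'=(\qualty{\listty{q''}}{\prestg}{d_1})$, where $q''$ is the $(n-1)$-fold list nesting of $q$ over the domains $d_2,\ldots,d_n$, and unfold the $n>0$ clause of $\update$, i.e. the comprehension $\mcomp{a\gets\monadic{a}\hstop i_1\gets\monadic{i}_1\hstop\return([i_1\mapsto\update(a_{i_1},\monadic{i}_2\ldots\monadic{i}_n,\monadic{v})]a)}$. I would split on whether $P(q')$ holds. If $P(q')$ is false, then $\monadic{a}=\top$ by clause~3 of Definition~\ref{semantics:exactdef}, so the first binder forces the whole comprehension to~$\top$, which is $q'$-exact in~$P$ again by clause~3. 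If $P(q')$ is true, then by clause~2 we have $\monadic{a}=\return(\monadic{a}_1,\ldots,\monadic{a}_m)$ with each $\monadic{a}_j$ being $q''$-exact in~$P$. Here I invoke the data insensitivity of~$P$: since $q_1=(\qualty{\uintty}{\prestg}{d_1})$ and $q'$ agree on stage~$\prestg$ and domain~$d_1$, we get $P(q_1)=P(q')$, which is true, whence $\monadic{i}_1=\return i_1$ with $i_1\in\NN$ (clause~1, $\uintty$ being primitive). The comprehension therefore yields the pure list obtained from $(\monadic{a}_1,\ldots,\monadic{a}_m)$ by overwriting its $i_1$-th entry with $\update(\monadic{a}_{i_1},\monadic{i}_2\ldots\monadic{i}_n,\monadic{v})$. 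The untouched entries remain $q''$-exact, and for the overwritten entry I apply the induction hypothesis, shifting the index domains to $d_2,\ldots,d_n$ and using that $\monadic{a}_{i_1}$ is $q''$-exact, that each $\monadic{i}_k$ ($k\ge 2$) is $q_k$-exact, that $\monadic{v}$ is $q$-exact, and that this inner update is well defined (inherited from the overall well-definedness assumption). Every entry of the resulting list is thus $q''$-exact in~$P$, so by clause~2 the list itself is $q'$-exact in~$P$.

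The one genuinely load-bearing step is the appeal to data insensitivity in the $P(q')$-true case. Without it we could not rule out $\monadic{i}_1=\top$, and a $\top$ index would collapse the comprehension to~$\top$, which is not $q'$-exact when $P(q')$ holds; data insensitivity is precisely what transfers the truth of~$P$ from the list type~$q'$ to the index type~$q_1$, which share stage and domain. The remainder is routine bookkeeping of the nesting structure and careful alignment of the index and element domains between the statement and the induction hypothesis.
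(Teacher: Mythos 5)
Your proof is correct and follows essentially the same route as the paper's: induction on $n$, a case split on whether $P(q')$ holds, using clause~3 of Definition~\ref{semantics:exactdef} to get $\monadic{a}=\top$ in the false case, and data insensitivity to transfer $P(q')$ to $P(q_1)$ so that $\monadic{i}_1$ is pure in the true case. The only cosmetic difference is that you make the role of data insensitivity more explicit than the paper does; no gaps.
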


\begin{proof}
We proceed by induction on~$n$. If $n=0$ then $q'=q$ and $\update(\monadic{a},\monadic{i}_1\ldots\monadic{i}_n,\monadic{v})=\monadic{v}$ which is $q$-exact in~$P$ by assumption. Let now $n>0$ and assume that the claim holds for $n-1$. Denoting $q'_1=(\qualty{\listty{\ldots\listty{\qualty{\listty{\qualty{\listty{q}}{\prestg}{d_n}}}{\prestg}{d_{n-1}}}\ldots}}{\prestg}{d_2})$, consider two cases:
\begin{itemize}
\item If $P(q')$ is true then, by data insensitivity, also $P(q_1)$ holds. By exactness, $\monadic{a}=\return(\monadic{a}_1,\ldots,\monadic{a}_{l_1})$ where all $\monadic{a}_k$ ($k=1,\ldots,l_1$) are $q'_1$-exact in~$P$, and also $\monadic{i}_1=\return i_1$ with $i_1\in\NN$. As $\update(\monadic{a},\monadic{i}_1\ldots\monadic{i}_n,\monadic{v})$ is well-defined, $i_1\leq l_1$. Now $\update(\monadic{a}_{i_1},\monadic{i}_2\ldots\monadic{i}_n,\monadic{v})$ is $q'_1$-exact in~$P$ by the induction hypothesis. Hence also $\return(\monadic{a}_1,\ldots,\monadic{a}_{i_1-1},\update(\monadic{a}_{i_1},\monadic{i}_2\ldots\monadic{i}_n,\monadic{v}),\monadic{a}_{i_1+1},\ldots,\monadic{a}_{l_1})$ must be $q'$-exact in~$P$. The desired claim follows.
\item If $P(q')$ does not hold then, by exactness, $\monadic{a}=\top$. Hence $\update(\monadic{a},\monadic{i}_1\ldots\monadic{i}_n,\monadic{v})=\top$ which is $q'$-exact in~$P$.
\end{itemize}
\end{proof}

\begin{lemma}\label{semantics:updcoin0lemma}
Let $P,Q$ be data insensitive predicates defined on qualified types such that $Q(q'')\Rightarrow P(q'')$ for every qualified type $q''$. Let $d_1,\ldots,d_n$ be domains and $q$ be a qualified type such that $Q(q)$ does not hold. Denote $q'=(\qualty{\listty{\ldots\listty{\qualty{\listty{\qualty{\listty{q}}{\prestg}{d_n}}}{\prestg}{d_{n-1}}}\ldots}}{\prestg}{d_1})$ and $q_k=(\qualty{\uintty}{\prestg}{d_k})$ for every $k=1,\ldots,n$. Let $\monadic{a}$ be $q'$-exposed in~$P$, let $\monadic{i}_k$ be $q_k$-exposed in~$P$ for every $k=1,\ldots,n$ and let $\monadic{v}$ be $q$-exposed in~$P$. Then $\monadic{a}\coin{q'}{Q}\update(\monadic{a},\monadic{i}_1\ldots\monadic{i}_n,\monadic{v})$ (provided that the right-hand side is well-defined).
\end{lemma}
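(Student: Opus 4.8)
The plan is to argue by induction on~$n$, mirroring the recursive definition of~$\update$. For the base case $n=0$ we have $q'=q$ and $\update(\monadic{a},\epsilon,\monadic{v})=\monadic{v}$, so the goal reduces to $\monadic{a}\coin{q}{Q}\monadic{v}$. Since the hypothesis supplies that $Q(q)$ does not hold, clause~3 of Definition~\ref{semantics:coincidentdef} makes this coincidence hold vacuously, with no further work needed.

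For the inductive step $n>0$, I would write $q'=\qualty{\listty{q'_1}}{\prestg}{d_1}$, where $q'_1$ is the $(n-1)$-fold list nesting over~$q$ with domains $d_2,\ldots,d_n$, and split on whether $Q(q')$ holds. If $Q(q')$ is false, then clause~3 of Definition~\ref{semantics:coincidentdef} again yields the conclusion immediately.

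The substantive case is $Q(q')$ true, and here the step I expect to be the main obstacle is ruling out $\monadic{i}_1=\top$: otherwise the monad comprehension defining $\update$ returns~$\top$, and coincidence against the pure list~$\monadic{a}$ would fail. This is exactly where data insensitivity of $P$ and $Q$ is essential. Since $q'$ and $q_1=\qualty{\uintty}{\prestg}{d_1}$ share the stage~$\prestg$ and domain~$d_1$, data insensitivity gives $Q(q_1)=Q(q')$, so $Q(q_1)$ holds, whence $P(q_1)$ holds (as $Q\Rightarrow P$); exposedness of~$\monadic{i}_1$ at the primitive type~$\uintty$ then forces $\monadic{i}_1=\return i_1$ for some~$i_1$. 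Likewise $P(q')$ holds, so by clause~2 of Definition~\ref{semantics:exposeddef} I may write $\monadic{a}=\return(\monadic{a}_1,\ldots,\monadic{a}_l)$ with each $\monadic{a}_k$ being $q'_1$-exposed in~$P$. Using well-definedness of the right-hand side (so $i_1$ is in bounds), $\update$ returns the pure list that equals $\monadic{a}$ except that entry $i_1$ is replaced by $\update(\monadic{a}_{i_1},\monadic{i}_2\ldots\monadic{i}_n,\monadic{v})$.

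It then remains to verify clause~2 of Definition~\ref{semantics:coincidentdef} entrywise. For each index $k\ne i_1$ the entry is unchanged, so I first transport $\monadic{a}_k$ from $P$-exposedness to $Q$-exposedness via Lemma~\ref{semantics:approxlemma}, and then invoke reflexivity of $\coin{q'_1}{Q}$ from Lemma~\ref{semantics:coinlemma} to obtain $\monadic{a}_k\coin{q'_1}{Q}\monadic{a}_k$. For the index $k=i_1$, the required coincidence $\monadic{a}_{i_1}\coin{q'_1}{Q}\update(\monadic{a}_{i_1},\monadic{i}_2\ldots\monadic{i}_n,\monadic{v})$ is precisely the induction hypothesis applied to the peeled type~$q'_1$, the index values $\monadic{i}_2,\ldots,\monadic{i}_n$, and the same~$\monadic{v}$ (the hypothesis that $Q(q)$ does not hold is preserved, since~$q$ is unchanged throughout). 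Assembling the entrywise coincidences via clause~2 yields $\monadic{a}\coin{q'}{Q}\update(\monadic{a},\monadic{i}_1\ldots\monadic{i}_n,\monadic{v})$, completing the induction.
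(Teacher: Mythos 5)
Your proof is correct and follows essentially the same route as the paper's: induction on $n$, with the base case dispatched vacuously by clause~3 of Definition~\ref{semantics:coincidentdef}, and the inductive step peeling one list level, using data insensitivity and $P$-exposedness to force $\monadic{a}$ and $\monadic{i}_1$ pure, and applying the induction hypothesis to the single modified entry. The only cosmetic differences are that you case-split on $Q(q')$ where the paper splits on $P(q')$, and that you make explicit (via Lemmas~\ref{semantics:approxlemma} and~\ref{semantics:coinlemma}) the reflexivity step for the unchanged entries, which the paper leaves implicit.
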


\begin{proof}
We proceed by induction on~$n$. If $n=0$ then $\update(\monadic{a},\monadic{i}_1\ldots\monadic{i}_n,\monadic{v})=\monadic{v}$ and $q'=q$. By Definition~\ref{semantics:coincidentdef}, $\monadic{a}\coin{q}{Q}\monadic{v}$ since $Q(q)$ is false. Now suppose that $n>0$ and the claim holds for $n-1$. Denoting $q'_1=(\qualty{\listty{\ldots\listty{\qualty{\listty{\qualty{\listty{q}}{\prestg}{d_n}}}{\prestg}{d_{n-1}}}\ldots}}{\prestg}{d_2})$, consider two cases:
\begin{itemize}
\item If $P(q')$ holds then, by data insensitivity, also $P(q_1)$ is true. We have $\monadic{a}=\return(\monadic{a}_1,\ldots,\monadic{a}_{l_1})$ and $\monadic{i}_1=\return i_1$ where all $\monadic{a}_k$ are $q'_1$-exposed and $l_1,i_1\in\NN$ (and, by $\update(\monadic{a},\monadic{i_1}\ldots\monadic{i}_n,\monadic{v})$ being well-defined, $i_1\leq l_1$). Hence
\[
\update(\monadic{a},\monadic{i}_1\ldots\monadic{i}_n,\monadic{v})=\return(\monadic{a}_1,\ldots,\monadic{a}_{i_1-1},\update(\monadic{a}_{i_1},\monadic{i}_2\ldots\monadic{i}_n,\monadic{v}),\monadic{a}_{i_1+1},\ldots,\monadic{a}_{l_1})\mbox{.}
\]
For every $k=1,\ldots,l_1$ except~$k=i_1$, the $k$th component of $\monadic{a}$ and that of $\update(\monadic{a},\monadic{i}_1\ldots\monadic{i}_n,\monadic{v})$ are equal. By the induction hypothesis, $\monadic{a}_{i_1}\coin{q'_1}{Q}\update(\monadic{a}_{i_1},\monadic{i}_2\ldots\monadic{i}_n,\monadic{v})$. Consequently, we obtain $\monadic{a}\coin{q'}{Q}\update(\monadic{a},\monadic{i}_1\ldots\monadic{i}_n,\monadic{v})$.
\item If $P(q')$ does not hold then $Q(q')$ does not hold either. Therefore $\monadic{a}\coin{q'}{Q}\update(\monadic{a},\monadic{i}_1\ldots\monadic{i}_n,\monadic{v})$ by clause~3 of Definition~\ref{semantics:coincidentdef}.
\end{itemize}
\end{proof}

\begin{lemma}\label{semantics:updexactcoinlemma}
Let $P,Q$ be data insensitive predicates defined on qualified types such that $Q(q'')\Rightarrow P(q'')$ for every qualified type $q''$. Let $d_1,\ldots,d_n$ be domains and $q$ be a qualified type. Denote $q_k=(\qualty{\uintty}{\prestg}{d_k})$ for every $k=1,\ldots,n$ and  $q'=(\qualty{\listty{\ldots\listty{\qualty{\listty{\qualty{\listty{q}}{\prestg}{d_n}}}{\prestg}{d_{n-1}}}\ldots}}{\prestg}{d_1})$. Let $\monadic{a}_P,\monadic{a}_Q$ be $q'$-exact in $P$ and $Q$, respectively; for all $k=1,\ldots,n$, let $\monadic{i}_P^k,\monadic{i}_Q^k$ be $q_k$-exact in $P$ and $Q$, respectively; let $\monadic{v}_P,\monadic{v}_Q$ be $q$-exact in $P$ and $Q$, respectively. Moreover, assume that $\monadic{a}_P\coin{q'}{Q}\monadic{a}_Q$, for every $k=1,\ldots,n$ we have $\monadic{i}_P^k\coin{q_k}{Q}\monadic{i}_Q^k$, and $\monadic{v}_P\coin{q}{Q}\monadic{v}_Q$. If $\update(\monadic{a}_P,\monadic{i}_P^1\ldots\monadic{i}_P^n,\monadic{v}_P)$ is well-defined then $\update(\monadic{a}_Q,\monadic{i}_Q^1\ldots\monadic{i}_Q^n,\monadic{v}_Q)$ is well-defined, too, whereby $\update(\monadic{a}_P,\monadic{i}_P^1\ldots\monadic{i}_P^n,\monadic{v}_P)\coin{q'}{Q}\update(\monadic{a}_Q,\monadic{i}_Q^1\ldots\monadic{i}_Q^n,\monadic{v}_Q)$.
\end{lemma}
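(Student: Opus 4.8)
The plan is to prove the statement by induction on~$n$, combining the techniques of Lemmas~\ref{semantics:updexactlemma} and~\ref{semantics:updcoin0lemma} but now tracking two updates simultaneously. In the base case $n=0$ we have $q'=q$, and both updates collapse to their value arguments: $\update(\monadic{a}_P,\epsilon,\monadic{v}_P)=\monadic{v}_P$ and $\update(\monadic{a}_Q,\epsilon,\monadic{v}_Q)=\monadic{v}_Q$. Well-definedness of the $Q$-side is immediate, and the required coincidence $\monadic{v}_P\coin{q}{Q}\monadic{v}_Q$ is exactly one of the hypotheses.

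For the inductive step $n>0$, I would write $q'_1$ for the element type of $q'$ (the list nested one level deeper, qualified $\prestg$ and $d_2$) and set $q_1=\qualty{\uintty}{\prestg}{d_1}$, and split on whether $Q(q')$ holds. If $Q(q')$ is false, then by clause~3 of Definition~\ref{semantics:exactdef} the $Q$-exactness of $\monadic{a}_Q$ forces $\monadic{a}_Q=\top$; consequently $\update(\monadic{a}_Q,\monadic{i}_Q^1\ldots\monadic{i}_Q^n,\monadic{v}_Q)=\top$, since the clause $a\gets\top$ short-circuits the monad comprehension in the definition of $\update$. Thus the $Q$-side is well-defined, and the desired coincidence holds vacuously by clause~3 of Definition~\ref{semantics:coincidentdef}.

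If $Q(q')$ holds, then by data insensitivity $Q(q_1)$ holds too, and since $Q\Rightarrow P$ we also have $P(q')$ and $P(q_1)$. Exactness unpacks both outer lists: $\monadic{a}_P=\return(\monadic{c}_1,\ldots,\monadic{c}_l)$ and $\monadic{a}_Q=\return(\monadic{b}_1,\ldots,\monadic{b}_m)$ with each $\monadic{c}_j$ being $q'_1$-exact in~$P$ and each $\monadic{b}_j$ being $q'_1$-exact in~$Q$, while $\monadic{i}_P^1=\return i'$ and $\monadic{i}_Q^1=\return i$ with $i',i\in\NN$. Coincidence of the lists gives $l=m$ and $\monadic{c}_j\coin{q'_1}{Q}\monadic{b}_j$ for all~$j$, and coincidence of the indices (a primitive case) gives $i'=i=:i_1$. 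Since the $P$-side update is well-defined, $i_1\le l$ and the inner update $\update(\monadic{c}_{i_1},\monadic{i}_P^2\ldots\monadic{i}_P^n,\monadic{v}_P)$ is well-defined; because $l=m$ the $Q$-side index is in bounds as well. Applying the induction hypothesis to $\monadic{c}_{i_1}$, $\monadic{b}_{i_1}$, the tails of the index vectors, and $\monadic{v}_P,\monadic{v}_Q$ then yields both that $\update(\monadic{b}_{i_1},\monadic{i}_Q^2\ldots\monadic{i}_Q^n,\monadic{v}_Q)$ is well-defined and that it is $q'_1$-coincident in~$Q$ with the corresponding $P$-side inner update. Both results are lists of equal length $l=m$ agreeing componentwise up to coincidence (untouched positions by the list-coincidence hypothesis, position $i_1$ by the induction hypothesis), so clause~2 of Definition~\ref{semantics:coincidentdef} delivers $\update(\monadic{a}_P,\monadic{i}_P^1\ldots\monadic{i}_P^n,\monadic{v}_P)\coin{q'}{Q}\update(\monadic{a}_Q,\monadic{i}_Q^1\ldots\monadic{i}_Q^n,\monadic{v}_Q)$.

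The main obstacle I anticipate is the bookkeeping around well-definedness of the $Q$-side update rather than the coincidence itself: coincidence is either inherited from the induction hypothesis or trivial when $Q(q')$ fails, but showing the $Q$-side never indexes out of bounds requires precisely the two observations above — that $\monadic{a}_Q=\top$ forces the update to $\top$ when $Q(q')$ is false, and that matching indices together with the equal list lengths $l=m$ transport the in-bounds condition from the $P$-side to the $Q$-side when $Q(q')$ holds. Data insensitivity of $P$ and $Q$ is used throughout to pass between $q'$ and the index type $q_1$, which share stage and domain.
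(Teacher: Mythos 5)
Your proposal is correct and follows essentially the same route as the paper's proof: induction on~$n$ with a base case collapsing to the hypothesis $\monadic{v}_P\coin{q}{Q}\monadic{v}_Q$, and an inductive step split on whether $Q(q')$ holds, using data insensitivity to transfer $Q$ to the index type, exactness to force $\monadic{a}_Q=\top$ (and hence a well-defined $\top$ result) in the negative case, and coincidence of the outer lists and indices to transport in-bounds-ness and apply the induction hypothesis at position~$i_1$ in the positive case. No gaps.
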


\begin{proof}
We proceed by induction on~$n$. If $n=0$ then $q'=q$, whereby $\update(\monadic{a}_P,\monadic{i}_P^1\ldots\monadic{i}_P^n,\monadic{v}_P)=\monadic{v}_P$ and $\update(\monadic{a}_Q,\monadic{i}_Q^1\ldots\monadic{i}_Q^n,\monadic{v}_Q)=\monadic{v}_Q$. By assumption, $\monadic{v}_P\coin{q}{Q}\monadic{v}_Q$. Let now $n>0$ and assume that the claim holds for $n-1$. Denoting $q'_1=(\qualty{\listty{\ldots\listty{\qualty{\listty{\qualty{\listty{q}}{\prestg}{d_n}}}{\prestg}{d_{n-1}}}\ldots}}{\prestg}{d_2})$, consider two cases:
\begin{itemize}
\item If $Q(q')$ holds then also $P(q')$ holds. By data insensitivity, $Q(q_1)$ and $P(q_1)$ are true, too. Hence $\monadic{a}_P\coin{q'}{Q}\monadic{a}_Q$ implies $\monadic{a}_P=\return(\monadic{a}_P^1,\ldots,\monadic{a}_P^{l_1})$, $\monadic{a}_Q=\return(\monadic{a}_Q^1,\ldots,\monadic{a}_Q^{l_1})$ where $\monadic{a}_P^k\coin{q'_1}{Q}\monadic{a}_Q^k$ for every $k=1,\ldots,n$ and $\monadic{i}_P^1\coin{q_1}{Q}\monadic{i}_Q^1$ implies $\monadic{i}_P^1=\monadic{i}_Q^1=\return i_1$ with $i_1\in\NN$. By the induction hypothesis, $\update(\monadic{a}_Q^{i_1},\monadic{i}_Q^2\ldots\monadic{i}_Q^n,\monadic{v}_Q)$ is well-defined and $\update(\monadic{a}_P^{i_1},\monadic{i}_P^2\ldots\monadic{i}_P^n,\monadic{v}_P)\coin{q'_1}{Q}\update(\monadic{a}_Q^{i_1},\monadic{i}_Q^2\ldots\monadic{i}_Q^n,\monadic{v}_Q)$. As $\update(\monadic{a}_P,\monadic{i}_P^1\ldots\monadic{i}_P^n,\monadic{v}_P)$ is well-defined, $i_1\leq l_1$. All this implies that $\update(\monadic{a}_Q,\monadic{i}_Q^1\ldots\monadic{i}_Q^n,\monadic{v}_Q)$ is also well-defined and, since
\[
\begin{array}{l}
\update(\monadic{a}_P,\monadic{i}_P^1\ldots\monadic{i}_P^n,\monadic{v}_P)=\return(\monadic{a}_P^1,\ldots,\monadic{a}_P^{i_1-1},\update(\monadic{a}_P^{i_1},\monadic{i}_P^2\ldots\monadic{i}_P^n,\monadic{v}_P),\monadic{a}_P^{i_1+1},\ldots,\monadic{a}_P^{l_1})\mbox{,}\\
\update(\monadic{a}_Q,\monadic{i}_Q^1\ldots\monadic{i}_Q^n,\monadic{v}_Q)=\return(\monadic{a}_Q^1,\ldots,\monadic{a}_Q^{i_1-1},\update(\monadic{a}_Q^{i_1},\monadic{i}_Q^2\ldots\monadic{i}_Q^n,\monadic{v}_Q),\monadic{a}_Q^{i_1+1},\ldots,\monadic{a}_Q^{l_1})\mbox{,}
\end{array}
\]
we obtain $\update(\monadic{a}_P,\monadic{i}_P^1\ldots\monadic{i}_P^n,\monadic{v}_P)\coin{q'}{Q}\update(\monadic{a}_Q,\monadic{i}_Q^1\ldots\monadic{i}_Q^n,\monadic{v}_Q)$ as desired.
\item Assume that $Q(q')$ is false. By assumption, $\monadic{a}_Q$ is $q'$-exact in~$Q$, whence $\monadic{a}_Q=\top$. Hence also $\update(\monadic{a}_Q,\monadic{i}_Q^1\ldots\monadic{i}_Q^n,\monadic{v}_Q)=\top$ which is well-defined. The claim $\update(\monadic{a}_P,\monadic{i}_P^1\ldots\monadic{i}_P^n,\monadic{v}_P)\coin{q'}{Q}\top$ holds vacuously.
\end{itemize}
\end{proof}

\begin{lemma}\label{semantics:outputlemma}
Let $d$ be any domain and let $\omicron,\omicron'\in\Out^2$ be both exact in~$d$. Then the pointwise concatenation $\omicron\omicron'$ is exact in~$d$.
\end{lemma}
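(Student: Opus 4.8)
The plan is to unfold the definition of pointwise concatenation and then reduce the claim to a componentwise statement about the two constituent streams. First I would fix an arbitrary $d'\in\set{\proverdom,\verifierdom}$ and recall that pointwise concatenation means $(\omicron\omicron')_{d'}=\omicron_{d'}\omicron'_{d'}$, so that the components of $(\omicron\omicron')_{d'}$ are precisely the components of $\omicron_{d'}$ followed by those of $\omicron'_{d'}$. Since exactness in~$d$ is a condition imposed separately on each of the two domains $\proverdom$ and $\verifierdom$, it suffices to verify the defining condition for this fixed $d'$.

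Next I would split into the two cases dictated by the definition of exactness in~$d$, which are exhaustive because the domains are linearly ordered by $\publicdom\subtype\verifierdom\subtype\proverdom$: either $d'\subtype d$, or $d'$ is a strict superdomain of~$d$. In the first case, exactness of $\omicron$ and of $\omicron'$ guarantees that every component of $\omicron_{d'}$ and every component of $\omicron'_{d'}$ has the form $\return v$ with $v\in\NN\cup\BB$; hence every component of the concatenation $\omicron_{d'}\omicron'_{d'}$ has this form as well. In the second case, exactness gives that every component of $\omicron_{d'}$ and of $\omicron'_{d'}$ equals $\top$, so the same holds for each component of the concatenation. Since $d'$ was arbitrary, both defining conditions are met and $\omicron\omicron'$ is exact in~$d$.

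The hard part will be essentially nonexistent: the lemma merely records that the property defining exactness in~$d$ is stable under componentwise list concatenation, and the argument is a direct case analysis once concatenation has been unfolded. The only point worth stating explicitly is that the two cases of the definition ($d'\subtype d$ versus $d'$ a strict superdomain of~$d$) are jointly exhaustive for every choice of $d'\in\set{\proverdom,\verifierdom}$, which is exactly where the linear ordering on domains is used.
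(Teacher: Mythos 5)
Your argument is correct and is exactly the routine unfolding the paper has in mind; the paper's own proof of this lemma simply reads ``Trivial.'' Your explicit case split on $d'\subtype d$ versus $d'$ a strict superdomain of~$d$, combined with the observation that concatenation preserves a per-component property, is all there is to it.
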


\begin{proof}
Trivial.
\end{proof}

\begin{lemma}\label{semantics:outputcoinlemma}
Let $d,d'$ be domains such that $d'\subtype d$ and let $\omicron_1,\omicron_2,\omicron'_1,\omicron'_2\in\Out^2$ such that $\omicron_1\coinpure{d'}\omicron'_1$ and $\omicron_2\coinpure{d'}\omicron'_2$. Then $\omicron_1\omicron_2\coinpure{d'}\omicron'_1\omicron'_2$.
\end{lemma}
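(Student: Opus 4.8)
The plan is to unfold the definition of coincidence in $d'$ on both hypotheses and on the goal, and then observe that each of the two conditions defining $\coinpure{d'}$ — equal lengths of the two streams in every domain, and exact equality of the streams in domains below $d'$ — is preserved by pointwise concatenation. Note that the domain $d$ and the hypothesis $d'\subtype d$ play no role in the conclusion (they are present only to match the surrounding context), so I would simply disregard $d$ throughout.

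First I would fix an arbitrary domain $d''\in\set{\proverdom,\verifierdom}$. From $\omicron_1\coinpure{d'}\omicron'_1$ I extract $|(\omicron_1)_{d''}|=|(\omicron'_1)_{d''}|$, together with $(\omicron_1)_{d''}=(\omicron'_1)_{d''}$ in the case $d''\subtype d'$; and analogously from $\omicron_2\coinpure{d'}\omicron'_2$. Since pointwise concatenation acts componentwise as $(\omicron_1\omicron_2)_{d''}=(\omicron_1)_{d''}(\omicron_2)_{d''}$, the length clause follows by additivity of string length under concatenation: $|(\omicron_1\omicron_2)_{d''}|=|(\omicron_1)_{d''}|+|(\omicron_2)_{d''}|=|(\omicron'_1)_{d''}|+|(\omicron'_2)_{d''}|=|(\omicron'_1\omicron'_2)_{d''}|$.

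For the equality clause, restricted to the case $d''\subtype d'$, I would concatenate the two componentwise equalities supplied by the hypotheses, using that concatenation is a congruence for equality: $(\omicron_1\omicron_2)_{d''}=(\omicron_1)_{d''}(\omicron_2)_{d''}=(\omicron'_1)_{d''}(\omicron'_2)_{d''}=(\omicron'_1\omicron'_2)_{d''}$. As $d''$ was arbitrary, both clauses of the definition of $\coinpure{d'}$ hold for the pair $\omicron_1\omicron_2$ and $\omicron'_1\omicron'_2$, which yields the claim.

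I do not expect any real obstacle: this is the coincidence analogue of the immediately preceding exactness-under-concatenation lemma (dispatched as \emph{trivial}), and the argument is a direct structural computation resting only on additivity of length and the congruence of concatenation for equality. The one point requiring mild care is purely notational — the bound variable ranging over $\set{\proverdom,\verifierdom}$ in the definition of coincidence-in-$d'$ collides with the fixed $d'$ of the lemma — so I would rename it to $d''$ as above to keep the two uses distinct.
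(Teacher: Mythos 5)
Your proof is correct and is exactly the routine unfolding that the paper dispatches with the single word ``Trivial'': both clauses of the definition of $\coinpure{d'}$ (equal lengths in every domain, equality in domains below $d'$) are preserved componentwise under pointwise concatenation. Nothing further is needed.
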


\begin{proof}
Trivial.
\end{proof}

\begin{theorem}[Theorem~\ref{semantics:localthm}]
Let $\Gamma \vd e : q\eff D$ with well-structured~$\Gamma$ and $\gamma\in\mathbf{Env}$ be $\Gamma$-exact in~$d$ for some domain~$d$. Assume that for all subexpressions of~$e$ of the form $\ofty{\getexpr{d'}{k}}{q'}$ where $d'\subtype d$, the value $\allpure(\phi_{d'}(k))$ is $q'$-exact in~$d$. If $\semd{e}\gamma\phi=\return(\monadic{v},\gamma',\omicron)$ then $\monadic{v}$~{}is $q$-exact, $\gamma'$~{}is $\Gamma$-exact and $\omicron$~{}is exact in~$d$.
\end{theorem}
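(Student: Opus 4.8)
The plan is to prove all three conclusions ($\monadic{v}$ is $q$-exact, $\gamma'$ is $\Gamma$-exact, $\omicron$ is exact in~$d$) simultaneously by structural induction on~$e$, matching the clauses of Fig.~\ref{semantics:exprstmt} one by one. In each case I read the value, environment and output components off the corresponding semantic equation and check them against Definition~\ref{semantics:exactdef} (and the output-exactness definition). The single recurring observation I would use everywhere is that the predicate ``-exact in~$d$'' is data insensitive and that the four upward closed domain sets are linearly ordered: for a type $\qualty{t}{s}{d''}$ exactly one of $d''\subtype d$ and ``$d''$ is a strict superdomain of~$d$'' holds, the former making the relevant instance of the defining predicate~$P$ true (so the semantics must return a genuine value), the latter making it false (so the semantics must return~$\top$).

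The leaf and routine cases go through directly. For $\epsilon$, literals, $\assertexpr{e}$ and $\addexpr{e_1}{e_2}$ the result type is primitive, so I only have to check the domain guard in the semantics coincides with the truth value of~$P$; for variables I invoke Lemma~\ref{semantics:lookupexactlemma} to pass $\Gamma$-exactness of~$\gamma$ to $q$-exactness of $\lookup{\gamma}{x}$. Sequencing and let are settled by chaining the induction hypotheses through the intermediate environments and assembling outputs with Lemma~\ref{semantics:outputlemma}; for let I additionally note that the bound variable's type is well-structured by Theorem~\ref{typesystem:thm}, so the extended type environment is well-structured and the induction hypothesis applies under it, while $\tail$ discards exactly the freshly pushed binding so the returned environment is again $\Gamma$-exact. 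For $\getexpr{d'}{k}$ the visible sub-case ($d'\subtype d$) is precisely the standing hypothesis on $\allpure(\phi_{d'}(k))$, and the hidden sub-case returns~$\top$, which is $q$-exact because $\allpre_{d'}(q)$ forces every domain inside~$q$ to be~$d'$, a strict superdomain of~$d$, so $P$ is false throughout.

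The three load-bearing cases are those where higher-domain data gates the outcome. For $\ifexpr{e_1}{e_2}{e_3}$ and the for-loop I rely on the no-read-up side condition $\gen{d'}\supseteq\gen{d}$ relating the guard/bounds domain~$d'$ to the result domain: when $d'\subtype d$ the guard/bounds evaluate to genuine values and the induction hypothesis on the taken branch (respectively each iteration, with the loop variable bound to an exact integer and the accumulator tail-stripped at the end) closes the case; when $d'$ is a strict superdomain of~$d$ the guard/bounds are forced to~$\top$, the ``otherwise'' clause fires, and the side condition guarantees the result domain is also a strict superdomain of~$d$, so returning~$\top$ is exact. For $\loadexpr{l}{e}$ the analogous guarantee comes from well-structuredness: when the list structure is invisible ($\monadic{a}=\top$) the monadic lookup yields~$\top$, and Definition~\ref{typesystem:def} ensures the element domain is at least as private as the list domain, so the element type's~$P$ is also false and~$\top$ is exact. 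The assignment case is the most mechanical: I use Lemma~\ref{typesystem:lhslemma} to recover the nested list type of~$x$ and the index types, apply the induction hypotheses to the indices and the right-hand side, and then invoke Lemma~\ref{semantics:updexactlemma} to conclude $\update(\monadic{a},\monadic{i}_1\ldots\monadic{i}_n,\monadic{v})$ is exact, whence $[x\mapsto\update(\ldots)]\gamma'$ stays $\Gamma$-exact. Finally $\wireexpr{e}$ needs one small remark: the value passes through unchanged and stage does not affect exactness locally, while the element appended to the stream is $\return v$ or~$\top$ exactly according to whether the wired domain is $\subtype d$, so the output remains exact.

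I expect the main obstacle to be handling the hidden sub-cases of the conditional, the loop and the list access uniformly and correctly: each certifies that returning~$\top$ is legitimate by a \emph{different} structural guarantee — a typing side condition for the first two and the well-structuredness invariant of Theorem~\ref{typesystem:thm} for the third — and these are exactly the points at which the ``no-read-up'' discipline of the type system is consumed. Getting the direction of every domain comparison right (which of $d'\subtype d$ versus ``strict superdomain'' forces a value versus~$\top$) is the place where a careless sign error would break the induction.
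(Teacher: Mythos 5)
Your proposal is correct and follows essentially the same route as the paper's proof: a case-by-case structural induction using Lemma~\ref{semantics:lookupexactlemma} for variables, Lemmas~\ref{typesystem:lhslemma} and~\ref{semantics:updexactlemma} for assignment, Lemma~\ref{semantics:outputlemma} for concatenated outputs, the side conditions $\gen{d_1}\supseteq\gen{d_0}$ for the hidden branches of conditionals and loops, and well-structuredness via Theorem~\ref{typesystem:thm} for the hidden case of list access. The only clause you do not enumerate explicitly is the domain cast, but it is dispatched by exactly the two-case pattern ($d_0\subtype d$ versus strict superdomain) that you identify as your recurring observation, just as in the paper.
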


\begin{proof}
Let $q=t_0\ s_0\ d_0$. We proceed by induction on the structure of~$e$:
\begin{itemize}
\item Let $e=\epsilon$. Then $t_0=\unitty$, $d_0=\publicdom$ and we have $\monadic{v}=\return\singleton$, $\gamma'=\gamma$, $\omicron=\epsilon$. As $d_0\subtype d$ and $t_0$ is primitive, establishing that $\monadic{v}$ is $q$-exact in~$d$ reduces to clause~1 of Definition~\ref{semantics:exactdef}. It holds since $\singleton\in\unitty$. The environment~$\gamma'$ is $\Gamma$-exact in~$d$ by assumption and the output~$\epsilon$ is exact in~$d$ trivially.

\item Let $e=\overline{n}$ where $n\in\NN$. Then $t_0=\uintmodty{\mbox{\lstinline+N+}}$, $\gamma'=\gamma$, $\omicron=\epsilon$. We have to study two cases:
\begin{itemize}
\item If $d_0\subtype d$ then $\monadic{v}=\return n$. As $t_0$ is primitive, establishing that $\monadic{v}$ is $q$-exact in~$d$ reduces to clause~1 of Definition~\ref{semantics:exactdef}. It holds since $n$ is an integer.
\item If $d_0$ is a strict superdomain of~$d$ then $\monadic{v}=\top$ and establishing that $\monadic{v}$ is $q$-exact in~$d$ reduces to clause~3 of Definition~\ref{semantics:exactdef}. The former equality is exactly what clause~3 requires.
\end{itemize}
The environment~$\gamma'$ is $\Gamma$-exact in~$d$ by assumption and the output~$\epsilon$ is exact in~$d$ trivially.

\item Let $e=\overline{b}$ where $b\in\BB$. This case is analogous to the previous one.

\item Let $e=x$. Then $\lookup{\Gamma}{x}=q$ and $\monadic{v}=\lookup{\gamma}{x}$, $\gamma'=\gamma$, $\omicron=\epsilon$. Hence $\monadic{v}$ is $q$-exact in~$d$ by the assumption that $\gamma$ is $\Gamma$-exact in~$d$ and Lemma~\ref{semantics:lookupexactlemma}. Also $\gamma'$ being $\Gamma$-exact in~$d$ directly follows from assumption and the output~$\epsilon$ is exact in~$d$ trivially.

\item Let $e=\addexpr{e_1}{e_2}$. Then $t_0=\uintmodty{\mbox{\lstinline+N+}}$ and
\[
\begin{array}{l}
\Gamma\vd e_1:q\eff D_1\mbox{,}\\
\Gamma\vd e_2:q\eff D_2\mbox{,}
\end{array}
\]
and also
\[
\begin{array}{l}
\semd{e_1}\gamma\phi=\return(\monadic{v}_1,\gamma_1,\omicron_1)\mbox{,}\\
\semd{e_2}\gamma_1\phi=\return(\monadic{v}_2,\gamma_2,\omicron_2)\mbox{,}\\
\monadic{v}=\mcomp{v_1\gets\monadic{v}_1\hstop v_2\gets\monadic{v}_2\hstop\return(v_1+v_2)},\quad\gamma'=\gamma_2\mbox{,}\quad\omicron=\omicron_1\omicron_2\mbox{.}
\end{array}
\]
By the induction hypothesis about~$e_1$, $\monadic{v}_1$ is $q$-exact, $\gamma_1$ is $\Gamma$-exact and $\omicron_1$ is exact in~$d$. Now by the induction hypothesis about~$e_2$, $\monadic{v}_2$ is $q$-exact, $\gamma_2$ is $\Gamma$-exact and $\omicron_2$ is exact in~$d$. Finally, we have to consider two cases:
\begin{itemize}
\item If $d_0\subtype d$ then, by $\monadic{v}_1$ and $\monadic{v}_2$ being $q$-exact in~$d$, we have $\monadic{v}_1=\return v_1$ and $\monadic{v}_2=\return v_2$ for integers $v_1,v_2$. Hence $\monadic{v}=\return(v_1+v_2)$ which shows that $\monadic{v}$ is $q$-exact in~$d$ by clause~1 of Definition~\ref{semantics:exactdef}.
\item If $d_0$ is a strict superdomain of~$d$ then, by $\monadic{v}_1$ and $\monadic{v}_2$ being $q$-exact in~$d$, we have $\monadic{v}_1=\monadic{v}_2=\top$ and hence also $\monadic{v}=\top$. By clause~3 of Definition~\ref{semantics:exactdef}, $\monadic{v}$ is $q$-exact in~$d$.
\end{itemize}
The desired exactness claim about~$\gamma'$ holds by the above and $\omicron$ is exact in~$d$ by Lemma~\ref{semantics:outputlemma}.

\item Let $e=\assertexpr{e_1}$. Then $t_0=\unitty$, $d_0=\publicdom$ and
\[
\Gamma\vd e_1:\qualty{\boolmodty{\mbox{\lstinline+N+}}}{\poststg}{d_1}\eff D
\]
and also
\[
\begin{array}{l}
\semd{e_1}\gamma\phi=\return(\monadic{v}_1,\gamma_1,\omicron_1)\mbox{,}\\
\monadic{v}_1\ne\return\fls\mbox{,}\quad\monadic{v}=\return\singleton\mbox{,}\quad\gamma'=\gamma_1\mbox{,}\quad\omicron=\omicron_1\mbox{.}
\end{array}
\]
As $d_0\subtype d$ and $t_0$ is primitive, establishing that $\monadic{v}$ is $q$-exact in~$d$ reduces to clause~1 of Definition~\ref{semantics:exactdef} which holds as $\singleton\in\unitty$. By the induction hypothesis about~$e_1$, $\gamma_1$ is $\Gamma$-exact and $\omicron_1$ is exact in~$d$ which establish the desired exactness claims about~$\gamma'$ and $\omicron$.

\item Let $e=\ofty{\getexpr{d''}{k}}{q}$. Then $\allpre_{d''}(q)$, implying $d_0=d''$. Moreover, $\gamma'=\gamma$, whence $\gamma'$ is $\Gamma$-exact in~$d$ by assumption. The output $\epsilon$ is exact in~$d$ trivially. For the remaining desired exactness claim, consider two cases:
\begin{itemize}
\item If $d''\subtype d$ then $\monadic{v}=\allpure(\phi_{d''}(k))$ which is $q$-exact in~$d$ by assumption of the theorem.

\item If $d''$ is a strict superdomain of~$d$ then $\monadic{v}=\top$ and establishing that $\monadic{v}$ is $q$-exact in~$d$ reduces to clause~3 of Definition~\ref{semantics:exactdef}. The former equality is exactly what clause~3 requires.
\end{itemize}

\item Let $e=\ifexpr{e_1}{e_2}{e_3}$. Then
\[
\begin{array}{l}
\Gamma\vd e_1:\qualty{\boolmodty{\mbox{\lstinline+N+}}}{\prestg}{d_1}\eff D_1\mbox{,}\\
\Gamma\vd e_2:q\eff D_2\mbox{,}\\
\Gamma\vd e_3:q\eff D_3\mbox{,}\\
\gen{d_1}\supseteq\gen{s_0}\cup \gen{d_0}\cup D_2\cup D_3\mbox{,}
\end{array}
\]
and
\[
\semd{e_1}\gamma\phi=\return(\monadic{v}_1,\gamma_1,\omicron_1)\mbox{.}
\]
By the induction hypothesis about $e_1$, $\monadic{v}_1$ is $(\qualty{\boolmodty{\mbox{\lstinline+N+}}}{\prestg}{d_1})$-exact, $\gamma_1$ is $\Gamma$-exact and $\omicron_1$ is exact in~$d$. We have to consider three cases:
\begin{itemize}
\item If $\monadic{v}_1=\return\tru$ then $\semd{e_2}\gamma_1\phi=\return(\monadic{v}_2,\gamma_2,\omicron_2)$, $\monadic{v}=\monadic{v}_2$, $\gamma'=\gamma_2$, $\omicron=\omicron_1\omicron_2$. By the induction hypothesis about $e_2$, $\monadic{v}_2$ is $q$-exact, $\gamma_2$ is $\Gamma$-exact and $\omicron_2$ is exact in~$d$. Hence the desired claim follows.
\item If $\monadic{v}_1=\return\fls$ then $\semd{e_3}\gamma_1\phi=\return(\monadic{v}_3,\gamma_3,\omicron_3)$, $\monadic{v}=\monadic{v}_3$, $\gamma'=\gamma_3$, $\omicron=\omicron_1\omicron_3$. By the induction hypothesis about $e_3$, $\monadic{v}_3$ is $q$-exact, $\gamma_3$ is $\Gamma$-exact and $\omicron_3$ is exact in~$d$. Hence the desired claim follows.
\item If $\monadic{v}_1\ne\return b$ for $b\not\in\BB$ then $\monadic{v}=\top$, $\gamma'=\gamma_1$, $\omicron=\omicron_1$. As $\monadic{v}_1$ is $(\qualty{\boolmodty{\mbox{\lstinline+N+}}}{\prestg}{d_1})$-exact in~$d$, the only possibility is $\monadic{v}_1=\top$, whereby $d_1$ has to be a strict superdomain of~$d$. But $\gen{d_1}\supseteq\gen{d_0}$ implies $d_1\subtype d_0$, meaning that also $d_0$ must be a strict superdomain of~$d$. Hence by clause~3 of Definition~\ref{semantics:exactdef}, $\top$ is $q$-exact in~$d$. The desired claim follows.
\end{itemize}

\item Let $e=\forexpr{x}{e_1}{e_2}{e_3}$. Then
\[
\begin{array}{l}
\Gamma\vd e_1:\qualty{\uintty}{\prestg}{d_0}\eff D_1\mbox{,}\\
\Gamma\vd e_2:\qualty{\uintty}{\prestg}{d_0}\eff D_2\mbox{,}\\
(x:\qualty{\uintty}{\prestg}{d_0}),\Gamma\vd e_3:\qualty{t_1}{s_1}{d_1}\eff D_3\mbox{,}\\
\gen{d_0}\supseteq\gen{s_1}\cup\gen{d_1}\cup D_3\mbox{,}\\
t_0=\listty{\qualty{t_1}{s_1}{d_1}}\mbox{,}\quad s_0=\prestg\mbox{,}
\end{array}
\]
and
\[
\begin{array}{l}
\semd{e_1}\gamma\phi=\return(\monadic{v}_1,\gamma_1,\omicron_1)\mbox{,}\\
\semd{e_2}\gamma_1\phi=\return(\monadic{v}_2,\gamma_2,\omicron_2)\mbox{.}
\end{array}
\]
By the induction hypothesis about~$e_1$, $\monadic{v}_1$ is $(\qualty{\uintty}{\prestg}{d_0})$-exact, $\gamma_1$ is $\Gamma$-exact and $\omicron_1$ is exact in~$d$. Now by the induction hypothesis about~$e_2$, $\monadic{v}_2$ is $(\qualty{\uintty}{\prestg}{d_0})$-exact, $\gamma_2$ is $\Gamma$-exact and $\omicron_2$ is exact in~$d$. We have to consider two cases:
\begin{itemize}
\item If $\monadic{v}_1=\return i_1$, $\monadic{v}_2=\return i_2$ then, by exactness, $i_1,i_2\in\NN$. Denoting $n=\max(0,i_2-i_1)$,
\[
\begin{array}{l}
\semd{e_3}((x,\return i_1),\gamma_2)\phi=\return(\monadic{v}_3,\gamma_3,\omicron_3)\mbox{,}\\
\semd{e_3}([x\mapsto\return(i_1+1)]\gamma_3)\phi=\return(\monadic{v}_4,\gamma_4,\omicron_4)\mbox{,}\\
\dotfill\\
\semd{e_3}([x\mapsto\return(i_1+n-1)]\gamma_{n+1})\phi=\return(\monadic{v}_{n+2},\gamma_{n+2},\omicron_{n+2})\mbox{,}\\
\monadic{v}=\return(\monadic{v}_3,\ldots,\monadic{v}_{n+2})\mbox{,}\quad\gamma'=\tail\gamma_{n+2}\mbox{,}\quad\omicron=\omicron_1\ldots\omicron_{n+2}\mbox{.}
\end{array}
\]
As $\gamma_2$ is $\Gamma$-exact and $\return i_1$ is $(\qualty{\uintty}{\prestg}{d_0})$-exact in~$d$, the updated environment $(x,\return i_1),\gamma_2$ is $((x:\qualty{\uintty}{\prestg}{d_0}),\Gamma)$-exact in~$d$. Hence the induction hypothesis about $e_3$ applies and gives $\monadic{v}_3$ being $(\qualty{t_1}{s_1}{d_1})$-exact, $\gamma_3$ being $((x:\qualty{\uintty}{\prestg}{d_0}),\Gamma)$-exact and $\omicron_3$ being exact in~$d$. Replacing $i_1$ by $i_1+1$ does not violate exactness, so $[x\mapsto\return(i_1+1)]\gamma_3$ is also $((x:\qualty{\uintty}{\prestg}{d_0}),\Gamma)$-exact in~$d$. Hence the induction hypothesis about $e_3$ applies and gives $\monadic{v}_4$ being $(\qualty{t_1}{s_1}{d_1})$-exact, $\gamma_4$ being $((x:\qualty{\uintty}{\prestg}{d_0}),\Gamma)$-exact and $\omicron_4$ being exact in~$d$. Analogously we obtain $\monadic{v}_k$ being $(\qualty{t_1}{s_1}{d_1})$-exact and $\omicron_k$ being exact for all $k=3,4,\ldots,n+2$, as well as $\gamma_{n+2}$ being $((x:\qualty{\uintty}{\prestg}{d_0}),\Gamma)$-exact in~$d$. Obviously the latter implies $\tail\gamma_{n+2}$ being $\Gamma$-exact in~$d$. By clause~2 of Definition~\ref{semantics:exactdef}, $\return(\monadic{v}_3,\ldots,\monadic{v}_{n+2})$ is $(\qualty{\listty{\qualty{t_1}{s_1}{d_1}}}{\prestg}{d_0})$-exact in~$d$. The desired claim follows.
\item If $\monadic{v}_1=\top$ or $\monadic{v}_2=\top$ then $\monadic{v}=\top$, $\gamma'=\gamma_2$, $\omicron=\omicron_1\omicron_2$ and, by exactness, $d_0$ is a strict supertype of~$d$. Hence $\top$ is also $q$-exact in~$d$ by clause~3 of Definition~\ref{semantics:exactdef}. The desired claim follows.
\end{itemize}

\item Let $e=\wireexpr{e_1}$. Then
\[
\begin{array}{l}
\Gamma\vd e_1:\qualty{t_0}{\prestg}{d_0}\eff D_1\mbox{,}\\
s_0=\poststg\mbox{,}\quad\mbox{$t_0$ is $\uintmodty{\mbox{\lstinline+N+}}$ or $\boolmodty{\mbox{\lstinline+N+}}$}\mbox{,}
\end{array}
\]
and
\[
\begin{array}{l}
\semd{e_1}\gamma\phi=\return(\monadic{v}_1,\gamma_1,\omicron_1)\mbox{,}\\
\monadic{v}=\monadic{v}_1\mbox{,}\quad\gamma'=\gamma_1\mbox{,}\quad\omicron=\lam{d'}{\branching{(\omicron_1)_{d'}\monadic{v}_1&\mbox{if $d'=d_0$}\\(\omicron_1)_{d'}&\mbox{otherwise}}}\mbox{.}
\end{array}
\]
By the induction hypothesis about~$e_1$, $\monadic{v}_1$ is $(\qualty{t_0}{\prestg}{d_0})$-exact, $\gamma_1$ is $\Gamma$-exact and $\omicron_1$ is exact in~$d$. Then $\monadic{v}_1$ is also $(\qualty{t_0}{\poststg}{d_0})$-exact in~$d$. As $t_0$ is $\uintmodty{\mbox{\lstinline+N+}}$ or $\boolmodty{\mbox{\lstinline+N+}}$, $\monadic{v}_1=\return v_1$ for some $v_1\in\NN\cup\BB$ if $d_0\subtype d$ and $\monadic{v}_1=\top$ if $d_0$ is a strict superdomain of~$d$. As $\monadic{v}_1$ is concatenated to $(\omicron_1)_{d_0}$, the component corresponding to~$d_0$ of $\omicron$ contains only pure values if $d_0\subtype d$ and only tops if $d_0$ is a strict superdomain of~$d$. Thus $\omicron$ is exact in~$d$. The desired claim follows.

\item Let $e=\castexpr{e_1}{d_0}$. Let the domain of $e_1$ be $d_1$, i.e.,
\[
\begin{array}{l}
\Gamma\vd e_1:\qualty{t_0}{s_0}{d_1}\eff D\mbox{,}\\
d_1\subtype d_0\mbox{,}\quad\gen{d_0}\supseteq\gen{t_0}\mbox{.}
\end{array}
\]
We have
\[
\begin{array}{l}
\semd{e_1}\gamma\phi=\return(\monadic{v}_1,\gamma_1,\omicron_1)\mbox{,}\\
\monadic{v}=\branching{\monadic{v}_1&\mbox{if $d_0\subtype d$}\\\top&\mbox{otherwise}}\mbox{,}\quad\gamma'=\gamma_1\mbox{,}\quad\omicron=\omicron_1\mbox{.}
\end{array}
\]
By the induction hypothesis, $\monadic{v}_1$ is $(\qualty{t_0}{s_0}{d_1})$-exact and $\gamma_1$ is $\Gamma$-exact in~$d$. Consider two cases:
\begin{itemize}
\item If $d_0\subtype d$ then $\monadic{v}=\monadic{v}_1$. As $d_1\subtype d_0$, we also have $d_1\subtype d$. Hence $\monadic{v}_1$ being $(\qualty{t_0}{s_0}{d_1})$-exact in~$d$ implies $\monadic{v}_1$ being $(\qualty{t_0}{s_0}{d_0})$-exact in~$d$. The desired claim follows.
\item If $d_0$ is a strict supertype of~$d$ then $\monadic{v}=\top$. The desired claim follows as $\top$ is $(\qualty{t_0}{s_0}{d_0})$-exact in~$d$.
\end{itemize}

\item Let $e=(\assignexpr{e_1}{e_2})$. Then
\[
\begin{array}{l}
\Gamma\vd e_1:\qualty{t_1}{s_1}{d_1}\mbox{,}\\
\Gamma\vd e_2:\qualty{t_1}{s_1}{d_1}\mbox{,}\\
t_0=\unitty\mbox{,}\quad s_0=\prestg\mbox{,}\quad d_0=\publicdom\mbox{.}
\end{array}
\]
Let $e_1=\loadexpr{\loadexpr{\loadexpr{x}{y_1}}{y_2}\ldots}{y_n}$ and denote $\monadic{a}=\lookup{\gamma}{x}$; then
\[
\begin{array}{l}
\semd{y_1}\gamma\phi=\return(\monadic{i}_1,\gamma_1,\omicron_1)\mbox{,}\\
\semd{y_2}\gamma_1\phi=\return(\monadic{i}_2,\gamma_2,\omicron_2)\mbox{,}\\
\dotfill\\
\semd{y_n}\gamma_{n-1}\phi=\return(\monadic{i}_n,\gamma_n,\omicron_n)\mbox{,}\\
\semd{e_2}\gamma_n\phi=\return(\monadic{v}_1,\gamma_{n+1},\omicron_{n+1})\mbox{,}\\
\monadic{v}=\return\singleton\mbox{,}\quad\gamma'=[x\mapsto\update(\monadic{a},\monadic{i}_1\ldots\monadic{i}_n,\monadic{v}_1)]\gamma_{n+1}\mbox{,}\quad\omicron=\omicron_1\ldots\omicron_{n+1}\mbox{.}
\end{array}
\]
By Lemma~\ref{typesystem:lhslemma}, $\Gamma\vd y_i:\qualty{\uintty}{\prestg}{d'_i}$ for each $i=1,\ldots,n$ and
\[
\Gamma\vd x:\qualty{\listty{\ldots\listty{\qualty{\listty{\qualty{\listty{\qualty{t'}{s'}{d'_{n+1}}}}{\prestg}{d'_n}}}{\prestg}{d'_{n-1}}}\ldots}}{\prestg}{d'_1}\mbox{.}
\]
Hence $\monadic{a}$ is $(\qualty{\listty{\ldots\listty{\qualty{\listty{\qualty{\listty{\qualty{t'}{s'}{d'_{n+1}}}}{\prestg}{d'_n}}}{\prestg}{d'_{n-1}}}\ldots}}{\prestg}{d'_1})$-exact in~$d$ by exactness of~$\gamma$. By the induction hypothesis about~$y_1$, $\monadic{i}_1$ is $(\qualty{\uintty}{\prestg}{d'_1})$-exact, $\gamma_1$ is $\Gamma$-exact and $\omicron_1$~{}is exact in~$d$. Now by the induction hypothesis about~$y_2$, $\monadic{i}_2$ is $(\qualty{\uintty}{\prestg}{d'_2})$-exact, $\gamma_2$ is $\Gamma$-exact and $\omicron_2$ is exact in~$d$. Similarly, we obtain that, for each $i=1,2,\ldots,n$, $y_i$ is $(\qualty{\uintty}{\prestg}{d'_i})$-exact, $\gamma_i$ is $\Gamma$-exact and $\omicron_k$ is exact in~$d$. The induction hypothesis about~$e_2$ implies that $\monadic{v}_1$ is $(\qualty{t_1}{s_1}{d_1})$-exact, $\gamma_{n+1}$ is $\Gamma$-exact and $\omicron_{n+1}$ is exact in~$d$. Now $\update(\monadic{a},\monadic{i}_1\ldots\monadic{i}_n,\monadic{v}_1)$ is $(\qualty{\listty{\ldots\listty{\qualty{\listty{\qualty{\listty{\qualty{t'}{s'}{d'_{n+1}}}}{\prestg}{d'_n}}}{\prestg}{d'_{n-1}}}\ldots}}{\prestg}{d'_1})$-exact in~$d$ by Lemma~\ref{semantics:updexactlemma}. Hence $[x\mapsto\update(\monadic{a},\monadic{i}_1\ldots\monadic{i}_n,\monadic{v}_1)]\gamma_{n+1}$ is $\Gamma$-exact in~$d$. As $\return\singleton$ is $(\qualty{\unitty}{\prestg}{\publicdom})$-exact in~$d$, we are done.

\item Let $e=\loadexpr{e_1}{e_2}$. Then
\[
\begin{array}{l}
\Gamma\vd e_1:\qualty{\listty{q}}{s_1}{d_1}\mbox{,}\\
\Gamma\vd e_2:\qualty{\uintty}{s_1}{d_1}\mbox{,}
\end{array}
\]
and
\[
\begin{array}{l}
\semd{e_1}\gamma\phi=\return(\monadic{a},\gamma_1,\omicron_1)\mbox{,}\\
\semd{e_2}\gamma_1\phi=\return(\monadic{i},\gamma_2,\omicron_2)\mbox{,}\\
\monadic{v}=\mcomp{a\gets\monadic{a}\hstop i\gets\monadic{i}\hstop a_i}\mbox{,}\quad\gamma'=\gamma_2\mbox{,}\quad\omicron=\omicron_1\omicron_2\mbox{.}
\end{array}
\]
By the induction hypothesis about $e_1$, $\monadic{a}$ is $(\qualty{\listty{q}}{s_1}{d_1})$-exact, $\gamma_1$ is $\Gamma$-exact and $\omicron_1$ is exact in~$d$. Now by the induction hypothesis about $e_2$, $\monadic{i}$ is $(\qualty{\uintty}{s_1}{d_1})$-exact, $\gamma_2$ is $\Gamma$-exact and $\omicron_2$~{}is exact in~$d$. Consider two cases:
\begin{itemize}
\item If $d_1\subtype d$ then, by exactness, $\monadic{a}=\return(\monadic{a}_1,\ldots,\monadic{a}_n)$ for some $n\in\NN$ and $\monadic{i}=\return i$ for some $i\in\NN$, whereby $\monadic{a}_1,\ldots,\monadic{a}_n$ are $q$-exact in~$d$. Moreover, to obtain a pure value, we must have $i\leq n$. Hence $\monadic{v}=\monadic{a}_i$ exists and is $q$-exact in~$d$.
\item If $d_1$ is a strict supertype of~$d$ then, by exactness, $\monadic{a}=\top$ and $\monadic{i}=\top$, whence $\monadic{v}=\top$. As $\Gamma$ is well-structured, $\qualty{\listty{q}}{s_1}{d_1}$ is well-structured by Theorem~\ref{typesystem:thm}. Hence $d_1\subtype d_0$, implying that $d_0$ is a strict supertype of~$d$. By clause~3 of Definition~\ref{semantics:exactdef}, $\top$ is $q$-exact in~$d$.
\end{itemize}

\item Let $e=\stmtcomp{\letexpr{x}{e_1}}{e_2}$. Then
\[
\begin{array}{l}
\Gamma\vd e_1:q_1\eff D_1\mbox{,}\\
(x:q_1),\Gamma\vd e_2:q\eff D_2\mbox{,}
\end{array}
\]
and
\[
\begin{array}{l}
\semd{e_1}\gamma\phi=\return(\monadic{v}_1,\gamma_1,\omicron_1)\mbox{,}\\
\semd{e_2}((x,\monadic{v}_1),\gamma_1)\phi=\return(\monadic{v}_2,\gamma_2,\omicron_2)\mbox{,}\\
\monadic{v}=\monadic{v}_2\mbox{,}\quad\gamma'=\tail\gamma_2\mbox{,}\quad\omicron=\omicron_1\omicron_2\mbox{.}
\end{array}
\]
By the induction hypothesis about~$e_1$, $\monadic{v}_1$ is $q_1$-exact, $\gamma_1$ is $\Gamma$-exact and $\omicron_1$ is exact in~$d$. Hence $(x,\monadic{v}_1),\gamma_1$ is $((x:q_1),\Gamma)$-exact in~$d$. Now the induction hypothesis about~$e_2$ applies and implies $\monadic{v}_2$ being $q$-exact, $\gamma_2$ being $((x:q_1),\Gamma)$-exact and $\omicron_2$ being exact in~$d$. Then $\tail\gamma_2$ is $\Gamma$-exact in~$d$. The desired claim follows.

\item Let $e=\stmtcomp{e_1}{e_2}$. Then
\[
\begin{array}{l}
\Gamma\vd e_1:q_1\eff D_1\mbox{,}\\
\Gamma\vd e_2:q\eff D_2\mbox{,}
\end{array}
\]
and
\[
\begin{array}{l}
\semd{e_1}\gamma\phi=\return(\monadic{v}_1,\gamma_1,\omicron_1)\mbox{,}\\
\semd{e_2}\gamma_1\phi=\return(\monadic{v}_2,\gamma_2,\omicron_2)\mbox{,}\\
\monadic{v}=\monadic{v}_2\mbox{,}\quad\gamma'=\gamma_2\mbox{,}\quad\omicron=\omicron_1\omicron_2\mbox{.}
\end{array}
\]
By the induction hypothesis about~$e_1$, $\monadic{v}_1$ is $q_1$-exact, $\gamma_1$ is $\Gamma$-exact and $\omicron_1$ is exact in~$d$. Now by the induction hypothesis about~$e_2$, $\monadic{v}_2$ is $q$-exact, $\gamma_2$ is $\Gamma$-exact and $\omicron_2$ is exact in~$d$. The desired claim follows.

\end{itemize}
\end{proof}

\begin{theorem}[Theorem~\ref{semantics:effthm}]
Let $\Gamma \vd e : q\eff D$ with well-structured~$\Gamma$ and $\gamma\in\mathbf{Env}$ be $\Gamma$-exact in~$d$ for some domain~$d$. Assume that for all subexpressions of~$e$ of the form $\ofty{\getexpr{d'}{k}}{q'}$ where $d'\subtype d$, the value $\allpure(\phi_{d'}(k))$ is $q'$-exact in~$d$. If $\semd{e}\gamma\phi=\return(\monadic{v},\gamma',\omicron)$ then, for any domain~$d''$ such that $d''\subtype d$ and $d''\notin D$, we have $\gamma\coin{\Gamma}{d''}\gamma'$.
\end{theorem}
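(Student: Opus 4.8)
The plan is to prove the statement by structural induction on~$e$, holding the exactness domain~$d$ and an arbitrary target domain~$d''$ with $d''\subtype d$ and $d''\notin D$ fixed throughout. The backbone of every case is transitivity of the relation $\coin{\Gamma}{d''}$, which Lemma~\ref{semantics:coinlemma} supplies once the environments in play are $\Gamma$-exposed in~$d''$. I would therefore first record a uniform auxiliary observation: every environment arising during evaluation of a subexpression is $\Gamma$-exact in~$d$ by Theorem~\ref{semantics:localthm}, hence $\Gamma$-exposed in~$d$ by Lemma~\ref{semantics:reintlemma}; and since $d''\subtype d$, the predicate defining coincidence in~$d''$ implies the one for~$d$, so Lemma~\ref{semantics:approxlemma} upgrades this to $\Gamma$-exposedness in~$d''$. (The hypotheses on $\getexpr{d'}{k}$ subexpressions are inherited by subexpressions, so the induction hypothesis applies at each step.) This licenses all uses of reflexivity and transitivity below.

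For the base cases ($\epsilon$, the literals $\overline{n},\overline{b}$, a variable~$x$, and $\getexpr{d'}{k}$) the semantics returns $\gamma'=\gamma$, so the claim is reflexivity of $\coin{\Gamma}{d''}$. For the constructs that merely thread the environment through their subexpressions — $\addexpr{e_1}{e_2}$, $\loadexpr{e_1}{e_2}$, the conditional, the sequence $\stmtcomp{e_1}{e_2}$, and the domain cast $\castexpr{e}{d_0}$ — I would apply the induction hypothesis to each subexpression (using that $d''\notin D$ forces $d''$ outside each subexpression's effect set, as $D$ contains their union) and chain the resulting coincidences transitively; the cast is immediate since it reuses $e$'s effect and returns $\gamma'=\gamma_1$. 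The $\assertexpr{\cdot}$ and $\wireexpr{\cdot}$ cases are vacuous, their declared effect being $\gen{\publicdom}$, the whole domain lattice, so no admissible $d''$ exists. For $\stmtcomp{\letexpr{x}{e_1}}{e_2}$ the only extra move is to apply the hypothesis to $e_2$ under $(x:q_1),\Gamma$ and then discard the topmost binding: coincidence of $((x,\monadic{v}_1),\gamma_1)$ with $\gamma_2$ in~$d''$ restricts to coincidence of $\gamma_1$ with $\tail\gamma_2$, which composes with the $e_1$ step.

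The two load-bearing cases are the loop and the assignment. For $\forexpr{x}{e_1}{e_2}{e_3}$ I would run a secondary induction on the iteration count: each iteration evaluates $e_3$ under $(x:\qualty{\uintty}{\prestg}{d'}),\Gamma$, so the hypothesis for $e_3$ gives coincidence of the pre- and post-iteration environments in~$d''$. Restricting to the $\Gamma$-part — the loop variable is reset between iterations but carries the same $\Gamma$-tail, and is stripped by $\tail$ at the end — and chaining all iterations transitively yields $\gamma_2\coin{\Gamma}{d''}\tail\gamma'_n$, which combines with the $e_1,e_2$ steps. For the assignment $\assignexpr{l}{e}$ with $l=\loadexpr{\loadexpr{x}{y_1}\ldots}{y_n}$, I would first obtain $\gamma\coin{\Gamma}{d''}\gamma_{n+1}$ by chaining the hypotheses for $y_1,\ldots,y_n,e$; the crux is that the result environment replaces $x$'s binding by $\update(\monadic{a},\monadic{i}_1\ldots\monadic{i}_n,\monadic{v})$, where $\monadic{a}=\lookup{\gamma}{x}$. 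Here I would invoke Lemma~\ref{semantics:updcoin0lemma} with the coincidence-in-$d$ predicate as~$P$, the coincidence-in-$d''$ predicate as~$Q$, and the assigned type $\qualty{t_1}{s_1}{d_1}$ as~$q$: since the assignment's effect contains $\gen{d_1}$ and $d''\notin D$, we have $d_1\not\subtype d''$, i.e. $Q(\qualty{t_1}{s_1}{d_1})$ is false — exactly the lemma's premise — so the update leaves $x$'s value $d''$-coincident with its original. The exposedness side-conditions follow from Theorem~\ref{semantics:localthm} with Lemmas~\ref{semantics:lookupexactlemma} and~\ref{semantics:reintlemma}, and the nested-list shape of $x$'s type is supplied by Lemma~\ref{typesystem:lhslemma}; assembling these componentwise gives $\gamma\coin{\Gamma}{d''}\gamma'$.

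I expect the assignment case to be the main obstacle, because it is the only construct that mutates an existing binding, so the entire content of the theorem — that effects outside~$D$ are invisible at~$d''$ — must be discharged there. The delicate point is reading off $d_1\not\subtype d''$ from the effect annotation $\gen{d_1}\subseteq D$ and matching it precisely to the ``$Q(q)$ false'' premise of Lemma~\ref{semantics:updcoin0lemma}; everything else reduces to transitive chaining of induction hypotheses over environments already known to be exact, hence exposed, in~$d''$.
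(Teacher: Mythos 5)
Your proposal is correct and follows essentially the same route as the paper's proof: the same induction on~$e$, the same use of Theorem~\ref{semantics:localthm} with Lemmas~\ref{semantics:reintlemma} and~\ref{semantics:approxlemma} to license transitive chaining via Lemma~\ref{semantics:coinlemma}, and the same treatment of the assignment case via Lemma~\ref{semantics:updcoin0lemma} with the observation that $d''\notin\gen{d_{n+1}}$ makes the target-domain predicate false on the assigned type. Your explicit identification of the $P$/$Q$ instantiation in the update lemma is if anything slightly more detailed than the paper's own write-up.
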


\begin{proof}
By Theorem~\ref{semantics:localthm}, $\gamma'$ is $\Gamma$-exact in~$d$. By Lemma~\ref{semantics:reintlemma}, $\gamma$ and $\gamma'$ are $\Gamma$-exposed in~$d$. Take $d''$ such that $d''\subtype d$ and $d''\notin D$. By Lemma~\ref{semantics:approxlemma}, $\gamma$ and $\gamma'$ are also $\Gamma$-exposed in~$d''$. Let $q=\qualty{t_0}{s_0}{d_0}$. We proceed by induction on the structure of~$e$. 

If $e=\epsilon$ or $e=\overline{n}$ where $n\in\NN$ or $e=\overline{b}$ where $b\in\BB$ or $e=x$ or $e=\getexpr{d'}{k}$ then $\gamma=\gamma'$. Hence by Lemma~\ref{semantics:coinlemma}, $\gamma\coin{\Gamma}{d''}\gamma'$. If $e=\assertexpr{e_1}$ or $e=\wireexpr{e_1}$ then the claim holds vacuously since $D=\gen{\publicdom}$. We study the remaining cases.

\begin{itemize}
\item Let $e=\addexpr{e_1}{e_2}$. Then
\[
\begin{array}{l}
\Gamma\vd e_1:q\eff D_1\mbox{,}\\
\Gamma\vd e_2:q\eff D_2\mbox{,}\\
D=\gen{s_0}\cup D_1\cup D_2\mbox{,}
\end{array}
\]
whence we must have $d''\notin D_1$ and $d''\notin D_2$. We also have
\[
\begin{array}{l}
\semd{e_1}\gamma\phi=\return(\monadic{v}_1,\gamma_1,\omicron_1)\mbox{,}\\
\semd{e_2}\gamma_1\phi=\return(\monadic{v}_2,\gamma_2,\omicron_2)\mbox{,}\\
\gamma'=\gamma_2\mbox{.}
\end{array}
\]
By the induction hypothesis, $\gamma\coin{\Gamma}{d''}\gamma_1$ and $\gamma_1\coin{\Gamma}{d''}\gamma_2$. By Lemma~\ref{semantics:coinlemma}, $\gamma\coin{\Gamma}{d''}\gamma_2$. The desired claim follows.

\item Let $e=\ifexpr{e_1}{e_2}{e_3}$. Then
\[
\begin{array}{l}
\Gamma\vd e_1:\qualty{\boolmodty{\mbox{\lstinline+N+}}}{\prestg}{d_1}\eff D_1\mbox{,}\\
\Gamma\vd e_2:q\eff D_2\mbox{,}\\
\Gamma\vd e_3:q\eff D_3\mbox{,}\\
\gen{d_1}\supseteq\gen{s_0}\cup\gen{d_0}\cup D_2\cup D_3\mbox{,}\quad D=D_1\cup D_2\cup D_3\mbox{,}
\end{array}
\]
whence we must have $d''\notin D_1$, $d''\notin D_2$ and $d''\notin D_3$. We also have
\[
\begin{array}{l}
\semd{e_1}\gamma\phi=\return(\monadic{v}_1,\gamma_1,\omicron_1)\mbox{.}
\end{array}
\]
Consider three cases:
\begin{itemize}
\item If $\monadic{v}_1=\return\tru$ then
\[
\begin{array}{l}
\semd{e_2}\gamma_1\phi=\return(\monadic{v}_2,\gamma_2,\omicron_2)\mbox{,}\\
\gamma'=\gamma_2\mbox{.}
\end{array}
\]
By the induction hypothesis, $\gamma\coin{\Gamma}{d''}\gamma_1$ and $\gamma_1\coin{\Gamma}{d''}\gamma_2$. By Lemma~\ref{semantics:coinlemma}, $\gamma\coin{\Gamma}{d''}\gamma_2$. The desired claim follows.
\item If $\monadic{v}_1=\return\fls$ then
\[
\begin{array}{l}
\semd{e_3}\gamma_1\phi=\return(\monadic{v}_3,\gamma_3,\omicron_3)\mbox{,}\\
\gamma'=\gamma_3\mbox{.}
\end{array}
\]
By the induction hypothesis, $\gamma\coin{\Gamma}{d''}\gamma_1$ and $\gamma_1\coin{\Gamma}{d''}\gamma_3$. By Lemma~\ref{semantics:coinlemma}, $\gamma\coin{\Gamma}{d''}\gamma_3$. The desired claim follows.
\item If $\monadic{v}_1\ne\return b$ where $b\in\BB$ then $\gamma'=\gamma_1$. By the induction hypothesis, $\gamma\coin{\Gamma}{d''}\gamma_1$. The desired claim follows.
\end{itemize}

\item Let $e=\forexpr{x}{e_1}{e_2}{e_3}$. Then
\[
\begin{array}{l}
\Gamma\vd e_1:\qualty{\uintty}{\prestg}{d_0}\eff D_1\mbox{,}\\
\Gamma\vd e_2:\qualty{\uintty}{\prestg}{d_0}\eff D_2\mbox{,}\\
(x:\qualty{\uintty}{\prestg}{d_0}),\Gamma\vd e_3:\qualty{t_1}{s_1}{d_1}\eff D_3\mbox{,}\\
\gen{d_0}\supseteq\gen{s_1}\cup\gen{d_1}\cup D_3\mbox{,}\quad D=D_1\cup D_2\cup D_3\mbox{,}
\end{array}
\]
whence we have $d''\notin D_1$, $d''\notin D_2$ and $d''\notin D_3$. We also have
\[
\begin{array}{l}
\semd{e_1}\gamma\phi=\return(\monadic{v}_1,\gamma_1,\omicron_1)\mbox{,}\\
\semd{e_2}\gamma_1\phi=\return(\monadic{v}_2,\gamma_2,\omicron_2)\mbox{.}
\end{array}
\]
Consider two cases:
\begin{itemize}
\item If $\monadic{v}_1=\return i_1$ and $\monadic{v}_2=\return i_2$ for $i_1,i_2\in\NN$ then, denoting $n=\max(0,i_2-i_1)$, we have
\[
\begin{array}{l}
\semd{e_3}((x,\return i_1),\gamma_2)\phi=\return(\monadic{v}_3,\gamma_3,\omicron_3)\mbox{,}\\
\semd{e_3}([x\mapsto\return(i_1+1)]\gamma_3)\phi=\return(\monadic{v}_4,\gamma_4,\omicron_4)\mbox{,}\\
\dotfill\\
\semd{e_3}([x\mapsto\return(i_1+n-1)]\gamma_{n+1})\phi=\return(\monadic{v}_{n+2},\gamma_{n+2},\omicron_{n+2})\mbox{,}\\
\gamma'=\tail\gamma_{n+2}\mbox{.}
\end{array}
\]
By the induction hypothesis, $\gamma\coin{\Gamma}{d''}\gamma_1$ and $\gamma_1\coin{\Gamma}{d''}\gamma_2$, and also $(x,\return i_1),\gamma_2\coin{\Gamma'}{d''}\gamma_3$ and $[x\mapsto\return(i_1+k-3)]\gamma_{k-1}\coin{\Gamma'}{d''}\gamma_k$ for each $k=4,\ldots,n+2$ where $\Gamma'=((x:\qualty{\uintty}{\prestg}{d_0}),\Gamma)$. Then also $\gamma_2\coin{\Gamma}{d''}\tail\gamma_3$ and $\tail\gamma_{k-1}\coin{\Gamma}{d''}\tail\gamma_k$ for every $k=4,\ldots,n+2$. By Lemma~\ref{semantics:coinlemma}, $\gamma\coin{\Gamma}{d''}\tail\gamma_{n+2}$. The desired claim follows.
\item If $\monadic{v}_1\ne\return i_1$ or $\monadic{v}_2\ne\return i_2$ with $i_1,i_2\in\NN$ then $\gamma'=\gamma_2$. By the induction hypothesis, $\gamma\coin{\Gamma}{d''}\gamma_1$ and $\gamma_1\coin{\Gamma}{d''}\gamma_2$. By Lemma~\ref{semantics:coinlemma}, $\gamma\coin{\Gamma}{d''}\gamma_2$. The desired claim follows.
\end{itemize}

\item Let $e=\castexpr{e_1}{d_0}$. Then
\[
\Gamma\vd e_1:\qualty{t_1}{s_1}{d_1}\eff D\mbox{,}
\]
and
\[
\begin{array}{l}
\semd{e_1}\gamma\phi=\return(\monadic{v}_1,\gamma_1,\omicron_1)\mbox{,}\\
\gamma'=\gamma_1\mbox{.}
\end{array}
\]
By the induction hypothesis, $\gamma\coin{\Gamma}{d''}\gamma_1$. The desired claim follows.

\item Let $e=(\assignexpr{e_1}{e_2})$. Let $e_1=\loadexpr{\loadexpr{\loadexpr{x}{y_1}}{y_2}\ldots}{y_n}$. By Lemma~\ref{typesystem:lhslemma},
\[
\begin{array}{l}
\Gamma\vd y_i:\qualty{\uintty}{\prestg}{d_i}\eff D_i\mbox{,}\\
\Gamma\vd x:\qualty{\listty{\ldots\listty{\qualty{\listty{\qualty{\listty{\qualty{t}{s}{d_{n+1}}}}{\prestg}{d_n}}}{\prestg}{d_{n-1}}}\ldots}}{\prestg}{d_1}\eff D_0
\end{array}
\]
and
\[
\begin{array}{l}
\Gamma\vd e_1:\qualty{t}{s}{d_{n+1}}\eff D_1\cup\ldots\cup D_n\mbox{,}\\
\Gamma\vd e_2:\qualty{t}{s}{d_{n+1}}\eff D_{n+1}\mbox{,}\\
D=\gen{s}\cup\gen{d_{n+1}}\cup D_1\cup\ldots\cup D_{n+1}\mbox{,}
\end{array}
\]
for some $D_1,\ldots,D_{n+1}$. Hence we must have $d''\notin D_k$ for every $k=1,\ldots,n+1$ and $d''\notin\gen{d_{n+1}}$ meaning that $d_{n+1}$ is a strict superdomain of~$d''$. We also have
\[
\begin{array}{l}
\semd{y_1}\gamma\phi=\return(\monadic{i}_1,\gamma_1,\omicron_1)\mbox{,}\\
\semd{y_2}\gamma_1\phi=\return(\monadic{i}_2,\gamma_2,\omicron_2)\mbox{,}\\
\dotfill\\
\semd{y_n}\gamma_{n-1}\phi=\return(\monadic{i}_n,\gamma_n,\omicron_n)\mbox{,}\\
\semd{e_2}\gamma_n\phi=\return(\monadic{v},\gamma_{n+1},\omicron_{n+1})\mbox{,}\\
\gamma'=[x\mapsto\update(\lookup{\gamma}{x},\monadic{i}_1\ldots\monadic{i}_n,\monadic{v})]\gamma_{n+1}\mbox{.}
\end{array}
\]
By the induction hypothesis, $\gamma_{k-1}\coin{\Gamma}{d''}\gamma_k$ for each $k=1,\ldots,n+1$ (denoting $\gamma_0=\gamma$). Let $\vars\Gamma=(z_1,\ldots,z_l)$ and let the corresponding types be $q_1,\ldots,q_l$. Let $k$ be the least index such that $z_k=x$. For every $i\ne k$, we have $\lookup{\gamma_{n+1}}{z_i}=\lookup{\gamma'}{z_i}$. Hence, by Lemma~\ref{semantics:coinlemma}, $\lookup{\gamma}{z_i}\coin{q_i}{d''}\lookup{\gamma'}{z_i}$. Moreover, Lemma~\ref{semantics:updcoin0lemma} implies $\lookup{\gamma}{z_k}\coin{q_k}{d''}\update(\lookup{\gamma}{z_k},\monadic{i}_1\ldots\monadic{i}_n,\monadic{v})$. Consequently, $\gamma\coin{\Gamma}{d''}\gamma'$. The desired claim follows.

\item Let $e=\loadexpr{e_1}{e_2}$. Then
\[
\begin{array}{l}
\Gamma\vd e_1:\qualty{\listty{q}}{\prestg}{d_1}\eff D_1\mbox{,}\\
\Gamma\vd e_2:\qualty{\uintty}{\prestg}{d_1}\eff D_2\mbox{,}\\
D=D_1\cup D_2\mbox{,}
\end{array}
\]
whence we have $d''\notin D_1$ and $d''\notin D_2$. We also have
\[
\begin{array}{l}
\semd{e_1}\gamma\phi=\return(\monadic{a},\gamma_1,\omicron_1)\mbox{,}\\
\semd{e_2}\gamma_1\phi=\return(\monadic{i},\gamma_2,\omicron_2)\mbox{,}\\
\gamma'=\gamma_2\mbox{.}
\end{array}
\]
By the induction hypothesis, $\gamma\coin{\Gamma}{d''}\gamma_1$ and $\gamma_1\coin{\Gamma}{d''}\gamma_2$. The desired claim follows by Lemma~\ref{semantics:coinlemma}.

\item Let $e=(\stmtcomp{\letexpr{x}{e_1}}{e_2})$. Then
\[
\begin{array}{l}
\Gamma\vd e_1:\qualty{t_1}{s_1}{d_1}\eff D_1\mbox{,}\\
(x:\qualty{t_1}{s_1}{d_1}),\Gamma\vd e_2:q\mbox{,}\\
D=\gen{d_1}\cup D_1\cup D_2\mbox{,}
\end{array}
\]
whence $d''\notin D_1$ and $d''\notin D_2$. We also have
\[
\begin{array}{l}
\semd{e_1}\gamma\phi=\return(\monadic{v}_1,\gamma_1,\omicron_1)\mbox{,}\\
\semd{e_2}((x,\monadic{v}_1),\gamma_1)\phi=\return(\monadic{v}_2,\gamma_2,\omicron_2)\mbox{,}\\
\gamma'=\tail\gamma_2\mbox{.}
\end{array}
\]
The induction hypothesis implies $\gamma\coin{\Gamma}{d''}\gamma_1$ and $(x,\monadic{v}_1),\gamma_1\coin{\Gamma'}{d''}\gamma_2$ where $\Gamma'=((x:\qualty{t_1}{s_1}{d_1}),\Gamma)$. The latter implies $\gamma_1\coin{\Gamma}{d''}\tail\gamma_2$. The desired claim follows by Lemma~\ref{semantics:coinlemma}.

\item Let $e=\stmtcomp{e_1}{e_2}$. Then
\[
\begin{array}{l}
\Gamma\vd e_1:\qualty{t_1}{s_1}{d_1}\eff D_1\mbox{,}\\
\Gamma\vd e_2:q\mbox{,}\\
D=D_1\cup D_2\mbox{,}
\end{array}
\]
whence $d''\notin D_1$ and $d''\notin D_2$. We also have
\[
\begin{array}{l}
\semd{e_1}\gamma\phi=\return(\monadic{v}_1,\gamma_1,\omicron_1)\mbox{,}\\
\semd{e_2}\gamma_1\phi=\return(\monadic{v}_2,\gamma_2,\omicron_2)\mbox{,}\\
\gamma'=\gamma_2\mbox{.}
\end{array}
\]
By the induction hypothesis, $\gamma\coin{\Gamma}{d''}\gamma_1$ and $\gamma_1\coin{\Gamma}{d''}\gamma_2$. The desired claim follows by Lemma~\ref{semantics:coinlemma}.

\end{itemize}

\end{proof}

\begin{theorem}[Theorem~\ref{semantics:outputthm}]
Let $\Gamma \vd e : q\eff D$ with well-structured~$\Gamma$ and $\gamma\in\mathbf{Env}$ be $\Gamma$-exact in~$d$ for some domain~$d$. Assume that for all subexpressions of~$e$ of the form $\ofty{\getexpr{d'}{k}}{q'}$ where $d'\subtype d$, the value $\allpure(\phi_{d'}(k))$ is $q'$-exact in~$d$. If $\semd{e}\gamma\phi=\return(\monadic{v},\gamma',\omicron)$ and $\publicdom\notin D$ then $\omicron=\epsilon$.
\end{theorem}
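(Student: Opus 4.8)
The plan is to prove the statement by structural induction on~$e$, in the same style as the proofs of Theorems~\ref{semantics:localthm} and~\ref{semantics:effthm}. The guiding observation is that, reading off Fig.~\ref{semantics:exprstmt}, the output component~$\omicron$ returned by $\semd{e}\gamma\phi$ is in every defining equation a pointwise concatenation of the outputs of the immediate subexpressions, with exactly one exception: in the $\wireexpr{e'}$ case a single fresh value~$\monadic{v}$ is additionally appended to the stream of the domain of~$e'$. Thus $\mbox{\lstinline+wire+}$ is the only construct that genuinely creates output.

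First I would dispose of the vacuous cases. By the rules of Fig.~\ref{typesystem:exprstmt}, both $\wireexpr{e'}$ and $\assertexpr{e'}$ carry the effect $\gen{\publicdom}$, so $\publicdom\in D$ for these expressions and the hypothesis $\publicdom\notin D$ fails; the claim holds vacuously. The base cases $\epsilon$, $\overline{n}$, $\overline{b}$, the variable case and $\getexpr{d'}{k}$ all return $\omicron=\epsilon$ directly. For every remaining (composite) construct, the effect~$D$ assigned by the typing rule is a union that contains each subexpression effect $D_i$ as a subset --- for instance $\gen{s}\cup D_1\cup D_2$ for addition, $D_1\cup D_2\cup D_3$ for conditionals and loops, and $\gen{d_1}\cup D_1\cup D_2$ for let. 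Hence $\publicdom\notin D$ entails $\publicdom\notin D_i$ for each subexpression, and the induction hypothesis yields that each subexpression output is~$\epsilon$; since~$\omicron$ is their concatenation, $\omicron=\epsilon$. The premises of the induction hypothesis are met: the relevant type environments stay well-structured (Theorem~\ref{typesystem:thm} handles the extended environments $(x:\qualty{\uintty}{\prestg}{d_0}),\Gamma$ in loops and $(x:q_1),\Gamma$ in let), and the assumption on subexpressions of the form $\getexpr{d'}{k}$ is inherited, since every such subexpression of a subexpression of~$e$ is itself a subexpression of~$e$.

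The main obstacle will be the loop case $e=\forexpr{x}{e_1}{e_2}{e_3}$ and, analogously, the assignment case, where the body~$e_3$ (respectively the indices $y_k$ and the right-hand side) is evaluated in a sequence of updated environments $(x,\return i_1),\gamma_2$, $[x\mapsto\return(i_1+1)]\gamma_3,\ldots$ To invoke the induction hypothesis on the $k$-th iteration I need that iteration's environment to be exact for the extended type environment; this is exactly what Theorem~\ref{semantics:localthm} supplies, as it shows every intermediate environment remains $\Gamma$-exact in~$d$. Given exactness, each iteration output $\omicron'_k$ is~$\epsilon$ by the induction hypothesis (since $\publicdom\notin D_3$), and combined with $\omicron_1=\omicron_2=\epsilon$ from the bounds this gives $\omicron=\omicron_1\omicron_2\omicron'_1\cdots\omicron'_n=\epsilon$. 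All remaining composite cases reduce to the same concatenation argument and present no further difficulty.
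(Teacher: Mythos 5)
Your proposal is correct and follows essentially the same route as the paper's proof: structural induction on~$e$, with the $\wireexpr{e'}$ and $\assertexpr{e'}$ cases vacuous because their effect is $\gen{\publicdom}$, the base cases returning $\epsilon$ directly, and the composite cases propagating $\publicdom\notin D_i$ to subexpressions via the union structure of the effects and invoking Theorem~\ref{semantics:localthm} to keep the intermediate environments $\Gamma$-exact (the paper additionally appeals to Lemma~\ref{typesystem:lhslemma} in the assignment case to type the index expressions, which your ``analogously'' implicitly covers). No gaps.
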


\begin{proof}
%By Theorem~\ref{semantics:localthm}, $\gamma'$ is $\Gamma$-exact in~$d$. By Lemma~\ref{semantics:reintlemma}, $\gamma$ and $\gamma'$ are $\Gamma$-exposed in~$d$. Take $d''$ such that $d''\subtype d$ and $d''\notin D$. By Lemma~\ref{semantics:approxlemma}, $\gamma$ and $\gamma'$ are also $\Gamma$-exposed in~$d''$. 
Let $q=\qualty{t_0}{s_0}{d_0}$. We proceed by induction on the structure of~$e$. 

If $e=\epsilon$ or $e=\overline{n}$ where $n\in\NN$ or $e=\overline{b}$ where $b\in\BB$ or $e=x$ or $e=\getexpr{d'}{k}$ then $\omicron=\epsilon$. If $e=\assertexpr{e_1}$ or $e=\wireexpr{e_1}$ then the claim holds vacuously since $D=\gen{\publicdom}$. We study the remaining cases.

\begin{itemize}
\item Let $e=\addexpr{e_1}{e_2}$. Then
\[
\begin{array}{l}
\Gamma\vd e_1:q\eff D_1\mbox{,}\\
\Gamma\vd e_2:q\eff D_2\mbox{,}\\
D=\gen{s_0}\cup D_1\cup D_2\mbox{,}
\end{array}
\]
whence we must have $\publicdom\notin D_1$ and $\publicdom\notin D_2$. We also have
\[
\begin{array}{l}
\semd{e_1}\gamma\phi=\return(\monadic{v}_1,\gamma_1,\omicron_1)\mbox{,}\\
\semd{e_2}\gamma_1\phi=\return(\monadic{v}_2,\gamma_2,\omicron_2)\mbox{,}\\
\omicron=\omicron_1\omicron_2\mbox{.}
\end{array}
\]
By Theorem~\ref{semantics:localthm}, $\gamma_1$ and $\gamma_2$ are $q$-exact in~$d$. By the induction hypothesis, $\omicron_1=\epsilon$ and $\omicron_2=\epsilon$. Hence $\omicron=\epsilon$.

\item Let $e=\ifexpr{e_1}{e_2}{e_3}$. Then
\[
\begin{array}{l}
\Gamma\vd e_1:\qualty{\boolmodty{\mbox{\lstinline+N+}}}{\prestg}{d_1}\eff D_1\mbox{,}\\
\Gamma\vd e_2:q\eff D_2\mbox{,}\\
\Gamma\vd e_3:q\eff D_3\mbox{,}\\
\gen{d_1}\supseteq\gen{s_0}\cup\gen{d_0}\cup D_2\cup D_3\mbox{,}\quad D=D_1\cup D_2\cup D_3\mbox{,}
\end{array}
\]
whence we must have $\publicdom\notin D_1$, $\publicdom\notin D_2$ and $\publicdom\notin D_3$. We also have
\[
\begin{array}{l}
\semd{e_1}\gamma\phi=\return(\monadic{v}_1,\gamma_1,\omicron_1)\mbox{.}
\end{array}
\]
Consider three cases:
\begin{itemize}
\item If $\monadic{v}_1=\return\tru$ then
\[
\begin{array}{l}
\semd{e_2}\gamma_1\phi=\return(\monadic{v}_2,\gamma_2,\omicron_2)\mbox{,}\\
\omicron=\omicron_1\omicron_2\mbox{.}
\end{array}
\]
By Theorem~\ref{semantics:localthm}, $\gamma_1$ and $\gamma_2$ are $\Gamma$-exact in~$d$. By the induction hypothesis, $\omicron_1=\epsilon$ and $\omicron_2=\epsilon$. Hence $\omicron=\epsilon$.
\item If $\monadic{v}_1=\return\fls$ then
\[
\begin{array}{l}
\semd{e_3}\gamma_1\phi=\return(\monadic{v}_3,\gamma_3,\omicron_3)\mbox{,}\\
\omicron=\omicron_1\omicron_3\mbox{.}
\end{array}
\]
By Theorem~\ref{semantics:localthm}, $\gamma_1$ and $\gamma_3$ are $\Gamma$-exact in~$d$. By the induction hypothesis, $\omicron_1=\epsilon$ and $\omicron_3=\epsilon$. Hence $\omicron=\epsilon$.
\item If $\monadic{v}_1\ne\return b$ where $b\in\BB$ then $\omicron=\omicron_1$. By the induction hypothesis, $\omicron_1=\epsilon$. The desired claim follows.
\end{itemize}

\item Let $e=\forexpr{x}{e_1}{e_2}{e_3}$. Then
\[
\begin{array}{l}
\Gamma\vd e_1:\qualty{\uintty}{\prestg}{d_0}\eff D_1\mbox{,}\\
\Gamma\vd e_2:\qualty{\uintty}{\prestg}{d_0}\eff D_2\mbox{,}\\
(x:\qualty{\uintty}{\prestg}{d_0}),\Gamma\vd e_3:\qualty{t_1}{s_1}{d_1}\eff D_3\mbox{,}\\
\gen{d_0}\supseteq\gen{s_1}\cup\gen{d_1}\cup D_3\mbox{,}\quad D=D_1\cup D_2\cup D_3\mbox{,}
\end{array}
\]
whence we have $\publicdom\notin D_1$, $\publicdom\notin D_2$ and $\publicdom\notin D_3$. We also have
\[
\begin{array}{l}
\semd{e_1}\gamma\phi=\return(\monadic{v}_1,\gamma_1,\omicron_1)\mbox{,}\\
\semd{e_2}\gamma_1\phi=\return(\monadic{v}_2,\gamma_2,\omicron_2)\mbox{.}
\end{array}
\]
By Theorem~\ref{semantics:localthm}, $\monadic{v}_1$ and $\monadic{v}_2$ are $(\qualty{\uintty}{\prestg}{d_0})$-exact in~$d$ and $\gamma_1$ and $\gamma_2$ are $\Gamma$-exact in~$d$. Consider two cases:
\begin{itemize}
\item If $\monadic{v}_1=\return i_1$ and $\monadic{v}_2=\return i_2$ for $i_1,i_2\in\NN$ then, denoting $n=\max(0,i_2-i_1)$, we have
\[
\begin{array}{l}
\semd{e_3}((x,\return i_1),\gamma_2)\phi=\return(\monadic{v}_3,\gamma_3,\omicron_3)\mbox{,}\\
\semd{e_3}([x\mapsto\return(i_1+1)]\gamma_3)\phi=\return(\monadic{v}_4,\gamma_4,\omicron_4)\mbox{,}\\
\dotfill\\
\semd{e_3}([x\mapsto\return(i_1+n-1)]\gamma_{n+1})\phi=\return(\monadic{v}_{n+2},\gamma_{n+2},\omicron_{n+2})\mbox{,}\\
\omicron=\omicron_1\ldots\omicron_{n+2}\mbox{.}
\end{array}
\]
By the above, $((x,\return i_1),\gamma_2)$ is $\Gamma'$-exact in~$d$ where $\Gamma'=((x:\qualty{\uintty}{\prestg}{d_0}),\Gamma)$. By Theorem~\ref{semantics:localthm}, $\gamma_3,\ldots,\gamma_{n+2}$ are also $\Gamma'$-exact in~$d$ since increasing $i_1$ does not change this property. Hence by the induction hypothesis, $\omicron_1=\ldots=\omicron_{n+2}=\epsilon$, implying $\omicron=\epsilon$.
\item If $\monadic{v}_1\ne\return i_1$ or $\monadic{v}_2\ne\return i_2$ with $i_1,i_2\in\NN$ then $\omicron=\omicron_1\omicron_2$. By the induction hypothesis, $\omicron_1=\epsilon$ and $\omicron_2=\epsilon$. Hence $\omicron=\epsilon$.
\end{itemize}

\item Let $e=\castexpr{e_1}{d_0}$. Then
\[
\Gamma\vd e_1:\qualty{t_1}{s_1}{d_1}\eff D\mbox{,}
\]
and
\[
\begin{array}{l}
\semd{e_1}\gamma\phi=\return(\monadic{v}_1,\gamma_1,\omicron_1)\mbox{,}\\
\omicron=\omicron_1\mbox{.}
\end{array}
\]
By the induction hypothesis, $\omicron_1=\epsilon$. The desired claim follows.

\item Let $e=(\assignexpr{e_1}{e_2})$. Let $e_1=\loadexpr{\loadexpr{\loadexpr{x}{y_1}}{y_2}\ldots}{y_n}$. By Lemma~\ref{typesystem:lhslemma},
\[
\begin{array}{l}
\Gamma\vd y_i:\qualty{\uintty}{\prestg}{d_i}\eff D_i\mbox{,}\\
\Gamma\vd x:\qualty{\listty{\ldots\listty{\qualty{\listty{\qualty{\listty{\qualty{t}{s}{d_{n+1}}}}{\prestg}{d_n}}}{\prestg}{d_{n-1}}}\ldots}}{\prestg}{d_1}\eff D_0
\end{array}
\]
and
\[
\begin{array}{l}
\Gamma\vd e_1:\qualty{t}{s}{d_{n+1}}\eff D_1\cup\ldots\cup D_n\mbox{,}\\
\Gamma\vd e_2:\qualty{t}{s}{d_{n+1}}\eff D_{n+1}\mbox{,}\\
D=\gen{s}\cup\gen{d_{n+1}}\cup D_1\cup\ldots\cup D_{n+1}\mbox{,}
\end{array}
\]
for some $D_1,\ldots,D_{n+1}$. Hence we must have $\publicdom\notin D_k$ for every $k=1,\ldots,n+1$. We also have
\[
\begin{array}{l}
\semd{y_1}\gamma\phi=\return(\monadic{i}_1,\gamma_1,\omicron_1)\mbox{,}\\
\semd{y_2}\gamma_1\phi=\return(\monadic{i}_2,\gamma_2,\omicron_2)\mbox{,}\\
\dotfill\\
\semd{y_n}\gamma_{n-1}\phi=\return(\monadic{i}_n,\gamma_n,\omicron_n)\mbox{,}\\
\semd{e_2}\gamma_n\phi=\return(\monadic{v},\gamma_{n+1},\omicron_{n+1})\mbox{,}\\
\omicron=\omicron_1\ldots\omicron_{n+1}\mbox{.}
\end{array}
\]
By Theorem~\ref{semantics:localthm}, $\gamma_1,\ldots,\gamma_n$ are $\Gamma$-exact in~$d$. By the induction hypothesis, $\omicron_1=\ldots=\omicron_{n+1}=\epsilon$, implying $\omicron=\epsilon$.

\item Let $e=\loadexpr{e_1}{e_2}$. Then
\[
\begin{array}{l}
\Gamma\vd e_1:\qualty{\listty{q}}{\prestg}{d_1}\eff D_1\mbox{,}\\
\Gamma\vd e_2:\qualty{\uintty}{\prestg}{d_1}\eff D_2\mbox{,}\\
D=D_1\cup D_2\mbox{,}
\end{array}
\]
whence we have $\publicdom\notin D_1$ and $\publicdom\notin D_2$. We also have
\[
\begin{array}{l}
\semd{e_1}\gamma\phi=\return(\monadic{a},\gamma_1,\omicron_1)\mbox{,}\\
\semd{e_2}\gamma_1\phi=\return(\monadic{i},\gamma_2,\omicron_2)\mbox{,}\\
\omicron=\omicron_1\omicron_2\mbox{.}
\end{array}
\]
By Theorem~\ref{semantics:localthm}, $\gamma_1$ and $\gamma_2$ are $\Gamma$-exact in~$d$. By the induction hypothesis, $\omicron_1=\epsilon$ and $\omicron_2=\epsilon$. Hence $\omicron=\epsilon$.

\item Let $e=(\stmtcomp{\letexpr{x}{e_1}}{e_2})$. Then
\[
\begin{array}{l}
\Gamma\vd e_1:\qualty{t_1}{s_1}{d_1}\eff D_1\mbox{,}\\
(x:\qualty{t_1}{s_1}{d_1}),\Gamma\vd e_2:q\mbox{,}\\
D=\gen{d_1}\cup D_1\cup D_2\mbox{,}
\end{array}
\]
whence $\publicdom\notin D_1$ and $\publicdom\notin D_2$. We also have
\[
\begin{array}{l}
\semd{e_1}\gamma\phi=\return(\monadic{v}_1,\gamma_1,\omicron_1)\mbox{,}\\
\semd{e_2}((x,\monadic{v}_1),\gamma_1)\phi=\return(\monadic{v}_2,\gamma_2,\omicron_2)\mbox{,}\\
\omicron=\omicron_1\omicron_2\mbox{.}
\end{array}
\]
By Theorem~\ref{semantics:localthm}, $\monadic{v}_1$ is $(\qualty{t_1}{s_1}{d_1})$-exact and $\gamma_1$ is $\Gamma$-exact in~$d$. Hence $\left((x,\monadic{v}_1),\gamma_1\right)$ is $\Gamma'$-exact in~$d$ where $\Gamma'=((x:\qualty{t_1}{s_1}{d_1}),\Gamma)$. The induction hypothesis implies $\omicron_1=\epsilon$ and $\omicron_2=\epsilon$. Hence $\omicron=\epsilon$.

\item Let $e=\stmtcomp{e_1}{e_2}$. Then
\[
\begin{array}{l}
\Gamma\vd e_1:\qualty{t_1}{s_1}{d_1}\eff D_1\mbox{,}\\
\Gamma\vd e_2:q\mbox{,}\\
D=D_1\cup D_2\mbox{,}
\end{array}
\]
whence $\publicdom\notin D_1$ and $\publicdom\notin D_2$. We also have
\[
\begin{array}{l}
\semd{e_1}\gamma\phi=\return(\monadic{v}_1,\gamma_1,\omicron_1)\mbox{,}\\
\semd{e_2}\gamma_1\phi=\return(\monadic{v}_2,\gamma_2,\omicron_2)\mbox{,}\\
\omicron=\omicron_1\omicron_2\mbox{.}
\end{array}
\]
By Theorem~\ref{semantics:localthm}, $\gamma_1$ and $\gamma_2$ are $\Gamma$-exact in~$d$. By the induction hypothesis, $\omicron_1=\epsilon$ and $\omicron_2=\epsilon$. Hence $\omicron=\epsilon$.

\end{itemize}

\end{proof}

\begin{theorem}[Theorem~\ref{semantics:effcircthm}]
Let $\Gamma \vd e : q\eff D$ with well-structured~$\Gamma$ and $\gamma\in\mathbf{Env}$ be $\Gamma$-exact in $d$ for $d=\proverdom$. Assume that for all subexpressions of~$e$ of the form $\ofty{\getexpr{d'}{k}}{q'}$, the value $\allpure(\phi_{d'}(k))$ is $q'$-exact in~$d$. If $\semd{e}\gamma\phi=\return(\monadic{v},\gamma',\omicron)$ and $\publicdom\notin D$ then $\gamma\coincirc{\Gamma}\gamma'$.
\end{theorem}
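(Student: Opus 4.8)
The plan is to carry out a structural induction on $e$ that mirrors the proof of Theorem~\ref{semantics:effthm}, but with coincidence in circuit in place of coincidence in a fixed domain, deriving the needed side conditions from the hypothesis $\publicdom\notin D$ rather than from $d''\notin D$.

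First I would set up the ambient facts. Because $d=\proverdom$ and $\gamma$ is $\Gamma$-exact in $\proverdom$, Theorem~\ref{semantics:localthm} makes $\gamma'$ also $\Gamma$-exact in $\proverdom$, and Lemma~\ref{semantics:reintlemma} then makes both environments $\Gamma$-exposed in $\proverdom$. The predicate defining coincidence in $\proverdom$ is identically true (since $\proverdom$ is the top domain), hence is implied by the circuit predicate $P(\qualty{t}{s}{d})\rightleftharpoons(s=\poststg\vee d=\publicdom)$; so Lemma~\ref{semantics:approxlemma} upgrades both environments to $\Gamma$-exposed in circuit, and Lemma~\ref{semantics:coinlemma} guarantees that $\coincirc{\Gamma}$ is an equivalence relation on them, which I will use freely for reflexivity and transitivity.

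Next I would dispatch the cases exactly as in Theorem~\ref{semantics:effthm}. The leaf cases $\epsilon$, $\overline{n}$, $\overline{b}$, $x$ and $\getexpr{d'}{k}$ leave the environment unchanged, so reflexivity closes them; the cases $\assertexpr{e_1}$ and $\wireexpr{e_1}$ are vacuous, since their effect set is $\gen{\publicdom}$, contradicting $\publicdom\notin D$. For the composite rules (addition, conditional, loop, list access, the let-rule, and the two sequencing rules), $\publicdom\notin D$ forces $\publicdom$ out of each constituent effect set, the induction hypothesis supplies coincidence in circuit for each subevaluation, and transitivity chains these to $\gamma\coincirc{\Gamma}\gamma'$. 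Before reapplying the hypothesis to a subexpression I would invoke Theorem~\ref{semantics:localthm} to certify that the intermediate environment is still $\Gamma$-exact in $\proverdom$; for the loop I would additionally mimic the per-iteration bookkeeping of Theorem~\ref{semantics:effthm}, stripping the loop variable with $\tail$.

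The assignment case $e=(\assignexpr{e_1}{e_2})$ with $e_1=\loadexpr{\loadexpr{\loadexpr{x}{y_1}}{y_2}\ldots}{y_n}$ is the one place where more than routine chaining is needed, and I expect it to be the main obstacle. Decomposing the L-value by Lemma~\ref{typesystem:lhslemma}, the effect set contains $\gen{s}\cup\gen{d_{n+1}}$, where $\qualty{t}{s}{d_{n+1}}$ is the type of the assigned cell. From $\publicdom\notin D$ I would read off $s=\prestg$ (as $\gen{\poststg}=\gen{\publicdom}$) and $d_{n+1}\ne\publicdom$, so that $P(\qualty{t}{s}{d_{n+1}})$ is \emph{false}; this is exactly the premise required to apply Lemma~\ref{semantics:updcoin0lemma}, taking its $Q$ to be the circuit predicate and its $P$ the (identically true) $\proverdom$ predicate, both data insensitive with $Q\Rightarrow P$. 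Its exposedness hypotheses for $\lookup{\gamma}{x}$, the $\monadic{i}_k$ and $\monadic{v}$ follow from Theorem~\ref{semantics:localthm} together with Lemma~\ref{semantics:reintlemma}, and its conclusion is $\lookup{\gamma}{x}\coincirc{q_x}\update(\lookup{\gamma}{x},\monadic{i}_1\ldots\monadic{i}_n,\monadic{v})$, where $q_x=\lookup{\Gamma}{x}$. Since the assignment modifies only the entry for $x$, every other entry of $\gamma'$ agrees in circuit with that of $\gamma$ through the transitive chain $\gamma\coincirc{\Gamma}\gamma_1\coincirc{\Gamma}\cdots\coincirc{\Gamma}\gamma_{n+1}$; combining this with the update observation yields $\gamma\coincirc{\Gamma}\gamma'$.
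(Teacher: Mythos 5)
Your proposal is correct and follows essentially the same route as the paper's proof: the same setup via Theorem~\ref{semantics:localthm}, Lemma~\ref{semantics:reintlemma} and Lemma~\ref{semantics:approxlemma}, the same case dispatch mirroring Theorem~\ref{semantics:effthm} with $\publicdom\notin D$ propagated to the constituent effect sets, and the same treatment of the assignment case via Lemma~\ref{semantics:updcoin0lemma} after reading off $s=\prestg$ and $d_{n+1}\ne\publicdom$. The only cosmetic difference is that the paper additionally observes that exactness in $\proverdom$ makes the guard values of conditionals and loops always pure, so the $\top$ branches never arise, but your more literal mirroring of Theorem~\ref{semantics:effthm} handles those branches anyway.
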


\begin{proof}
By Theorem~\ref{semantics:localthm}, $\gamma'$ is $\Gamma$-exact in $\proverdom$. By Lemma~\ref{semantics:reintlemma}, $\gamma$ and $\gamma'$ are $\Gamma$-exposed in $\proverdom$. By Lemma~\ref{semantics:approxlemma}, $\gamma$ and $\gamma'$ are also $\Gamma$-exposed in circuit. Let $q=\qualty{t_0}{s_0}{d_0}$. We proceed by induction on the structure of~$e$. 

If $e=\epsilon$ or $e=\overline{n}$ where $n\in\NN$ or $e=\overline{b}$ where $b\in\BB$ or $e=x$ or $e=\getexpr{d'}{k}$ then $\gamma=\gamma'$. By Lemma~\ref{semantics:coinlemma}, $\gamma\coincirc{\Gamma}\gamma'$. If $e=\assertexpr{e_1}$ or $e=\wireexpr{e_1}$ then the claim holds vacuously since $D=\gen{\publicdom}$. We study the remaining cases.

\begin{itemize}
\item Let $e=\addexpr{e_1}{e_2}$. Then
\[
\begin{array}{l}
\Gamma\vd e_1:q\eff D_1\mbox{,}\\
\Gamma\vd e_2:q\eff D_2\mbox{,}\\
D=\gen{s_0}\cup D_1\cup D_2\mbox{,}
\end{array}
\]
whence we must have $\publicdom\notin D_1$ and $\publicdom\notin D_2$. We also have
\[
\begin{array}{l}
\semd{e_1}\gamma\phi=\return(\monadic{v}_1,\gamma_1,\omicron_1)\mbox{,}\\
\semd{e_2}\gamma_1\phi=\return(\monadic{v}_2,\gamma_2,\omicron_2)\mbox{,}\\
\gamma'=\gamma_2\mbox{.}
\end{array}
\]
By the induction hypothesis, $\gamma\coincirc{\Gamma}\gamma_1$ and $\gamma_1\coincirc{\Gamma}\gamma_2$. By Lemma~\ref{semantics:coinlemma}, $\gamma\coincirc{\Gamma}\gamma_2$. Hence $\gamma\coincirc{\Gamma}\gamma'$.

\item Let $e=\ifexpr{e_1}{e_2}{e_3}$. Then
\[
\begin{array}{l}
\Gamma\vd e_1:\qualty{\boolmodty{\mbox{\lstinline+N+}}}{\prestg}{d_1}\eff D_1\mbox{,}\\
\Gamma\vd e_2:q\eff D_2\mbox{,}\\
\Gamma\vd e_3:q\eff D_3\mbox{,}\\
\gen{d_1}\supseteq\gen{s_0}\cup\gen{d_0}\cup D_2\cup D_3\mbox{,}\quad D=D_1\cup D_2\cup D_3\mbox{,}
\end{array}
\]
whence we must have $\publicdom\notin D_1$, $\publicdom\notin D_2$ and $\publicdom\notin D_3$. We also have
\[
\begin{array}{l}
\semd{e_1}\gamma\phi=\return(\monadic{v}_1,\gamma_1,\omicron_1)\mbox{.}
\end{array}
\]
By Theorem~\ref{semantics:localthm}, $\monadic{v}_1$ is $(\qualty{\boolmodty{\mbox{\lstinline+N+}}}{\prestg}{d_1})$-exact in $\proverdom$, whence we have to consider two cases:
\begin{itemize}
\item If $\monadic{v}_1=\return\tru$ then
\[
\begin{array}{l}
\semd{e_2}\gamma_1\phi=\return(\monadic{v}_2,\gamma_2,\omicron_2)\mbox{,}\\
\gamma'=\gamma_2\mbox{.}
\end{array}
\]
By the induction hypothesis, $\gamma\coincirc{\Gamma}\gamma_1$ and $\gamma_1\coincirc{\Gamma}\gamma_2$. By Lemma~\ref{semantics:coinlemma}, $\gamma\coincirc{\Gamma}\gamma_2$. Hence $\gamma\coincirc{\Gamma}\gamma'$.
\item If $\monadic{v}_1=\return\fls$ then
\[
\begin{array}{l}
\semd{e_3}\gamma_1\phi=\return(\monadic{v}_3,\gamma_3,\omicron_3)\mbox{,}\\
\gamma'=\gamma_3\mbox{.}
\end{array}
\]
By the induction hypothesis, $\gamma\coincirc{\Gamma}\gamma_1$ and $\gamma_1\coincirc{\Gamma}\gamma_3$. By Lemma~\ref{semantics:coinlemma}, $\gamma\coincirc{\Gamma}\gamma_3$. Hence $\gamma\coincirc{\Gamma}\gamma'$.
\end{itemize}

\item Let $e=\forexpr{x}{e_1}{e_2}{e_3}$. Then
\[
\begin{array}{l}
\Gamma\vd e_1:\qualty{\uintty}{\prestg}{d_0}\eff D_1\mbox{,}\\
\Gamma\vd e_2:\qualty{\uintty}{\prestg}{d_0}\eff D_2\mbox{,}\\
(x:\qualty{\uintty}{\prestg}{d_0}),\Gamma\vd e_3:\qualty{t_1}{s_1}{d_1}\eff D_3\mbox{,}\\
\gen{d_0}\supseteq\gen{s_1}\cup\gen{d_1}\cup D_3\mbox{,}\quad D=D_1\cup D_2\cup D_3\mbox{,}
\end{array}
\]
whence we have $\publicdom\notin D_1$, $\publicdom\notin D_2$ and $\publicdom\notin D_3$. We also have
\[
\begin{array}{l}
\semd{e_1}\gamma\phi=\return(\monadic{v}_1,\gamma_1,\omicron_1)\mbox{,}\\
\semd{e_2}\gamma_1\phi=\return(\monadic{v}_2,\gamma_2,\omicron_2)\mbox{.}
\end{array}
\]
By Theorem~\ref{semantics:localthm}, $\monadic{v}_1$ and $\monadic{v}_2$ are $(\qualty{\uintty}{\prestg}{d_0})$-exact in $\proverdom$, whence $\monadic{v}_1=\return i_1$ and $\monadic{v}_2=\return i_2$ for $i_1,i_2\in\NN$. Denoting $n=\max(0,i_2-i_1)$, we have
\[
\begin{array}{l}
\semd{e_3}((x,\return i_1),\gamma_2)\phi=\return(\monadic{v}_3,\gamma_3,\omicron_3)\mbox{,}\\
\semd{e_3}([x\mapsto\return(i_1+1)]\gamma_3)\phi=\return(\monadic{v}_4,\gamma_4,\omicron_4)\mbox{,}\\
\dotfill\\
\semd{e_3}([x\mapsto\return(i_1+n-1)]\gamma_{n+1})\phi=\return(\monadic{v}_{n+2},\gamma_{n+2},\omicron_{n+2})\mbox{,}\\
\gamma'=\tail\gamma_{n+2}\mbox{.}
\end{array}
\]
By the induction hypothesis, $\gamma\coincirc{\Gamma}\gamma_1$ and $\gamma_1\coincirc{\Gamma}\gamma_2$, and also $\left((x,\return i_1),\gamma_2\right)\coincirc{\Gamma'}\gamma_3$ and $[x\mapsto\return(i_1+k-3)]\gamma_{k-1}\coincirc{\Gamma'}\gamma_k$ for each $k=4,\ldots,n+2$ where $\Gamma'=((x:\qualty{\uintty}{\prestg}{d_0}),\Gamma)$. Then also $\gamma_2\coincirc{\Gamma}\tail\gamma_3$ and $\tail\gamma_{k-1}\coincirc{\Gamma}\tail\gamma_k$ for every $k=4,\ldots,n+2$. By Lemma~\ref{semantics:coinlemma}, $\gamma\coincirc{\Gamma}\tail\gamma_{n+2}$. Hence $\gamma\coincirc{\Gamma}\gamma'$.

\item Let $e=\castexpr{e_1}{d_0}$. Then
\[
\Gamma\vd e_1:\qualty{t_1}{s_1}{d_1}\eff D\mbox{,}
\]
and
\[
\begin{array}{l}
\semd{e_1}\gamma\phi=\return(\monadic{v}_1,\gamma_1,\omicron_1)\mbox{,}\\
\gamma'=\gamma_1\mbox{.}
\end{array}
\]
By the induction hypothesis, $\gamma\coincirc{\Gamma}\gamma_1$. Hence $\gamma\coincirc{\Gamma}\gamma'$.

\item Let $e=(\assignexpr{e_1}{e_2})$. Let $e_1=\loadexpr{\loadexpr{\loadexpr{x}{y_1}}{y_2}\ldots}{y_n}$. By Lemma~\ref{typesystem:lhslemma},
\[
\begin{array}{l}
\Gamma\vd y_i:\qualty{\uintty}{\prestg}{d_i}\eff D_i\mbox{,}\\
\Gamma\vd x:\qualty{\listty{\ldots\listty{\qualty{\listty{\qualty{\listty{\qualty{t}{s}{d_{n+1}}}}{\prestg}{d_n}}}{\prestg}{d_{n-1}}}\ldots}}{\prestg}{d_1}\eff D_0
\end{array}
\]
and
\[
\begin{array}{l}
\Gamma\vd e_1:\qualty{t}{s}{d_{n+1}}\eff D_1\cup\ldots\cup D_n\mbox{,}\\
\Gamma\vd e_2:\qualty{t}{s}{d_{n+1}}\eff D_{n+1}\mbox{,}\\
D=\gen{s}\cup\gen{d_{n+1}}\cup D_1\cup\ldots\cup D_{n+1}\mbox{,}
\end{array}
\]
for some $D_1,\ldots,D_{n+1}$. Hence we must have $\publicdom\notin D_k$ for every $k=1,\ldots,n+1$ and $\publicdom\notin\gen{d_{n+1}}$ meaning that $d_{n+1}$ is a strict superdomain of~$\publicdom$. We also have
\[
\begin{array}{l}
\semd{y_1}\gamma\phi=\return(\monadic{i}_1,\gamma_1,\omicron_1)\mbox{,}\\
\semd{y_2}\gamma_1\phi=\return(\monadic{i}_2,\gamma_2,\omicron_2)\mbox{,}\\
\dotfill\\
\semd{y_n}\gamma_{n-1}\phi=\return(\monadic{i}_n,\gamma_n,\omicron_n)\mbox{,}\\
\semd{e_2}\gamma_n\phi=\return(\monadic{v},\gamma_{n+1},\omicron_{n+1})\mbox{,}\\
\gamma'=[x\mapsto\update(\lookup{\gamma}{x},\monadic{i}_1\ldots\monadic{i}_n,\monadic{v})]\gamma_{n+1}\mbox{.}
\end{array}
\]
By the induction hypothesis, $\gamma_{k-1}\coincirc{\Gamma}\gamma_k$ for each $k=1,\ldots,n+1$ (denoting $\gamma_0=\gamma$). Let $\vars\Gamma=(z_1,\ldots,z_l)$ and let the corresponding types be $q_1,\ldots,q_l$. Let $k$ be the least index such that $z_k=x$. For every $i\ne k$, we have $\lookup{\gamma_{n+1}}{z_i}=\lookup{\gamma'}{z_i}$. Hence, by Lemma~\ref{semantics:coinlemma}, $\lookup{\gamma}{z_i}\coincirc{q_i}\lookup{\gamma'}{z_i}$. Moreover, Lemma~\ref{semantics:updcoin0lemma} implies $\lookup{\gamma}{z_k}\coincirc{q_k}\update(\lookup{\gamma}{z_k},\monadic{i}_1\ldots\monadic{i}_n,\monadic{v})$. Consequently, $\gamma\coincirc{\Gamma}\gamma'$. The desired claim follows.

\item Let $e=\loadexpr{e_1}{e_2}$. Then
\[
\begin{array}{l}
\Gamma\vd e_1:\qualty{\listty{q}}{\prestg}{d_1}\eff D_1\mbox{,}\\
\Gamma\vd e_2:\qualty{\uintty}{\prestg}{d_1}\eff D_2\mbox{,}\\
D=D_1\cup D_2\mbox{,}
\end{array}
\]
whence we have $\publicdom\notin D_1$ and $\publicdom\notin D_2$. We also have
\[
\begin{array}{l}
\semd{e_1}\gamma\phi=\return(\monadic{a},\gamma_1,\omicron_1)\mbox{,}\\
\semd{e_2}\gamma_1\phi=\return(\monadic{i},\gamma_2,\omicron_2)\mbox{,}\\
\gamma'=\gamma_2\mbox{.}
\end{array}
\]
By the induction hypothesis, $\gamma\coincirc{\Gamma}\gamma_1$ and $\gamma_1\coincirc{\Gamma}\gamma_2$. By Lemma~\ref{semantics:coinlemma}, $\gamma\coincirc{\Gamma}\gamma_2$. Hence $\gamma\coincirc{\Gamma}\gamma'$.

\item Let $e=(\stmtcomp{\letexpr{x}{e_1}}{e_2})$. Then
\[
\begin{array}{l}
\Gamma\vd e_1:\qualty{t_1}{s_1}{d_1}\eff D_1\mbox{,}\\
(x:\qualty{t_1}{s_1}{d_1}),\Gamma\vd e_2:q\mbox{,}\\
D=\gen{d_1}\cup D_1\cup D_2\mbox{,}
\end{array}
\]
whence $\publicdom\notin D_1$ and $\publicdom\notin D_2$. We also have
\[
\begin{array}{l}
\semd{e_1}\gamma\phi=\return(\monadic{v}_1,\gamma_1,\omicron_1)\mbox{,}\\
\semd{e_2}((x,\monadic{v}_1),\gamma_1)\phi=\return(\monadic{v}_2,\gamma_2,\omicron_2)\mbox{,}\\
\gamma'=\tail\gamma_2\mbox{.}
\end{array}
\]
The induction hypothesis implies $\gamma\coincirc{\Gamma}\gamma_1$ and $\left((x,\monadic{v}_1),\gamma_1\right)\coincirc{\Gamma'}\gamma_2$ where $\Gamma'=((x:\qualty{t_1}{s_1}{d_1}),\Gamma)$. The latter implies $\gamma_1\coincirc{\Gamma}\tail\gamma_2$. The desired claim follows by Lemma~\ref{semantics:coinlemma}.

\item Let $e=\stmtcomp{e_1}{e_2}$. Then
\[
\begin{array}{l}
\Gamma\vd e_1:\qualty{t_1}{s_1}{d_1}\eff D_1\mbox{,}\\
\Gamma\vd e_2:q\mbox{,}\\
D=D_1\cup D_2\mbox{,}
\end{array}
\]
whence $\publicdom\notin D_1$ and $\publicdom\notin D_2$. We also have
\[
\begin{array}{l}
\semd{e_1}\gamma\phi=\return(\monadic{v}_1,\gamma_1,\omicron_1)\mbox{,}\\
\semd{e_2}\gamma_1\phi=\return(\monadic{v}_2,\gamma_2,\omicron_2)\mbox{,}\\
\gamma'=\gamma_2\mbox{.}
\end{array}
\]
By the induction hypothesis, $\gamma\coincirc{\Gamma}\gamma_1$ and $\gamma_1\coincirc{\Gamma}\gamma_2$. The desired claim follows by Lemma~\ref{semantics:coinlemma}.

\end{itemize}

\end{proof}

\begin{theorem}[Theorem~\ref{semantics:safety}]
If $\semd{e}\gamma\phi=\return(\monadic{v},\gamma',\omicron)$ and $\phi'_{d'}=\phi_{d'}$ for every $d'\subtype d$ then $\semd{e}\gamma\phi'=\semd{e}\gamma\phi$.
\end{theorem}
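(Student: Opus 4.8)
The plan is a structural induction on~$e$, resting on the single observation that the input dictionaries~$\phi$ are consulted in exactly one place in the equations of Fig.~\ref{semantics:exprstmt}, namely the clause for $\getexpr{d'}{k}$, and there only the component~$\phi_{d'}$ is read, and only in the subcase $d'\subtype d$; when $d'$ is a strict superdomain of~$d$ the clause returns $\return(\top,\gamma,\epsilon)$ independently of~$\phi$. Since the hypothesis supplies $\phi'_{d'}=\phi_{d'}$ for all $d'\subtype d$, the two dictionaries are indistinguishable to every $\mbox{\lstinline{get}}$ subexpression that is actually reached. The role of the success assumption $\semd{e}\gamma\phi=\return(\monadic{v},\gamma',\omicron)$ is to license the induction hypothesis on the relevant subexpressions: because evaluation in the outer (maybe) monad is strict and proceeds left to right, success of the whole comprehension guarantees success of each subexpression evaluation it reaches, and for those the induction hypothesis gives $\semd{e_i}\gamma_i\phi'=\semd{e_i}\gamma_i\phi$.

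First I would dispatch the base cases. For $\epsilon$, $\overline{n}$, $\overline{b}$ and~$x$ the defining clauses do not mention~$\phi$, so the two results are literally identical; for $\getexpr{d'}{k}$ I split on $d'\subtype d$, using $\phi_{d'}(k)=\phi'_{d'}(k)$ in that subcase and the $\phi$-independent value~$\return(\top,\gamma,\epsilon)$ otherwise. For every compound construct the defining clause threads~$\phi$ unchanged into its recursive calls and then combines their outputs by operations insensitive to~$\phi$ (arithmetic in the inner monad, the $\guard$ tests, the list update~$\update$, output concatenation, stack operations such as~$\tail$, and purely syntactic domain tests like ``$e$ is in domain~$d'$''). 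I would therefore run the clause for~$\phi$ and for~$\phi'$ in lockstep: applying the induction hypothesis to the first subexpression shows it yields the same triple $(\monadic{v}_1,\gamma_1,\omicron_1)$ under both dictionaries, so the next subexpression is evaluated in the identical environment~$\gamma_1$ and, by success of the whole, also succeeds, whence the induction hypothesis applies again, and so on. Every control-flow decision --- which branch of an if is taken, whether a guard fails, the index vector and update target of an assignment, the list and index of a lookup --- is computed from these coinciding intermediate values, so it agrees on both sides and the two evaluations return equal results.

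The one case needing slightly more care, and the main (though still routine) obstacle, is the for loop. Once $e_1,e_2$ are shown to produce the same bounds, either both evaluations take the $\top$ branch (and agree immediately) or both run $\uopl{main}(n)$ with the same $n=\max(0,i_2-i_1)$, evaluating the body~$e_3$ repeatedly against successively updated environments. I would handle this by an inner induction on the iteration index, proving that the $k$-th environment and partial result coincide between~$\phi$ and~$\phi'$: each step applies the outer induction hypothesis to $\semd{e_3}$ in an environment that the previous step has shown identical on both sides, its success following from success of the whole loop. The $\mbox{\lstinline{wire}}$, cast, let and sequence cases are strictly simpler instances of the same threading argument.
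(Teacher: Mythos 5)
Your proposal is correct and follows essentially the same route as the paper: a structural induction on~$e$ in which the only $\phi$-dependent clause is that for $\mbox{\lstinline{get}}_{d'}$ (split on whether $d'\subtype d$), all other clauses thread $\phi$ unchanged into subexpressions whose evaluations coincide by the induction hypothesis, and the for loop is handled by iterating the hypothesis over the loop body with identical bounds and successively identical environments. The observation that success of the whole comprehension licenses the induction hypothesis on each reached subexpression is exactly the (implicit) justification the paper relies on.
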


\begin{proof}
Let $\phi':\In^3$ be such that $\phi_{d'}=\phi'_{d'}$ for every $d'\subtype d$. We proceed by induction on the structure of~$e$. 

\begin{itemize}
\item If $e=\epsilon$, $e=\overline{n}$ where $n\in\NN$, $e=\overline{b}$ where $b\in\BB$, or $e=x$, then $\semd{e}\gamma\phi'=\semd{e}\gamma\phi$ because the semantics in these cases does not depend on the input dictionaries.

\item Let $e=\addexpr{e_1}{e_2}$. Then
\[
\begin{array}{l}
\semd{e_1}\gamma\phi=\return(\monadic{v}_1,\gamma_1,\omicron_1)\mbox{,}\\
\semd{e_2}\gamma_1\phi=\return(\monadic{v}_2,\gamma_2,\omicron_2)\mbox{,}\\
\monadic{v}=\mcomp{v_1\gets\monadic{v}_1\hstop v_2\gets\monadic{v}_2\hstop\return(v_1+v_2)}\mbox{,}\quad\gamma'=\gamma_2\mbox{,}\quad\omicron=\omicron_1\omicron_2\mbox{.}
\end{array}
\]
By the induction hypothesis about $e_1$ and $e_2$,
\[
\begin{array}{l}
\semd{e_1}\gamma\phi'=\return(\monadic{v}_1,\gamma_1,\omicron_1)\mbox{,}\\
\semd{e_2}\gamma_1\phi'=\return(\monadic{v}_2,\gamma_2,\omicron_2)\mbox{.}
\end{array}
\]
Hence $\semd{\addexpr{e_1}{e_2}}\gamma\phi'=\return(\mcomp{v_1\gets\monadic{v}_1\hstop v_2\gets\monadic{v}_2\hstop\return(v_1+v_2)},\gamma_2,\omicron_1\omicron_2)=\semd{\addexpr{e_1}{e_2}}\gamma\phi$ as needed.

\item Let $e=\assertexpr{e_1}$. Then
\[
\begin{array}{l}
\semd{e_1}\gamma\phi=\return(\monadic{v}_1,\gamma_1,\omicron_1)\mbox{,}\\
\monadic{v}_1\ne\return\fls\mbox{,}\quad\monadic{v}=\return\singleton\mbox{,}\quad\gamma'=\gamma_1\mbox{,}\quad\omicron=\omicron_1\mbox{.}
\end{array}
\]
By the induction hypothesis about~$e_1$,
\[
\begin{array}{l}
\semd{e_1}\gamma\phi'=\return(\monadic{v}_1,\gamma_1,\omicron_1)\mbox{.}
\end{array}
\]
Hence $\semd{\assertexpr{e_1}}\gamma\phi'=\return(\return\singleton,\gamma_1,\omicron_1)=\semd{\assertexpr{e_1}}\gamma\phi$.

\item Let $e=\ofty{\getexpr{d''}{k}}{q}$. Then $\semd{e}\gamma\phi'=\semd{e}\gamma\phi$ since the semantics depends only on inputs of subdomains of~$d$.

\item Let $e=\ifexpr{e_1}{e_2}{e_3}$. Then
\[
\begin{array}{l}
\semd{e_1}\gamma\phi=\return(\monadic{v}_1,\gamma_1,\omicron_1)
\end{array}
\]
and, by the induction hypothesis about~$e_1$,
\[
\begin{array}{l}
\semd{e_1}\gamma\phi'=\return(\monadic{v}_1,\gamma_1,\omicron_1)\mbox{.}
\end{array}
\]
Consider three cases:
\begin{itemize}
\item If $\monadic{v}_1=\return\tru$ then
\[
\begin{array}{l}
\semd{e_2}\gamma_1\phi=\return(\monadic{v}_2,\gamma_2,\omicron_2)\mbox{,}\\
\monadic{v}=\monadic{v}_2\mbox{,}\quad\gamma'=\gamma_2\mbox{,}\quad\omicron=\omicron_1\omicron_2\mbox{.}
\end{array}
\]
By the induction hypothesis about~$e_2$,
\[
\semd{e_2}\gamma_1\phi'=\return(\monadic{v}_2,\gamma_2,\omicron_2)\mbox{.}
\]
Hence $\semd{\ifexpr{e_1}{e_2}{e_3}}\gamma\phi'=\return(\monadic{v}_2,\gamma_2,\omicron_1\omicron_2)=\semd{\ifexpr{e_1}{e_2}{e_3}}\gamma\phi$.
\item If $\monadic{v}_1=\return\fls$ then
\[
\begin{array}{l}
\semd{e_3}\gamma_1\phi=\return(\monadic{v}_3,\gamma_3,\omicron_3)\mbox{,}\\
\monadic{v}=\monadic{v}_3\mbox{,}\quad\gamma'=\gamma_3\mbox{,}\quad\omicron=\omicron_1\omicron_3\mbox{.}
\end{array}
\]
By the induction hypothesis about~$e_3$,
\[
\semd{e_3}\gamma_1\phi'=\return(\monadic{v}_3,\gamma_3,\omicron_3)\mbox{.}
\]
Hence $\semd{\ifexpr{e_1}{e_2}{e_3}}\gamma\phi'=\return(\monadic{v}_3,\gamma_3,\omicron_1\omicron_3)=\semd{\ifexpr{e_1}{e_2}{e_3}}\gamma\phi$.
\item If $\monadic{v}_1\ne\return b$ where $b\in\BB$ then $\monadic{v}=\top$, $\gamma'=\gamma_1$ and $\omicron=\omicron_1$. Since we similarly have $\semd{\ifexpr{e_1}{e_2}{e_3}}\gamma\phi'=\return(\top,\gamma_1,\omicron_1)$, the desired claim follows.
\end{itemize}

\item Let $e=\forexpr{x}{e_1}{e_2}{e_3}$. Then
\[
\begin{array}{l}
\semd{e_1}\gamma\phi=\return(\monadic{v}_1,\gamma_1,\omicron_1)\mbox{,}\\
\semd{e_2}\gamma_1\phi=\return(\monadic{v}_2,\gamma_2,\omicron_2)\mbox{,}
\end{array}
\]
and, by the induction hypothesis about $e_1$ and $e_2$,
\[
\begin{array}{l}
\semd{e_1}\gamma\phi'=\return(\monadic{v}_1,\gamma_1,\omicron_1)\mbox{,}\\
\semd{e_2}\gamma\phi'=\return(\monadic{v}_2,\gamma_2,\omicron_2)\mbox{.}
\end{array}
\]
Consider two cases:
\begin{itemize}
\item If $\monadic{v}_1=\return i_1$ and $\monadic{v}_2=\return i_2$ for $i_1,i_2\in\NN$ then, denoting $n=\max(0,i_2-i_1)$, we have
\[
\begin{array}{l}
\semd{e_3}((x,\return i_1),\gamma_2)\phi=\return(\monadic{v}_3,\gamma_3,\omicron_3)\mbox{,}\\
\semd{e_3}([x\mapsto\return(i_1+1)]\gamma_3)\phi=\return(\monadic{v}_4,\gamma_4,\omicron_4)\mbox{,}\\
\dotfill\\
\semd{e_3}([x\mapsto\return(i_1+n-1)]\gamma_{n+1})\phi=\return(\monadic{v}_{n+2},\gamma_{n+2},\omicron_{n+2})\mbox{,}\\
\monadic{v}=\return(\monadic{v}_3,\ldots,\monadic{v}_{n+2})\mbox{,}\quad\gamma'=\tail\gamma_{n+2}\mbox{,}\quad\omicron=\omicron_1\ldots\omicron_{n+2}\mbox{.}
\end{array}
\]
By the induction hypothesis about $e_3$,
\[
\begin{array}{l}
\semd{e_3}((x,\return i_1),\gamma_2)\phi'=\return(\monadic{v}_3,\gamma_3,\omicron_3)\mbox{,}\\
\semd{e_3}([x\mapsto\return(i_1+1)]\gamma_3)\phi'=\return(\monadic{v}_4,\gamma_4,\omicron_4)\mbox{,}\\
\dotfill\\
\semd{e_3}([x\mapsto\return(i_1+n-1)]\gamma_{n+1})\phi'=\return(\monadic{v}_{n+2},\gamma_{n+2},\omicron_{n+2})\mbox{.}
\end{array}
\]
Hence $\semd{\forexpr{x}{e_1}{e_2}{e_3}}\gamma\phi'=\return(\return(\monadic{v}_3,\ldots,\monadic{v}_{n+2}),\tail\gamma_{n+2},\omicron_1\ldots\omicron_{n+2})$ and the desired claim follows.
\item If $\monadic{v}_1\ne\return i_1$ or $\monadic{v}_2\ne\return i_2$ with $i_1,i_2\in\NN$ then $\monadic{v}=\top$, $\gamma'=\gamma_2$ and $\omicron=\omicron_1\omicron_2$. As we similarly get $\semd{\forexpr{x}{e_1}{e_2}{e_3}}\gamma\phi'=\return(\top,\gamma_2,\omicron_1\omicron_2)$, the desired claim follows.
\end{itemize}

\item Let $e=\wireexpr{e_1}$. Then
\[
\begin{array}{l}
\semd{e_1}\gamma\phi=\return(\monadic{v}_1,\gamma_1,\omicron_1)\mbox{,}\\
\monadic{v}=\monadic{v}_1\mbox{,}\quad\gamma'=\gamma_1\mbox{,}\quad\omicron=\lam{d'}{\branching{(\omicron_1)_{d'}\monadic{v}_1&\mbox{if $e_1$ is in domain~$d'$}\\(\omicron_1)_{d'}&\mbox{otherwise}}}\mbox{.}
\end{array}
\]
By the induction hypothesis about~$e_1$,
\[
\semd{e_1}\gamma\phi'=\return(\monadic{v}_1,\gamma_1,\omicron_1)\mbox{.}
\]
Hence
\[
\begin{array}{lcl}
\semd{\wireexpr{e_1}}\gamma\phi'
&=&\return(\monadic{v}_1,\gamma_1,\lam{d'}{\branching{(\omicron_1)_{d'}\monadic{v}_1&\mbox{if $e_1$ is in domain~$d'$}\\(\omicron_1)_{d'}&\mbox{otherwise}}})\\
&=&\semd{\wireexpr{e_1}}\gamma\phi
\end{array}
\]
and the desired claim follows.

\item Let $e=\castexpr{e_1}{d_0}$. Let the domain of~$e_1$ be $d_1$. Then
\[
\begin{array}{l}
\semd{e_1}\gamma\phi=\return(\monadic{v}_1,\gamma_1,\omicron_1)\mbox{,}\\
\monadic{v}=\branching{\monadic{v}_1&\mbox{if $d_0\subtype d$}\\\top&\mbox{otherwise}}\mbox{,}\quad\gamma'=\gamma_1\mbox{,}\quad\omicron=\omicron_1\mbox{.}
\end{array}
\]
By the induction hypothesis,
\[
\semd{e_1}\gamma\phi'=\return(\monadic{v}_1,\gamma_1,\omicron_1)\mbox{.}
\]
Hence $\semd{\castexpr{e_1}{d_0}}\gamma\phi'=\return(\branching{\monadic{v}_1&\mbox{if $d_0\subtype d$}\\\top&\mbox{otherwise}},\gamma_1,\omicron_1)=\semd{\castexpr{e_1}{d_0}}\gamma\phi$ and the desired claim follows.

\item Let $e=(\assignexpr{e_1}{e_2})$. Let $e_1=\loadexpr{\loadexpr{\loadexpr{x}{y_1}}{y_2}\ldots}{y_n}$. Then
\[
\begin{array}{l}
\semd{y_1}\gamma\phi=\return(\monadic{i}_1,\gamma_1,\omicron_1)\mbox{,}\\
\semd{y_2}\gamma_1\phi=\return(\monadic{i}_2,\gamma_2,\omicron_2)\mbox{,}\\
\dotfill\\
\semd{y_n}\gamma_{n-1}\phi=\return(\monadic{i}_n,\gamma_n,\omicron_n)\mbox{,}\\
\semd{e_2}\gamma_n\phi=\return(\monadic{r},\gamma_{n+1},\omicron_{n+1})\mbox{,}\\
\monadic{v}=\return\singleton\mbox{,}\quad\gamma'=[x\mapsto\update(\lookup{\gamma}{x},\monadic{i}_1\ldots\monadic{i}_n,\monadic{r})]\gamma_{n+1}\mbox{,}\quad\omicron=\omicron_1\ldots\omicron_{n+1}\mbox{.}
\end{array}
\]
By the induction hypothesis,
\[
\begin{array}{l}
\semd{y_1}\gamma\phi'=\return(\monadic{i}_1,\gamma_1,\omicron_1)\mbox{,}\\
\semd{y_2}\gamma_1\phi'=\return(\monadic{i}_2,\gamma_2,\omicron_2)\mbox{,}\\
\dotfill\\
\semd{y_n}\gamma_{n-1}\phi'=\return(\monadic{i}_n,\gamma_n,\omicron_n)\mbox{,}\\
\semd{e_2}\gamma_n\phi'=\return(\monadic{r},\gamma_{n+1},\omicron_{n+1})\mbox{.}
\end{array}
\]
Hence
\[
\semd{\assignexpr{e_1}{e_2}}\gamma\phi'=\return(\return\singleton,[x\mapsto\update(\lookup{\gamma}{x},\monadic{i}_1\ldots\monadic{i}_n,\monadic{v})]\gamma_{n+1},\omicron_1\ldots\omicron_{n+1})=\semd{\assignexpr{e_1}{e_2}}\gamma\phi
\]
and the desired claim follows.

\item Let $e=\loadexpr{e_1}{e_2}$. Then
\[
\begin{array}{l}
\semd{e_1}\gamma\phi=\return(\monadic{a},\gamma_1,\omicron_1)\mbox{,}\\
\semd{e_2}\gamma_1\phi=\return(\monadic{i},\gamma_2,\omicron_2)\mbox{,}\\
\monadic{v}=\mcomp{a\gets\monadic{a}\hstop i\gets\monadic{i}\hstop a_i}\mbox{,}\quad\gamma'=\gamma_2\mbox{,}\quad\omicron=\omicron_1\omicron_2\mbox{.}
\end{array}
\]
By the induction hypothesis,
\[
\begin{array}{l}
\semd{e_1}\gamma\phi'=\return(\monadic{a},\gamma_1,\omicron_1)\mbox{,}\\
\semd{e_2}\gamma_1\phi'=\return(\monadic{i},\gamma_2,\omicron_2)\mbox{.}\\
\end{array}
\]
Hence $\semd{\loadexpr{e_1}{e_2}}\gamma\phi'=\return(\mcomp{a\gets\monadic{a}\hstop i\gets\monadic{i}\hstop a_i},\gamma_2,\omicron_1\omicron_2)=\semd{\loadexpr{e_1}{e_2}}$ and the desired claim follows.

\item Let $e=\stmtcomp{\letexpr{x}{e_1}}{e_2}$. Then
\[
\begin{array}{l}
\semd{e_1}\gamma\phi=\return(\monadic{v}_1,\gamma_1,\omicron_1)\mbox{,}\\
\semd{e_2}((x,\monadic{v}_1),\gamma_1)\phi=\return(\monadic{v}_2,\gamma_2,\omicron_2)\mbox{,}\\
\monadic{v}=\monadic{v}_2\mbox{,}\quad\gamma'=\tail\gamma_2\mbox{,}\quad\omicron=\omicron_1\omicron_2\mbox{.}
\end{array}
\]
By the induction hypothesis,
\[
\begin{array}{l}
\semd{e_1}\gamma\phi'=\return(\monadic{v}_1,\gamma_1,\omicron_1)\mbox{,}\\
\semd{e_2}((x,\monadic{v}_1),\gamma_1)\phi'=\return(\monadic{v}_2,\gamma_2,\omicron_2)\mbox{.}
\end{array}
\]
Hence $\semd{\stmtcomp{\letexpr{x}{e_1}}{e_2}}\gamma\phi'=\return(\monadic{v}_2,\tail\gamma_2,\omicron_1\omicron_2)=\semd{\stmtcomp{\letexpr{x}{e_1}}{e_2}}\gamma\phi$ and the desired claim follows.

\item Let $e=\stmtcomp{e_1}{e_2}$. Then
\[
\begin{array}{l}
\semd{e_1}\gamma\phi=\return(\monadic{v}_1,\gamma_1,\omicron_1)\mbox{,}\\
\semd{e_2}\gamma_1\phi=\return(\monadic{v}_2,\gamma_2,\omicron_2)\mbox{,}\\
\monadic{v}=\monadic{v}_2\mbox{,}\quad\gamma'=\gamma_2\mbox{,}\quad\omicron=\omicron_1\omicron_2\mbox{.}
\end{array}
\]
By the induction hypothesis,
\[
\begin{array}{l}
\semd{e_1}\gamma\phi'=\return(\monadic{v}_1,\gamma_1,\omicron_1)\mbox{,}\\
\semd{e_2}\gamma_1\phi'=\return(\monadic{v}_2,\gamma_2,\omicron_2)\mbox{.}
\end{array}
\]
Hence $\semd{\stmtcomp{e_1}{e_2}}\gamma\phi'=\return(\monadic{v}_2,\gamma_2,\omicron_1\omicron_2)=\semd{\stmtcomp{e_1}{e_2}}\gamma\phi$ and the desired claim follows.

\end{itemize}

\end{proof}

\begin{theorem}[Theorem~\ref{semantics:soundness}]
Let $\Gamma\vd e:q\eff D$ with well-structured $\Gamma$. Let $d,d'$ be domains such that $d'\subtype d$. Let $\gamma_d,\gamma_{d'}\in\mathbf{Env}$ be $\Gamma$-exact in~$d$ and $d'$, respectively, and let $\gamma_d\coin{\Gamma}{d'}\gamma_{d'}$. Assume that for all subexpressions of~$e$ of the form $\ofty{\getexpr{d''}{k}}{q'}$, the value $\allpure(\phi_{d''}(k))$ is $q'$-exact in~$d''$. Assume that there exist $\monadic{v}_d,\gamma'_d,\omicron$ such that $\semd{e}\gamma_d\phi=\return(\monadic{v}_d,\gamma'_d,\omicron)$. Then there exist $\monadic{v}_{d'},\gamma'_{d'},\omicron'$ such that $\sem{e}_{d'}\gamma_{d'}\phi=\return(\monadic{v}_{d'},\gamma'_{d'},\omicron')$, whereby $\monadic{v}_d\coin{q}{d'}\monadic{v}_{d'}$, $\gamma_d\coin{\Gamma}{d'}\gamma_{d'}$ and $\omicron\coinpure{d'}\omicron'$.
\end{theorem}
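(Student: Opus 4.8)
The plan is to prove the statement by structural induction on~$e$, carrying along at every step the exactness invariants supplied by Theorem~\ref{semantics:localthm}: whenever a subexpression has been evaluated in both domains, its resulting environment is again $\Gamma$-exact in the corresponding domain, so that both the induction hypothesis and the auxiliary lemmas (which are stated for exposed/exact arguments) apply to the next subexpression. For the leaf cases ($\epsilon$, literals, variables, and the input reads $\getexpr{d''}{k}$) the two executions produce the same environment and empty output; coincidence of the returned value follows from reflexivity (Lemma~\ref{semantics:coinlemma}), from Lemma~\ref{semantics:lookupcoinlemma} for a variable (instantiated with ``-in~$d$'' and ``-in~$d'$'', legitimate since $d'\subtype d$), or, for a read, from the fact that $\allpure(\phi_{d''}(k))$ is fully pure and the hypothesis makes it exact. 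The recurring phenomenon is that, because $\gamma_{d'}$ is $\Gamma$-exact in~$d'$ and $d'\subtype d$, any value whose domain does not flow down to~$d'$ is forced to~$\top$ in the lower execution, whereupon $q$-coincidence in~$d'$ holds vacuously by clause~3 of Definition~\ref{semantics:coincidentdef}; this simultaneously makes the lower execution succeed and agree with the upper one.

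For the straightline cases (addition and the other built-in operators, sequencing, \lstinline+let+, \lstinline+wire+ and \lstinline+assert+) I would evaluate the first subexpression in both domains, invoke the induction hypothesis to obtain coincident intermediate values, environments and output streams, re-establish exactness via Theorem~\ref{semantics:localthm}, and repeat; the pieces recombine using the equivalence property (Lemma~\ref{semantics:coinlemma}) and Lemma~\ref{semantics:outputcoinlemma} for outputs. The \lstinline+assert+ case is the first place where the $\top$ mechanism is essential: if the asserted Boolean lives in a domain not visible at~$d'$ its lower value is~$\top$, the guard $\guard(\cdot\ne\return\fls)$ passes vacuously and the lower execution succeeds regardless of the actual truth value, while if it is visible, coincidence forces the two Booleans to be equal so success transfers. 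For \lstinline+wire+ one additionally checks that appending coincident primitive values to the $d_0$-component of the stream preserves $\coinpure{d'}$. For the domain cast, the subcase where the target domain is visible at~$d'$ is closed by Lemma~\ref{semantics:castcoinlemma}, which is exactly tailored to widen coincidence along $d_1\subtype d_0$.

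The main obstacle is the conditional and the loop, because of the asymmetric subcase in which the guard (resp. the loop bounds) has domain $d_1$ with $d'\subtype d_1$ strict but $d_1\subtype d$: then the upper execution actually takes a branch (resp. runs the body several times) with genuine effects and outputs, whereas in the lower execution the guard/bound is~$\top$ and the whole construct short-circuits to~$\top$. To reconcile the two I would use the side conditions of the typing rule. The no-read-up condition $\gen{d_1}\supseteq\gen{d_0}$ gives $d_1\subtype d_0$, hence $d'\subtype d_1\subtype d_0$ strict, so the result type is invisible at~$d'$ and value coincidence is vacuous. The no-write-down condition $\gen{d_1}\supseteq D_2\cup D_3$ (resp. $\gen{d_1}\supseteq D_3$) places every branch/body effect inside $\gen{d_1}$; since $d'\notin\gen{d_1}$, Theorem~\ref{semantics:effthm} shows the taken branch leaves all $d'$-visible variables unchanged in the upper domain, so by transitivity (Lemma~\ref{semantics:coinlemma}) the upper result environment still coincides in~$d'$ with the lower one. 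Finally $d_1\ne\publicdom$ gives $\publicdom\notin\gen{d_1}$, so Theorem~\ref{semantics:outputthm} forces the branch/body output to be~$\epsilon$ and the streams coincide. In the complementary subcase where the guard/bounds are visible at~$d'$, both executions take the same branch / perform the same number of iterations, and I would apply the induction hypothesis branch-by-branch — for the loop via an inner induction on the iteration count, using Theorem~\ref{semantics:localthm} to keep each intermediate environment exact after the loop variable is updated.

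The remaining compound cases are list access and assignment. For $\loadexpr{e_1}{e_2}$, if the list domain $d_1$ is visible at~$d'$ then list-coincidence forces equal lengths, coincident elements and equal indices, so the upper in-bounds access transfers to the lower one and selects coincident elements; otherwise both sides return~$\top$ and, since well-structuredness (Theorem~\ref{typesystem:thm}) places the element domain above $d_1$, coincidence is vacuous. For assignment I would first decompose the L-value type with Lemma~\ref{typesystem:lhslemma}, evaluate the index vector and right-hand side in both domains by the induction hypothesis, and then close the case with Lemma~\ref{semantics:updexactcoinlemma}, whose hypotheses (with $P$ as ``-in~$d$'' and $Q$ as ``-in~$d'$'', using $Q\Rightarrow P$ because $d'\subtype d$) deliver both that the lower $\update$ is well-defined whenever the upper one is and that the two updated cells coincide in~$d'$; the returned value is unit and coincides trivially, and the outputs recombine as before. (The environment-coincidence conjunct in the conclusion is to be understood for the \emph{resulting} environments $\gamma'_d$ and $\gamma'_{d'}$, which is what the induction produces.)
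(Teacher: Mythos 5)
Your proposal is correct and follows essentially the same route as the paper's own proof: structural induction on $e$, with the asymmetric guard/bound subcase of conditionals and loops closed exactly as in the paper via the typing side conditions together with Theorems~\ref{semantics:effthm} and~\ref{semantics:outputthm}, and with Lemmas~\ref{semantics:castcoinlemma}, \ref{typesystem:lhslemma} and~\ref{semantics:updexactcoinlemma} deployed in the same places. Your closing remark is also right: the conjunct $\gamma_d\coin{\Gamma}{d'}\gamma_{d'}$ in the stated conclusion is a typo for $\gamma'_d\coin{\Gamma}{d'}\gamma'_{d'}$, which is what the induction actually establishes.
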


\begin{proof}
Let $q=\qualty{t_0}{s_0}{d_0}$. We proceed by induction on the structure of~$e$:

\begin{itemize}
\item Let $e=\epsilon$. Then $\monadic{v}_d=\return\singleton$, $\gamma'_d=\gamma_d$, $\omicron=\epsilon$ and $\sem{\epsilon}_{d'}\gamma_{d'}\phi=\return(\return\singleton,\gamma_{d'},\epsilon)$. The desired claim follows since $\return\singleton\coin{q}{d'}\return\singleton$ and, by assumption, $\gamma_d\coin{\Gamma}{d'}\gamma_{d'}$.

\item Let $e=\overline{n}$ where $n\in\NN$. Then $\monadic{v}_d=\branching{\return n&\mbox{if $d_0\subtype d$}\\\top&\mbox{otherwise}}$, $\gamma'_d=\gamma_d$, $\omicron=\epsilon$. On the other hand, $\sem{\overline{n}}_{d'}\gamma_{d'}\phi=\return(\branching{\return n&\mbox{if $d_0\subtype d'$}\\\top&\mbox{otherwise}},\gamma_{d'},\epsilon)$. Consider two cases:
\begin{itemize}
\item If $d_0\subtype d'$ then also $d_0\subtype d$, whence $\monadic{v}_d=\return n=\monadic{v}_{d'}$. Consequently, $\monadic{v}_d\coin{q}{d'}\monadic{v}'$.
\item If $d_0$ is a strict superdomain of~$d'$ then $\monadic{v}_d\coin{q}{d'}\monadic{v}'$ vacuously.
\end{itemize}
Moreover, $\gamma_{d}\coin{\Gamma}{d'}\gamma_{d'}$ by assumption. The desired claim follows.

\item The case $e=\overline{b}$ with $b\in\BB$ is similar to the previous case.

\item If $e=x$ then $\monadic{v}_d=\lookup{\gamma_d}{x}$, $\gamma'_d=\gamma_d$, $\omicron=\epsilon$. On the other hand, $\sem{x}_{d'}\gamma_{d'}\phi=\return(\lookup{\gamma_{d'}}{x},\gamma_{d'},\epsilon)$. The desired claim follows since, by assumption, $\gamma_d\coin{\Gamma}{d'}\gamma_{d'}$ which also implies $\lookup{\gamma_d}{x}\coin{q}{d'}\lookup{\gamma_{d'}}{x}$ by Lemma~\ref{semantics:lookupcoinlemma}.

\item Let $e=\addexpr{e_1}{e_2}$. Then
\[
\begin{array}{l}
\semd{e_1}\gamma_d\phi=\return(\monadic{v}_d^1,\gamma_d^1,\omicron_1)\mbox{,}\\
\semd{e_2}\gamma_d^1\phi=\return(\monadic{v}_d^2,\gamma_d^2,\omicron_2)\mbox{,}\\
\monadic{v}_d=\mcomp{v_d^1\gets\monadic{v}_d^1\hstop v_d^2\gets\monadic{v}_d^2\hstop\return(v_d^1+v_d^2)}\mbox{,}\quad\gamma'_d=\gamma_d^2\mbox{,}\quad\omicron=\omicron_1\omicron_2\mbox{.}
\end{array}
\]
By the induction hypothesis about $e_1$, $\sem{e_1}_{d'}\gamma_{d'}\phi=\return(\monadic{v}_{d'}^1,\gamma_{d'}^1,\omicron'_1)$ where $\monadic{v}_d^1\coin{q}{d'}\monadic{v}_{d'}^1$, $\gamma_d^1\coin{\Gamma}{d'}\gamma_{d'}^1$ and $\omicron_1\coinpure{d'}\omicron'_1$. Hence by the induction hypothesis about $e_2$, $\sem{e_2}_{d'}\gamma_{d'}^1\phi=\return(\monadic{v}_{d'}^2,\gamma_{d'}^2,\omicron'_2)$ where $\monadic{v}_d^2\coin{q}{d'}\monadic{v}_{d'}^2$, $\gamma_d^2\coin{\Gamma}{d'}\gamma_{d'}^2$ and $\omicron_2\coinpure{d'}\omicron'_2$. Thus
\[
\begin{array}{l}
\sem{\addexpr{e_1}{e_2}}_{d'}\gamma_{d'}\phi=\return(\monadic{v}_{d'},\gamma'_{d'},\omicron')\mbox{,}\\
\monadic{v}_{d'}=\mcomp{v_{d'}^1\gets\monadic{v}_{d'}^1\hstop v_{d'}^2\gets\monadic{v}_{d'}^2\hstop\return(v_{d'}^1+v_{d'}^2)}\mbox{,}\quad\gamma'_{d'}=\gamma_{d'}^2\mbox{,}\quad\omicron'=\omicron'_1\omicron'_2\mbox{.}
\end{array}
\]
Consider two cases:
\begin{itemize}
\item If $d_0\subtype d'$ then also $d_0\subtype d$ and we have $\monadic{v}_d^1=\return v_d^1$, $\monadic{v}_2=\return v_d^2$, $\monadic{v}_{d'}^1=\return v_{d'}^1$, $\monadic{v}_{d'}^2=\return v_{d'}^2$ by exactness. This implies $\monadic{v}_d=\return(v_d^1+v_d^2)$ and $\monadic{v}_{d'}=\return(v_{d'}^1+v_{d'}^2)$. By $\monadic{v}_d^1\coin{q}{d'}\monadic{v}_{d'}^1$ and $\monadic{v}_d^2\coin{q}{d'}\monadic{v}_{d'}^2$, we have $v_d^1=v_{d'}^1$ and $v_d^2=v_{d'}^2$, whence $\monadic{v}_d=\monadic{v}_{d'}$. Consequently, $\monadic{v}_d\coin{q}{d'}\monadic{v}_{d'}$.
\item If $d_0$ is a strict supertype of~$d'$ then $\monadic{v}_d\coin{q}{d'}\monadic{v}_{d'}$ vacuously.
\end{itemize}
Along with $\gamma_d^2\coin{\Gamma}{d'}\gamma_{d'}^2$ and $\omicron_1\omicron_2\coinpure{d'}\omicron'_1\omicron'_2$, this implies the desired claim.

\item Let $e=\assertexpr{e_1}$. Then
\[
\begin{array}{l}
\Gamma\vd e_1:q'\mbox{,}\\
q'=\qualty{\boolmodty{\mbox{\lstinline+N+}}}{\poststg}{d_1}
\end{array}
\]
and
\[
\begin{array}{l}
\semd{e_1}\gamma_d\phi=\return(\monadic{v}_d^1,\gamma_d^1,\omicron_1)\mbox{,}\\
\monadic{v}_d^1\ne\return\fls\mbox{,}\quad\monadic{v}_d=\return\singleton\mbox{,}\quad\gamma'_d=\gamma_d^1\mbox{,}\quad\omicron=\omicron_1\mbox{.}
\end{array}
\]
By the induction hypothesis about~$e_1$, $\sem{e_1}_{d'}\gamma_{d'}\phi=\return(\monadic{v}_{d'}^1,\gamma_{d'}^1,\omicron'_1)$ where $\monadic{v}_d^1\coin{q'}{d'}\monadic{v}_{d'}^1$, $\gamma_d^1\coin{\Gamma}{d'}\gamma_{d'}^1$ and $\omicron_1\coinpure{d'}\omicron'_1$. If $d_1\subtype d'$ then $\monadic{v}_d^1\coin{q'}{d'}\monadic{v}_{d'}^1$ implies $\monadic{v}_d^1=\return\tru=\monadic{v}_{d'}^1$, establishing $\monadic{v}_{d'}^1\ne\return\fls$. If $d_1$ is a strict superdomain of $d'$ then, by exactness, $\monadic{v}_{d'}^1=\top\ne\return\fls$. Therefore,
\[
\begin{array}{l}
\sem{\assertexpr{e_1}}_{d'}\gamma_{d'}\phi=\return(\monadic{v}_{d'},\gamma'_{d'},\omicron')\mbox{,}\\
\monadic{v}_{d'}=\return\singleton\mbox{,}\quad\gamma'_{d'}=\gamma_{d'}^1\mbox{,}\quad\omicron'=\omicron'_1\mbox{.}
\end{array}
\]
The desired claim follows since $\return\singleton\coin{q}{d'}\return\singleton$, $\gamma_d^1\coin{\Gamma}{d'}\gamma_{d'}^1$ and $\omicron_1\coinpure{d'}\omicron'_1$.

\item Let $e=\ofty{\getexpr{d''}{e_1}}{q}$. Then $\monadic{v}_d=\branching{\allpure(\phi_{d''}(k))&\mbox{if $d''\subtype d$}\\\top&\mbox{otherwise}}$, $\gamma'_d=\gamma_d$ and $\omicron=\epsilon$. On the other hand, $\sem{\getexpr{d''}{e_1}}_{d'}\gamma_{d'}\phi=\return(\monadic{v}_{d'},\gamma'_{d'},\omicron')$ where $\monadic{v}_{d'}=\branching{\allpure(\phi_{d''}(k))&\mbox{if $d''\subtype d'$}\\\top&\mbox{otherwise}}$, $\gamma'_{d'}=\gamma_{d'}$, $\omicron'=\epsilon$. Consider three cases:
\begin{itemize}
\item If $d''\subtype d'$ then $d''\subtype d$, whence $\monadic{v}_d=\allpure(\phi_{d''}(k))=\monadic{v}_{d'}$ and therefore $\monadic{v}_d\coin{q}{d'}\monadic{v}_{d'}$;
\item If $d''$ is a strict superdomain of~$d'$ then $\monadic{v}_d\coin{q}{d'}\monadic{v}_{d''}$ vacuously.
\end{itemize}
Moreover, $\gamma_d\coin{\Gamma}{d'}\gamma_{d'}$ by assumption. The desired claim follows.

\item Let $e=\ifexpr{e_1}{e_2}{e_3}$. Then
\[
\begin{array}{l}
\Gamma\vd e_1:q'\eff D_1\mbox{,}\\
\Gamma\vd e_2:q\eff D_2\mbox{,}\\
\Gamma\vd e_3:q\eff D_3\mbox{,}\\
q'=\qualty{\boolmodty{\mbox{\lstinline+N+}}}{\prestg}{d_1}\mbox{,}\quad
\gen{d_1}\supseteq\gen{s_0}\cup\gen{d_0}\cup D_2\cup D_3\mbox{,}
\end{array}
\]
and
\[
\semd{e_1}\gamma_d\phi=\return(\monadic{v}_d^1,\gamma_d^1,\omicron_1)\mbox{.}
\]
By the induction hypothesis, $\sem{e_1}_{d'}\gamma_{d'}\phi=\return(\monadic{v}_{d'}^1,\gamma_{d'}^1,\omicron'_1)$ where $\monadic{v}_d^1\coin{q'}{d'}\monadic{v}_{d'}^1$, $\gamma_d^1\coin{\Gamma}{d'}\gamma_{d'}^1$ and $\omicron_1\coinpure{d'}\omicron'_1$. Consider three cases:
\begin{itemize}
\item If $\monadic{v}_d^1=\return\tru$ then $\semd{e_2}\gamma_d^1\phi=\return(\monadic{v}_d^2,\gamma_d^2,\omicron_2)$ and $\monadic{v}_d=\monadic{v}_d^2$, $\gamma'_d=\gamma_d^2$, $\omicron=\omicron_1\omicron_2$. By the induction hypothesis, $\sem{e_2}_{d'}\gamma_{d'}^1\phi=\return(\monadic{v}_{d'}^2,\gamma_{d'}^2,\omicron'_2)$ where $\monadic{v}_d^2\coin{q}{d'}\monadic{v}_{d'}^2$, $\gamma_d^2\coin{\Gamma}{d'}\gamma_{d'}^2$, $\omicron_2\coinpure{d'}\omicron'_2$. 

If $d_1\subtype d'$ then $\monadic{v}_{d'}^1=\return\tru$, whence $\sem{\ifexpr{e_1}{e_2}{e_3}}_{d'}\gamma_{d'}\phi=\return(\monadic{v}_{d'},\gamma'_{d'},\omicron')$ where $\monadic{v}_{d'}=\monadic{v}_{d'}^2$, $\gamma'_{d'}=\gamma_{d'}^2$, $\omicron'=\omicron'_1\omicron'_2$. The desired claim follows.

If $d_1$ is a strict supertype of~$d'$ then, by exactness, $\monadic{v}_{d'}^1=\top$. Therefore we obtain $\sem{\ifexpr{e_1}{e_2}{e_3}}_{d'}\gamma_{d'}\phi=\return(\monadic{v}_{d'},\gamma'_{d'},\omicron')$ where $\monadic{v}_{d'}=\top$, $\gamma'_{d'}=\gamma_{d'}^1$, $\omicron'=\omicron'_1$. As $d_1\subtype d_0$, also $d_0$ is a strict supertype of~$d'$, vacuously implying $\monadic{v}_d\coin{q}{d'}\monadic{v}_{d'}$. Finally, note that $d'\in D_2$ would imply $d_1\subtype d'$ and introduce contradiction, thus $d'\notin D_2$. By Theorem~\ref{semantics:effthm}, $\gamma_d^1\coin{\Gamma}{d'}\gamma_d^2$. As also $\gamma_d^1\coin{\Gamma}{d'}\gamma_{d'}^1$, Lemma~\ref{semantics:coinlemma} establishes $\gamma_d^2\coin{\Gamma}{d'}\gamma_{d'}^1$. By $D_2$ being upward closed, we also have $\publicdom\notin D_2$, whence Theorem~\ref{semantics:outputthm} gives $\omicron_2=\epsilon$. Thus $\omicron=\omicron_1$. The desired result follows.

\item The case $\monadic{v}_d^1=\return\fls$ is similar to the previous case.

\item If $\monadic{v}_d^1=\top$ then $\monadic{v}_d=\top$, $\gamma'_d=\gamma_d^1$, $\omicron=\omicron_1$. By exactness, $d_1$ is a strict supertype of~$d$. Hence $d_1$ is a strict supertype of~$d'$, implying $\monadic{v}_{d'}^1=\top$. We get $\sem{\ifexpr{e_1}{e_2}{e_3}}_{d'}\gamma_{d'}\phi=\return(\monadic{v}_{d'},\gamma'_{d'},\omicron')$ where $\monadic{v}_{d'}=\top$, $\gamma'_{d'}=\gamma_{d'}^1$, $\omicron'=\omicron'_1$. The desired claim follows.

\end{itemize}

\item Let $e=\forexpr{x}{e_1}{e_2}{e_3}$. Then
\[
\begin{array}{l}
\Gamma\vd e_1:q'\eff D_1\mbox{,}\\
\Gamma\vd e_2:q'\eff D_2\mbox{,}\\
(x:q'),\Gamma\vd e_3:q''\eff D_3\mbox{,}\\
q'=\qualty{\uintty}{\prestg}{d_0}\mbox{,}\quad q''=\qualty{t_1}{s_1}{d_1}\mbox{,}\quad t_0=\listty{q''}\mbox{,}\quad s_0=\prestg\mbox{,}\\
\gen{d_0}\supseteq\gen{s_1}\cup\gen{d_1}\cup D_3\mbox{,}
\end{array}
\]
and
\[
\begin{array}{l}
\semd{e_1}\gamma_d\phi=\return(\monadic{v}_d^1,\gamma_d^1,\omicron_1)\mbox{,}\\
\semd{e_2}\gamma_d^1\phi=\return(\monadic{v}_d^2,\gamma_d^2,\omicron_2)\mbox{.}
\end{array}
\]
By the induction hypothesis, $\sem{e_1}_{d'}\gamma_{d'}\phi=\return(\monadic{v}_{d'}^1,\gamma_{d'}^1,\omicron'_1)$ where $\monadic{v}_d^1\coin{q'}{d'}\monadic{v}_{d'}^1$, $\gamma_d^1\coin{\Gamma}{d'}\gamma_{d'}^1$, $\omicron_1\coinpure{d'}\omicron'_1$. Now the induction hypothesis gives $\sem{e_2}_{d'}\gamma_{d'}^1\phi=\return(\monadic{v}_{d'}^2,\gamma_{d'}^2,\omicron'_2)$ where $\monadic{v}_d^2\coin{q'}{d'}\monadic{v}_{d'}^2$, $\gamma_d^2\coin{\Gamma}{d'}\gamma_{d'}^2$, $\omicron_2\coinpure{d'}\omicron'_2$. Consider two cases:

\begin{itemize}
\item If $\monadic{v}_d^1=\return i_1$, $\monadic{v}_d^2=\return i_2$ then, denoting $n=\max(0,i_2-i_1)$, we have
\[
\begin{array}{l}
\semd{e_3}((x,\return i_1),\gamma_d^2)\phi=\return(\monadic{v}_d^3,\gamma_d^3,\omicron_3)\mbox{,}\\
\semd{e_3}([x\mapsto\return(i_1+1)]\gamma_d^3)\phi=\return(\monadic{v}_d^4,\gamma_d^4,\omicron_4)\mbox{,}\\
\dotfill\\
\semd{e_3}([x\mapsto\return(i_1+n-1)]\gamma_d^{n+1})\phi=\return(\monadic{v}_d^{n+2},\gamma_d^{n+2},\omicron_{n+2})\mbox{,}\\
\monadic{v}_d=\return(\monadic{v}_d^3,\ldots,\monadic{v}_d^{n+2})\mbox{,}\quad\gamma'_d=\tail\gamma_d^{n+2}\mbox{,}\quad\omicron=\omicron_1\ldots\omicron_{n+2}\mbox{.}
\end{array}
\]
As $\return i_1\coin{q'}{d'}\return i_1$ and $\gamma_d^2\coin{\Gamma}{d'}\gamma_{d'}^2$, we have $(x,\return i_1),\gamma_d^2\coin{\Gamma'}{d'}\left((x,\return i_1),\gamma_{d'}^2\right)$ where $\Gamma'=((x:q'),\Gamma)$. By the induction hypothesis, $\sem{e_3}_{d'}((x,\return i_1),\gamma_{d'}^2)\phi=\return(\monadic{v}_{d'}^3,\gamma_{d'}^3,\omicron'_3)$ where $\monadic{v}_d^3\coin{q''}{d'}\monadic{v}_{d'}^3$, $\gamma_d^3\coin{\Gamma'}{d'}\gamma_{d'}^3$ and $\omicron_3\coinpure{d'}\omicron'_3$. Replacing $i_1$ with $i_1+k$ does not violate the necessary properties, so we similarly get $\sem{e_3}_{d'}([x\mapsto\return(i_1+k-3)]\gamma_{d'}^{k-1})\phi=\return(\monadic{v}_{d'}^k,\gamma_{d'}^k,\omicron'_k)$ where $\monadic{v}_d^k\coin{q''}{d'}\monadic{v}_{d'}^k$, $\gamma_d^k\coin{\Gamma'}{d'}\gamma_{d'}^k$ and $\omicron_k\coinpure{d'}\omicron'_k$, for every $k=4,\ldots,n+2$.

If $d_0\subtype d'$ then the above implies $\monadic{v}_{d'}^1=\return i_1$ and $\monadic{v}_{d'}^2=\return i_2$, whence we obtain $\sem{\forexpr{x}{e_1}{e_2}{e_3}}_{d'}\gamma_{d'}\phi=\return(\monadic{v}_{d'},\gamma'_{d'},\omicron')$ where $\monadic{v}_{d'}=\return(\monadic{v}_{d'}^3,\ldots,\monadic{v}_{d'}^{n+2})$, $\gamma'_{d'}=\gamma_{d'}^{n+2}$ and $\omicron'=\omicron'_1\ldots\omicron'_{n+2}$. Thus $\monadic{v}_d\coin{q_0}{d'}\monadic{v}_{d'}$, $\gamma'_d\coin{\Gamma}{d'}\gamma'_{d'}$, $\omicron\coinpure{d'}\omicron'$ and the desired claim follows.

If $d_0$ is a strict superdomain of~$d'$ then $\monadic{v}_{d'}^1=\monadic{v}_{d'}^2=\top$ and $\sem{\forexpr{x}{e_1}{e_2}{e_3}}_{d'}\gamma_{d'}\phi=\return(\monadic{v}_{d'},\gamma'_{d'},\omicron')$ where $\monadic{v}_{d'}=\top$, $\gamma'_{d'}=\gamma_{d'}^2$, $\omicron'=\omicron'_1\omicron'_2$. Then $\monadic{v}_d^{n+2}\coin{q_0}{d'}\top$ vacuously. Assuming $d'\in D_3$ would give $d_0\subtype d'$ by $\gen{d_0}\supseteq D_3$, hence $d'\notin D_3$. So Theorem~\ref{semantics:effthm} implies
\[
\begin{array}{l}
\left((x,\return i_1),\gamma_d^2\right)\coin{\Gamma'}{d'}\gamma_d^3\mbox{,}\\{}
[x\mapsto\return(i_1+1)]\gamma_d^3\coin{\Gamma'}{d'}\gamma_d^4\mbox{,}\\
\dotfill\\{}
[x\mapsto\return(i_1+n-1)]\gamma_d^{n+1}\coin{\Gamma'}{d'}\gamma_d^{n+2}\mbox{.}
\end{array}
\]
We also have $\gamma_d^k\coin{\Gamma'}{d'}[x\mapsto\return(i_1+k-2)]\gamma_d^k$ for every $k=3,\ldots,n+1$ since monadic values of the additional variable~$x$ are coincident vacuously. Lemma~\ref{semantics:coinlemma} now implies $\left((x,\return i_1),\gamma_d^2\right)\coin{\Gamma'}{d'}\gamma_d^{n+2}$. Hence also $\gamma_d^2\coin{\Gamma}{d'}\tail\gamma_d^{n+2}$. As we also have $\gamma_d^2\coin{\Gamma}{d'}\gamma_{d'}^2$, Lemma~\ref{semantics:coinlemma} gives $\gamma_d^{n+2}\coin{\Gamma}{d'}\gamma_{d'}^2$. Furthermore, as $d'\notin D_3$ implies $\publicdom\notin D_3$ by $D_3$ being upward closed, Theorem~\ref{semantics:outputthm} gives $\omicron_3=\ldots=\omicron_{n+2}=\epsilon$. Hence $\omicron=\omicron_1\omicron_2$ and the desired result follows.

\item If $\monadic{v}_d^1=\top$ or $\monadic{v}_d^2=\top$ then
\[
\monadic{v}_d=\top\mbox{,}\quad\gamma'_d=\gamma_d^2\mbox{,}\quad\omicron=\omicron_1\omicron_2\mbox{.}
\]
By exactness, $d_0$ is a strict superdomain of~$d$, implying that $d_0$ is also a strict superdomain of~$d'$. Hence, by exactness, $\monadic{v}_{d'}^1=\top$ or $\monadic{v}_{d'}^2=\top$, implying $\sem{\forexpr{x}{e_1}{e_2}{e_3}}_{d'}\gamma_{d'}\phi=\return(\monadic{v}_{d'},\gamma'_{d'},\omicron')$ where $\monadic{v}_{d'}=\top$, $\gamma'_{d'}=\gamma_{d'}^2$, $\omicron'=\omicron'_1\omicron'_2$. The desired claim follows.

\end{itemize}

\item Let $e=\wireexpr{e_1}$. Then
\[
\begin{array}{l}
\Gamma\vd e_1:q'\mbox{,}\\
s_0=\poststg\mbox{,}\quad q'=\qualty{t_0}{\prestg}{d_0}\mbox{,}
\end{array}
\]
where $t_0$ is $\uintmodty{\mbox{\lstinline+N+}}$ or $\boolmodty{\mbox{\lstinline+N+}}$, and
\[
\begin{array}{l}
\semd{e_1}\gamma_d\phi=\return(\monadic{v}_d^1,\gamma_d^1,\omicron_1)\mbox{,}\\
\monadic{v}_d=\monadic{v}_d^1\mbox{,}\quad\gamma'_d=\gamma_d^1\mbox{,}\quad\omicron=\lam{d''}{\branching{(\omicron_1)_{d''}\monadic{v}_d^1&\mbox{if $d''=d_0$}\\(\omicron_1)_{d''}&\mbox{otherwise}}}\mbox{.}
\end{array}
\]
By the induction hypothesis about~$e_1$, $\sem{e_1}_{d'}\gamma_{d'}\phi=\return(\monadic{v}_{d'}^1,\gamma_{d'}^1,\omicron'_1)$ where $\monadic{v}_d^1\coin{q'}{d'}\monadic{v}_{d'}^1$, $\gamma_d^1\coin{\Gamma}{d'}\gamma_{d'}^1$ and $\omicron_1\coinpure{d'}\omicron'_1$. Hence
\[
\begin{array}{l}
\sem{\wireexpr{e_1}}_{d'}\gamma_{d'}\phi=\return(\monadic{v}_{d'},\gamma'_{d'},\omicron')\mbox{,}\\
\monadic{v}_{d'}=\monadic{v}_{d'}^1\mbox{,}\quad\gamma'_{d'}=\gamma_{d'}^1\mbox{,}\quad\omicron'=\lam{d''}{\branching{(\omicron'_1)_{d''}\monadic{v}_{d'}^1&\mbox{if $d''=d_0$}\\(\omicron'_1)_{d''}&\mbox{otherwise}}}\mbox{.}
\end{array}
\]
The claim $\monadic{v}_d\coin{q}{d'}\monadic{v}_{d'}$ follows since relations $\coin{q}{d'}$ and $\coin{q'}{d'}$ are equal. The claim $\gamma'_d\coin{\Gamma}{d'}\gamma'_{d'}$ is implied by the above. Finally, $\monadic{v}_d^1\coin{q}{d'}\monadic{v}_{d'}^1$ means that if $d_0\subtype d'$ then $\monadic{v}_d^1=\return v_d^1=\monadic{v}_{d'}^1$ where $v_d^1\in\NN\cup\BB$. Hence if $d_0\subtype d'$ then $\omicron_{d_0}=\omicron'_{d_0}$. If $d_0$~{}is a strict superdomain of~$d'$ then the lengths of $\omicron_{d_0}$ and $\omicron'_{d_0}$ are equal since both are one more than the common length of $(\omicron_1)_{d_0}$ and $(\omicron'_1)_{d_0}$.

\item Let $e=\castexpr{e_1}{d_0}$. Then
\[
\begin{array}{l}
\Gamma\vd e_1:q'\mbox{,}\\
q'=\qualty{t_0}{s_0}{d_1}\mbox{,}\quad d_1\subtype d_0\mbox{,}\quad\gen{d_0}\supseteq\gen{t_0}
\end{array}
\]
and
\[
\begin{array}{l}
\semd{e_1}\gamma_d\phi=\return(\monadic{v}_d^1,\gamma_d^1,\omicron_1)\mbox{,}\\
\monadic{v}_d=\branching{\monadic{v}_d^1&\mbox{if $d_0\subtype d$}\\\top&\mbox{otherwise}}\mbox{,}\quad\gamma'_d=\gamma_d^1\mbox{,}\quad\omicron=\omicron_1\mbox{.}
\end{array}
\]
By the induction hypothesis about~$e_1$, $\sem{e_1}_{d'}\gamma_{d'}\phi=\return(\monadic{v}_{d'}^1,\gamma_{d'}^1,\omicron'_1)$ where $\monadic{v}_d^1\coin{q'}{d'}\monadic{v}_{d'}^1$, $\gamma_d^1\coin{\Gamma}{d'}\gamma_{d'}^1$, $\omicron_1\coinpure{d'}\omicron'_1$. By Lemma~\ref{semantics:castcoinlemma}, the first of these implies $\monadic{v}_d^1\coin{q}{d'}\monadic{v}_{d'}^1$. We obtain $\sem{\castexpr{e_1}{d_0}}_{d'}\gamma_{d'}\phi=\return(\monadic{v}_{d'},\gamma'_{d'},\omicron')$ where $\monadic{v}_{d'}=\branching{\monadic{v}_{d'}^1&\mbox{if $d_0\subtype d'$}\\\top&\mbox{otherwise}}$, $\gamma'_{d'}=\gamma_{d'}^1$, $\omicron'=\omicron'_1$. Consider two cases:
\begin{itemize}
\item If $d_0\subtype d'$ then $d_0\subtype d$, implying $\monadic{v}_d=\monadic{v}_d^1$ and $\monadic{v}_{d'}=\monadic{v}_{d'}^1$. The desired claim follows.
\item If $d_0$ is a strict superdomain of~$d'$ then $\monadic{v}_d\coin{q}{d'}\monadic{v}_{d'}$ holds vacuously. The desired claim follows.
\end{itemize}

\item Let $e=(\assignexpr{e_1}{e_2})$. Then $q=\qualty{\unitty}{\prestg}{\publicdom}$. Let $e_1=\loadexpr{\loadexpr{\loadexpr{x}{y_1}}{y_2}\ldots}{y_n}$. By Lemma~\ref{typesystem:lhslemma},
\[
\begin{array}{l}
\Gamma\vd y_k:q_k\mbox{,}\\
\Gamma\vd x:q'\mbox{,}\\
q_k=\qualty{\uintty}{\prestg}{d_k}\mbox{,}\quad q'=\qualty{\listty{\ldots\listty{\qualty{\listty{\qualty{\listty{q_{n+1}}}{\prestg}{d_n}}}{\prestg}{d_{n-1}}}\ldots}}{\prestg}{d_1}
\end{array}
\]
and
\[
\begin{array}{l}
\Gamma\vd e_1:q_{n+1}\mbox{,}\\
\Gamma\vd e_2:q_{n+1}\mbox{,}\\
q_{n+1}=\qualty{t}{s}{d_{n+1}}\mbox{.}
\end{array}
\]
We also have
\[
\begin{array}{l}
\semd{y_1}\gamma_d\phi=\return(\monadic{i}_d^1,\gamma_d^1,\omicron_1)\mbox{,}\\
\semd{y_2}\gamma_d^1\phi=\return(\monadic{i}_d^2,\gamma_d^2,\omicron_2)\mbox{,}\\
\dotfill\\
\semd{y_n}\gamma_d^{n-1}\phi=\return(\monadic{i}_d^n,\gamma_d^n,\omicron_n)\mbox{,}\\
\semd{e_2}\gamma_d^n\phi=\return(\monadic{r}_d,\gamma_d^{n+1},\omicron_{n+1})\mbox{,}\\
\monadic{v}_d=\return\singleton\mbox{,}\quad\gamma'_d=[x\mapsto\update(\lookup{\gamma_d}{x},\monadic{i}_d^1\ldots\monadic{i}_d^n,\monadic{r}_d)]\gamma_d^{n+1}\mbox{,}\quad\omicron=\omicron_1\ldots\omicron_{n+1}\mbox{.}
\end{array}
\]
Applying repeatedly the induction hypothesis, we get $\sem{y_k}_{d'}\gamma_{d'}^{k-1}\phi=\return(\monadic{i}_{d'}^k,\gamma_{d'}^k,\omicron'_k)$ where $\monadic{i}_d^k\coin{q_k}{d'}\monadic{i}_{d'}^k$, $\gamma_d^k\coin{\Gamma}{d'}\gamma_{d'}^k$ and $\omicron_k\coinpure{d'}\omicron'_k$ for each $k=1,\ldots,n$ (denoting $\gamma_{d'}^0=\gamma_{d'}$). In addition, $\sem{e_2}_{d'}\gamma_{d'}^n\phi=\return(\monadic{r}_{d'},\gamma_{d'}^{n+1},\omicron'_{k+1})$ where $\monadic{r}_d\coin{q_{n+1}}{d'}\monadic{r}_{d'}$, $\gamma_d^{n+1}\coin{\Gamma}{d'}\gamma_{d'}^{n+1}$ and $\omicron_{n+1}\coinpure{d'}\omicron'_{n+1}$. By Lemma~\ref{semantics:updexactcoinlemma}, $\update(\lookup{\gamma_{d'}}{x},\monadic{i}_{d'}^1\ldots\monadic{i}_{d'}^n,\monadic{r}_{d'})$ is well-defined and
\[
\update(\lookup{\gamma_d}{x},\monadic{i}_d^1\ldots\monadic{i}_d^n,\monadic{r}_d)\coin{q'}{d'}\update(\lookup{\gamma_{d'}}{x},\monadic{i}_{d'}^1\ldots\monadic{i}_{d'}^n,\monadic{r}_{d'})\mbox{.}
\]
We conclude $[x\mapsto\update(\lookup{\gamma_d}{x},\monadic{i}_d^1\ldots\monadic{i}_d^n,\monadic{r}_d)]\gamma_d^{n+1}\coin{\Gamma}{d'}[x\mapsto\update(\lookup{\gamma_{d'}}{x},\monadic{i}_{d'}^1\ldots\monadic{i}_{d'}^n,\monadic{r}_{d'})]\gamma_{d'}^{n+1}$ and the desired claim follows.

\item Let $e=\loadexpr{e_1}{e_2}$. Then
\[
\begin{array}{l}
\Gamma\vd e_1:q_1\mbox{,}\\
\Gamma\vd e_2:q_2\mbox{,}\\
q_1=\qualty{\listty{q}}{\prestg}{d_1}\mbox{,}\quad q_2=\qualty{\uintty}{\prestg}{d_1}
\end{array}
\]
and
\[
\begin{array}{l}
\semd{e_1}\gamma_d\phi=\return(\monadic{a}_d,\gamma_d^1,\omicron_1)\mbox{,}\\
\semd{e_2}\gamma_d^1\phi=\return(\monadic{i}_d,\gamma_d^2,\omicron_2)\mbox{,}\\
\monadic{v}_d=\mcomp{a_d\gets\monadic{a}_d\hstop i_d\gets\monadic{i}_d\hstop (a_d)_{i_d}}\mbox{,}\quad\gamma'_d=\gamma_d^2\mbox{,}\quad\omicron=\omicron_1\omicron_2\mbox{.}
\end{array}
\]
By the induction hypothesis about $e_1$, $\sem{e_1}_{d'}\gamma_{d'}\phi=\return(\monadic{a}_{d'},\gamma_{d'}^1,\omicron'_1)$ where $\monadic{a}_d\coin{q_1}{d'}\monadic{a}_{d'}$, $\gamma_d^1\coin{\Gamma}{d'}\gamma_{d'}^1$ and $\omicron_1\coinpure{d'}\omicron'_1$. Hence by the induction hypothesis about $e_2$, $\sem{e_2}_{d'}\gamma_{d'}^1\phi=\return(\monadic{i}_{d'},\gamma_{d'}^2,\omicron'_2)$ where $\monadic{i}_d\coin{q_2}{d'}\monadic{i}_{d'}$, $\gamma_d^2\coin{\Gamma}{d'}\gamma_{d'}^2$ and $\omicron_2\coinpure{d'}\omicron'_2$. Thus
\[
\begin{array}{l}
\sem{\loadexpr{e_1}{e_2}}_{d'}\gamma_{d'}\phi=\return(\monadic{v}_{d'},\gamma'_{d'},\omicron')\mbox{,}\\
\monadic{v}_{d'}=\mcomp{a_{d'}\gets\monadic{a}_{d'}\hstop i_{d'}\gets\monadic{i}_{d'}\hstop(a_{d'})_{i_{d'}}}\mbox{,}\quad\gamma'_{d'}=\gamma_{d'}^2\mbox{,}\quad\omicron'=\omicron'_1\omicron'_2\mbox{.}
\end{array}
\]
Consider two cases:
\begin{itemize}
\item If $d_0\subtype d'$ then, by $q$ being well-structured, also $d_1\subtype d_0\subtype d'\subtype d$. Hence we have $\monadic{a}_d=\return(\monadic{a}_d^1,\ldots,\monadic{a}_d^n)$, $\monadic{i}_d=\return i_d$, $\monadic{a}_{d'}=\return(\monadic{a}_{d'}^1,\ldots,\monadic{a}_{d'}^n)$, $\monadic{i}_{d'}=\return i_{d'}$ where $\monadic{a}_d^k\coin{q}{d'}\monadic{a}_{d'}^k$ for every $k=1,\ldots,n$ and $n\geq i_d=i_{d'}\in\NN$. This implies $\monadic{v}_d=\monadic{a}_d^{i_d}$ and $\monadic{v}_{d'}=\monadic{a}_{d'}^{i_{d'}}$, whence $\monadic{v}_d\coin{q}{d'}\monadic{v}_{d'}$.
\item If $d_0$ is a strict supertype of~$d'$ then $\monadic{v}_d\coin{q}{d'}\monadic{v}_{d'}$ holds vacuously.
\end{itemize}
Along with $\gamma_d^2\coin{\Gamma}{d'}\gamma_{d'}^2$ and $\omicron_1\omicron_2\coinpure{d'}\omicron'_1\omicron'_2$, this implies the desired claim.

\item Let $e=\stmtcomp{\letexpr{x}{e_1}}{e_2}$. Then
\[
\begin{array}{l}
\Gamma\vd e_1:q_1\mbox{,}\\
(x:q_1),\Gamma\vd e_2:q
\end{array}
\]
and
\[
\begin{array}{l}
\semd{e_1}\gamma_d\phi=\return(\monadic{v}_d^1,\gamma_d^1,\omicron_1)\mbox{,}\\
\semd{e_2}((x,\monadic{v}_d^1),\gamma_d^1)\phi=\return(\monadic{v}_d^2,\gamma_d^2,\omicron_2)\mbox{,}\\
\monadic{v}_d=\monadic{v}_d^2\mbox{,}\quad\gamma'_d=\tail\gamma_d^2\mbox{,}\quad\omicron=\omicron_1\omicron_2\mbox{.}
\end{array}
\]
By the induction hypothesis about $e_1$, $\sem{e_1}_{d'}\gamma_{d'}\phi=\return(\monadic{v}_{d'}^1,\gamma_{d'}^1,\omicron'_1)$ where $\monadic{v}_d^1\coin{q_1}{d'}\monadic{v}_{d'}^1$, $\gamma_d^1\coin{\Gamma}{d'}\gamma_{d'}^1$ and $\omicron_1\coinpure{d'}\omicron'_1$. Denoting $\Gamma'=((x:q_1),\Gamma)$, we therefore obtain $\left((x,\monadic{v}_d^1),\gamma_d^1\right)\coin{\Gamma'}{d'}\left((x,\monadic{v}_{d'}^1),\gamma_{d'}^1\right)$. By the induction hypothesis about $e_2$, $\sem{e_2}_{d'}((x,\monadic{v}_{d'}^1),\gamma_{d'}^1)\phi=\return(\monadic{v}_{d'}^2,\gamma_{d'}^2,\omicron'_2)$ where $\monadic{v}_d^2\coin{q}{d'}\monadic{v}_{d'}^2$, $\gamma_d^2\coin{\Gamma'}{d'}\gamma_{d'}^2$ and $\omicron_2\coinpure{d'}\omicron'_2$. Thus also $\tail\gamma_d^2\coin{\Gamma}{d'}\tail\gamma_{d'}^2$. As
\[
\begin{array}{l}
\sem{\stmtcomp{\letexpr{x}{e_1}}{e_2}}_{d'}\gamma_{d'}\phi=\return(\monadic{v}_{d'},\gamma'_{d'},\omicron')\mbox{,}\\
\monadic{v}_{d'}=\monadic{v}_{d'}^2\mbox{,}\quad\gamma'_{d'}=\tail\gamma_{d'}^2\mbox{,}\quad\omicron'=\omicron'_1\omicron'_2\mbox{,}
\end{array}
\]
the desired claim follows.

\item Let $e=\stmtcomp{e_1}{e_2}$. Then
\[
\begin{array}{l}
\Gamma\vd e_1:q_1\mbox{,}\\
\Gamma\vd e_2:q
\end{array}
\]
and
\[
\begin{array}{l}
\semd{e_1}\gamma_d\phi=\return(\monadic{v}_d^1,\gamma_d^1,\omicron_1)\mbox{,}\\
\semd{e_2}\gamma_d^1\phi=\return(\monadic{v}_d^2,\gamma_d^2,\omicron_2)\mbox{,}\\
\monadic{v}_d=\monadic{v}_d^2\mbox{,}\quad\gamma'_d=\gamma_d^2\mbox{,}\quad\omicron=\omicron_1\omicron_2\mbox{.}
\end{array}
\]
By the induction hypothesis about $e_1$, $\sem{e_1}_{d'}\gamma_{d'}\phi=\return(\monadic{v}_{d'}^1,\gamma_{d'}^1,\omicron'_1)$ where $\monadic{v}_d^1\coin{q_1}{d'}\monadic{v}_{d'}^1$, $\gamma_d^1\coin{\Gamma}{d'}\gamma_{d'}^1$ and $\omicron_1\coinpure{d'}\omicron'_1$. By the induction hypothesis about $e_2$, $\sem{e_2}_{d'}\gamma_{d'}^1\phi=\return(\monadic{v}_{d'}^2,\gamma_{d'}^2,\omicron'_2)$ where $\monadic{v}_d^2\coin{q}{d'}\monadic{v}_{d'}^2$, $\gamma_d^2\coin{\Gamma}{d'}\gamma_{d'}^2$ and $\omicron_2\coinpure{d'}\omicron'_2$. As
\[
\begin{array}{l}
\sem{\stmtcomp{e_1}{e_2}}_{d'}\gamma_{d'}\phi=\return(\monadic{v}_{d'},\gamma'_{d'},\omicron')\mbox{,}\\
\monadic{v}_{d'}=\monadic{v}_{d'}^2\mbox{,}\quad\gamma'_{d'}=\gamma_{d'}^2\mbox{,}\quad\omicron'=\omicron'_1\omicron'_2\mbox{,}
\end{array}
\]
the desired claim follows.

\end{itemize}

\end{proof}

\begin{theorem}[Theorem~\ref{semantics:correctness}]
Let $\Gamma \vd e : q\eff D$ with well-structured~$\Gamma$. Let $\gamma_d$, $\gamma\in\mathbf{Env}$ be $\Gamma$-exact in $\proverdom$  and in circuit, respectively, such that $\gamma_d\coincirc{\Gamma}\gamma$. Assume that, for all subexpressions of~$e$ of the form $\ofty{\getexpr{d'}{k}}{q'}$, the value $\allpure(\phi_{d'}(k))$ is $q'$-exact in $\proverdom$. Assume that, for $d=\proverdom$, there exist $\monadic{v}_d,\gamma'_d,\omicron$ such that $\semd{e}\gamma_d\phi=\return(\monadic{v}_d,\gamma'_d,\omicron)$. If $\rho$~{}is any pair of stream continuations (one for each of $\proverdom$ and $\verifierdom$) then $\sem{e}\gamma\phi(\omicron\rho)=\return(\monadic{v},\gamma',\rho)$, where $\monadic{v}$ is $q$-exact and $\gamma'$ is $\Gamma$-exact in circuit. Thereby, $\monadic{v}_d\coincirc{q}\monadic{v}$ and $\gamma'_d\coincirc{\Gamma}\gamma'$. (Here, $\omicron\rho$ denotes the pointwise concatenation of $\omicron$ and $\rho$.)
\end{theorem}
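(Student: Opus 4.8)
The plan is to prove the statement by structural induction on $e$, maintaining throughout the inductive invariant that the circuit run consumes \emph{exactly} the stream $\omicron$ emitted by Prover's local run: feeding the circuit the input $\omicron\rho$ always leaves precisely $\rho$ unconsumed, for every continuation~$\rho$. This ``leftover equals $\rho$'' property is what makes the argument compositional, since it lets the continuation absorb the output of the remaining computation. First I would dispose of the base cases ($\epsilon$, literals, variables, and $\getexpr{d'}{k}$): none of these touch the stream, so $\omicron=\epsilon$ and the leftover is $\rho$ trivially, while the required coincidences $\monadic{v}_d\coincirc{q}\monadic{v}$ and $\gamma'_d\coincirc{\Gamma}\gamma'$ follow from exactness (Theorem~\ref{semantics:localthm}) together with the observation that for $\getexpr{d'}{k}$ both semantics return $\allpure(\phi_{d'}(k))$ when $d'=\publicdom$ and the coincidence is vacuous otherwise.

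For the purely sequential formers ($\addexpr{e_1}{e_2}$, $\stmtcomp{e_1}{e_2}$, the let-binder, list access, and the index/right-hand-side chain of an assignment) I would thread the streams by associativity of pointwise concatenation: applying the induction hypothesis to $e_1$ with continuation $\omicron_2\rho$ shows the circuit consumes $\omicron_1$ and leaves $\omicron_2\rho$, after which the hypothesis for $e_2$ with continuation~$\rho$ finishes the job; the value and environment coincidences then combine via transitivity of $\coincirc{\cdot}$ (Lemma~\ref{semantics:coinlemma}). The crucial case is $\wireexpr{e}$: here Prover appends its computed value $\monadic{v}_1$ to the $d_0$-component of $\omicron_1$ (where $d_0$ is the domain of~$e$), whereas the circuit, after running~$e$, reads that value back off the head of the stream and drops it. Using the hypothesis on~$e$ with continuation equal to $\rho$ augmented by $\monadic{v}_1$ in the $d_0$-slot, the inner run leaves exactly that augmented stream, and the subsequent $\head$/$\tail$ recovers $\monadic{v}_1$ and restores~$\rho$; exactness in $\proverdom$ guarantees $\monadic{v}_1$ is a genuine pure value, so the circuit's returned value equals Prover's.

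The conditional and loop formers are where the two semantics may genuinely diverge, and I expect this to be the main obstacle. Since Prover runs in $\proverdom$, its guard (resp.\ bounds) evaluates to a pure boolean (resp.\ integers) by Theorem~\ref{semantics:localthm}. If the guard's domain is $\publicdom$, the induction hypothesis forces the circuit's guard to be the \emph{same} pure value, so the circuit takes the identical branch (resp.\ performs the identical unrolling) and the hypothesis applies branch-by-branch, threading streams as above. If the guard's domain is non-public, however, the circuit evaluates it to~$\top$ and takes neither branch, returning~$\top$ while Prover actually executes one. Here I would argue that the type rule's restriction $\gen{d_1}\supseteq\gen{s_0}\cup\gen{d_0}\cup D_2\cup D_3$ forces $s_0=\prestg$ and $d_0$ non-public, so value coincidence in circuit holds vacuously, and also forces $\publicdom\notin D_2\cup D_3$, whence Theorem~\ref{semantics:outputthm} gives that the executed branch emits $\epsilon$ beyond the guard (so the stream bookkeeping still closes) and Theorem~\ref{semantics:effcircthm} gives that the branch leaves the circuit-visible environment unchanged, reconciling $\gamma'_d$ with the circuit's post-guard environment by transitivity. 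The loop case is analogous, with the effect and output theorems applied to every unrolled iteration.

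The remaining obligation is that the circuit never fails when Prover succeeds. For $\assertexpr{e}$ the argument is post-stage, so the hypothesis yields a pure boolean that coincides with Prover's; since Prover's is not $\return\fls$, neither is the circuit's, and the guard passes. For list access and assignment I would check that indices land in bounds: whenever the relevant domain is $\publicdom$ the circuit sees the same list length and index as Prover (by coincidence), and otherwise it sees~$\top$ and short-circuits without error. Finally, for the assignment case the coincidence of the updated environments follows from Lemma~\ref{semantics:updexactcoinlemma} instantiated with the prover-predicate as~$P$ and the circuit-predicate as~$Q$ (the implication $Q\Rightarrow P$ being immediate because the prover-predicate is identically true), after decomposing the left-hand side with Lemma~\ref{typesystem:lhslemma}.
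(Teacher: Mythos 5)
Your proposal is correct and follows essentially the same route as the paper's proof: structural induction with the universally quantified continuation $\rho$ as the compositional invariant, stream threading by associativity of concatenation for the sequential formers, the head/tail bookkeeping for \lstinline{wire}, the split on whether the guard's domain is \lstinline{@public} for conditionals and loops (invoking Theorems~\ref{semantics:outputthm} and~\ref{semantics:effcircthm} in the non-public case), and Lemma~\ref{semantics:updexactcoinlemma} with exactly the predicate instantiation you describe for assignment. The only detail you gloss over is that the $\getexpr{d'}{k}$ case also needs the returned value to be $q$-exact in circuit when $d'=\publicdom$, which the paper gets from Lemma~\ref{semantics:approxcirclemma}; this is a routine addition, not a gap in the approach.
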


\begin{proof}
Let $q=\qualty{t_0}{s_0}{d_0}$. We proceed by induction on the structure of~$e$:
\begin{itemize}
\item Let $e=\epsilon$. Then $t_0=\unitty$, $s_0=\prestg$, $d_0=\publicdom$ and we have $\monadic{v}_d=\return\singleton$, $\gamma'_d=\gamma_d$, $\omicron=\epsilon$. We also have $\sem{e}\gamma\phi\rho=\return(\monadic{v},\gamma',\rho)$ where $\monadic{v}=\return\singleton$ and $\gamma'=\gamma$. As $d_0=\publicdom$ and $t_0$ is primitive, establishing that $\return\singleton$ is $q$-exact in circuit reduces to clause~1 of Definition~\ref{semantics:exactdef}. It holds since $\singleton\in\unitty$. Similarly, we obtain $\return\singleton\coincirc{q}\return\singleton$. By assumption, $\gamma'$ is $\Gamma$-exact in circuit and $\gamma'_d\coincirc{\Gamma}\gamma'$.

\item Let $e=\overline{n}$ where $n\in\NN$. Then $t_0=\uintmodty{\mbox{\lstinline+N+}}$ and $\monadic{v}_d=\return n$, $\gamma'_d=\gamma_d$, $\omicron=\epsilon$. We also have $\sem{e}\gamma\phi\rho=\return(\monadic{v},\gamma',\rho)$ where $\monadic{v}=\branching{\return n&\mbox{if $s_0=\poststg$ or $d_0=\publicdom$}\\\top&\mbox{otherwise}}$ and $\gamma'=\gamma$. By assumption, $\gamma'$ is $\Gamma$-exact in circuit and $\gamma'_d\coincirc{\Gamma}\gamma'$. Furthermore, we have to study two cases:
\begin{itemize}
\item If $s_0=\poststg$ or $d_0=\publicdom$ then $\monadic{v}=\return n$. As $t_0$ is primitive, establishing that $\monadic{v}$ is $q$-exact in circuit reduces to clause~1 of Definition~\ref{semantics:exactdef}. It holds since $n$ is an integer. Similarly, we obtain $\return n\coincirc{q}\return n$.
\item If $s_0=\prestg$ and $d_0\ne\publicdom$ then $\monadic{v}=\top$ and establishing that $\monadic{v}$ is $q$-exact in circuit reduces to clause~3 of Definition~\ref{semantics:exactdef}. The former equality is exactly what clause~3 requires. The claim $\monadic{v}_d\coincirc{q}\monadic{v}$ holds vacuously.
\end{itemize}

\item Let $e=\overline{b}$ where $b\in\BB$. This case is analogous to the previous one.

\item Let $e=x$. Then $\lookup{\Gamma}{x}=q$ and $\monadic{v}_d=\lookup{\gamma_d}{x}$, $\gamma'_d=\gamma_d$, $\omicron=\epsilon$. We also have $\sem{e}\gamma\phi\rho=\return(\monadic{v},\gamma',\rho)$ where $\monadic{v}=\lookup{\gamma}{x}$, $\gamma'=\gamma$. Hence $\monadic{v}$ is $q$-exact in circuit by the assumption that $\gamma$ is $\Gamma$-exact in circuit and Lemma~\ref{semantics:lookupexactlemma}. Also $\gamma'$ being $\Gamma$-exact in circuit directly follows from assumption. Finally, $\monadic{v}_d\coincirc{q}\monadic{v}$ by $\gamma_d\coincirc{\Gamma}\gamma$, and $\gamma'_d\coincirc{\Gamma}\gamma'$ by assumption.

\item Let $e=\addexpr{e_1}{e_2}$. Then $t_0=\uintmodty{\mbox{\lstinline+N+}}$ and
\[
\begin{array}{l}
\Gamma\vd e_1:q\eff D_1\mbox{,}\\
\Gamma\vd e_2:q\eff D_2\mbox{,}
\end{array}
\]
and also
\[
\begin{array}{l}
\semd{e_1}\gamma_d\phi=\return(\monadic{v}_d^1,\gamma_d^1,\omicron_1)\mbox{,}\\
\semd{e_2}\gamma_d^1\phi=\return(\monadic{v}_d^2,\gamma_d^2,\omicron_2)\mbox{,}\\
\monadic{v}_d=\mcomp{v_d^1\gets\monadic{v}_d^1\hstop v_d^2\gets\monadic{v}_d^2\hstop\return(v_d^1+v_d^2)},\quad\gamma'_d=\gamma_d^2\mbox{,}\quad\omicron=\omicron_1\omicron_2\mbox{.}
\end{array}
\]
By $\monadic{v}_d^1,\monadic{v}_d^2$ being $q$-exact in $\proverdom$, we must have $\monadic{v}_d^1=\return v_d^1$ and $\monadic{v}_d^2=\return v_d^2$, whence $\monadic{v}_d=\return(v_d^1+v_d^2)$. By the induction hypothesis about~$e_1$, $\sem{e_1}\gamma\phi(\omicron_1\omicron_2\rho)=\return(\monadic{v}^1,\gamma^1,\omicron_2\rho)$ where $\monadic{v}^1$ is $q$-exact and $\gamma^1$ is $\Gamma$-exact in circuit, $\monadic{v}_d^1\coincirc{q}\monadic{v}^1$ and $\gamma_d^1\coincirc{\Gamma}\gamma^1$. Now by the induction hypothesis about~$e_2$, $\sem{e_2}\gamma^1\phi(\omicron_1\rho)=\return(\monadic{v}^2,\gamma^2,\rho)$ where $\monadic{v}^2$ is $q$-exact and $\gamma^2$ is $\Gamma$-exact in circuit, $\monadic{v}_d^2\coincirc{q}\monadic{v}^2$ and $\gamma_d^2\coincirc{\Gamma}\gamma^2$. Hence $\sem{e}\gamma\phi(\omicron\rho)=\return(\monadic{v},\gamma',\rho)$ where
\[
\monadic{v}=\mcomp{v^1\gets\monadic{v}^1\hstop v^2\gets\monadic{v}^2\hstop\return(v^1+v^2)},\quad\gamma'=\gamma^2\mbox{.}
\]
Then $\gamma'$ being $\Gamma$-exact in circuit and $\gamma'_d\coincirc{\Gamma}\gamma'$ are implied by the above. Finally, we have to consider two cases:
\begin{itemize}
\item If $s_0=\poststg$ or $d_0=\publicdom$ then, by $\monadic{v}^1$ and $\monadic{v}^2$ being $q$-exact in circuit, we have $\monadic{v}^1=\return v^1$ and $\monadic{v}^2=\return v^2$ for integers $v^1,v^2$. Hence $\monadic{v}=\return(v^1+v^2)$ which shows that $\monadic{v}$ is $q$-exact in circuit by clause~1 of Definition~\ref{semantics:exactdef}. Moreover, $\monadic{v}_d^1\coincirc{q}\monadic{v}^1$ and $\monadic{v}_d^2\coincirc{q}\monadic{v}^2$ imply $v_d^1=v^1$ and $v_d^2=v^2$. Hence $\monadic{v}_d\coincirc{q}\monadic{v}$.
\item If $s_0=\prestg$ and $d_0\ne\publicdom$ then, by $\monadic{v}^1$ and $\monadic{v}^2$ being $q$-exact in circuit, we have $\monadic{v}^1=\monadic{v}^2=\top$ and hence also $\monadic{v}=\top$. By clause~3 of Definition~\ref{semantics:exactdef}, $\monadic{v}$ is $q$-exact in circuit. The claim $\monadic{v}_d\coincirc{q}\monadic{v}$ holds vacuously.
\end{itemize}

\item Let $e=\assertexpr{e_1}$. Then $t_0=\unitty$, $s_0=\prestg$, $d_0=\publicdom$ and
\[
\Gamma\vd e_1:\qualty{\boolmodty{\mbox{\lstinline+N+}}}{\poststg}{d_1}\eff D
\]
and also
\[
\begin{array}{l}
\semd{e_1}\gamma_d\phi=\return(\monadic{v}_d^1,\gamma_d^1,\omicron_1)\mbox{,}\\
\monadic{v}_d^1\ne\return\fls\mbox{,}\quad\monadic{v}_d=\return\singleton\mbox{,}\quad\gamma'_d=\gamma_d^1\mbox{,}\quad\omicron=\omicron_1\mbox{.}
\end{array}
\]
Denote $q_1=(\qualty{\boolmodty{\mbox{\lstinline+N+}}}{\poststg}{d_1})$. By $\monadic{v}_d^1$ being $q_1$-exact in $\proverdom$ and $\monadic{v}_d^1\ne\return\fls$, the only possibility is $\monadic{v}_d^1=\return\tru$. By the induction hypothesis about $e_1$, $\sem{e_1}\gamma\phi(\omicron_1\rho)=\return(\monadic{v}^1,\gamma^1,\rho)$ where $\monadic{v}^1$ is $q_1$-exact and $\gamma^1$ is $\Gamma$-exact in circuit, $\monadic{v}_d^1\coincirc{q_1}\monadic{v}^1$ and $\gamma_d^1\coincirc{\Gamma}\gamma^1$. Hence $\monadic{v}^1\ne\return\fls$ since $e_1$ is in $\poststg$ which implies that $\return\tru$ and $\return\fls$ cannot be $q_1$-coincident in circuit. Consequently, $\sem{e}\gamma\phi(\omicron\rho)=\return(\monadic{v},\gamma',\rho)$ where $\monadic{v}=\return\singleton$, $\gamma'=\gamma^1$, whence we have $\monadic{v}$ being $q$-exact in circuit and $\monadic{v}_d\coincirc{q}\monadic{v}$. By the above, we also have $\gamma'$ being $\Gamma$-exact in circuit and $\gamma'_d\coincirc{\Gamma}\gamma'$.

\item Let $e=\ofty{\getexpr{d'}{k}}{q}$. Then $\allpre_{d'}(q)$, implying $s_0=\prestg$ and $d_0=d'$. Moreover, $\monadic{v}_d=\allpure(\phi_{d'}(k))$, $\gamma'_d=\gamma_d$ and $\omicron=\epsilon$. We also have $\sem{e}\gamma\phi\rho=\return(\monadic{v},\gamma',\rho)$ where $\monadic{v}=\branching{\allpure(\phi_{d'}(k))&\mbox{if $d'=\publicdom$}\\\top&\mbox{otherwise}}$ and $\gamma'=\gamma$. Hence, by assumptions, $\gamma'$ is $\Gamma$-exact in circuit and $\gamma'_d\coincirc{\Gamma}\gamma'$. For the other desired claims, consider two cases:

\begin{itemize}
\item If $d'=\publicdom$ then $\monadic{v}_d=\allpure(\phi_{d'}(k))=\monadic{v}$, implying $\monadic{v}_d\coincirc{q}\monadic{v}$. Moreover, $\monadic{v}$ is $q$-exact in circuit by Lemma~\ref{semantics:approxcirclemma}.
\item If $d'\ne\publicdom$ then $\monadic{v}=\top$ which is $q$-exact since $s_0=\prestg$. Finally, $\monadic{v}_d\coincirc{q}\monadic{v}$ holds vacuously.
\end{itemize}

\item Let $e=\ifexpr{e_1}{e_2}{e_3}$. Then
\[
\begin{array}{l}
\Gamma\vd e_1:\qualty{\boolmodty{\mbox{\lstinline+N+}}}{\prestg}{d_1}\eff D_1\mbox{,}\\
\Gamma\vd e_2:q\eff D_2\mbox{,}\\
\Gamma\vd e_3:q\eff D_3\mbox{,}\\
\gen{d_1}\supseteq\gen{s_0}\cup \gen{d_0}\cup D_2\cup D_3\mbox{,}
\end{array}
\]
and
\[
\semd{e_1}\gamma_d\phi=\return(\monadic{v}_d^1,\gamma_d^1,\omicron_1)\mbox{.}
\]
Denote $q_1=(\qualty{\boolmodty{\mbox{\lstinline+N+}}}{\prestg}{d_1})$. Note that $\monadic{v}_d^1$ is $q_1$-exact in $\proverdom$, whence $\monadic{v}_d^1=\return v_d^1$ where $v_d^1\in\left\{\tru,\fls\right\}$. Thus we have to consider two cases:
\begin{itemize}
\item If $\monadic{v}_d^1=\return\tru$ then $\semd{e_2}\gamma_d^1\phi\omicron_1=\return(\monadic{v}_d^2,\gamma_d^2,\omicron_2)$, $\monadic{v}_d=\monadic{v}_d^2$, $\gamma'_d=\gamma_d^2$, $\omicron=\omicron_1\omicron_2$. By the induction hypothesis about $e_1$, $\sem{e_1}\gamma\phi(\omicron_1\omicron_2\rho)=\return(\monadic{v}^1,\gamma^1,\omicron_2\rho)$ where $\monadic{v}^1$ is $q_1$-exact and $\gamma^1$ is $\Gamma$-exact in circuit, $\monadic{v}_d^1\coincirc{q_1}\monadic{v}^1$ and $\gamma_d^1\coincirc{\Gamma}\gamma^1$. If $d_1=\publicdom$ then $\monadic{v}_d^1\coincirc{q_1}\monadic{v}^1$ implies $\monadic{v}^1=\return\tru$. By the induction hypothesis about $e_2$, $\sem{e_2}\gamma^1\phi(\omicron_2\rho)=\return(\monadic{v}^2,\gamma^2,\rho)$ where $\monadic{v}^2$ is $q$-exact and $\gamma^2$ is $\Gamma$-exact in circuit, $\monadic{v}_d^2\coincirc{q}\monadic{v}^2$ and $\gamma_d^2\coincirc{\Gamma}\gamma^2$. Hence $\sem{e}\gamma\phi(\omicron\rho)=\return(\monadic{v},\gamma',\rho)$ where $\monadic{v}=\monadic{v}^2$ and $\gamma'=\gamma^2$. The desired claim follows in this case. If $d_1\ne\publicdom$ then $\monadic{v}^1$ being $q_1$-exact in circuit implies $\monadic{v}^1=\top$. As $\gen{d_1}\supseteq D_2$ implies $\publicdom\notin D_2$, Theorems~\ref{semantics:outputthm} and~\ref{semantics:effcircthm} imply $\omicron_2=\epsilon$ and $\gamma_d^1\coincirc{q}\gamma_d^2$. Hence $\sem{e}\gamma\phi(\omicron\rho)=\sem{e}\gamma\phi(\omicron_1\rho)=\return(\monadic{v},\gamma',\rho)$ where $\monadic{v}=\top$ and $\gamma'=\gamma^1$. As $\gen{d_1}\supseteq\gen{d_0}$ implies $d_1\subtype d_0$, also $d_0\ne\publicdom$. As $\gen{d_1}\supseteq\gen{s_0}$ implies $s_0=\prestg$, this means that $\top$ is $q$-exact in circuit and $\monadic{v}_d\coincirc{q}\top$ vacuously. We also have $\gamma'_d\coincirc{\Gamma}\gamma'$ because of $\gamma_d^1\coincirc{q}\gamma_d^2$, $\gamma_d^1\coincirc{q}\gamma^1$ and Lemma~\ref{semantics:coinlemma}. Finally, $\gamma'$ is $\Gamma$-exact in circuit by the above.
\item The case $\monadic{v}_d^1=\return\fls$ is analogous.
\end{itemize}

\item Let $e=\forexpr{x}{e_1}{e_2}{e_3}$. Then
\[
\begin{array}{l}
\Gamma\vd e_1:\qualty{\uintty}{\prestg}{d_0}\eff D_1\mbox{,}\\
\Gamma\vd e_2:\qualty{\uintty}{\prestg}{d_0}\eff D_2\mbox{,}\\
(x:\qualty{\uintty}{\prestg}{d_0}),\Gamma\vd e_3:\qualty{t_1}{s_1}{d_1}\eff D_3\mbox{,}\\
\gen{d_0}\supseteq\gen{s_1}\cup\gen{d_1}\cup D_3\mbox{,}\\
t_0=\listty{\qualty{t_1}{s_1}{d_1}}\mbox{,}\quad s_0=\prestg\mbox{,}
\end{array}
\]
and
\[
\begin{array}{l}
\semd{e_1}\gamma_d\phi=\return(\monadic{v}_d^1,\gamma_d^1,\omicron_1)\mbox{,}\\
\semd{e_2}\gamma_d^1\phi=\return(\monadic{v}_d^2,\gamma_d^2,\omicron_2)\mbox{.}
\end{array}
\]
Denote $q'=(\qualty{\uintty}{\prestg}{d_0})$ and $q_1=(\qualty{t_1}{s_1}{d_1})$. Note that $\monadic{v}_d^1,\monadic{v}_d^2$ are $q'$-exact in $\proverdom$ by Theorem~\ref{semantics:localthm}. Hence $\monadic{v}_d^1=\return i_1$ and $\monadic{v}_d^2=\return i_2$ where $i_1,i_2\in\NN$. Denoting $n=\max(0,i_2-i_1)$, we obtain
\[
\begin{array}{l}
\semd{e_3}((x,\return i_1),\gamma_d^2)\phi=\return(\monadic{v}_d^3,\gamma_d^3,\omicron_3)\mbox{,}\\
\semd{e_3}([x\mapsto\return(i_1+1)]\gamma_d^3)\phi=\return(\monadic{v}_d^4,\gamma_d^4,\omicron_4)\mbox{,}\\
\dotfill\\
\semd{e_3}([x\mapsto\return(i_1+n-1)]\gamma_d^{n+1})\phi=\return(\monadic{v}_d^{n+2},\gamma_d^{n+2},\omicron_d^{n+2})\mbox{,}\\
\monadic{v}_d=\return(\monadic{v}_d^3,\ldots,\monadic{v}_d^{n+2})\mbox{,}\quad\gamma'_d=\tail\gamma_d^{n+2}\mbox{,}\quad\omicron=\omicron_1\ldots\omicron_{n+2}\mbox{.}
\end{array}
\]
By the induction hypothesis about~$e_1$, $\sem{e_1}\gamma\phi(\omicron_1\ldots\omicron_{n+2}\rho)=\return(\monadic{v}^1,\gamma^1,\omicron_2\ldots\omicron_{n+2}\rho)$ where $\monadic{v}^1$ is $q'$-exact and $\gamma^1$ is $\Gamma$-exact in circuit, $\monadic{v}_d^1\coincirc{q'}\monadic{v}^1$ and $\gamma_d^1\coincirc{\Gamma}\gamma^1$. Hence by the induction hypothesis about~$e_2$, $\sem{e_2}\gamma^1\phi(\omicron_2\ldots\omicron_{n+2}\rho)=\return(\monadic{v}^2,\gamma^2,\omicron_3\ldots\omicron_{n+2}\rho)$ where $\monadic{v}^2$ is $q'$-exact and $\gamma^2$ is $\Gamma$-exact in circuit, $\monadic{v}_d^2\coincirc{q'}\monadic{v}^2$ and $\gamma_d^2\coincirc{\Gamma}\gamma^2$. If $d_0=\publicdom$ then $\monadic{v}_d^1\coincirc{q'}\monadic{v}^1$ and $\monadic{v}_d^2\coincirc{q'}\monadic{v}^2$ imply $\monadic{v}^1=\return i_1$ and $\monadic{v}^2=\return i_2$. As $\gamma^2$ is $\Gamma$-exact and $\return i_1$ is $q'$-exact in circuit, the updated environment $\left((x,\return i_1),\gamma^2\right)$ is $((x:q'),\Gamma)$-exact in circuit. Moreover, $\gamma_d^2\coincirc{\Gamma}\gamma^2$ and $\return i_1\coincirc{q'}\return i_1$ together give $\left((x,\return i_1),\gamma_d^2\right)\coincirc{\Gamma'}\left((x,\return i_1),\gamma^2\right)$ where $\Gamma'=((x:q'),\Gamma)$. Hence the induction hypothesis about~$e_3$ applies and gives $\sem{e_3}\gamma^2\phi(\omicron_3\ldots\omicron_{n+2}\rho)=\return(\monadic{v}^3,\gamma^3,\omicron_4\ldots\omicron_{n+2}\rho)$ where $\monadic{v}^3$ is $q_1$-exact and $\gamma_3$ is $\Gamma'$-exact in circuit, $\monadic{v}_d^3\coincirc{q_1}\monadic{v}^3$ and $\gamma_d^3\coincirc{\Gamma'}\gamma^3$. Replacing $i_1$ by $i_1+1$ does not violate the required properties, so we analogously obtain $\sem{e_3}\gamma^{k-1}\phi(\omicron_k\ldots\omicron_{n+2}\rho)=\return(\monadic{v}^k,\gamma^k,\omicron_{k+1}\ldots\omicron_{n+2}\rho)$ where $\monadic{v}^k$ is $q_1$-exact in circuit and $\monadic{v}_d^k\coincirc{q_1}\monadic{v}^k$ for all $k=3,4,\ldots,n+2$, $\gamma^{n+2}$ is $\Gamma'$-exact in circuit and $\gamma_d^{n+2}\coincirc{\Gamma'}\gamma^{n+2}$. Obviously the latter implies $\tail\gamma^{n+2}$ being $\Gamma$-exact in circuit and $\tail\gamma_d^{n+2}\coincirc{\Gamma}\tail\gamma^{n+2}$. Thus we obtain $\sem{e}\gamma\phi(\omicron\rho)=\return(\monadic{v},\gamma',\rho)$ where $\monadic{v}=\return(\monadic{v}^3,\ldots,\monadic{v}^{n+2})$ and $\gamma'=\gamma^{n+2}$. By clause~2 of Definitions~\ref{semantics:exactdef} and~\ref{semantics:coincidentdef}, $\return(\monadic{v}^3,\ldots,\monadic{v}^{n+2})$ is $q$-exact in circuit and $\return(\monadic{v}_d^3,\ldots,\monadic{v}_d^{n+2})\coincirc{q}\return(\monadic{v}^3,\ldots,\monadic{v}^{n+2})$. The desired claim follows in this case. If $d_0\ne\publicdom$ then $\monadic{v}^1,\monadic{v}^2$ being $q'$-exact in circuit implies $\monadic{v}^1=\monadic{v}^2=\top$. As $\gen{d_0}\supseteq D_3$ implies $\publicdom\notin D_3$, Theorems~\ref{semantics:outputthm} and~\ref{semantics:effcircthm} imply $\omicron_3=\epsilon$, $\left((x,\return i_1),\gamma_d^2\right)\coincirc{\Gamma'}\gamma_d^3$ and also $\omicron_k=\epsilon$, $[x\mapsto(i_1+k-3)]\gamma_d^{k-1}\coincirc{\Gamma'}\gamma_d^k$ for every $k=4,\ldots,n+2$. Obviously this implies $\gamma_d^2\coincirc{\Gamma}\tail\gamma_d^3$ and $\tail\gamma_d^{k-1}\coincirc{\Gamma}\tail\gamma_d^k$ for every $k=4,\ldots,n+2$. By Lemma~\ref{semantics:coinlemma}, $\gamma_d^2\coincirc{\Gamma}\tail\gamma_{n+2}$ which is the same as $\gamma_d^2\coincirc{\Gamma}\gamma'_d$. We obtain $\sem{e}\gamma\phi(\omicron\rho)=\sem{e}\gamma\phi(\omicron_1\omicron_2\rho)=\return(\monadic{v},\gamma',\rho)$ where $\monadic{v}=\top$ and $\gamma'=\gamma^2$. As $d_0\ne\publicdom$ and $s_0=\prestg$, $\top$ is $q$-exact in circuit and $\monadic{v}_d\coincirc{q}\top$ vacuously. By Lemma~\ref{semantics:coinlemma}, $\gamma_d^2\coincirc{\Gamma}\gamma'_d$ and $\gamma_d^2\coincirc{\Gamma}\gamma^2$ together imply $\gamma'_d\coincirc{\Gamma}\gamma'$. By the above, we also get $\gamma'$ being $\Gamma$-exact in circuit. The desired claim follows.

\item Let $e=\wireexpr{e_1}$. Then
\[
\begin{array}{l}
\Gamma\vd e_1:\qualty{t_0}{\prestg}{d_0}\eff D_1\mbox{,}\\
s_0=\poststg\mbox{,}\quad\mbox{$t_0$ is $\uintmodty{\mbox{\lstinline+N+}}$ or $\boolmodty{\mbox{\lstinline+N+}}$}\mbox{,}
\end{array}
\]
and
\[
\begin{array}{l}
\semd{e_1}\gamma\phi=\return(\monadic{v}_d^1,\gamma_d^1,\omicron_1)\mbox{,}\\
\monadic{v}_d=\monadic{v}_d^1\mbox{,}\quad\gamma'_d=\gamma_d^1\mbox{,}\quad\omicron=\lam{d'}{\branching{(\omicron_1)_{d'}\monadic{v}_d^1&\mbox{if $d'$ is the domain of~$e_1$}\\(\omicron_1)_{d'}&\mbox{otherwise}}}\mbox{.}
\end{array}
\]
Here, $(\omicron_1)_{d'}$ denotes the member of the pair of streams $\omicron_1$ that corresponds to domain~$d'$ (for $d'\in\set{\proverdom,\verifierdom}$). Denote $q'=(\qualty{t_0}{\prestg}{d_0})$. By the induction hypothesis about~$e_1$,
\[
\sem{e_1}\gamma\phi(\omicron\rho)=\return(\monadic{v}^1,\gamma^1,\lam{d'}{\branching{\monadic{v}_d^1\rho&\mbox{if $d'$ is the domain of~$e_1$}\\\rho&\mbox{otherwise}}})
\]
where $\monadic{v}^1$ is $q'$-exact and $\gamma^1$ is $\Gamma$-exact in circuit, $\monadic{v}_d^1\coincirc{q'}\monadic{v}^1$ and $\gamma_d^1\coincirc{\Gamma}\gamma^1$. Hence $\sem{e}\gamma\phi(\omicron\rho)=\return(\monadic{v},\gamma',\rho)$ where $\monadic{v}=\branching{\monadic{v}^1&\mbox{if $d_0=\publicdom$}\\\monadic{v}_d^1&\mbox{otherwise}}$ and $\gamma'=\gamma^1$. Obviously we have $\gamma'$ being $\Gamma$-exact in circuit and $\gamma'_d\coincirc{\Gamma}\gamma'$. By Lemma~\ref{semantics:coinlemma}, $\monadic{v}_d\coincirc{q}\monadic{v}$. By $\monadic{v}_d^1$ being $q'$-exact in $\proverdom$, we know that $\monadic{v}_d^1=\return v_d^1$ where $v_d^1\in t_0$. As $s_0=\poststg$, $\return v_d^1$ is $q$-exact in circuit. Moreover, as $\monadic{v}^1$ is $q'$-exact in circuit, if $d_0=\publicdom$ then $\monadic{v}^1$ is $q$-exact in circuit. The desired claim follows.

\item Let $e=\castexpr{e_1}{q_0}$. Let the type of $e_1$ be $q_1$, i.e.,
\[
\begin{array}{l}
\Gamma\vd e_1:\qualty{t_0}{s_1}{d_1}\eff D\mbox{,}\\
s_1\subtype s_0\mbox{,}\quad d_1\subtype d_0\mbox{,}\quad\gen{d_0}\supseteq\gen{t_0}\mbox{.}
\end{array}
\]
We have
\[
\begin{array}{l}
\semd{e_1}\gamma_d\phi=\return(\monadic{v}_d^1,\gamma_d^1,\omicron_1)\mbox{,}\\
\monadic{v}_d=\monadic{v}_d^1\mbox{,}\quad\gamma'_d=\gamma_d^1\mbox{,}\quad\omicron=\omicron_1\mbox{.}
\end{array}
\]
By the induction hypothesis, $\sem{e_1}\gamma\phi(\omicron_1\rho)=\return(\monadic{v}^1,\gamma^1,\rho)$ where $\monadic{v}^1$ is $q_1$-exact and $\gamma^1$ is $\Gamma$-exact in circuit, $\monadic{v}_d^1\coincirc{q_1}\monadic{v}^1$ and $\gamma_d^1\coincirc{\Gamma}\gamma^1$. Hence $\sem{e}\gamma\phi(\omicron\rho)=\return(\monadic{v},\gamma',\rho)$ where $\monadic{v}=\branching{\monadic{v}^1&\mbox{if $s_0=\poststg$ or $d_0=\publicdom$}\\\top&\mbox{otherwise}}$ and $\gamma'=\gamma^1$. Consider two cases:
\begin{itemize}
\item If $s_0=\poststg$ or $d_0=\publicdom$ then $\monadic{v}=\monadic{v}^1$. As $s_1\subtype s_0$ and $d_1\subtype d_0$, we have $s_1=\poststg$ or $d_1=\publicdom$. Hence $\monadic{v}^1$ being $q_1$-exact in circuit and $\monadic{v}_d^1\coincirc{q_1}\monadic{v}^1$ imply $\monadic{v}^1$ being $q_0$-exact in circuit and $\monadic{v}_d^1\coincirc{q_0}\monadic{v}^1$, respectively. The desired claim follows.
\item If $s_0=\prestg$ and $d_0\ne\publicdom$ then $\monadic{v}=\top$. We have $\top$ being $q$-exact in circuit and $\monadic{v}_d\coincirc{q}\top$ vacuously. The desired claim follows.
\end{itemize}

\item Let $e=(\assignexpr{e_1}{e_2})$. Then
\[
\begin{array}{l}
\Gamma\vd e_1:\qualty{t_1}{s_1}{d_1}\mbox{,}\\
\Gamma\vd e_2:\qualty{t_1}{s_1}{d_1}\mbox{,}\\
t_0=\unitty\mbox{,}\quad s_0=\prestg\mbox{,}\quad d_0=\publicdom\mbox{.}
\end{array}
\]
Let $e_1=\loadexpr{\loadexpr{\loadexpr{x}{y_1}}{y_2}\ldots}{y_n}$ and denote $\monadic{a}_d=\lookup{\gamma_d}{x}$, $\monadic{a}=\lookup{\gamma}{x}$. Then
\[
\begin{array}{l}
\semd{y_1}\gamma_d\phi=\return(\monadic{i}_d^1,\gamma_d^1,\omicron_1)\mbox{,}\\
\semd{y_2}\gamma_d^1\phi=\return(\monadic{i}_d^2,\gamma_d^2,\omicron_2)\mbox{,}\\
\dotfill\\
\semd{y_n}\gamma_d^{n-1}\phi=\return(\monadic{i}_d^n,\gamma_d^n,\omicron_n)\mbox{,}\\
\semd{e_2}\gamma_d^n\phi=\return(\monadic{r}_d,\gamma_d^{n+1},\omicron_{n+1})\mbox{,}\\
\monadic{v}_d=\return\singleton\mbox{,}\quad\gamma'_d=[x\mapsto\update(\monadic{a}_d,\monadic{i}_d^1\ldots\monadic{i}_d^n,\monadic{r}_d)]\gamma_d^{n+1}\mbox{,}\quad\omicron=\omicron_1\ldots\omicron_{n+1}\mbox{.}
\end{array}
\]
Denote $q_1=(\qualty{t_1}{s_1}{d_1})$, $q'_i=\qualty{\uintty}{\prestg}{d'_i}$ for every $i=1,\ldots,n$, and
\[
q'=\qualty{\listty{\ldots\listty{\qualty{\listty{\qualty{\listty{\qualty{t_1}{s_1}{d_1}}}{\prestg}{d'_n}}}{\prestg}{d'_{n-1}}}\ldots}}{\prestg}{d'_1}\mbox{.}
\] 
By Lemma~\ref{typesystem:lhslemma}, $\Gamma\vd y_i:q'_i$ for each $i=1,\ldots,n$ and $\Gamma\vd x:q'$. Hence $\monadic{a}_d$ is $q'$-exact in $\proverdom$ by exactness of~$\gamma$. By the induction hypothesis about all~$y_k$, $\sem{y_k}\gamma^{k-1}\phi(\omicron_k\ldots\omicron_{n+1}\rho)=\return(\monadic{i}^k,\gamma^k,\omicron_{k+1}\ldots\omicron_{n+1}\rho)$ where $\monadic{i}^k$ is $q'_k$-exact and $\gamma^k$ is $\Gamma$-exact in circuit, $\monadic{i}_d^k\coincirc{q'_k}\monadic{i}^k$ and $\gamma_d^k\coincirc{\Gamma}\gamma^k$. The induction hypothesis about~$e_2$ implies that $\sem{e_2}\gamma^k\phi(\omicron_{n+1}\rho)=\return(\monadic{r},\gamma^{n+1},\rho)$ where $\monadic{r}$ is $q_1$-exact and $\gamma^{n+1}$ is $\Gamma$-exact in circuit, $\monadic{r}_d\coincirc{q_1}\monadic{r}$ and $\gamma_d^{n+1}\coincirc{\Gamma}\gamma^{n+1}$. By Lemma~\ref{semantics:updexactcoinlemma}, $\update(\monadic{a},\monadic{i}^1\ldots\monadic{i}^n,\monadic{r})$ is a well-defined monadic value which is $q'$-exact in circuit, whereby $\update(\monadic{a}_d,\monadic{i}_d^1\ldots\monadic{i}_d^n,\monadic{r}_d)\coincirc{q'}\update(\monadic{a},\monadic{i}^1\ldots\monadic{i}^n,\monadic{r})$. Hence $\sem{e}\gamma\phi(\omicron\rho)=\return(\monadic{v},\gamma',\rho)$ where $\monadic{v}=\return\singleton$ and $\gamma'=[x\mapsto\update(\monadic{a},\monadic{i}^1\ldots\monadic{i}^n,\monadic{r})]\gamma^{n+1}$. Thereby, $[x\mapsto\update(\monadic{a},\monadic{i}^1\ldots\monadic{i}^n,\monadic{r})]\gamma^{n+1}$ is $\Gamma$-exact in circuit by the above and
\[
[x\mapsto\update(\monadic{a}_d,\monadic{i}_d^1\ldots\monadic{i}_d^n,\monadic{r}_d)]\gamma_d^{n+1}\coincirc{\Gamma}[x\mapsto\update(\monadic{a},\monadic{i}^1\ldots\monadic{i}^n,\monadic{r})]\gamma^{n+1}\mbox{.}
\]
As $\return\singleton$ is $q$-exact in circuit and $\return\singleton\coincirc{q}\return\singleton$, we are done.

\item Let $e=\loadexpr{e_1}{e_2}$. Then
\[
\begin{array}{l}
\Gamma\vd e_1:\qualty{\listty{q}}{s_1}{d_1}\mbox{,}\\
\Gamma\vd e_2:\qualty{\uintty}{s_1}{d_1}\mbox{,}
\end{array}
\]
and
\[
\begin{array}{l}
\semd{e_1}\gamma_d\phi=\return(\monadic{a}_d,\gamma_d^1,\omicron_1)\mbox{,}\\
\semd{e_2}\gamma_d^1\phi=\return(\monadic{i}_d,\gamma_d^2,\omicron_2)\mbox{,}\\
\monadic{v}_d=\mcomp{a_d\gets\monadic{a}_d\hstop i_d\gets\monadic{i}_d\hstop (a_d)_{i_d}}\mbox{,}\quad\gamma'_d=\gamma_d^2\mbox{,}\quad\omicron=\omicron_1\omicron_2\mbox{.}
\end{array}
\]
Denote $q'=(\qualty{\listty{q}}{s_1}{d_1})$ and $q_1=(\qualty{\uintty}{s_1}{d_1})$. By $\monadic{a}_d,\monadic{i}_d$ being $q'$-exact and $q_1$-exact, respectively, in $\proverdom$, we have $\monadic{a}=\return(\monadic{a}_d^1,\ldots,\monadic{a}_d^n)$, where all $\monadic{a}_d^k$ are $q$-exact in $\proverdom$, and $\monadic{i}_d=\return i$ where $i\in\NN$. Thus $\monadic{v}_d=\monadic{a}_d^i$ (whereby $i\leq n$ as the result is well-defined). By the induction hypothesis about $e_1$, $\sem{e_1}\gamma\phi(\omicron_1\omicron_2\rho)=\return(\monadic{a},\gamma^1,\omicron_2\rho)$ where $\monadic{a}$ is $q'$-exact and $\gamma^1$ is $\Gamma$-exact in circuit, $\monadic{a}_d\coincirc{q'}\monadic{a}$ and $\gamma_d^1\coincirc{\Gamma}\gamma^1$. Now by the induction hypothesis about $e_2$, $\sem{e_2}\gamma^1\phi(\omicron_2\rho)=\return(\monadic{i},\gamma^2,\rho)$ where $\monadic{i}$ is $q_1$-exact and $\gamma^2$ is $\Gamma$-exact in circuit, $\monadic{i}_d\coincirc{q_1}\monadic{i}$ and $\gamma_d^2\coincirc{\Gamma}\gamma^2$. Hence $\sem{e}\gamma\phi(\omicron\rho)=\return(\monadic{v},\gamma',\rho)$ where
\[
\monadic{v}=\mcomp{a\gets\monadic{a}\hstop i\gets\monadic{i}\hstop a_i}\mbox{,}\quad\gamma'=\gamma^2\mbox{.}
\]
Consider two cases:
\begin{itemize}
\item If $d_1=\publicdom$ then, by $\monadic{a}_d\coincirc{q'}\monadic{a}$, $\monadic{i}_d\coincirc{q_1}\monadic{i}$ and exactness in circuit, we get $\monadic{a}=\return(\monadic{a}^1,\ldots,\monadic{a}^n)$ where $\monadic{a}_d^k\coincirc{q}\monadic{a}^k$ for every $k=1,\ldots,n$ and $\monadic{i}=\return i$, whereby $\monadic{a}^1,\ldots,\monadic{a}^n$ are $q$-exact in circuit. Hence $\monadic{v}=\monadic{a}^i$ exists, is $q$-exact in circuit and satisfies $\monadic{v}_d\coincirc{q}\monadic{v}$.
\item If $d_1\ne\publicdom$ then $\monadic{a}=\top$ and $\monadic{i}=\top$ by $s_1=\prestg$ and exactness in circuit. Hence $\monadic{v}=\top$. As $\Gamma$ is well-structured, $\qualty{\listty{q}}{s_1}{d_1}$ is well-structured by Theorem~\ref{typesystem:thm}. Thus $\gen{d_1}\supseteq\gen{d_0}\cup\gen{s_0}$, implying that $s_0=\prestg$ and $d_0\ne\publicdom$. Hence $\top$ is $q$-exact in circuit and $\monadic{v}_d\coincirc{q}\monadic{v}$ vacuously.
\end{itemize}
In both cases, we obtain $\gamma'$ being $\Gamma$-exact in circuit and $\gamma'_d\coincirc{\Gamma}\gamma'$ by the above.

\item Let $e=\stmtcomp{\letexpr{x}{e_1}}{e_2}$. Then
\[
\begin{array}{l}
\Gamma\vd e_1:q_1\eff D_1\mbox{,}\\
(x:q_1),\Gamma\vd e_2:q\eff D_2\mbox{,}
\end{array}
\]
and
\[
\begin{array}{l}
\semd{e_1}\gamma_d\phi=\return(\monadic{v}_d^1,\gamma_d^1,\omicron_1)\mbox{,}\\
\semd{e_2}((x,\monadic{v}_d^1),\gamma_d^1)\phi=\return(\monadic{v}_d^2,\gamma_d^2,\omicron_2)\mbox{,}\\
\monadic{v}_d=\monadic{v}_d^2\mbox{,}\quad\gamma'_d=\tail\gamma_d^2\mbox{,}\quad\omicron=\omicron_1\omicron_2\mbox{.}
\end{array}
\]
By the induction hypothesis about~$e_1$, $\sem{e_1}\gamma\phi(\omicron_1\omicron_2\rho)=\return(\monadic{v}^1,\gamma^1,\omicron_2\rho)$ where $\monadic{v}^1$ is \mbox{$q_1$-exact} and $\gamma^1$ is $\Gamma$-exact in circuit, $\monadic{v}_d^1\coincirc{q_1}\monadic{v}^1$ and $\gamma_d^1\coincirc{\Gamma}\gamma^1$. Denote $\Gamma'=((x:q_1),\Gamma)$; then $\left((x,\monadic{v}^1),\gamma^1\right)$ is $\Gamma'$-exact in circuit and $\left((x,\monadic{v}_d^1),\gamma_d^1\right)\coincirc{\Gamma'}\left((x,\monadic{v}^1),\gamma^1\right)$ by the above. Hence by the induction hypothesis about~$e_2$, $\sem{e_2}((x,\monadic{v}^1),\gamma^1)\phi(\omicron_2\rho)=\return(\monadic{v}^2,\gamma^2,\rho)$ where $\monadic{v}^2$ is $q$-exact and $\gamma^2$ is $\Gamma'$-exact in circuit, $\monadic{v}_d^2\coincirc{q}\monadic{v}^2$ and $\gamma_d^2\coincirc{\Gamma'}\gamma^2$. Thus $\sem{e}\gamma\phi(\omicron\rho)=\return(\monadic{v},\gamma',\rho)$ where $\monadic{v}=\monadic{v}^2$ and $\gamma'=\tail\gamma^2$. Here $\monadic{v}$ is $q$-exact in circuit and $\monadic{v}_d\coincirc{q}\monadic{v}$ by the above. Obviously $\tail\gamma^2$ is $\Gamma$-exact in circuit and $\tail\gamma_d^2\coincirc{\Gamma}\tail\gamma^2$, hence the other desired claims also follow.

\item Let $e=\stmtcomp{e_1}{e_2}$. Then
\[
\begin{array}{l}
\Gamma\vd e_1:q_1\eff D_1\mbox{,}\\
\Gamma\vd e_2:q\eff D_2\mbox{,}
\end{array}
\]
and
\[
\begin{array}{l}
\semd{e_1}\gamma_d\phi=\return(\monadic{v}_d^1,\gamma_d^1,\omicron_1)\mbox{,}\\
\semd{e_2}\gamma_d^1\phi=\return(\monadic{v}_d^2,\gamma_d^2,\omicron_2)\mbox{,}\\
\monadic{v}_d=\monadic{v}_d^2\mbox{,}\quad\gamma'_d=\gamma_d^2\mbox{,}\quad\omicron=\omicron_1\omicron_2\mbox{.}
\end{array}
\]
By the induction hypothesis about~$e_1$, $\sem{e_1}\gamma\phi(\omicron_1\omicron_2\rho)=\return(\monadic{v}^1,\gamma^1,\omicron_2\rho)$ where $\monadic{v}^1$ is \mbox{$q_1$-exact} and $\gamma^1$ is $\Gamma$-exact in circuit, $\monadic{v}_d^1\coincirc{q_1}\monadic{v}^1$ and $\gamma_d^1\coincirc{\Gamma}\gamma^1$. By the induction hypothesis about~$e_2$, $\sem{e_2}\gamma^1\phi(\omicron_2\rho)=\return(\monadic{v}^2,\gamma^2,\rho)$ where $\monadic{v}^2$ is $q$-exact and $\gamma^2$ is $\Gamma$-exact in circuit, $\monadic{v}_d^2\coincirc{q}\monadic{v}^2$ and $\gamma_d^2\coincirc{\Gamma}\gamma^2$. Thus $\sem{e}\gamma\phi(\omicron\rho)=\return(\monadic{v},\gamma',\rho)$ where $\monadic{v}=\monadic{v}^2$ and $\gamma'=\gamma^2$. The desired claims follow directly.
\end{itemize}
\end{proof}

\section{Proofs of Theorems~\ref{thm:can_compile}--\ref{thm:correct_compilation}}\label{compilation-proofs}

\begin{lemma}\label{compilation:alltoplemma}
Let $u\in U$ and let $q$ be a qualified type such that $\allpre_d(q)$ for $d=\publicdom$. If $\allpure(u)$ is $q$-exact in $\publicdom$ then $\allpuretop(u)$ is $q$-exact.
\end{lemma}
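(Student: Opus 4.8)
The plan is to prove the claim by structural induction on the data type $t$ of $q=\qualty{t}{s}{d}$, exactly paralleling the proof of Lemma~\ref{semantics:approxcirclemma}. The key preliminary observation is that $\allpre_d(q)$ with $d=\publicdom$ forces $s=\prestg$ and $d=\publicdom$ at the top level, and recursively at every nested list-element type. By Lemma~\ref{typesystem:lemma} this also makes $q$ well-structured, so the statement ``$\allpuretop(u)$ is $q$-exact'' is meaningful in the sense of Definition~\ref{compilation:exact}. Before entering the induction I would note that, under $s=\prestg$ and $d=\publicdom$, three of the five implications of Definition~\ref{compilation:exact} are discharged uniformly: clause~(1) is vacuous since $s\ne\poststg$; clause~(5) is vacuous since $d=\publicdom$; and clause~(4) holds because the circuit component produced by $\allpuretop$ is $\top$ at every level by construction. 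Hence the whole burden reduces to verifying clauses~(2) and~(3), and the role of the hypothesis ``$\allpure(u)$ is $q$-exact in $\publicdom$'' is precisely to supply the information those two clauses demand.

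For the base case, where $t$ is a primitive type, I would unfold the hypothesis: since $\allpre_{\publicdom}(q)$ makes the predicate $P(q)\rightleftharpoons(d\subtype\publicdom)$ true, clause~(1) of Definition~\ref{semantics:exactdef} gives $\allpure(u)=\return v$ with $v\in t$. Because $\allpure$ maps tuples to pure tuples and only primitive core values to pure primitive values, this equality forces $u$ to be primitive and $u=v$. Then $\allpuretop(u)=(\return u,\top)=(\return v,\top)$, which satisfies clause~(2) of Definition~\ref{compilation:exact} (as $v\in t$) and clause~(4) (as the circuit component is $\top$); the unit type $\unitty$ is covered here as a primitive type.

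For the inductive step, where $t=\listty{q'}$, I would first note that $\allpre_{\publicdom}(q')$ follows from $\allpre_{\publicdom}(q)$. The hypothesis, via clause~(2) of Definition~\ref{semantics:exactdef}, yields $\allpure(u)=\return(\monadic{v}_1,\ldots,\monadic{v}_n)$ with each $\monadic{v}_i$ being $q'$-exact in $\publicdom$; as in the base case, the shape of $\allpure$ forces $u=(u_1,\ldots,u_n)$ and $\monadic{v}_i=\allpure(u_i)$. The induction hypothesis applied to each $u_i$ and $q'$ then gives that $\allpuretop(u_i)$ is $q'$-exact. Since $\allpuretop(u)=(\return(\allpuretop(u_1),\ldots,\allpuretop(u_n)),\top)$, clause~(3) of Definition~\ref{compilation:exact} is met with the components $\allpuretop(u_i)$, and clause~(4) again holds because the circuit component is $\top$.

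The proof presents no genuine difficulty; the only point requiring care is the bookkeeping between the two distinct exactness notions --- the semantic one of Definition~\ref{semantics:exactdef} applied to $\allpure(u)$ and the compilation one of Definition~\ref{compilation:exact} applied to $\allpuretop(u)$ --- together with the small but essential step of deducing the tuple/primitive shape of the core value $u$ from the shape of $\allpure(u)$. I expect this shape-matching argument to be the main (if modest) obstacle, since it is what licenses the componentwise application of the induction hypothesis.
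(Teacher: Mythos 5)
Your proposal is correct and follows essentially the same route as the paper's proof: induction on the structure of the data type, using $\allpre_{\publicdom}$ to pin down the stage and domain, matching the shape of $u$ against that of $\allpure(u)$, and applying the induction hypothesis componentwise in the list case. Your explicit accounting of which clauses of Definition~\ref{compilation:exact} are discharged trivially is a slightly more careful presentation than the paper's, but the substance is identical.
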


\begin{proof}
Let $q=(\qualty{t_0}{s_0}{d_0})$. By assumptions, $s_0=\prestg$ and $d_0=\publicdom$. By definition, $\allpure(u)=\return v$ for some~$v$. We proceed by induction on the structure of~$t_0$.
\begin{itemize}
\item Let $t_0$ be a primitive type. Then, by exactness, $v\in t_0$. This is possible only if $u=v$. Hence $\allpuretop(u)=(\return v,\top)$. This pair is $q$-exact by Definition~\ref{compilation:exact}.
\item Let $t_0=\listty{q'}$ for some qualified type~$q'$. By $\allpre_d(q)$, we must have $\allpre_d(q')$. By exactness, $v=(\monadic{v}_1,\ldots,\monadic{v}_n)$ where $n\in\NN$ and $\monadic{v}_i$ is $q'$-exact in $\publicdom$ for every $i=1,\ldots,n$. By definition of $\allpure$, all $\monadic{v}_i$ are of the form $\allpure(u_i)$ for some $u_i\in U$ such that $u=(u_1,\ldots,u_n)$. By the induction hypothesis, $\allpuretop(u_i)$ are $q'$-exact. Thus $\allpuretop(u)=(\return(\allpuretop(u_1),\ldots,\allpuretop(u_n)),\top)$ is $q'$-exact.
\end{itemize}
\end{proof}

\begin{lemma}\label{compilation:getlemma}
If $v\in U$ and $\pi\in(\NN^*)^2$ then $\allpuretop(v)\bullet\pi=\allpure(v)$.
\end{lemma}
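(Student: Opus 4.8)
The plan is to prove the identity by a straightforward structural induction on the core value $v\in U$, unfolding the three definitions of $\allpuretop$, of the bullet operation, and of $\allpure$, and matching the outcomes case by case. A useful preliminary observation is that $\pi$ never actually contributes: $\allpuretop$ produces, at every level, a composite value whose first component is pure (an application of $\return$) and whose second component is $\top$. Consequently the two ``unknown'' clauses of $\bullet$ (the $\return(c(\pi))$ clause for $\monadic{v}=\top,\monadic{c}=\return c$, and the $\top$ clause) are never reached, so the only clauses of $\bullet$ that fire are the four ``pure'' ones, and these depend only on the tree shape of $v$. Hence the induction is genuinely an induction on the structure of $v$.

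First I would dispatch the primitive base cases that go through verbatim. If $v=n\in\NN$, then $\allpuretop(n)=(\return n,\top)$, the first clause of $\bullet$ gives $(\return n,\top)\bullet\pi=\return n$, and $\allpure(n)=\return n$, so the two sides coincide. The unit value is identical using the third clause of $\bullet$: $\allpuretop(\singleton)=(\return\singleton,\top)$ and $(\return\singleton,\top)\bullet\pi=\return\singleton=\allpure(\singleton)$.

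Next comes the list case $v=(v_1,\ldots,v_n)$, which is purely mechanical. Here $\allpuretop(v)=(\return(\allpuretop(v_1),\ldots,\allpuretop(v_n)),\top)$, so writing $(\monadic{v}_i,\monadic{c}_i)=\allpuretop(v_i)$ the first component is of the form $\return((\monadic{v}_1,\monadic{c}_1),\ldots,(\monadic{v}_n,\monadic{c}_n))$. The list clause of $\bullet$ then yields $\return(\allpuretop(v_1)\bullet\pi,\ldots,\allpuretop(v_n)\bullet\pi)$, and applying the induction hypothesis to each $v_i$ rewrites this as $\return(\allpure(v_1),\ldots,\allpure(v_n))$, which is exactly $\allpure(v)$ by the recursive clause of $\allpure$.

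The one point that needs care, and which I expect to be the main obstacle, is the Boolean base case $v=b\in\BB$. There $\allpuretop(b)=(\return b,\top)$, and the second clause of $\bullet$ returns $\return|b|$ (with $|\tru|=0$, $|\fls|=1$), whereas $\allpure(b)=\return b$. Reconciling the two therefore hinges entirely on the treatment of the Boolean encoding: $\bullet$ deliberately applies $|\cdot|$ because it reports the value as seen through the circuit, while $\allpure$ keeps the raw Boolean. I would close this case either by restricting attention to input dictionaries whose entries are drawn from $\NN$ and the unit type (so that the Boolean clause never fires, which matches the intended use on instance/witness data), or by reading the stated equality up to the encoding identification $\sim$ of Definition~\ref{compilation:equiv}, under which $\return b\sim\return|b|$. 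With either reading the Boolean case closes and the induction goes through uniformly; verifying that this convention is the one consistent with the surrounding correctness theorems is the genuinely delicate part.
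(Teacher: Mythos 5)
Your induction is the same one the paper uses: structural induction on $v$, with the primitive cases discharged by unfolding $\allpuretop$ and the first clauses of $\bullet$, and the list case closed by the induction hypothesis under the $\return(\ldots\bullet\pi,\ldots)$ clause. The one place you diverge is the Boolean base case, and there you have in fact caught something the paper's own proof glosses over: the paper simply writes $(\return v,\top)\bullet\pi=\return v$ for every primitive $v$, which contradicts the second clause of Definition~\ref{compilation:eval} when $v\in\BB$, since that clause yields $\return|b|$ while $\allpure(b)=\return b$. So the lemma as literally stated fails on Boolean core values, and your hesitation is warranted rather than a defect of your argument. Of your two proposed repairs, the reading ``up to $\sim$'' is the one consistent with how the lemma is actually deployed: its sole use is in the $\getexpr{d'}{k}$ case of Theorem~\ref{thm:correct_compilation}, where the conclusion drawn from it is only $(\monadic{v},\monadic{c})\bullet\pi\sim\monadic{v}'$, and $\return b\sim\return|b|$ holds by clause~1 of Definition~\ref{compilation:equiv}. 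Weakening the lemma's conclusion to $\allpuretop(v)\bullet\pi\sim\allpure(v)$ (or restricting input dictionaries to integer and unit values, your other suggestion) makes your induction go through uniformly and preserves everything downstream.
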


\begin{proof}
We proceed by induction on the structure of~$v$. If $v$ is of a primitive type then
\[
\allpuretop(v)\bullet\pi
=(\return v,\top)\bullet\pi
=\return v
=\allpure(v)\mbox{.}
\]
If $v=(v_1,\ldots,v_n)$ then
\[
\begin{array}{lcl}
\allpuretop(v)\bullet\pi
&=&(\return(\allpuretop(v_1),\ldots,\allpuretop(v_n)),\top)\bullet\pi\\
&=&\return(\allpuretop(v_1)\bullet\pi,\ldots,\allpuretop(v_n)\bullet\pi)\mbox{.}
\end{array}
\]
On the other hand,
\[
\allpure(v)
=\return(\allpure(v_1),\ldots,\allpure(v_n))\mbox{.}
\]
The desired claim follows by the induction hypothesis.
\end{proof}

\begin{lemma}\label{compilation:updexactlemma}
Let $d_1,\ldots,d_n$ be domains and $q$ be a qualified type. Moreover, let $q_k=(\qualty{\uintty}{\prestg}{d_k})$ for every $k=1,\ldots,n$ and  $q'=(\qualty{\listty{\ldots\listty{\qualty{\listty{\qualty{\listty{q}}{\prestg}{d_n}}}{\prestg}{d_{n-1}}}\ldots}}{\prestg}{d_1})$. Assume $(\monadic{v},\monadic{c})$ being $q'$-exact, let $(\monadic{i}_k,\top)$ be $q_k$-exact for every $k=1,\ldots,n$, and let $(\monadic{v}',\monadic{c}')$ be $q$-exact. Then $\updatec((\monadic{v},\monadic{c}),\monadic{i}_1\ldots\monadic{i}_n,(\monadic{v}',\monadic{c}'))$ (assuming that it is well defined) is $q'$-exact.
\end{lemma}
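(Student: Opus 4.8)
The plan is to prove this by induction on~$n$, closely mirroring the proof of Lemma~\ref{semantics:updexactlemma} but now tracking composite values $(\monadic{v},\monadic{c})$ against the clauses of Definition~\ref{compilation:exact} rather than against the exactness notion of Definition~\ref{semantics:exactdef}. For the base case $n=0$ we have $q'=q$ and, by the definition of~$\updatec$, $\updatec((\monadic{v},\monadic{c}),\epsilon,(\monadic{v}',\monadic{c}'))=(\monadic{v}',\monadic{c}')$, which is $q$-exact by hypothesis; so this case is immediate.

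For the inductive step $n>0$ I would first record the structural fact that does most of the work: the outermost qualifier of $q'$ is always $\qual{\prestg}{d_1}$, so by clause~(4) of Definition~\ref{compilation:exact} any $q'$-exact pair has circuit component~$\top$. Since $\updatec$ in the $n>0$ branch returns $(\monadic{v}'',\top)$, clauses~(1) and~(4) hold automatically and only the value component $\monadic{v}''$ has to be examined. Writing $q'=\qualty{\listty{q'_1}}{\prestg}{d_1}$ with $q'_1$ the nested type obtained by peeling off the outermost list, and $q_1=\qualty{\uintty}{\prestg}{d_1}$ the first index type, I would then split on the domain~$d_1$.

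\textbf{Case $d_1=\publicdom$.} By clause~(3), $q'$-exactness of $(\monadic{v},\monadic{c})$ gives $\monadic{v}=\return((\monadic{v}_1,\monadic{c}_1),\ldots,(\monadic{v}_l,\monadic{c}_l))$ with each $(\monadic{v}_k,\monadic{c}_k)$ being $q'_1$-exact, while clause~(2) applied to $(\monadic{i}_1,\top)$ (which is $q_1$-exact) forces $\monadic{i}_1=\return i_1$ with $i_1\in\NN$; well-definedness of $\updatec$ gives $i_1\le l$. Evaluating the comprehension then replaces the $i_1$-th entry by $\updatec((\monadic{v}_{i_1},\monadic{c}_{i_1}),\monadic{i}_2\ldots\monadic{i}_n,(\monadic{v}',\monadic{c}'))$, which is $q'_1$-exact by the induction hypothesis applied to the shorter index vector $d_2,\ldots,d_n$ (whose hypotheses line up exactly: $(\monadic{v}_{i_1},\monadic{c}_{i_1})$ is $q'_1$-exact, each $(\monadic{i}_k,\top)$ is $q_k$-exact, and $(\monadic{v}',\monadic{c}')$ is $q$-exact). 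All entries of the resulting $\monadic{v}''$ are thus $q'_1$-exact, so $(\monadic{v}'',\top)$ is $q'$-exact by clause~(3). \textbf{Case $d_1\ne\publicdom$.} Here clause~(5) forces $\monadic{v}=\top$, so the first bind of the value component in the comprehension propagates~$\top$ and $\monadic{v}''=\top$; the pair $(\top,\top)$ satisfies clauses~(2),(3),(5) vacuously and clause~(4) directly, hence is $q'$-exact.

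The main point requiring care is the bookkeeping of the composite-value structure together with the maybe-monad comprehension: one must check that the index type $q_1$ and the nested list type $q'_1$ match the induction hypothesis precisely, and confirm that the circuit components—irrelevant here because all outer stages are $\prestg$—are disposed of uniformly through clause~(4). No genuinely new difficulty arises beyond that already present in Lemma~\ref{semantics:updexactlemma}; in particular the data-insensitivity device needed there is not required, since the relevant case distinction ($d_1=\publicdom$ versus $d_1\ne\publicdom$) is built directly into the clauses of Definition~\ref{compilation:exact}.
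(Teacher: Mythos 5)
Your proof is correct and follows essentially the same route as the paper's: induction on~$n$ with the trivial base case, then a case split on $d_1=\publicdom$ versus $d_1\ne\publicdom$ in the inductive step, checking the clauses of Definition~\ref{compilation:exact} with the circuit component disposed of via the $\prestg$ clause. Your closing observation that the data-insensitivity device of Lemma~\ref{semantics:updexactlemma} is not needed here is accurate, but the argument itself matches the paper's.
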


\begin{proof}
We proceed by induction on~$n$. If $n=0$ then $q'=q$ and $\updatec((\monadic{v},\monadic{c}),\monadic{i}_1\ldots\monadic{i}_n,(\monadic{v}',\monadic{c}'))=(\monadic{v}',\monadic{c}')$ which is $q$-exact by assumption. Let now $n>0$ and assume that the claim holds for $n-1$. Denote $q'_1=(\qualty{\listty{\ldots\listty{\qualty{\listty{\qualty{\listty{q}}{\prestg}{d_n}}}{\prestg}{d_{n-1}}}\ldots}}{\prestg}{d_2})$. In order to show that $\updatec((\monadic{v},\monadic{c}),\monadic{i}_1\ldots\monadic{i}_n,(\monadic{v}',\monadic{c}'))$ is $q'$-exact, we have to prove the implications in Definition~\ref{compilation:exact}:
\begin{itemize}
\item If $d_1=\publicdom$ then, by exactness, $\monadic{v}=\return((\monadic{v}_1,\monadic{c}_1),\ldots,(\monadic{v}_{l_1},\monadic{c}_{l_1}))$ where all pairs $(\monadic{v}_k,\monadic{c}_k)$ ($k=1,\ldots,l_1$) are $q'_1$-exact and $\monadic{i}_1=\return i_1$ with $i_1\in\NN$. Then $\updatec((\monadic{v},\monadic{c}),\monadic{i}_1\ldots\monadic{i}_n,(\monadic{v}',\monadic{c}'))=(\monadic{v}'',\top)$ where $\monadic{v}''=\mcomp{a\gets\monadic{v}\hstop i_1\gets\monadic{i}_1\hstop\return([i_1\mapsto\updatec(a_{i_1},\monadic{i}_2\ldots\monadic{i}_n,(\monadic{v}',\monadic{c}'))]a)}$; as the latter is well-defined, $i_1\leq l_1$. As $\updatec((\monadic{v}_{i_1},\monadic{c}_{i_1}),\monadic{i}_2\ldots\monadic{i}_n,(\monadic{v}',\monadic{c}'))$ is $q'_1$-exact by the induction hypothesis, all components of $v''$ are $q'_1$-exact as required.
\item As the qualified type $q'$ has stage $\prestg$, $\top$ as the circuit part of the compound value corresponds to the definition.
\item Finally, suppose that $d_1\ne\publicdom$. By $q'$-exactness of $(\monadic{v},\monadic{c})$, we have $\monadic{v}=\top$. Hence $\updatec((\monadic{v},\monadic{c}),\monadic{i}_1\ldots\monadic{i}_n,(\monadic{v}',\monadic{c}'))=(\top,\top)$, matching the definition.
\end{itemize}
This establishes the required claim.
\end{proof}

\begin{lemma}\label{compilation:updlemma}
Let $d_1,\ldots,d_n$ be domains and $q$ be a qualified type. Denote $q_k=(\qualty{\uintty}{\prestg}{d_k})$ for every $k=1,\ldots,n$ and  $q'=(\qualty{\listty{\ldots\listty{\qualty{\listty{\qualty{\listty{q}}{\prestg}{d_n}}}{\prestg}{d_{n-1}}}\ldots}}{\prestg}{d_1})$. Let $\pi\in\left(\NN^*\right)^2$ be a pair of circuit input sequences (for $\proverdom$ and $\verifierdom$). Let $(\monadic{a},\monadic{c}_0)$ be $q'$-exact and $\monadic{a}'$ be $q'$-exact in circuit; for all $k=1,\ldots,n$, let $(\monadic{i}_k,\top)$ be $q_k$-exact and $\monadic{i}'_k$ be $q_k$-exact in circuit; let $(\monadic{v},\monadic{c}_1)$ be $q$-exact and $\monadic{v}'$ be \mbox{$q$-exact} in circuit. Moreover, assume that $(\monadic{a},\monadic{c}_0)\bullet\pi\sim\monadic{a}'$, for every $k=1,\ldots,n$ we have $(\monadic{i},\top)\bullet\pi\sim\monadic{i}'_k$, and $(\monadic{v},\monadic{c}_1)\bullet\pi\sim\monadic{v}'$. If $\updatec((\monadic{a},\monadic{c}_0),\monadic{i}_1\ldots\monadic{i}_n,(\monadic{v},\monadic{c}_1))$ is well-defined then $\update(\monadic{a}',\monadic{i}'_1\ldots\monadic{i}'_n,\monadic{v}')$ is well-defined, too, whereby $\updatec((\monadic{a},\monadic{c}_0),\monadic{i}_1\ldots\monadic{i}_n,(\monadic{v},\monadic{c}_1))\bullet\pi\sim\update(\monadic{a}',\monadic{i}'_1\ldots\monadic{i}'_n,\monadic{v}')$.
\end{lemma}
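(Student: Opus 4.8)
The plan is to proceed by induction on $n$, mirroring the proof of Lemma~\ref{semantics:updexactcoinlemma} but now relating the compiler's $\updatec$ to the circuit semantics' $\update$ through the evaluation operation $\bullet\pi$. In the base case $n=0$ we have $q'=q$, and the definitions give $\updatec((\monadic{a},\monadic{c}_0),\epsilon,(\monadic{v},\monadic{c}_1))=(\monadic{v},\monadic{c}_1)$ and $\update(\monadic{a}',\epsilon,\monadic{v}')=\monadic{v}'$; well-definedness is immediate and the coincidence $(\monadic{v},\monadic{c}_1)\bullet\pi\sim\monadic{v}'$ is exactly one of the hypotheses. For the inductive step I would fix $q'_1=(\qualty{\listty{\ldots}}{\prestg}{d_2})$, the type obtained by stripping the outermost list layer from $q'$, and split on whether $d_1=\publicdom$.

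The substantive case is $d_1=\publicdom$. Here clause~(3) of Definition~\ref{compilation:exact} forces $\monadic{a}=\return(w_1,\ldots,w_{l_1})$ with every composite value $w_k$ being $q'_1$-exact, while clause~(2) forces $\monadic{i}_1=\return i_1$ with $i_1\in\NN$, whence $(\monadic{i}_1,\top)\bullet\pi=\return i_1$. On the circuit side, $q'$-exactness in circuit of $\monadic{a}'$ (clause~(2) of Definition~\ref{semantics:exactdef}) gives $\monadic{a}'=\return(\monadic{a}'_1,\ldots,\monadic{a}'_m)$ with each $\monadic{a}'_k$ being $q'_1$-exact in circuit, and $q_1$-exactness in circuit of $\monadic{i}'_1$ gives $\monadic{i}'_1=\return i'_1$ with $i'_1\in\NN$. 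Because $\monadic{a}$ is a pure list, $\bullet\pi$ distributes componentwise, so the hypothesis $(\monadic{a},\monadic{c}_0)\bullet\pi\sim\monadic{a}'$ together with clause~(3) of Definition~\ref{compilation:equiv} forces $m=l_1$ and $w_k\bullet\pi\sim\monadic{a}'_k$ for every $k$; similarly $(\monadic{i}_1,\top)\bullet\pi\sim\monadic{i}'_1$ forces $i_1=i'_1$, as both are integers and $|\cdot|$ is not applied to integers. Well-definedness of the given $\updatec$ supplies $i_1\le l_1$, hence $i'_1\le l_1$, making the outer step of $\update(\monadic{a}',\monadic{i}'_1\ldots\monadic{i}'_n,\monadic{v}')$ legal. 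I would then invoke the induction hypothesis on the single recursive call at position $i_1$, whose premises are exactly the componentwise facts just collected ($q'_1$-exactness of $w_{i_1}$ and of $\monadic{a}'_{i_1}$; the inherited exactness and coincidence of $\monadic{i}_2\ldots\monadic{i}_n$, $\monadic{i}'_2\ldots\monadic{i}'_n$, $(\monadic{v},\monadic{c}_1)$ and $\monadic{v}'$; and $w_{i_1}\bullet\pi\sim\monadic{a}'_{i_1}$). This delivers both well-definedness of the inner $\update$ and the coincidence at position $i_1$. Since $\updatec(\ldots)=(\monadic{v}'',\top)$ with $\monadic{v}''$ a pure list, $\bullet\pi$ distributes componentwise once more, and I would finish by checking $\updatec(\ldots)\bullet\pi\sim\update(\monadic{a}',\monadic{i}'_1\ldots\monadic{i}'_n,\monadic{v}')$ entry by entry: the entries $k\ne i_1$ agree via $w_k\bullet\pi\sim\monadic{a}'_k$, and the entry $k=i_1$ agrees by the induction hypothesis.

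The degenerate case $d_1\ne\publicdom$ collapses quickly. Since $q'$ has stage $\prestg$ and $d_1\ne\publicdom$, clauses~(4)--(5) of Definition~\ref{compilation:exact} give $(\monadic{a},\monadic{c}_0)=(\top,\top)$, so the monad comprehension defining $\monadic{v}''$ short-circuits on $a\gets\top$ and $\updatec(\ldots)=(\top,\top)$, whence $\updatec(\ldots)\bullet\pi=\top$. Correspondingly, the circuit predicate $P$ is false at $q'$, so clause~(3) of Definition~\ref{semantics:exactdef} gives $\monadic{a}'=\top$, making $\update(\top,\monadic{i}'_1\ldots\monadic{i}'_n,\monadic{v}')=\top$ well-defined; then $\top\sim\top$ holds by clause~(4) of Definition~\ref{compilation:equiv}.

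I expect the main obstacle to be the bookkeeping in the public case: one must reconcile the two independent descriptions of the result list $\monadic{a}'$ --- one arising from its circuit-exactness and one from the coincidence $(\monadic{a},\monadic{c}_0)\bullet\pi\sim\monadic{a}'$ --- to confirm they have equal length $l_1$ and matching components, and one must track that the index values $i_1$ and $i'_1$ coincide so that the compiler and the semantics update the same position. Propagating well-definedness from the $\updatec$ side to the $\update$ side through the recursion, rather than presupposing it on both sides, is the other point requiring care.
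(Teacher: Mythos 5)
Your proposal is correct and follows essentially the same route as the paper's proof: induction on $n$ with the base case discharged by the hypothesis $(\monadic{v},\monadic{c}_1)\bullet\pi\sim\monadic{v}'$, and the inductive step split on $d_1=\publicdom$, using exactness on both sides to expose pure lists and indices, matching lengths and positions via the coincidence hypotheses, applying the induction hypothesis at position $i_1$, and collapsing the $d_1\ne\publicdom$ case to $\top\sim\top$. The points you flag as requiring care (reconciling the two descriptions of $\monadic{a}'$, propagating well-definedness from $\updatec$ to $\update$) are exactly the ones the paper's proof handles, in the same way.
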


\begin{proof}
We proceed by induction on~$n$. If $n=0$ then $q'=q$, and $\updatec((\monadic{a},\monadic{c}_0),\monadic{i}_1\ldots\monadic{i}_n,(\monadic{v},\monadic{c}_1))=(\monadic{v},\monadic{c}_1)$, $\update(\monadic{a}',\monadic{i}'_1\ldots\monadic{i}'_n,\monadic{v})=\monadic{v}'$. By assumption, $(\monadic{v},\monadic{c}_1)\bullet\pi\sim\monadic{v}'$. Let now $n>0$ and assume that the claim holds for $n-1$. Denoting $q'_1=(\qualty{\listty{\ldots\listty{\qualty{\listty{\qualty{\listty{q}}{\prestg}{d_n}}}{\prestg}{d_{n-1}}}\ldots}}{\prestg}{d_2})$, consider two cases:
\begin{itemize}
\item Assume that $d_1=\publicdom$. By the exactness assumptions, $\monadic{a}=\return((\monadic{a}_1,\monadic{c}'_1),\ldots,(\monadic{a}_{l_1},\monadic{c}'_{l_1}))$ and $\monadic{a}'=\return(\monadic{a}'_1,\ldots,\monadic{a}'_{l_1})$ with $(\monadic{a}_k,\monadic{c}'_k)$ being $q'_1$-exact and $a'_k$ being $q'_1$-exact in circuit, whereby $(\monadic{a},\monadic{c}_0)\bullet\pi\sim\monadic{a}'$ implies $(\monadic{a}_k,\monadic{c}'_k)\bullet\pi\sim\monadic{a}'_k$ for every $k=1,\ldots,n$. Similarly, we must have $\monadic{i}_1=\monadic{i}'_1=\return i_1$ with $i_1\in\NN$. As $\updatec((\monadic{a},\monadic{c}_0),\monadic{i}_1\ldots\monadic{i}_n,(\monadic{v},\monadic{c}_1))$ is well-defined, so is $\updatec((\monadic{a}_{i_1},\monadic{c}'_{i_1}),\monadic{i}_2\ldots\monadic{i}_n,(\monadic{v},\monadic{c}_1))$ and $i_1\leq l_1$. By the induction hypothesis, $\update(\monadic{a}_{i_1},\monadic{i}'_2\ldots\monadic{i}'_n,\monadic{v}')$ is well-defined and $\updatec((\monadic{a}_{i_1},\monadic{c}'_{i_1}),\monadic{i}_2\ldots\monadic{i}_n,(\monadic{v},\monadic{c}_1))\bullet\pi\sim\update(\monadic{a}_{i_1},\monadic{i}'_2\ldots\monadic{i}'_n,\monadic{v}')$. All this implies that $\update(\monadic{a}',\monadic{i}'_1\ldots\monadic{i}'_n,\monadic{v}')$ is also well-defined. 
%\[
%\begin{array}{l}
%\updatec((\monadic{a},\monadic{c}_0),\monadic{i}_1\ldots\monadic{i}_n,(\monadic{v},\monadic{c}_1))=(\return((\monadic{a}_1,\monadic{c}'_1),\ldots,(\monadic{a}_{i_1-1},\monadic{c}'_{i_1-1}),\updatec((\monadic{a}_{i_1},\monadic{c}'_{i_1}),\monadic{i}_2\ldots\monadic{i}_n,(\monadic{v},\monadic{c}_1)),(\monadic{a}_{i_1+1},\monadic{c}'_{i_1+1}),\ldots,(\monadic{a}_{l_1},\monadic{c}'_{l_1})),\top)\mbox{,}\\
%\update(\monadic{a}',\monadic{i}'_1\ldots\monadic{i}'_n,\monadic{v}')=\return(\monadic{a}'_1,\ldots,\monadic{a}'_{i_1-1},\update(\monadic{a}'_{i_1},\monadic{i}'_2\ldots\monadic{i}'_n,\monadic{v}'),\monadic{a}'_{i_1+1},\ldots,\monadic{a}'_{l_1})\mbox{,}
%\end{array}
%\]
Using the definitions of $\updatec$ and $\update$, we obtain $\updatec((\monadic{a},\monadic{c}_0),\monadic{i}_1\ldots\monadic{i}_n,(\monadic{v},\monadic{c}_1))\bullet\pi\sim\update(\monadic{a}',\monadic{i}'_1\ldots\monadic{i}'_n,\monadic{v}')$ as desired.
\item Now assume that $d_1\ne\publicdom$. Then $(\monadic{a},\monadic{c}_0)=(\top,\top)$ and $\monadic{a}'=\top$ by exactness and exactness in circuit. Hence also $\updatec((\monadic{a},\monadic{c}_0),\monadic{i}_1\ldots\monadic{i}_n,(\monadic{v},\monadic{c}_1))=(\top,\top)$ and $\update(\monadic{a}',\monadic{i}'_1\ldots\monadic{i}'_n,\monadic{v}')=\top$ which is well-defined. The desired claim follows since $(\top,\top)\bullet\pi=\top$.
\end{itemize}
\end{proof}

\begin{theorem}[Theorem~\ref{thm:can_compile}]
Let $\Gamma \vd e : q\eff D$ with well-structured~$\Gamma$. Let $\gamma\in\Envc$ be $\Gamma$-exact. Let $\phi\in\In^3$ such that, for all subexpressions of~$e$ of the form $\ofty{\getexpr{d}{k}}{q'}$ where $d=\publicdom$, the value $\allpure(\phi_{d}(k))$ is $q'$-exact in~$d$. Let $\nu\in\NN^2$. Unless an array lookup fails due to an index being out of bounds, we have $\semc{e}\gamma\phi\nu=\return((\monadic{v},\monadic{c}),\gamma',\omicron,\nu')$ for some $\monadic{v}$, $\monadic{c}$, $\gamma'$, $\omicron$ and $\nu'$, whereby $(\monadic{v},\monadic{c})$ is $q$-exact and $\gamma'$ is $\Gamma$-exact.
\end{theorem}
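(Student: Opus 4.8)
The plan is to prove the statement by structural induction on $e$, closely paralleling the proof of Theorem~\ref{semantics:localthm} but replacing monadic values in $M\ap V$ by composite values in $M\ap\Vc\times M\ap T$ and checking the five implications of Definition~\ref{compilation:exact} in each case (instead of the three clauses of ordinary $q$-exactness). For every syntactic form I would unfold the corresponding clause of $\semc{\cdot}$ from Figures~\ref{fig:compilationsmall} and~\ref{fig:compilationlarge}, apply the induction hypothesis to the immediate subexpressions (reading off their types from the rules of Fig.~\ref{typesystem:exprstmt}), and then verify that the returned pair $(\monadic{v},\monadic{c})$ is $q$-exact and that $\gamma'$ is again $\Gamma$-exact. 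The base cases ($\epsilon$, literals, variables) and the purely compositional cases (sequencing, \lstinline{let}, conditionals, loops) are routine; for the variable case I would use the fact, immediate from Definition~\ref{compilation:env}, that lookup of $x$ in a $\Gamma$-exact composite environment yields a $\lookup{\Gamma}{x}$-exact composite value (the analogue of Lemma~\ref{semantics:lookupexactlemma}), and note that $\gamma$ is returned unchanged.

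Three cases rely on dedicated lemmas that are already in place. For $\getexpr{d}{k}$, the typing rule supplies $\allpre_d(q)$; when $d=\publicdom$ I would feed the hypothesis that $\allpure(\phi_d(k))$ is $q$-exact in $\publicdom$ into Lemma~\ref{compilation:alltoplemma} to conclude that $\allpuretop(\phi_d(k))$ is $q$-exact, while when $d\ne\publicdom$ the result $(\top,\top)$ is $q$-exact directly from clauses~(4) and~(5) of Definition~\ref{compilation:exact} (using $s=\prestg$ forced by $\allpre_d$). For assignment I would first apply Lemma~\ref{typesystem:lhslemma} to recover the types of $x$ and of the index expressions $y_1,\ldots,y_n$, apply the induction hypothesis to each $y_k$ and to the right-hand side, and then invoke Lemma~\ref{compilation:updexactlemma} to show that the value written back into $x$ by $\updatec$ remains exact at $x$'s type; this is exactly what preserves $\Gamma$-exactness of $\gamma'$. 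For \lstinline{assert} I would observe that, since the argument is typed in stage $\poststg$, clause~(1) of the induction hypothesis guarantees a circuit $\monadic{c}_1=\return c$, so the guard $\guard(\monadic{c}_1\ne\top)$ never fires and the computation is productive. The only admitted source of non-productivity is the out-of-bounds access in $\loadexpr{l}{e}$, precisely the exception stated in the theorem; there, when the list domain is $\publicdom$ clause~(3) of the hypothesis gives a list of exact elements, from which the indexed element inherits exactness, and when it is non-public well-structuredness (Theorem~\ref{typesystem:thm}) forces $s_0=\prestg$ and $d_0\ne\publicdom$, making the returned $(\top,\top)$ exact.

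The hard part will be clause~(1) of Definition~\ref{compilation:exact}, the value--circuit consistency condition, in the arithmetic, \lstinline{wire}, and cast cases. The key observation is that this condition only bites when both $s=\poststg$ and $d=\publicdom$, i.e.\ when the value and the circuit are simultaneously present; for $\qual{\poststg}{\proverdom}$ or $\qual{\poststg}{\verifierdom}$ the value component is $\top$ and only the \emph{existence} of a circuit is demanded. I would therefore carry along the invariant that every circuit produced in the $\qual{\poststg}{\publicdom}$ situation is built solely from constant and operation nodes, since input nodes arise only from \lstinline{wire} applied to a non-public value, where the value component is $\top$ and no consistency is required. Under this invariant, the recursion defining $|\cdot|$ gives $|\mknode{\mkoper(+)}(c_1,c_2)|=\return(v_1+v_2)$ whenever $|c_1|=\return v_1$ and $|c_2|=\return v_2$, matching the value component $\return(v_1+v_2)$, while the absence of input nodes yields $|c|=\return(c(\pi))$ for every input $\pi$; the constant-node subcases for literals and for public \lstinline{wire} are immediate, and the cast case follows because casting into $\poststg$ forces the source to already be in $\poststg$, so the circuit and its evaluation are inherited verbatim from the induction hypothesis. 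Establishing this node-shape invariant and threading it through the arithmetic and \lstinline{wire} clauses is the only genuinely delicate piece of bookkeeping in the argument.
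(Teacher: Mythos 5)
Your proposal is correct and follows essentially the same route as the paper's proof: structural induction on $e$, with Lemma~\ref{compilation:alltoplemma} for $\getexpr{d}{k}$, Lemma~\ref{typesystem:lhslemma} together with Lemma~\ref{compilation:updexactlemma} for assignment, the lookup property of $\Gamma$-exact composite environments for variables, and the stage-$\poststg$ typing of the argument of \lstinline{assert} to discharge the guard. The only place you over-engineer is the final paragraph: the separate ``constant-and-operation-nodes-only'' invariant is unnecessary, because the conditions $|c|=\return n$ and $|c|=\return(c(\pi))$ are already built into clause~(1) of Definition~\ref{compilation:exact} and hence into the induction hypothesis itself (and $|c|=\return n$ being pure already entails that no input node influences~$c$, since $|\cdot|$ sends input nodes to $\top$), so the arithmetic, \lstinline{wire} and cast cases close directly without threading any extra structural invariant.
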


\begin{proof}
Let $q=(\qualty{t_0}{s_0}{d_0})$. We proceed by induction on the structure of~$e$. Assume that no array lookup fails due to an index being out of bounds.

\begin{itemize}
\item If $e=\epsilon$ then $t_0=\unitty$, $s_0=\prestg$, $d_0=\publicdom$. We obtain $\semc{e}\gamma\phi\nu=\return((\monadic{v},\monadic{c}),\gamma',\omicron,\nu')$ where
\[
(\monadic{v},\monadic{c})=\return\singleton\mbox{,}\quad\gamma'=\gamma\mbox{,}\quad\omicron=\epsilon\mbox{,}\quad\nu'=\nu\mbox{.}
\]
The pair $(\monadic{v},\monadic{c})$ is $q$-exact since $\singleton\in\unitty$. Finally, $\gamma'$ is $\Gamma$-exact by assumption about~$\gamma$.

\item If $e=\overline{n}$ where $n\in\NN$ then $t_0=\uintmodty{\mbox{\lstinline+N+}}$. We obtain $\semc{e}\gamma\phi\nu=\return((\monadic{v},\monadic{c}),\gamma',\omicron,\nu')$ where
\[
\begin{array}{l}
(\monadic{v},\monadic{c})=(\branching{\return n&\mbox{if $d_0=\publicdom$}\\\top&\mbox{otherwise}},\branching{\return(\mknode{\mkconst(n)})&\mbox{if $s_0=\poststg$}\\\top&\mbox{otherwise}})\mbox{,}\\\gamma'=\gamma\mbox{,}\quad\omicron=\epsilon\mbox{,}\quad\nu'=\nu\mbox{.}
\end{array}
\]
The pair $(\monadic{v},\monadic{c})$ is $q$-exact since $|\mknode{\mkconst(n)}|=\return n=\return(\mknode{\mkconst(n)}(\pi))$. Finally, $\gamma'$ is $\Gamma$-exact by assumption about~$\gamma$.

\item If $e=\overline{b}$ where $b\in\BB$ then $t_0=\boolmodty{\mbox{\lstinline+N+}}$. We obtain $\semc{e}\gamma\phi\nu=\return((\monadic{v},\monadic{c}),\gamma',\omicron,\nu')$ where
\[
\begin{array}{l}
(\monadic{v},\monadic{c})=(\branching{\return b&\mbox{if $d_0=\publicdom$}\\\top&\mbox{otherwise}},\branching{\return(\mknode{\mkconst(|b|)})&\mbox{if $s_0=\poststg$}\\\top&\mbox{otherwise}})\mbox{,}\\\gamma'=\gamma\mbox{,}\quad\omicron=\epsilon\mbox{,}\quad\nu'=\nu\mbox{.}
\end{array}
\]
The pair $(\monadic{v},\monadic{c})$ is $q$-exact since $|\mknode{\mkconst(|b|)}|=\return|b|=\return(\mknode{\mkconst(|b|)}(\pi))$. Finally, $\gamma'$ is $\Gamma$-exact by assumption about~$\gamma$.

\item If $e=x$ then $\lookup{\Gamma}{x}=q$. We obtain $\semc{e}\gamma\phi\nu=\return((\monadic{v},\monadic{c}),\gamma',\omicron,\nu')$ where
\[
(\monadic{v},\monadic{c})=\lookup{\gamma}{x}\mbox{,}\quad\gamma'=\gamma\mbox{,}\quad\omicron=\epsilon\mbox{,}\quad\nu'=\nu\mbox{.}
\]
By assumption, $\gamma$ is $\Gamma$-exact, whence $\lookup{\gamma}{x}$ is $q$-exact and $\gamma'$ is $\Gamma$-exact as needed.

\item If $e=\addexpr{e_1}{e_2}$ then
\[
\begin{array}{l}
\Gamma\vd e_1:q\mbox{,}\\
\Gamma\vd e_2:q\mbox{.}
\end{array}
\]
By the induction hypothesis about $e_1$, we have $\semc{e_1}\gamma\phi\nu=\return((\monadic{v}_1,\monadic{c}_1),\gamma_1,\omicron_1,\nu_1)$ where $(\monadic{v}_1,\monadic{c}_1)$ is $q$-exact and $\gamma_1$ is $\Gamma$-exact. Hence by the induction hypothesis about $e_2$, we have $\semc{e_2}\gamma_1\phi\nu_1=\return((\monadic{v}_2,\monadic{c}_2),\gamma_2,\omicron_2,\nu_2)$ where $(\monadic{v}_2,\monadic{c}_2)$ is $q$-exact and $\gamma_2$ is $\Gamma$-exact. We obtain $\semc{e}\gamma\phi\nu=\return((\monadic{v},\monadic{c}),\gamma',\omicron,\nu')$ where
\[
\begin{array}{l}
\monadic{v}=\mcomp{i_1\gets\monadic{v}_1\hstop i_2\gets\monadic{v}_2\hstop\return(i_1+i_2)}\mbox{,}\\
\monadic{c}=\mcomp{c_1\gets\monadic{c}_1\hstop c_2\gets\monadic{c}_2\hstop\return(\mknode{\mkoper(+)}(c_1,c_2))}\mbox{,}\\
\gamma'=\gamma_2\mbox{,}\quad\omicron=\omicron_1\omicron_2\mbox{,}\quad\nu'=\nu_2\mbox{.}
\end{array}
\]
The desired claim about~$\gamma'$ holds because of $\gamma_2$ being $\Gamma$-exact. For proving that $(\monadic{v},\monadic{c})$ is $q$-exact, we have to establish all implications in Definition~\ref{compilation:exact}.
\begin{itemize}
\item If $s_0=\poststg$ then $\monadic{c}_1=\return c_1$ and $\monadic{c}_2=\return c_2$ by exactness. Hence we obtain $\monadic{c}=\return(\mknode{\mkoper(+)}(c_1,c_2))$. Suppose that $\monadic{v}=\return n$, $n\in\NN$. Then $\monadic{v}_1=\return i_1$, $\monadic{v}_2=\return i_2$, $n=i_1+i_2$. Moreover, $|c_1|=\return i_1=\return(c_1(\pi))$ and $|c_2|=\return i_2=\return(c_2(\pi))$. This implies $|c|=\return(i_1+i_2)=\return n$ and $|c|=\return(c_1(\pi)+c_2(\pi))=\return(\mknode{\mkoper(+)}(c_1,c_2)(\pi))$ as desired.
\item If $d_0=\publicdom$ then $\monadic{v}_1=\return i_1$ and $\monadic{v}_2=\return i_2$ where $i_1,i_2\in\NN$. Hence $\monadic{v}=\return(i_1+i_2)$, $i_1+i_2\in\NN$ as desired.
\item If $s_0=\prestg$ then $\monadic{c}_1=\monadic{c}_2=\top$. Hence $\monadic{c}=\top$.
\item If $d_0\ne\publicdom$ then $\monadic{v}_1=\monadic{v}_2=\top$. Hence $\monadic{v}=\top$.
\end{itemize}

\item If $e=\assertexpr{e_1}$ then
\[
\begin{array}{l}
\Gamma\vd e_1:\qualty{\boolmodty{\mbox{\lstinline+N+}}}{\poststg}{d_1}\mbox{,}\\
t_0=\unitty\mbox{,}\quad s_0=\prestg\mbox{,}\quad d_0=\publicdom\mbox{.}
\end{array}
\]
Denote $q'=(\qualty{\boolmodty{\mbox{\lstinline+N+}}}{\poststg}{d_1})$. By the induction hypothesis about $e_1$, we have $\semc{e_1}\gamma\phi\nu=\return((\monadic{v}_1,\monadic{c}_1),\gamma_1,\omicron_1,\nu_1)$ where $(\monadic{v}_1,\monadic{c}_1)$ is $q'$-exact and $\gamma_1$ is $\Gamma$-exact. The former guarantees $\monadic{c}_1=\return c_1$ for some $c_1\in T$. Therefore we have $\semc{e}\gamma\phi\nu=\return((\monadic{v},\monadic{c}),\gamma',\omicron,\nu')$ where
\[
\begin{array}{l}
(\monadic{v},\monadic{c})=(\return\singleton,\top)\mbox{,}\quad\gamma'=\gamma_1\mbox{,}\quad\omicron=\omicron_1c_1\mbox{,}\quad\nu'=\nu_1\mbox{.}
\end{array}
\]
The desired claim about~$\gamma'$ holds because of $\gamma_1$ being $\Gamma$-exact. The pair $(\monadic{v},\monadic{c})$ is $q$-exact since $\singleton\in\unitty$.

\item If $e=\getexpr{d}{k}$ then $s_0=\prestg$ and $d_0=d$. We obtain $\semc{e}\gamma\phi\nu=\return((\monadic{v},\monadic{c}),\gamma',\omicron,\nu')$ where
\[
(\monadic{v},\monadic{c})=\branching{\allpuretop(\phi_d(k))&\mbox{if $d=\publicdom$}\\(\top,\top)&\mbox{otherwise}}\mbox{,}\quad\gamma'=\gamma\mbox{,}\quad\omicron=\epsilon\mbox{,}\quad\nu'=\nu\mbox{.}
\]
If $d=\publicdom$ then, by assumption, $\allpure(\phi_d(k))$ is $q$-exact in $\publicdom$, which implies $\allpuretop(\phi_d(k))$ being $q$-exact by Lemma~\ref{compilation:alltoplemma}. Hence if $d=\publicdom$ then $(\monadic{v},\monadic{c})$ is $q$-exact. Otherwise, $(\top,\top)$ is $q$-exact by Definition~\ref{compilation:exact}. Finally, $\gamma'$ is $\Gamma$-exact by the assumption about~$\gamma$.

\item If $e=\ifexpr{e_1}{e_2}{e_3}$ then
\[
\begin{array}{l}
\Gamma\vd e_1:\qualty{\boolmodty{\mbox{\lstinline+N+}}}{\prestg}{d_1}\mbox{,}\\
\Gamma\vd e_2:q\mbox{,}\\
\Gamma\vd e_3:q\mbox{,}\\
\gen{d_1}\supseteq\gen{s_0}\cup\gen{d_0}\mbox{.}
\end{array}
\]
Denote $q'=\qualty{\boolmodty{\mbox{\lstinline+N+}}}{\prestg}{d_1}$. By the induction hypothesis about~$e_1$, we have $\semc{e_1}\gamma\phi\nu=\return((\monadic{v}_1,\monadic{c}_1),\gamma_1,\omicron_1,\nu_1)$ where $(\monadic{v}_1,\monadic{c}_1)$ is $q'$-exact and $\gamma_1$ is $\Gamma$-exact. Consider three cases:
\begin{itemize}
\item Suppose $\monadic{v}_1=\return\tru$. By the induction hypothesis about~$e_2$, we have $\semc{e_2}\gamma_1\phi\nu_1=\return((\monadic{v}_2,\monadic{c}_2),\gamma_2,\omicron_2,\nu_2)$ where $(\monadic{v}_2,\monadic{c}_2)$ is $q$-exact and $\gamma_2$ is $\Gamma$-exact. We obtain $\semc{e}\gamma\phi\nu=\return((\monadic{v},\monadic{c}),\gamma',\omicron,\nu')$ where 
\[
\begin{array}{l}
(\monadic{v},\monadic{c})=(\monadic{v}_2,\monadic{c}_2)\mbox{,}\quad\gamma'=\gamma_2\mbox{,}\quad\omicron=\omicron_1\omicron_2\mbox{,}\quad\nu'=\nu_2\mbox{.}
\end{array}
\]
The desired claim follows.
\item Suppose $\monadic{v}_1=\return\fls$. By the induction hypothesis about~$e_3$, we have $\semc{e_3}\gamma_1\phi\nu_1=\return((\monadic{v}_3,\monadic{c}_3),\gamma_3,\omicron_3,\nu_3)$ where $(\monadic{v}_3,\monadic{c}_3)$ is $q$-exact and $\gamma_3$ is $\Gamma$-exact. We obtain $\semc{e}\gamma\phi\nu=\return((\monadic{v},\monadic{c}),\gamma',\omicron,\nu')$ where 
\[
\begin{array}{l}
(\monadic{v},\monadic{c})=(\monadic{v}_3,\monadic{c}_3)\mbox{,}\quad\gamma'=\gamma_3\mbox{,}\quad\omicron=\omicron_1\omicron_3\mbox{,}\quad\nu'=\nu_3\mbox{.}
\end{array}
\]
The desired claim follows.
\item If $\monadic{v}_1\ne\return b$ for $b\in\BB$ then we obtain $\semc{e}\gamma\phi\nu=\return((\monadic{v},\monadic{c}),\gamma',\omicron,\nu')$ where
\[
(\monadic{v},\monadic{c})=(\top,\top)\mbox{,}\quad\gamma'=\gamma_1\mbox{,}\quad\omicron=\omicron_1\mbox{,}\quad\nu'=\nu_1\mbox{.}
\]
By $\gamma_1$ being $\Gamma$-exact, we obtain the desired result about~$\gamma'$. Moreover, $q'$-exactness of $(\monadic{v}_1,\monadic{c}_1)$ implies $d_1\ne\publicdom$. Now $\gen{d_1}\supseteq\gen{s_0}\cup\gen{d_0}$ implies $d_0\ne\publicdom$ and $s_0=\prestg$. Consequently, $(\top,\top)$ is $q$-exact as desired.
\end{itemize}

\item If $e=\forexpr{x}{e_1}{e_2}{e_3}$ then
\[
\begin{array}{l}
\Gamma\vd e_1:\qualty{\uintty}{\prestg}{d_0}\mbox{,}\\
\Gamma\vd e_2:\qualty{\uintty}{\prestg}{d_0}\mbox{,}\\
(x:\qualty{\uintty}{\prestg}{d_0}),\Gamma\vd e_3:q_1\mbox{,}\\
t_0=\listty{q_1}\mbox{,}\quad s_0=\prestg\mbox{.}
\end{array}
\]
Denote $q'=(\qualty{\uintty}{\prestg}{d_0})$. By the induction hypothesis about $e_1$, we have $\semc{e_1}\gamma\phi\nu=\return((\monadic{v}_1,\monadic{c}_1),\gamma_1,\omicron_1,\nu_1)$ where $(\monadic{v}_1,\monadic{c}_1)$ is $q'$-exact and $\gamma_1$ is $\Gamma$-exact. Hence by the induction hypothesis about $e_2$, we have $\semc{e_2}\gamma_1\phi\nu_1=\return((\monadic{v}_2,\monadic{c}_2),\gamma_2,\omicron_2,\nu_2)$ where $(\monadic{v}_2,\monadic{c}_2)$ is $q'$-exact and $\gamma_2$ is $\Gamma$-exact. Note that $q'$-exactness implies $\monadic{c}_1=\monadic{c}_2=\top$. Consider two cases:
\begin{itemize}
\item Suppose $\monadic{v}_1=\return i_1$ and $\monadic{v}_2=\return i_2$. By $q'$-exactness, $d_0=\publicdom$. Denote $n=\max(0,i_2-i_1)$. All pairs of the form $(\return(i_1+k),\top)$ are $q'$-exact as $(\return i_1,\top)$ is \mbox{$q'$-exact} by the above. Hence $\left((x,(\return i_1,\top)),\gamma_2\right)$ is $\Gamma'$-exact where $\Gamma'=((x:q'),\Gamma)$. By the induction hypothesis about~$e_3$, we have $\semc{e_3}((x,(\return i_1,\top)),\gamma_2)\phi\nu_2=\return((\monadic{v}_3,\monadic{c}_3),\gamma_3,\omicron_3,\nu_3)$ where $(\monadic{v}_3,\monadic{c}_3)$ is $q_1$-exact and $\gamma_3$ is $\Gamma'$-exact. As the updated environment $[x\mapsto(\return(i_1+1),\top)]\gamma_3$ is also $\Gamma'$-exact, the induction hypothesis also implies $\semc{e_3}([x\mapsto(\return(i_1+1),\top)]\gamma_3)\phi\nu_3=\return((\monadic{v}_4,\monadic{c}_4),\gamma_4,\omicron_4,\nu_4)$ where $(\monadic{v}_4,\monadic{c}_4)$ being $q_1$-exact and $\gamma_4$ being $\Gamma'$-exact. Analogously, for all $k=5,\ldots,n+2$, we obtain $\semc{e_3}([x\mapsto(\return(i_1+k-3),\top)]\gamma_{k-1})\phi\nu_{k-1}=\return((\monadic{v}_k,\monadic{c}_k),\gamma_k,\omicron_k,\nu_k)$ where $(\monadic{v}_k,\monadic{c}_k)$ is $q_1$-exact and $\gamma_k$ is $\Gamma'$-exact. We obtain $\semc{e}\gamma\phi\nu=\return((\monadic{v},\monadic{c}),\gamma',\omicron,\nu')$ where
\[
\begin{array}{l}
(\monadic{v},\monadic{c})=(\return((\monadic{v}_3,\monadic{c}_3),\ldots,(\monadic{v}_{n+2},\monadic{c}_{n+2})),\top)\mbox{,}\\\gamma'=\tail\gamma_{n+2}\mbox{,}\quad\omicron=\omicron_1\ldots\omicron_{n+2}\mbox{,}\quad\nu'=\nu_{n+2}\mbox{.}
\end{array}
\]
Hence $(\monadic{v},\monadic{c})$ is $q$-exact since $d_0=\publicdom$ and $s_0=\prestg$. As $\gamma_{n+2}$ is $\Gamma'$-exact, $\tail\gamma_{n+2}$ is $\Gamma$-exact, implying also the desired claim about~$\gamma'$.
\item If $\monadic{v}_1\ne\return n$ or $\monadic{v}_2\ne\return n$ for $n\in\NN$ then we obtain $\semc{e}\gamma\phi\nu=\return((\monadic{v},\monadic{c}),\gamma',\omicron,\nu')$ where
\[
(\monadic{v},\monadic{c})=(\top,\top)\mbox{,}\quad\gamma'=\gamma_2\mbox{,}\quad\omicron=\omicron_1\omicron_2\mbox{,}\quad\nu'=\nu_2\mbox{.}
\]
By $\gamma_2$ being $\Gamma$-exact, we obtain the desired result about~$\gamma'$. Moreover, $q'$-exactness of $(\monadic{v}_i,\monadic{c}_i)$ for $i=1,2$ implies $d_0\ne\publicdom$. As $s_0=\prestg$, this implies $(\top,\top)$ being $q$-exact.
\end{itemize}

\item If $e=\wireexpr{e_1}$ then
\[
\begin{array}{l}
\Gamma\vd e_1:\qualty{t_0}{\poststg}{d_0}\mbox{,}\\
t_0\in\left\lbrace\uintmodty{\mbox{\lstinline{N}}},\boolmodty{\mbox{\lstinline{N}}}\right\rbrace\mbox{.}
\end{array}
\]
Denote $q'=\qualty{t_0}{\poststg}{d_0}$. By the induction hypothesis about $e_1$, we have $\semc{e_1}\gamma\phi\nu=\return((\monadic{v}_1,\monadic{c}_1),\gamma_1,\omicron_1,\nu_1)$ where $(\monadic{v}_1,\monadic{c}_1)$ is $q'$-exact and $\gamma_1$ is $\Gamma$-exact. We obtain $\semc{e}\gamma\phi\nu=\return((\monadic{v},\monadic{c}),\gamma',\omicron,\nu')$ where
\[
\begin{array}{l}
(\monadic{v},\monadic{c})=(\monadic{v}_1,\branching{\return(\mknode{\mkconst(n)})&\mbox{if $\monadic{v}_1=\return n$, $n\in\NN$}\\\return(\mknode{\mkconst(|b|)})&\mbox{if $\monadic{v}_1=\return b$, $b\in\BB$}\\\return(\mknode{\mkinput_d({(\nu_1)}_d)})&\mbox{if $\monadic{v}_1=\top$}})\mbox{,}\\\gamma'=\gamma_1\mbox{,}\quad\omicron=\omicron_1\mbox{,}\quad\nu'=\lam{d'}{\branching{(\nu_1)_{d'}+1&\mbox{if $d'=d$}\\(\nu_1)_{d'}&\mbox{otherwise}}}\mbox{.}
\end{array}
\]
To establish that $(\monadic{v},\monadic{c})$ is $q$-exact, note that in all cases $\monadic{c}=\return c$ for some circuit~$c$. Furthermore, consider two cases:
\begin{itemize}
\item Suppose that $\monadic{v}=\return n$ for some $n\in\NN$. Since $\monadic{v}=\monadic{v}_1$, we must have $c=\mknode{\mkconst(n)}$. Hence $|c|=\return n=\return(c(\pi))$.
\item Suppose that $\monadic{v}=\return b$ for some $b\in\BB$. Since $\monadic{v}=\monadic{v}_1$, we must have $c=\mknode{\mkconst(|b|)}$. Hence $|c|=\return|b|=\return(c(\pi))$.
\end{itemize}
The clauses of Definition~\ref{compilation:exact} about $\monadic{v}$ hold by $\monadic{v}=\monadic{v}_1$ and $(\monadic{v}_1,\monadic{c}_1)$ being $q'$-exact. Finally, $\gamma'$ is $\Gamma$-exact since $\gamma'=\gamma_1$ and $\gamma_1$ is $\Gamma$-exact.

\item If $e=(\castexpr{e_1}{q})$ then
\[
\begin{array}{l}
\Gamma\vd e_1:\qualty{t_0}{s_1}{d_1}\mbox{,}\\
s_1\subtype s_0\mbox{,}\quad d_1\subtype d_0\mbox{.}
\end{array}
\]
Denote $q_1=\qualty{t_0}{s_1}{d_1}$. The induction hypothesis implies $\semc{e_1}\gamma\phi\nu=\return((\monadic{v}_1,\monadic{c}_1),\gamma_1,\omicron_1,\nu_1)$ where $(\monadic{v}_1,\monadic{c}_1)$ is $q_1$-exact and $\gamma_1$ is $\Gamma$-exact. We obtain $\semc{e}\gamma\phi\nu=\return((\monadic{v},\monadic{c}),\gamma',\omicron,\nu')$ where
\[
(\monadic{v},\monadic{c})=(\branching{\monadic{v}_1&\mbox{if $d_0=\publicdom$}\\\top&\mbox{otherwise}},\branching{\monadic{c}_1&\mbox{if $s_0=\poststg$}\\\top&\mbox{otherwise}})\mbox{,}\quad\gamma'=\gamma_1\mbox{,}\quad\omicron=\omicron_1\mbox{,}\quad\nu'=\nu_1\mbox{.}
\]
To establish that $(\monadic{v},\monadic{c})$ is $q$-exact, consider cases:
\begin{itemize}
\item Suppose that $s_0=\poststg$. Then $\monadic{c}=\monadic{c}_1$. By $s_1\subtype s_0$, we also have $s_1=\poststg$. By $(\monadic{v}_1,\monadic{c}_1)$ being $q_1$-exact, $\monadic{c}=\return c$ for some circuit~$c$. If $\monadic{v}=\return v$ for some~$v$ then $d_0=\publicdom$ and $\monadic{v}_1=\monadic{v}$. By $d_1\subtype d_0$, also $d_1=\publicdom$. Consequently, $q_0=q_1$, whence $(\monadic{v},\monadic{c})$ is $q$-exact because $(\monadic{v}_1,\monadic{c}_1)$ is $q_1$-exact.
\item Suppose that $d_0=\publicdom$ and $t_0$ is a primitive type. Then $\monadic{v}=\monadic{v}_1$. As $d_1\subtype d_0$ implies $d_1=\publicdom$, we have $\monadic{v}_1=\return v$ with $v\in t_0$ by $(\monadic{v}_1,\monadic{c}_1)$ being $q_1$-exact.
\item Suppose that $d_0=\publicdom$ and $t_0=\listty{q'}$. Then $\monadic{v}=\monadic{v}_1$. As $d_1\subtype d_0$ implies $d_1=\publicdom$ and $(\monadic{v}_1,\monadic{c}_1)$ is $q_1$-exact, we have $\monadic{v}_1=\return((\monadic{v}'_1,\monadic{c}'_1),\ldots,(\monadic{v}'_l,\monadic{c}'_l))$ where all $(\monadic{v}'_i,\monadic{c}'_i)$ are $q'$-exact.
\item Suppose that $d_0\ne\publicdom$. Then $\monadic{v}=\top$ by definition.
\end{itemize}
Hence $(\monadic{v},\monadic{c})$ is $q$-exact indeed. Finally, $\gamma'$ is $\Gamma$-exact since $\gamma'=\gamma_1$ and $\gamma_1$ is $\Gamma$-exact.

\item If $e=(\assignexpr{e_1}{e_2})$ then $q=\qualty{\unitty}{\prestg}{\publicdom}$. Let $e_1=\loadexpr{\loadexpr{\loadexpr{x}{y_1}}{y_2}\ldots}{y_n}$ where $x$ is a variable. By Lemma~\ref{typesystem:lhslemma}, there exist domains $d_1,\ldots,d_n$ such that
\[
\begin{array}{l}
\Gamma\vd x:q'\mbox{,}\\
\Gamma\vd y_i:q_i\mbox{ for every $i=1,\ldots,n$,}\\
\Gamma\vd e_1:q_{n+1}\mbox{,}\\
\Gamma\vd e_2:q_{n+1}\mbox{,}
\end{array}
\]
where $q_i=\qualty{\uintty}{\prestg}{d_i}$ for each $i=1,\ldots,n$ and
\[
q'=\qualty{\listty{\ldots\listty{\qualty{\listty{\qualty{\listty{q_{n+1}}}{\prestg}{d_n}}}{\prestg}{d_{n-1}}}\ldots}}{\prestg}{d_1}\mbox{.}
\]
By the induction hypothesis, we have
\[
\begin{array}{l}
\semc{y_1}\gamma\phi\nu=\return((\monadic{v}_1,\monadic{c}_1),\gamma_1,\omicron_1,\nu_1)\mbox{,}\\
\semc{y_2}\gamma_1\phi\nu_1=\return((\monadic{v}_2,\monadic{c}_2),\gamma_2,\omicron_2,\nu_2)\mbox{,}\\
\dotfill\\
\semc{y_n}\gamma_{n-1}\phi\nu_{n-1}=\return((\monadic{v}_n,\monadic{c}_n),\gamma_n,\omicron_n,\nu_n)\mbox{,}\\
\semc{e_2}\gamma_n\phi\nu_n=\return((\monadic{v}_{n+1},\monadic{c}_{n+1}),\gamma_{n+1},\omicron_{n+1},\nu_{n+1})\mbox{,}
\end{array}
\]
where $(\monadic{v}_i,\monadic{c}_i)$ is $q_i$-exact and $\gamma_i$ is $\Gamma$-exact for every $i=1,\ldots,n+1$. Hence we obtain $\semc{e}\gamma\phi\nu=\return((\monadic{v},\monadic{c}),\gamma',\omicron,\nu')$ where
\[
\begin{array}{l}
(\monadic{v},\monadic{c})=(\return\singleton,\top)\mbox{,}\\\gamma'=[x\mapsto\updatec(\lookup{\gamma}{x},\monadic{v}_1\ldots\monadic{v}_n,(\monadic{v}_{n+1},\monadic{c}_{n+1}))]\gamma_{n+1}\mbox{,}\\\omicron=\omicron_1\ldots\omicron_{n+1}\mbox{,}\\\nu'=\nu_{n+1}\mbox{.}
\end{array}
\]
Obviously $(\return\singleton,\top)$ is $q$-exact. By Lemma~\ref{compilation:updexactlemma}, $\updatec(\lookup{\gamma}{x},\monadic{v}_1\ldots\monadic{v}_n,(\monadic{v}_{n+1},\monadic{c}_{n+1}))$ is $q'$-exact. Hence also $\gamma'$ is $\Gamma$-exact.

\item If $e=\loadexpr{e_1}{e_2}$ then
\[
\begin{array}{l}
\Gamma\vd e_1:\qualty{\listty{q}}{\prestg}{d_1}\mbox{,}\\
\Gamma\vd e_2:\qualty{\uintty}{\prestg}{d_1}\mbox{.}
\end{array}
\]
Denote $q_1=\qualty{\listty{q}}{\prestg}{d_1}$ and $q'=\qualty{\uintty}{\prestg}{d_1}$. By the induction hypothesis about $e_1$, we have $\semc{e_1}\gamma\phi\nu=\return((\monadic{v}_1,\monadic{c}_1),\gamma_1,\omicron_1,\nu_1)$ where $(\monadic{v}_1,\monadic{c}_1)$ is $q_1$-exact and $\gamma_1$ is $\Gamma$-exact. Hence by the induction hypothesis about~$e_2$, we have $\semc{e_2}\gamma_1\phi\nu_1=\return((\monadic{v}_2,\monadic{c}_2),\gamma_2,\omicron_2,\nu_2)$ where $(\monadic{v}_2,\monadic{c}_2)$ is $q'$-exact and $\gamma_2$ is $\Gamma$-exact. Let $\monadic{r}=\mcomp{a\gets\monadic{v}_1\hstop i\gets\monadic{v}_2\hstop\return a_i}$; then $\semc{e}\gamma\phi\nu=\return((\monadic{v},\monadic{c}),\gamma',\omicron,\nu')$ where
\[
\begin{array}{l}
(\monadic{v},\monadic{c})=\branching{(\monadic{v}',\monadic{c}')&\mbox{if $\monadic{r}=\return(\monadic{v}',\monadic{c}')$}\\(\top,\top)&\mbox{otherwise}}\mbox{,}\\
\gamma'=\gamma_2\mbox{,}\quad\omicron=\omicron_1\omicron_2\mbox{,}\quad\nu'=\nu_2\mbox{,}
\end{array}
\]
Consider two cases:
\begin{itemize}
\item If $d_1=\publicdom$ then, by exactness, $\monadic{v}_1=\return((\monadic{v}'_1,\monadic{c}'_1),\ldots,(\monadic{v}'_l,\monadic{c}'_l))$ where $(\monadic{v}'_k,\monadic{c}'_k)$ is $q$-exact for every $k=1,\ldots,l$, and $\monadic{v}_2=\return i$ for some $i\in\NN$. Therefore $\monadic{r}=\return(\monadic{v}'_i,\monadic{c}'_i)$ and $(\monadic{v},\monadic{c})=(\monadic{v}'_i,\monadic{c}'_i)$. The latter pair is $q$-exact by the above.
\item If $d_1\ne\publicdom$ then $d_0\ne\publicdom$ and $s_0=\prestg$ because of $q_1$ being well-structured by Theorem~\ref{typesystem:thm}. By exactness, $\monadic{v}_1=\top$. Therefore also $\monadic{r}=\top$ and $(\monadic{v},\monadic{c})=(\top,\top)$. The latter pair is $q$-exact because of $d_0\ne\publicdom$ and $s_0=\prestg$.
\end{itemize}
Finally, $\gamma'$ is $\Gamma$-exact since $\gamma'=\gamma_2$ and $\gamma_2$ is $\Gamma$-exact.

\item If $e=\stmtcomp{\letexpr{x}{e_1}}{e_2}$ then
\[
\begin{array}{l}
\Gamma\vd e_1:\qualty{t_1}{s_1}{d_1}\mbox{,}\\
(x:\qualty{t_1}{s_1}{d_1}),\Gamma\vd e_2:\qualty{t_0}{s_0}{d_0}\mbox{.}
\end{array}
\]
Denote $q_1=(\qualty{t_1}{s_1}{d_1})$. By the induction hypothesis, $\semc{e_1}\gamma\phi\nu=\return((\monadic{v}_1,\monadic{c}_1),\gamma_1,\omicron_1,\nu_1)$ where $(\monadic{v}_1,\monadic{c}_1)$ is $q_1$-exact and $\gamma_1$ is $\Gamma$-exact. This implies $\left((x,(\monadic{v}_1,\monadic{c}_1)),\gamma_1\right)$ being $\Gamma'$-exact where $\Gamma'=((x:q_1),\Gamma)$. Hence by the induction hypothesis, $\semc{e_2}((x,(\monadic{v}_1,\monadic{c}_1)),\gamma_1)\phi\nu_1=\return((\monadic{v}_2,\monadic{c}_2),\gamma_2,\omicron_2,\nu_2)$ where $(\monadic{v}_2,\monadic{c}_2)$ is $q$-exact and $\gamma_2$ is $\Gamma'$-exact. Obviously $\tail\gamma_2$ is \mbox{$\Gamma$-exact}. We obtain $\semc{e}\gamma\phi\nu=\return((\monadic{v},\monadic{c}),\gamma',\omicron,\nu')$ where
\[
(\monadic{v},\monadic{c})=(\monadic{v}_2,\monadic{c}_2)\mbox{,}\quad\gamma'=\tail\gamma_2\mbox{,}\quad\omicron=\omicron_1\omicron_2\mbox{,}\quad\nu'=\nu_2\mbox{.}
\]
The desired claim follows.

\item If $e=\stmtcomp{e_1}{e_2}$ then
\[
\begin{array}{l}
\Gamma\vd e_1:\qualty{t_1}{s_1}{d_1}\mbox{,}\\
\Gamma\vd e_2:\qualty{t_0}{s_0}{d_0}\mbox{.}
\end{array}
\]
Denote $q_1=(\qualty{t_1}{s_1}{d_1})$. By the induction hypothesis, $\semc{e_1}\gamma\phi\nu=\return((\monadic{v}_1,\monadic{c}_1),\gamma_1,\omicron_1,\nu_1)$ where $(\monadic{v}_1,\monadic{c}_1)$ is $q_1$-exact and $\gamma_1$ is $\Gamma$-exact. Hence by the induction hypothesis, $\semc{e_2}\gamma_1\phi\nu_1=\return((\monadic{v}_2,\monadic{c}_2),\gamma_2,\omicron_2,\nu_2)$ where $(\monadic{v}_2,\monadic{c}_2)$ is $q$-exact and $\gamma_2$ is $\Gamma$-exact. We obtain $\semc{e}\gamma\phi\nu=\return((\monadic{v},\monadic{c}),\gamma',\omicron,\nu')$ where
\[
(\monadic{v},\monadic{c})=(\monadic{v}_2,\monadic{c}_2)\mbox{,}\quad\gamma'=\gamma_2\mbox{,}\quad\omicron=\omicron_1\omicron_2\mbox{,}\quad\nu'=\nu_2\mbox{.}
\]
The desired claim follows.

\end{itemize}

\end{proof}

\begin{theorem}[Theorem \ref{thm:correct_compilation}]
Let $\Gamma \vd e:\qualty{t}{s}{d}\eff D$ with well-structured~$\Gamma$. Let $\gamma\in\Envc$ be $\Gamma$-exact, $\phi\in\In^3$ be a triple of type correct input dictionaries, and $\nu\in\NN^2$. Let $\semc{e}\gamma\phi\nu=\return((\monadic{v},\monadic{c}),\tilde{\gamma},\tilde{\omicron},\tilde{\nu})$. Let $\pi\in(\NN^*)^2$, $\gamma'\in\Env$, $\omicron'\in\Out^2$ be such that $\gamma'$ is $\Gamma$-exact in circuit, $\gamma\bullet\pi\sim\gamma'$ and $\nu\bullet\pi\sim\omicron'$. Then:
\begin{enumerate}
\item If $\sem{e}\gamma'\phi\omicron'=\return(\monadic{v}',\tilde{\gamma}',\tilde{\omicron}')$ then $\tilde{\omicron}$ accepts $\pi$; moreover, $\tilde{\gamma}\bullet\pi\sim\tilde{\gamma}'$ and $\tilde{\nu}\bullet\pi\sim\tilde{\omicron}'$, and $d=\publicdom$ or $s=\poststg$ implies $(\monadic{v},\monadic{c})\bullet\pi\sim\monadic{v}'$;
\item If $\sem{e}\gamma'\phi\omicron'$ fails (i.e., it is not of the form $\return(\monadic{v}',\tilde{\gamma}',\tilde{\omicron}')$),  while all circuits that arise during compilation need only inputs in~$\pi$, then $\tilde{\omicron}$~{}does not accept~$\pi$.
\end{enumerate}
\end{theorem}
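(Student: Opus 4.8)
The plan is to prove both clauses by a single structural induction on~$e$, carrying the invariants $\gamma\bullet\pi\sim\gamma'$ and $\nu\bullet\pi\sim\omicron'$ down to the subexpressions. Theorem~\ref{thm:can_compile} already supplies, on the compiler side, that $(\monadic{v},\monadic{c})$ is $q$-exact and $\tilde{\gamma}$ is $\Gamma$-exact; I would first record the matching fact on the circuit-semantics side, namely that whenever $\sem{e}\gamma'\phi\omicron'$ succeeds on a $\Gamma$-exact-in-circuit environment it returns a $q$-exact-in-circuit value and a $\Gamma$-exact-in-circuit environment. This is a routine analogue of Theorem~\ref{semantics:localthm} for $\sem{\cdot}$, and I would either isolate it as a preliminary lemma or strengthen the induction hypothesis; it is indispensable because several cases turn on pinning a circuit-semantics value to~$\top$ precisely when its qualified type has stage $\prestg$ and non-public domain.

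For the base cases ($\epsilon$, literals, variables) neither semantics emits output, so acceptance of~$\pi$ is vacuous and the value correspondence follows directly from Definition~\ref{compilation:eval}: for a variable from $\gamma\bullet\pi\sim\gamma'$, and for a literal from the fact that $\bullet$ evaluates a constant node on~$\pi$ to that constant. The structural cases that merely accumulate output --- addition, statement sequencing, let-binding, list access, and assignment --- are dispatched by invoking the induction hypotheses on the subexpressions in left-to-right order; the compiler output is a concatenation $\omicron_1\omicron_2\cdots$ of subcircuit streams whereas the circuit semantics threads $\omicron'$ as a continuation, so acceptance of the whole reduces to acceptance of the parts, and dually a failure in clause~2 propagates from the first failing subevaluation to a component of $\tilde{\omicron}$. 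Here I would apply Lemma~\ref{compilation:getlemma} in the \lstinline{get} case and Lemma~\ref{compilation:updlemma} in the assignment case to reconcile $\updatec$ with $\update$ under $\bullet$ and~$\sim$.

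The crux of the acceptance claim is the \lstinline{assert} case, which is also where clause~2 bottoms out. Since $e_1$ is in $\poststg$, Theorem~\ref{thm:can_compile} gives $\monadic{c}_1=\return c_1$, and compilation appends $c_1$ to the output stream while the circuit semantics guards on $\monadic{v}'_1\ne\return\fls$. As $s=\poststg$, the induction hypothesis yields the value correspondence $(\monadic{v}_1,\monadic{c}_1)\bullet\pi\sim\monadic{v}'_1$, and exactness of $(\monadic{v}_1,\monadic{c}_1)$ ties $c_1(\pi)$ to the Boolean encoding via $|\tru|=0$, $|\fls|=1$. Thus if the circuit semantics succeeds then $\monadic{v}'_1=\return\tru$ and $c_1(\pi)=0$, so the extended output accepts~$\pi$; and if it fails then $\monadic{v}'_1=\return\fls$ and $c_1(\pi)=1\ne 0$, so $\tilde{\omicron}$ does not accept~$\pi$ (using the clause~2 hypothesis that $c_1$ needs only inputs in~$\pi$). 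The \lstinline{wire} case is the dual bookkeeping step: after compiling~$e$, the counter and stream are synchronized by the induction hypothesis, and for a private domain~$d$ the circuit semantics consumes $\head\omicron'_d$ while compilation emits the input node $\mkinput_d$ at the current index; the invariant $\nu\bullet\pi\sim\omicron'$, which reads $\omicron'_d=\map\return(\drop\nu_d\,\pi_d)$, makes $\head\omicron'_d=\return(\pi_d(\nu_d))$ coincide with that node evaluated on~$\pi$, and incrementing the counter matches dropping the head, re-establishing $\tilde{\nu}\bullet\pi\sim\tilde{\omicron}'$.

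I expect the main obstacle to be the conditional and loop cases, where both semantics branch, respectively iterate, on a value of stage~$\prestg$. When the guard or bound has domain $\publicdom$, the induction hypothesis delivers values agreeing modulo~$\sim$, so the same branch is selected or the same iteration count $n=\max(0,i_2-i_1)$ is used, and I apply the induction hypothesis to the chosen branch or to each unrolled iteration, threading the environment updates exactly as in the loop case of Theorem~\ref{semantics:correctness}. The delicate regime is a non-public guard or bound: the typing constraint $\gen{d_1}\supseteq\gen{s}\cup\gen{d}$ then forces $s=\prestg$ and $d\ne\publicdom$, compilation short-circuits to $(\top,\top)$ after processing only the guard (resp.\ the two bounds) and emits no further circuits, and to match this I must show the circuit semantics also yields~$\top$ there. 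This is exactly where the circuit-exactness fact is used: it pins the circuit-semantics guard or bound value to~$\top$, so the circuit semantics likewise takes its \emph{otherwise} clause without evaluating the branch or body; the if/loop result correspondences are then inherited verbatim from the subexpression(s) actually evaluated, and the value correspondence holds vacuously since $d\ne\publicdom$ and $s=\prestg$. Keeping the accumulated output stream of the compiler synchronized with the continuation-passing input stream of the circuit semantics across these short-circuiting cases, so that acceptance of $\tilde{\omicron}$ matches the success of every assertion the circuit semantics guards on, is the part that demands the most careful bookkeeping.
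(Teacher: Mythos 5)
Your proposal is correct and follows essentially the same route as the paper's proof: structural induction on $e$, with the acceptance claim bottoming out in the \lstinline{assert} case, the stream/counter synchronization handled in the \lstinline{wire} case via $\nu\bullet\pi\sim\omicron'$, Lemmas~\ref{compilation:getlemma} and~\ref{compilation:updlemma} reconciling $\mbox{\lstinline{get}}$ and assignment, and the non-public guard/bound regime of conditionals and loops resolved by pinning the circuit-semantics value to $\top$ via exactness in circuit (a fact the paper uses without isolating it as a lemma, so making it explicit as you propose is a mild improvement in rigor). The only differences are cosmetic: you argue clause~2 directly where the paper argues its contrapositive, and you omit the routine but slightly fiddly cast case.
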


\begin{proof}
We proceed by induction on the structure of~$e$.
\begin{itemize}
\item If $e=\epsilon$ then
\[
t=\unitty\mbox{,}\quad s=\prestg\mbox{,}\quad d=\publicdom\mbox{.}
\]
Furthermore, $\semc{e}\gamma\phi\nu=\return((\monadic{v},\monadic{c}),\tilde{\gamma},\tilde{\omicron},\tilde{\nu})$ where
\[
(\monadic{v},\monadic{c})=(\return\singleton,\top)\mbox{,}\quad\tilde{\gamma}=\gamma\mbox{,}\quad\tilde{\omicron}=\epsilon\mbox{,}\quad\tilde{\nu}=\nu\mbox{,}
\]
and $\sem{e}\gamma'\phi\omicron'=\return(\monadic{v}',\tilde{\gamma}',\tilde{\omicron}')$ where
\[
\monadic{v}'=\return\singleton\mbox{,}\quad\tilde{\gamma}'=\gamma'\mbox{,}\quad\tilde{\omicron}'=\omicron'\mbox{.}
\]
The empty output circuit vacuously accepts~$\pi$. As $\tilde{\gamma}=\gamma$ and $\tilde{\gamma}'=\gamma'$, the claim $\tilde{\gamma}\bullet\pi\sim\tilde{\gamma}'$ follows from assumptions. As $\tilde{\nu}=\nu$ and $\tilde{\omicron}'=\omicron'$, the claim $\tilde{\nu}\bullet\pi\sim\tilde{\omicron}'$ also follows from assumptions. Finally,
\[
(\monadic{v},\monadic{c})\bullet\pi=(\return\singleton,\top)\bullet\pi=\return\singleton=\monadic{v}'\mbox{,}
\]
implying $(\monadic{v},\monadic{c})\bullet\pi\sim\monadic{v}'$.

\item If $e=\overline{n}$ where $n\in\NN$ then
\[
t=\uintmodty{\mbox{\lstinline+N+}}\mbox{.}
\]
Furthermore, $\semc{e}\gamma\phi\nu=\return((\monadic{v},\monadic{c}),\tilde{\gamma},\tilde{\omicron},\tilde{\nu})$ where
\[
\begin{array}{l}
(\monadic{v},\monadic{c})=(\branching{\return n&\mbox{if $d=\publicdom$}\\\top&\mbox{otherwise}},\branching{\return(\mknode{\mkconst(n)})&\mbox{if $s=\poststg$}\\\top&\mbox{otherwise}})\mbox{,}\\\tilde{\gamma}=\gamma\mbox{,}\quad\tilde{\omicron}=\epsilon\mbox{,}\quad\tilde{\nu}=\nu\mbox{,}
\end{array}
\]
and $\sem{e}\gamma'\phi\omicron'=\return(\monadic{v}',\tilde{\gamma}',\tilde{\omicron}')$ where
\[
\monadic{v}'=\branching{\return n&\mbox{if $s=\poststg$ or $d=\publicdom$}\\\top&\mbox{otherwise}}\mbox{,}\quad\tilde{\gamma}'=\gamma'\mbox{,}\quad\tilde{\omicron}'=\omicron'\mbox{.}
\]
The empty output circuit vacuously accepts~$\pi$. As $\tilde{\gamma}=\gamma$ and $\tilde{\gamma}'=\gamma'$, the claim $\tilde{\gamma}\bullet\pi\sim\tilde{\gamma}'$ follows from assumptions. As $\tilde{\nu}=\nu$ and $\tilde{\omicron}'=\omicron'$, the claim $\tilde{\nu}\bullet\pi\sim\tilde{\omicron}'$ also follows from assumptions. Finally, if $d=\publicdom$ then
\[
(\monadic{v},\monadic{c})\bullet\pi=(\return n,\monadic{c})\bullet\pi=\return n=\monadic{v}'\mbox{,}
\]
and if $d\ne\publicdom$ but $s=\poststg$ then
\[
(\monadic{v},\monadic{c})\bullet\pi=(\top,\return(\mknode{\mkconst(n)}))\bullet\pi=\return(\mknode{\mkconst(n)}(\pi))=\return n=\monadic{v}'\mbox{,}
\]
implying $(\monadic{v},\monadic{c})\bullet\pi\sim\monadic{v}'$.

\item If $e=\overline{b}$ where $b\in\BB$ then
\[
t=\boolmodty{\mbox{\lstinline+N+}}\mbox{.}
\]
Furthermore, $\semc{e}\gamma\phi\nu=\return((\monadic{v},\monadic{c}),\tilde{\gamma},\tilde{\omicron},\tilde{\nu})$ where
\[
\begin{array}{l}
(\monadic{v},\monadic{c})=(\branching{\return b&\mbox{if $d=\publicdom$}\\\top&\mbox{otherwise}},\branching{\return(\mknode{\mkconst(|b|)})&\mbox{if $s=\poststg$}\\\top&\mbox{otherwise}})\mbox{,}\\\tilde{\gamma}=\gamma\mbox{,}\quad\tilde{\omicron}=\epsilon\mbox{,}\quad\tilde{\nu}=\nu\mbox{,}
\end{array}
\]
and $\sem{e}\gamma'\phi\omicron'=\return(\monadic{v}',\tilde{\gamma}',\tilde{\omicron}')$ where
\[
\monadic{v}'=\branching{\return b&\mbox{if $s=\poststg$ or $d=\publicdom$}\\\top&\mbox{otherwise}}\mbox{,}\quad\tilde{\gamma}'=\gamma'\mbox{,}\quad\tilde{\omicron}'=\omicron'\mbox{.}
\]
The empty output circuit vacuously accepts~$\pi$. As $\tilde{\gamma}=\gamma$ and $\tilde{\gamma}'=\gamma'$, the claim $\tilde{\gamma}\bullet\pi\sim\tilde{\gamma}'$ follows from assumptions. As $\tilde{\nu}=\nu$ and $\tilde{\omicron}'=\omicron'$, the claim $\tilde{\nu}\bullet\pi\sim\tilde{\omicron}'$ also follows from assumptions. Finally, if $d=\publicdom$ then
\[
(\monadic{v},\monadic{c})\bullet\pi=(\return b,\monadic{c})\bullet\pi=\return|b|\sim\return b=\monadic{v}'\mbox{,}
\]
and if $d\ne\publicdom$ but $s=\poststg$ then
\[
\begin{array}{lcl}
(\monadic{v},\monadic{c})\bullet\pi
&=&(\top,\return(\mknode{\mkconst(|b|)}))\bullet\pi\\
&=&\return(\mknode{\mkconst(|b|)}(\pi))=\return |b|\sim\return b=\monadic{v}'\mbox{.}
\end{array}
\]

\item If $e=x$ then $\semc{e}\gamma\phi\nu=\return((\monadic{v},\monadic{c}),\tilde{\gamma},\tilde{\omicron},\tilde{\nu})$ where
\[
(\monadic{v},\monadic{c})=\lookup{\gamma}{x}\mbox{,}\quad\tilde{\gamma}=\gamma\mbox{,}\quad\tilde{\omicron}=\epsilon\mbox{,}\quad\tilde{\nu}=\nu\mbox{,}
\]
and $\sem{e}\gamma'\phi\omicron'=\return(\monadic{v}',\tilde{\gamma}',\tilde{\omicron}')$ where
\[
\monadic{v}'=\lookup{\gamma'}{x}\mbox{,}\quad\tilde{\gamma}'=\gamma'\mbox{,}\quad\tilde{\omicron}'=\omicron'\mbox{.}
\]
The empty output circuit vacuously accepts~$\pi$. As $\tilde{\gamma}=\gamma$ and $\tilde{\gamma}'=\gamma'$, the claim $\tilde{\gamma}\bullet\pi\sim\tilde{\gamma}'$ follows from assumptions. As $\tilde{\nu}=\nu$ and $\tilde{\omicron}'=\omicron'$, the claim $\tilde{\nu}\bullet\pi\sim\tilde{\omicron}'$ also follows from assumptions. Finally,
\[
(\monadic{v},\monadic{c})\bullet\pi=(\lookup{\gamma}{x})\bullet\pi=\lookup{(\gamma\bullet\pi)}{x}\sim\gamma' x=\monadic{v}'\mbox{.}
\]

\item Let $e=\addexpr{e_1}{e_2}$. Then
\[
\begin{array}{l}
\Gamma\vd e_1:q\mbox{,}\\
\Gamma\vd e_2:q\mbox{,}\\
t=\uintmodty{\mbox{\lstinline+N+}}\mbox{.}
\end{array}
\]
Furthermore,
\[
\begin{array}{l}
\semc{e_1}\gamma\phi\nu=\return((\monadic{v}_1,\monadic{c}_1),\gamma_1,\omicron_1,\nu_1)\mbox{,}\\
\semc{e_2}\gamma_1\phi\nu_1=\return((\monadic{v}_2,\monadic{c}_2),\gamma_2,\omicron_2,\nu_2)\mbox{,}
\end{array}
\]
and $\semc{e}\gamma\phi\nu=\return((\monadic{v},\monadic{c}),\tilde{\gamma},\tilde{\omicron},\tilde{\nu})$ where
\[
\begin{array}{l}
\monadic{v}=\mcomp{i_1\gets\monadic{v}_1\hstop i_2\gets\monadic{v}_2\hstop\return(i_1+i_2)}\mbox{,}\\
\monadic{c}=\mcomp{c_1\gets\monadic{c}_1\hstop c_2\gets\monadic{c}_2\hstop\return(\mknode{\mkoper(+)}(c_1,c_2))}\mbox{,}\\
\tilde{\gamma}=\gamma_2\mbox{,}\quad\tilde{\omicron}=\omicron_1\omicron_2\mbox{,}\quad\tilde{\nu}=\nu_2\mbox{.}
\end{array}
\]

Firstly, suppose that $\sem{e}\gamma'\phi\omicron'=\return(\monadic{v}',\tilde{\gamma}',\tilde{\omicron}')$. Then
\[
\begin{array}{l}
\sem{e_1}\gamma'\phi\omicron'=\return(\monadic{v}'_1,\gamma'_1,\omicron'_1)\mbox{,}\\
\sem{e_2}\gamma'_1\phi\omicron'_1=\return(\monadic{v}'_2,\gamma'_2,\omicron'_2)\mbox{,}\\
\monadic{v}'=\mcomp{v_1\gets\monadic{v}'_1\hstop v_2\gets\monadic{v}'_2\hstop\return(v_1+v_2)}\mbox{,}\quad\tilde{\gamma}'=\gamma'_2\mbox{,}\quad\tilde{\omicron}'=\omicron'_2\mbox{.}
\end{array}
\]
By the induction hypothesis, $\omicron_1$ accepts~$\pi$, whereby $\gamma_1\bullet\pi\sim\gamma'_1$, $\nu_1\bullet\pi\sim\omicron'_1$, and $d=\publicdom$ or $s=\poststg$ implies $(\monadic{v}_1,\monadic{c}_1)\bullet\pi\sim\monadic{v}'_1$. By the induction hypothesis again, $\omicron_2$ accepts~$\pi$, whereby $\gamma_2\bullet\pi\sim\gamma'_2$, $\nu_2\bullet\pi\sim\omicron'_2$, and $d=\publicdom$ or $s=\poststg$ implies $(\monadic{v}_2,\monadic{c}_2)\bullet\pi\sim\monadic{v}'_2$. Hence $\tilde{\omicron}$ accepts~$\pi$, whereby $\tilde{\gamma}\bullet\pi\sim\tilde{\gamma}'$ and $\tilde{\nu}\bullet\pi\sim\tilde{\omicron}'$. Suppose that $d=\publicdom$ or $s=\poststg$. Consider two cases:
\begin{itemize}
\item If $d=\publicdom$ then, by exactness, $\monadic{v}_1=\return v_1$, $\monadic{v}_2=\return v_2$ and $\monadic{v}=\return(v_1+v_2)$ where $v_1,v_2\in\NN$. Thus $(\monadic{v}_1,\monadic{c}_1)\bullet\pi=\return v_1$, $(\monadic{v}_2,\monadic{c}_2)\bullet\pi=\return v_2$ and $(\monadic{v},\monadic{c})\bullet\pi=\return(v_1+v_2)$. Hence $\return v_1\sim\monadic{v}'_1$ and $\return v_2\sim\monadic{v}'_2$, implying $\monadic{v}'_1=\return v_1$ and $\monadic{v}'_2=\return v_2$. Therefore $\monadic{v}'=\return(v_1+v_2)$, leading to $(\monadic{v},\monadic{c})\bullet\pi\sim\monadic{v}'$ as needed.
\item If $s=\poststg$ and $d_0\ne\publicdom$ then, by exactness, $\monadic{c}_1=\return c_1$, $\monadic{c}_2=\return c_2$ and $\monadic{c}=\mcomp{c_1\gets\monadic{c}_1\hstop c_2\gets\monadic{c}_2\hstop\return(\mknode{\mkoper(+)}(c_1,c_2))}$. Thus $(\monadic{v}_1,\monadic{c}_1)\bullet\pi=\return(c_1(\pi))$, $(\monadic{v}_2,\monadic{c}_2)\bullet\pi=\return(c_2(\pi))$ and $(\monadic{v},\monadic{c})\bullet\pi=\return(c_1(\pi)+c_2(\pi))$. Let $c_1(\pi)=v_1$ and $c_2(\pi)=v_2$; then $\return v_1\sim\monadic{v}'_1$ and $\return v_2\sim\monadic{v}'_2$, implying $\monadic{v}'_1=\return v_1$ and $\monadic{v}'_2=\return v_2$. Therefore $\monadic{v}'=\return(v_1+v_2)$, leading to $(\monadic{v},\monadic{c})\bullet\pi\sim\monadic{v}'$ as needed.
\end{itemize}

Conversely, assume that $\tilde{\omicron}$ accepts~$\pi$. Then both $\omicron_1$ and $\omicron_2$ accept~$\pi$. By the induction hypothesis, $\sem{e_1}\gamma'\phi\omicron'=\return(\monadic{v}'_1,\gamma'_1,\omicron'_1)$ where $\gamma_1\bullet\pi\sim\gamma'_1$ and $\nu_1\bullet\pi\sim\omicron'_1$. By the induction hypothesis, $\sem{e_2}\gamma'_1\phi\omicron'_1=\return(\monadic{v}'_2,\gamma'_2,\omicron'_2)$. Therefore $\sem{e}\gamma'\phi\omicron'$ does not fail.

\item Let $e=\assertexpr{e_1}$. Then
\[
\begin{array}{l}
\Gamma\vd e_1:\qualty{\boolmodty{\mbox{\lstinline+N+}}}{\poststg}{d'}\mbox{,}\\
t=\unitty\mbox{,}\quad s=\prestg\mbox{,}\quad d=\publicdom\mbox{.}
\end{array}
\]
Furthermore,
\[
\begin{array}{l}
\semc{e_1}\gamma\phi\nu=\return((\monadic{v}_1,\monadic{c}_1),\gamma_1,\omicron_1,\nu_1)
\end{array}
\]
with $\monadic{c}_1=\return c_1$, and $\semc{e}\gamma\phi\nu=\return((\monadic{v},\monadic{c}),\tilde{\gamma},\tilde{\omicron},\tilde{\nu})$ where
\[
\begin{array}{l}
(\monadic{v},\monadic{c})=(\return\singleton,\top)\mbox{,}\\
\tilde{\gamma}=\gamma_1\mbox{,}\quad\tilde{\omicron}=\omicron_1c_1\mbox{,}\quad\tilde{\nu}=\nu_1\mbox{.}
\end{array}
\]

Firstly, suppose that $\sem{e}\gamma'\phi\omicron'=\return(\monadic{v}',\tilde{\gamma}',\tilde{\omicron}')$. Then
\[
\begin{array}{l}
\sem{e_1}\gamma'\phi\omicron'=\return(\monadic{v}'_1,\gamma'_1,\omicron'_1)\mbox{,}\\
\monadic{v}'_1\ne\return\fls\mbox{,}\quad\monadic{v}'=\return\singleton\mbox{,}\quad\tilde{\gamma}'=\gamma'_1\mbox{,}\quad\tilde{\omicron}'=\omicron'_1\mbox{.}
\end{array}
\]
By the induction hypothesis, $\omicron_1$ accepts~$\pi$, whereby $\gamma_1\bullet\pi\sim\gamma'_1$, $\nu_1\bullet\pi\sim\omicron'_1$, and $(\monadic{v}_1,\monadic{c}_1)\bullet\pi\sim\monadic{v}'_1$. If $d'=\publicdom$ then, by exactness, $\monadic{v}_1=\return v_1$ where $v_1\in\BB$ and $|c_1|=\return|v_1|=\return(c_1(\pi))$. Hence $(\monadic{v}_1,\monadic{c}_1)\bullet\pi=\return|v_1|=\return(c_1(\pi))$. If $d'\ne\publicdom$ then, by exactness, $\monadic{v}_1=\top$. Therefore $(\monadic{v}_1,\monadic{c}_1)\bullet\pi=\return(c_1(\pi))$ again. By $\return(c_1(\pi))\sim\monadic{v}'_1$, we obtain $\monadic{v}'_1=\return v'_1$ where $|v'_1|=c_1(\pi)$. As $\monadic{v}'_1\ne\return\fls$, we must have $v'_1=\tru$, whence $c_1(\pi)=0$. Altogether, we proved that $\tilde{\omicron}$ accepts~$\pi$. We have also obtained that $\tilde{\gamma}\bullet\pi\sim\tilde{\gamma}'$ and $\tilde{\nu}\bullet\pi\sim\tilde{\omicron}'$. In addition, $(\monadic{v},\monadic{c})\bullet\pi=\return\singleton=\monadic{v}'$.

Conversely, assume that $\tilde{\omicron}$ accepts~$\pi$. Then $\omicron_1$ accepts~$\pi$ and $c_1(\pi)=0$. By the induction hypothesis, $\sem{e_1}\gamma'\phi\omicron'=\return(\monadic{v}'_1,\gamma'_1,\omicron'_1)$ where $\gamma_1\bullet\pi\sim\gamma'_1$, $\nu_1\bullet\pi\sim\omicron'_1$ and $(\monadic{v}_1,\monadic{c}_1)\bullet\pi\sim\monadic{v}'_1$. If $d'=\publicdom$ then, by exactness, $\monadic{v}_1=\return v_1$ where $v_1\in\BB$ and $|c_1|=\return|v_1|=\return(c_1(\pi))$. Hence $(\monadic{v}_1,\monadic{c}_1)\bullet\pi=\return|v_1|=\return(c_1(\pi))$. If $d'\ne\publicdom$ then, by exactness, $\monadic{v}_1=\top$. Therefore $(\monadic{v}_1,\monadic{c}_1)\bullet\pi=\return(c_1(\pi))$ again. As $c_1(\pi)\sim\monadic{v}'_1$ and $c_1(\pi)=0$, we must have $\monadic{v}'_1=\return\tru$. Consequently, $\sem{e}\gamma'\phi\omicron'=\return(\return\singleton,\gamma'_1,\omicron'_1)$. The desired result follows.

\item If $e=\getexpr{d'}{k}$ where $d'$ is a domain and $k$ is an input key then
\[
\allpre_{d'}(q)\mbox{,}
\]
which implies $d=d'$ and $s=\prestg$. Furthermore, $\semc{e}\gamma\phi\nu=\return((\monadic{v},\monadic{c}),\tilde{\gamma},\tilde{\omicron},\tilde{\nu})$ where
\[
\begin{array}{l}
(\monadic{v},\monadic{c})=\branching{\allpuretop(\phi_{d'}(k))&\mbox{if $d'=\publicdom$}\\(\top,\top)&\mbox{otherwise}}\mbox{,}\\\tilde{\gamma}=\gamma\mbox{,}\quad\tilde{\omicron}=\epsilon\mbox{,}\quad\tilde{\nu}=\nu\mbox{,}
\end{array}
\]
and $\sem{e}\gamma'\phi\omicron'=\return(\monadic{v}',\tilde{\gamma}',\tilde{\omicron}')$ where
\[
\monadic{v}'=\branching{\allpure(\phi_{d'}(k))&\mbox{if $d'=\publicdom$}\\\top&\mbox{otherwise}}\mbox{,}\quad\tilde{\gamma}'=\gamma'\mbox{,}\quad\tilde{\omicron}'=\omicron'\mbox{.}
\]
The empty output circuit vacuously accepts~$\pi$. As $\tilde{\gamma}=\gamma$ and $\tilde{\gamma}'=\gamma'$, the claim $\tilde{\gamma}\bullet\pi\sim\tilde{\gamma}'$ follows from assumptions. As $\tilde{\nu}=\nu$ and $\tilde{\omicron}'=\omicron'$, the claim $\tilde{\nu}\bullet\pi\sim\tilde{\omicron}'$ also follows from assumptions. Finally, $\allpuretop_{d'}(\phi_{d'}(k))\bullet\pi=\allpure_{d'}(\phi_{d'}(k))$ by Lemma~\ref{compilation:getlemma}, whence $(\monadic{v},\monadic{c})\bullet\pi\sim\monadic{v}'$ in the case $d=\publicdom$.

\item Let $e=\ifexpr{e_1}{e_2}{e_3}$. Then 
\[
\begin{array}{l}
\Gamma\vd e_1:\qualty{\boolmodty{\mbox{\lstinline+N+}}}{\prestg}{d'}\mbox{,}\\
\Gamma\vd e_2:q\mbox{,}\\
\Gamma\vd e_3:q\mbox{.}
\end{array}
\]
Furthermore,
\[
\begin{array}{l}
\semc{e_1}\gamma\phi\nu=\return((\monadic{v}_1,\monadic{c}_1),\gamma_1,\omicron_1,\nu_1)\mbox{.}
\end{array}
\]

Firstly, suppose that $\sem{e}\gamma'\phi\omicron'=\return(\monadic{v}',\tilde{\gamma}',\tilde{\omicron}')$. Then, for some $\monadic{v}'_1$, $\gamma'_1$, $\omicron'_1$,
\[
\begin{array}{l}
\sem{e_1}\gamma'\phi\omicron'=\return(\monadic{v}'_1,\gamma'_1,\omicron'_1)\mbox{.}
\end{array}
\]
By the induction hypothesis, $\omicron_1$ accepts~$\pi$, whereby $\gamma_1\bullet\pi\sim\gamma'_1$, $\nu_1\bullet\pi\sim\omicron'_1$, and $d'=\publicdom$ implies $(\monadic{v}_1,\monadic{c}_1)\bullet\pi\sim\monadic{v}'_1$. Consider three cases:
\begin{itemize}
\item Let $\monadic{v}_1=\return\tru$. Then
\[
\semc{e_2}\gamma_1\phi\nu_1=\return((\monadic{v}_2,\monadic{c}_2),\gamma_2,\omicron_2,\nu_2)\mbox{,}
\]
and $\semc{e}\gamma\phi\nu=\return((\monadic{v},\monadic{c}),\tilde{\gamma},\tilde{\omicron},\tilde{\nu})$ where
\[
\begin{array}{l}
(\monadic{v},\monadic{c})=(\monadic{v}_2,\monadic{c}_2)\mbox{,}\quad\tilde{\gamma}=\gamma_2\mbox{,}\quad\tilde{\omicron}=\omicron_1\omicron_2\mbox{,}\quad\tilde{\nu}=\nu_2\mbox{.}
\end{array}
\]
If $d'\ne\publicdom$ then, by exactness, $\monadic{v}_1=\top$. Thus $d'=\publicdom$ must hold. Hence $(\monadic{v}_1,\monadic{c}_1)\bullet\pi\sim\monadic{v}'_1$ which implies $\monadic{v}'_1=\return\tru$. Consequently,
\[
\begin{array}{l}
\sem{e_2}\gamma'_1\phi\omicron'_1=\return(\monadic{v}'_2,\gamma'_2,\omicron'_2)\mbox{,}\\
\monadic{v}'=\monadic{v}'_2\mbox{,}\quad\tilde{\gamma}'=\gamma'_2\mbox{,}\quad\tilde{\omicron}'=\omicron'_2\mbox{.}
\end{array}
\]
By the induction hypothesis, $\omicron_2$ accepts~$\pi$, whereby $\gamma_2\bullet\pi\sim\gamma'_2$, $\nu_2\bullet\pi\sim\omicron'_2$, and $d=\publicdom$ or $s=\poststg$ implies $(\monadic{v}_2,\monadic{c}_2)\bullet\pi\sim\monadic{v}'_2$. Hence $\omicron_1\omicron_2$ accepts~$\pi$, whereby $\tilde{\gamma}\bullet\pi\sim\tilde{\gamma}'$, $\tilde{\nu}\bullet\pi\sim\tilde{\omicron}'$, and $d=\publicdom$ or $s=\poststg$ implies $(\monadic{v},\monadic{c})\bullet\pi\sim\monadic{v}'$.

\item The case $\monadic{v}_1=\return\fls$ is similar to the previous case.

\item If $\monadic{v}_1\ne\return b$ for $b\in\BB$ then, by exactness, $d'\ne\publicdom$ and $\monadic{v}_1=\top$. Hence $\semc{e}\gamma\phi\nu=\return((\monadic{v},\monadic{c}),\tilde{\gamma},\tilde{\omicron},\tilde{\nu})$ where
\[
\begin{array}{l}
(\monadic{v},\monadic{c})=(\top,\top)\mbox{,}\quad\tilde{\gamma}=\gamma_1\mbox{,}\quad\tilde{\omicron}=\omicron_1\mbox{,}\quad\tilde{\nu}=\nu_1\mbox{.}
\end{array}
\]
As $d'\ne\publicdom$ and $e_1$ has stage $\prestg$, we must have $\monadic{v}'_1=\top$ by exactness in circuit. Consequently,
\[
\monadic{v}'=\top\mbox{,}\quad\tilde{\gamma}'=\gamma'_1\mbox{,}\quad\tilde{\omicron}'=\omicron'_1\mbox{.}
\]
We obtain all the desired claims.

\end{itemize}

Conversely, suppose that $\tilde{\omicron}$ accepts~$\pi$. In all possible cases, this implies that $\omicron_1$ accepts~$\pi$. By the induction hypothesis, $\sem{e_1}\gamma'\phi\omicron'=\return(\monadic{v}'_1,\gamma'_1,\omicron'_1)$, whereby $\gamma_1\bullet\pi\sim\gamma'_1$, $\nu_1\bullet\pi\sim\omicron'_1$, and $d'=\publicdom$ implies $(\monadic{v}_1,\monadic{c}_1)\bullet\pi\sim\monadic{v}'_1$. Consider three cases:
\begin{itemize}
\item Let $\monadic{v}_1=\return\tru$. Then
\[
\semc{e_2}\gamma_1\phi\nu_1=\return((\monadic{v}_2,\monadic{c}_2),\gamma_2,\omicron_2,\nu_2)\mbox{,}
\]
and $\semc{e}\gamma\phi\nu=\return((\monadic{v},\monadic{c}),\tilde{\gamma},\tilde{\omicron},\tilde{\nu})$ where
\[
\begin{array}{l}
(\monadic{v},\monadic{c})=(\monadic{v}_2,\monadic{c}_2)\mbox{,}\quad\tilde{\gamma}=\gamma_2\mbox{,}\quad\tilde{\omicron}=\omicron_1\omicron_2\mbox{,}\quad\tilde{\nu}=\nu_2\mbox{.}
\end{array}
\]
If $d'\ne\publicdom$ then, by exactness, $\monadic{v}_1=\top$. Thus $d'=\publicdom$ must hold. Hence $(\monadic{v}_1,\monadic{c}_1)\bullet\pi\sim\monadic{v}'_1$ which implies $\monadic{v}'_1=\return\tru$. The assumption that $\tilde{\omicron}$ accepts~$\pi$ implies that $\omicron_2$ accepts~$\pi$. By the induction hypothesis, $\sem{e_2}\gamma'_1\phi\omicron'_1=\return(\monadic{v}'_2,\gamma'_2,\omicron'_2)$, whereby $\gamma_2\bullet\pi\sim\gamma'_2$, $\nu_2\bullet\pi\sim\omicron'_2$, and $d=\publicdom$ or $s=\poststg$ implies $(\monadic{v}_2,\monadic{c}_2)\bullet\pi\sim\monadic{v}'_2$. We obtain $\sem{e}\gamma'\phi\omicron'=\return(\monadic{v}'_2,\gamma'_2,\omicron'_2)$ which implies the desired result. 

\item The case $\monadic{v}_1=\return\fls$ is similar to the previous case.

\item If $\monadic{v}_1\ne\return b$ for $b\in\BB$ then, by exactness, $d'\ne\publicdom$ and $\monadic{v}_1=\top$. As $d'\ne\publicdom$ and $e_1$ has stage $\prestg$, we must have $\monadic{v}'_1=\top$ by exactness in circuit. Hence $\sem{e}\gamma'\phi\omicron'=\return(\top,\gamma'_1,\omicron'_1)$ which implies the desired result.

\end{itemize}

\item Let $e=\forexpr{x}{e_1}{e_2}{e_3}$. Then 
\[
\begin{array}{l}
\Gamma\vd e_1:\qualty{\uintty}{\prestg}{d}\mbox{,}\\
\Gamma\vd e_2:\qualty{\uintty}{\prestg}{d}\mbox{,}\\
(x:\qualty{\uintty}{\prestg}{d}),\Gamma\vd e_3:\qualty{t_1}{s_1}{d_1}\mbox{,}\\
t=\listty{\qualty{t_1}{s_1}{d_1}}\mbox{,}\quad s=\prestg\mbox{.}
\end{array}
\]
Furthermore,
\[
\begin{array}{l}
\semc{e_1}\gamma\phi\nu=\return((\monadic{v}_1,\monadic{c}_1),\gamma_1,\omicron_1,\nu_1)\mbox{,}\\
\semc{e_2}\gamma_1\phi\nu_1=\return((\monadic{v}_2,\monadic{c}_2),\gamma_2,\omicron_2,\nu_2)\mbox{.}
\end{array}
\]

Firstly, suppose that $\sem{e}\gamma'\phi\omicron'=\return(\monadic{v}',\tilde{\gamma}',\tilde{\omicron}')$. Then
\[
\begin{array}{l}
\sem{e_1}\gamma'\phi\omicron'=\return(\monadic{v}'_1,\gamma'_1,\omicron'_1)\mbox{,}\\
\sem{e_2}\gamma'_1\phi\omicron'_1=\return(\monadic{v}'_2,\gamma'_2,\omicron'_2)\mbox{.}
\end{array}
\]
By the induction hypothesis, $\omicron_1$ accepts~$\pi$, whereby $\gamma_1\bullet\pi\sim\gamma'_1$, $\nu_1\bullet\pi\sim\omicron'_1$, and $d=\publicdom$ implies $(\monadic{v}_1,\monadic{c}_1)\bullet\pi\sim\monadic{v}'_1$. Again by the induction hypothesis, $\omicron_2$ accepts~$\pi$, whereby $\gamma_2\bullet\pi\sim\gamma'_2$, $\nu_2\bullet\pi\sim\omicron'_2$, and $d=\publicdom$ implies $(\monadic{v}_2,\monadic{c}_2)\bullet\pi\sim\monadic{v}'_2$. Consider two cases:
\begin{itemize}
\item Let $\monadic{v}_1=\return i_1$ and $\monadic{v}_2=\return i_2$, where $i_1,i_2\in\NN$. Denoting $n=\max(0,i_2-i_1)$, we obtain
\[
\begin{array}{l}
\semc{e_3}\left((x,(\return i_1,\top)),\gamma_2\right)\phi\nu_2=\return((\monadic{v}_3,\monadic{c}_3),\gamma_3,\omicron_3,\nu_3)\mbox{,}\\
\semc{e_3}([x\mapsto(\return(i_1+1),\top)]\gamma_3)\phi\nu_3=\return((\monadic{v}_4,\monadic{c}_4),\gamma_4,\omicron_4,\nu_4)\mbox{,}\\
\dotfill\\
\semc{e_3}([x\mapsto(\return(i_1+n-1),\top)]\gamma_{n+1})\phi\nu_{n+1}=\return((\monadic{v}_{n+2},\monadic{c}_{n+2}),\gamma_{n+2},\omicron_{n+2},\nu_{n+2})\mbox{,}\\
\end{array}
\]
and $\semc{e}\gamma\phi\nu=\return((\monadic{v},\monadic{c}),\tilde{\gamma},\tilde{\omicron},\tilde{\nu})$ where
\[
\begin{array}{l}
(\monadic{v},\monadic{c})=(\return((\monadic{v}_3,\monadic{c}_3),\ldots,(\monadic{v}_{n+2},\monadic{c}_{n+2})),\top)\mbox{,}\quad\tilde{\gamma}=\tail\gamma_{n+2}\mbox{,}\quad\tilde{\omicron}=\omicron_1\ldots\omicron_{n+2}\mbox{,}\quad\tilde{\nu}=\nu_{n+2}\mbox{.}
\end{array}
\]
If $d\ne\publicdom$ then, by exactness, $\monadic{v}_1=\monadic{v}_2=\top$. Thus $d=\publicdom$ must hold. Hence $(\monadic{v}_1,\monadic{c}_1)\bullet\pi\sim\monadic{v}'_1$, $(\monadic{v}_2,\monadic{c}_2)\bullet\pi\sim\monadic{v}'_2$ which imply $\monadic{v}'_1=\return i_1$, $\monadic{v}'_2=\return i_2$. Consequently,
\[
\begin{array}{l}
\sem{e_3}\left((x,\return i_1),\gamma'_2\right)\phi\omicron'_2=\return(\monadic{v}'_3,\gamma'_3,\omicron'_3)\mbox{,}\\
\sem{e_3}([x\mapsto\return(i_1+1)]\gamma'_3)\phi\omicron'_3=\return(\monadic{v}'_4,\gamma'_4,\omicron'_4)\mbox{,}\\
\dotfill\\
\sem{e_3}([x\mapsto\return(i_1+n-1)]\gamma'_{n+1})\phi\omicron'_{n+1}=\return(\monadic{v}'_{n+2},\gamma'_{n+2},\omicron'_{n+2})\mbox{,}\\
\monadic{v}'=\return(\monadic{v}'_3,\ldots,\monadic{v}'_{n+2})\mbox{,}\quad\tilde{\gamma}'=\tail\gamma'_{n+2}\mbox{,}\quad\tilde{\omicron}'=\omicron'_{n+2}\mbox{.}
\end{array}
\]
As $\gamma_2\bullet\pi\sim\gamma'_2$ and $(\return i_1,\top)\bullet\pi\sim\return i_1$, we have $\left((x,(\return i_1,\top)),\gamma_2\right)\bullet\pi\sim\left((x,\return i_1),\gamma'_2\right)$. Hence by the induction hypothesis, $\omicron_3$ accepts~$\pi$, whereby $\gamma_3\bullet\pi\sim\gamma'_3$, $\nu_3\bullet\pi\sim\omicron'_3$, and $d_1=\publicdom$ or $s_1=\poststg$ implies $(\monadic{v}_3,\monadic{c}_3)\bullet\pi\sim\monadic{v}'_3$. Similarly, we get $\omicron_k$ accepting~$\pi$, $\gamma_k\bullet\pi\sim\gamma'_k$, $\nu_k\bullet\pi\sim\omicron'_k$, and $d_1=\publicdom$ or $s_1=\poststg$ implying $(\monadic{v}_k,\monadic{c}_k)\bullet\pi\sim\monadic{v}'_k$ for every $k=4,\ldots,n+2$. Obviously also $\tail\gamma_{n+2}\bullet\pi\sim\tail\gamma'_{n+2}$. Altogether, we have proved that $\omicron_1\ldots\omicron_{n+2}$ accepts~$\pi$, whereby $\tilde{\gamma}\bullet\pi\sim\tilde{\gamma}'$ and $\tilde{\nu}\bullet\pi\sim\tilde{\omicron}'$. Assume $d=\publicdom$. If $d_1=\publicdom$ or $s_1=\poststg$ then
\[
(\monadic{v},\monadic{c})\bullet\pi
=\return((\monadic{v}_3,\monadic{c}_3)\bullet\pi,\ldots,(\monadic{v}_{n+2},\monadic{c}_{n+2})\bullet\pi)
\sim\return(\monadic{v}'_3,\ldots,\monadic{v}'_{n+2})
=\monadic{v}'
\]
as required. If $d_1\ne\publicdom$ and $s_1=\prestg$ then, by exactness and exactness in circuit, $(\monadic{v}_k,\monadic{c}_k)=(\top,\top)$ and $\monadic{v}'_k=\top$. Hence $(\monadic{v}_k,\monadic{c}_k)\bullet\pi\sim\monadic{v}'_k$ in this case, too, whence all the desired claims follow.

\item If $\monadic{v}_1\ne\return i_1$ or $\monadic{v}_2\ne\return i_2$ for $i_1,i_2\in\NN$ then, by exactness, $d\ne\publicdom$ and $\monadic{v}_1=\monadic{v}_2=\top$. Hence $\semc{e}\gamma\phi\nu=\return((\monadic{v},\monadic{c}),\tilde{\gamma},\tilde{\omicron},\tilde{\nu})$ where
\[
\begin{array}{l}
(\monadic{v},\monadic{c})=(\top,\top)\mbox{,}\quad\tilde{\gamma}=\gamma_2\mbox{,}\quad\tilde{\omicron}=\omicron_1\omicron_2\mbox{,}\quad\tilde{\nu}=\nu_2\mbox{.}
\end{array}
\]
As $d\ne\publicdom$ and $e_1,e_2$ have stage $\prestg$, we must have $\monadic{v}'_1=\monadic{v}'_2=\top$ by exactness in circuit. Consequently,
\[
\monadic{v}'=\top\mbox{,}\quad\tilde{\gamma}'=\gamma'_2\mbox{,}\quad\tilde{\omicron}'=\omicron'_2\mbox{.}
\]
We obtain all the desired claims.

\end{itemize}

Conversely, suppose that $\tilde{\omicron}$ accepts~$\pi$. In all possible cases, this implies that $\omicron_1$ and $\omicron_2$ accept~$\pi$. By the induction hypothesis, $\sem{e_1}\gamma'\phi\omicron'=\return(\monadic{v}'_1,\gamma'_1,\omicron'_1)$, whereby $\gamma_1\bullet\pi\sim\gamma'_1$, $\nu_1\bullet\pi\sim\omicron'_1$, and $d=\publicdom$ implies $(\monadic{v}_1,\monadic{c}_1)\bullet\pi\sim\monadic{v}'_1$. By the induction hypothesis, $\sem{e_2}\gamma'_1\phi\omicron'_1=\return(\monadic{v}'_2,\gamma'_2,\omicron'_2)$, whereby $\gamma_2\bullet\pi\sim\gamma'_2$, $\nu_2\bullet\pi\sim\omicron'_2$, and $d=\publicdom$ implies $(\monadic{v}_2,\monadic{c}_2)\bullet\pi\sim\monadic{v}'_2$. Consider two cases:
\begin{itemize}
\item Let $\monadic{v}_1=\return i_1$ and $\monadic{v}_2=\return i_2$ where $i_1,i_2\in\NN$. Denoting $n=\max(0,i_2-i_1)$, we obtain
\[
\begin{array}{l}
\semc{e_3}\left((x,(\return i_1,\top)),\gamma_2\right)\phi\nu_2=\return((\monadic{v}_3,\monadic{c}_3),\gamma_3,\omicron_3,\nu_3)\mbox{,}\\
\semc{e_3}([x\mapsto(\return(i_1+1),\top)]\gamma_3)\phi\nu_3=\return((\monadic{v}_4,\monadic{c}_4),\gamma_4,\omicron_4,\nu_4)\mbox{,}\\
\dotfill\\
\semc{e_3}([x\mapsto(\return(i_1+n-1),\top)]\gamma_{n+1})\phi\nu_{n+1}=\return((\monadic{v}_{n+2},\monadic{c}_{n+2}),\gamma_{n+2},\omicron_{n+2},\nu_{n+2})\mbox{,}\\
\end{array}
\]
and $\semc{e}\gamma\phi\nu=\return((\monadic{v},\monadic{c}),\tilde{\gamma},\tilde{\omicron},\tilde{\nu})$ where
\[
\begin{array}{l}
(\monadic{v},\monadic{c})=(\return((\monadic{v}_3,\monadic{c}_3),\ldots,(\monadic{v}_{n+2},\monadic{c}_{n+2})),\top)\mbox{,}\\\tilde{\gamma}=\tail\gamma_{n+2}\mbox{,}\quad\tilde{\omicron}=\omicron_1\ldots\omicron_{n+2}\mbox{,}\quad\tilde{\nu}=\nu_{n+2}\mbox{.}
\end{array}
\]
If $d\ne\publicdom$ then, by exactness, $\monadic{v}_1=\monadic{v}_2=\top$. Thus $d=\publicdom$ must hold. Hence $(\monadic{v}_1,\monadic{c}_1)\bullet\pi\sim\monadic{v}'_1$ and $(\monadic{v}_2,\monadic{c}_2)\bullet\pi\sim\monadic{v}'_2$, which imply $\monadic{v}'_1=\return i_1$, $\monadic{v}'_2=\return i_2$. The assumption that $\tilde{\omicron}$ accepts~$\pi$ implies that $\omicron_3,\ldots,\omicron_{n+2}$ all accept~$\pi$. As $\gamma_2\bullet\pi\sim\gamma'_2$ and $(\return i_1,\top)\bullet\pi\sim\return i_1$, we have $\left((x,(\return i_1,\top)),\gamma_2\right)\bullet\pi\sim\left((x,\return i_1),\gamma'_2\right)$. By the induction hypothesis, $\sem{e_3}\left((x,\return i_1),\gamma'_2\right)\phi\omicron'_2=\return(\monadic{v}'_3,\gamma'_3,\omicron'_3)$, whereby $\gamma_3\bullet\pi\sim\gamma'_3$, $\nu_3\bullet\pi\sim\omicron'_3$, and $d_1=\publicdom$ or $s_1=\poststg$ implies $(\monadic{v}_3,\monadic{c}_3)\bullet\pi\sim\monadic{v}'_3$. Similarly, we get $\sem{e_k}([x\mapsto\return(i_1+k-3)]\gamma'_{k-1})\phi\omicron'_{k-1}=\return(\monadic{v}'_k,\gamma'_k,\omicron'_k)$ where $\gamma_k\bullet\pi\sim\gamma'_k$, $\nu_k\bullet\pi\sim\omicron'_k$, and $d_1=\publicdom$ or $s_1=\poststg$ implies $(\monadic{v}_k,\monadic{c}_k)\bullet\pi\sim\monadic{v}'_k$ for every $k=4,\ldots,n+2$. We obtain $\sem{e}\gamma'\phi\omicron'=\return(\return(\monadic{v}'_3,\ldots,\monadic{v}'_{n+2}),\gamma'_{n+2},\omicron'_{n+2})$ which implies the desired result.

\item If $\monadic{v}_1\ne\return i_1$ or $\monadic{v}_1\ne\return i_2$ for $i_1,i_2\in\NN$ then, by exactness, $d\ne\publicdom$ and $\monadic{v}_1=\monadic{v}_2=\top$. As $d\ne\publicdom$ and $e_1,e_2$ have stage $\prestg$, we must have $\monadic{v}'_1=\monadic{v}'_2=\top$ by exactness in circuit. Hence $\sem{e}\gamma'\phi\omicron'=\return(\top,\gamma'_2,\omicron'_2)$ which implies the desired result.

\end{itemize}

\item Let $e=\wireexpr{e_1}$. Then
\[
\begin{array}{l}
\Gamma\vd e_1:\qualty{t}{\prestg}{d}\mbox{,}\\
t\in\set{\uintmodty{\mbox{\lstinline+N+}},\boolmodty{\mbox{\lstinline+N+}}}\mbox{,}\quad s=\poststg\mbox{.}
\end{array}
\]
Furthermore,
\[
\begin{array}{l}
\semc{e_1}\gamma\phi\nu=\return((\monadic{v}_1,\monadic{c}_1),\gamma_1,\omicron_1,\nu_1)\mbox{,}\\
\end{array}
\]
and $\semc{e}\gamma\phi\nu=\return((\monadic{v},\monadic{c}),\tilde{\gamma},\tilde{\omicron},\tilde{\nu})$ where
\[
\begin{array}{l}
(\monadic{v},\monadic{c})=(\monadic{v}_1,\branching{\return(\mknode{\mkconst(n)})&\mbox{if $\monadic{v}_1=\return n$, $n\in\NN$}\\\return(\mknode{\mkconst(|b|)})&\mbox{if $\monadic{v}_1=\return b$, $b\in\BB$}\\\return(\mknode{\mkinput_d(\left(\nu_1\right)_d)})&\mbox{if $\monadic{v}_1=\top$}})\mbox{,}\\
\tilde{\gamma}=\gamma_1\mbox{,}\quad\tilde{\omicron}=\omicron_1\mbox{,}\quad\tilde{\nu}=\lam{d'}{\branching{\left(\nu_1\right)_{d'}+1&\mbox{if $d'=d$}\\\left(\nu_1\right)_{d'}&\mbox{otherwise}}}\mbox{.}
\end{array}
\]

Firstly, suppose that $\sem{e}\gamma'\phi\omicron'=\return(\monadic{v}',\tilde{\gamma}',\tilde{\omicron}')$. Then
\[
\begin{array}{l}
\sem{e_1}\gamma'\phi\omicron'=\return(\monadic{v}'_1,\gamma'_1,\omicron'_1)\mbox{,}\\
\monadic{v}'=\branching{\monadic{v}'_1&\mbox{if $d=\publicdom$}\\\head\left(\omicron'_1\right)_d&\mbox{otherwise}}\mbox{,}\\\tilde{\gamma}'=\gamma'_1\mbox{,}\quad\tilde{\omicron}'=\lam{d'}{\branching{\tail\left(\omicron'_1\right)_{d'}&\mbox{if $d'=d$}\\\left(\omicron'_1\right)_{d'}&\mbox{otherwise}}}\mbox{.}
\end{array}
\]
By the induction hypothesis, $\omicron_1$ accepts~$\pi$, whereby $\gamma_1\bullet\pi\sim\gamma'_1$, $\nu_1\bullet\pi\sim\omicron'_1$, and $d=\publicdom$ implies $(\monadic{v}_1,\monadic{c}_1)\bullet\pi\sim\monadic{v}'_1$. Hence $\tilde{\omicron}$ accepts~$\pi$ and $\tilde{\gamma}\bullet\pi\sim\tilde{\gamma}'$. We also get $\tilde{\nu}\bullet\pi\sim\tilde{\omicron}'$ since dropping one more item from the input list results in the tail of the result that would be obtained otherwise. To complete, we have to prove $(\monadic{v},\monadic{c})\bullet\pi\sim\monadic{v}'$. Consider two cases:
\begin{itemize}
\item If $d=\publicdom$ then $(\monadic{v}_1,\monadic{c}_1)\bullet\pi\sim\monadic{v}'_1$. By exactness and exactness in circuit, $\monadic{v}_1=\return v_1$ and $\monadic{v}'_1=\return v'_1$ where $v_1,v'_1\in t$. Thus $(\monadic{v},\monadic{c})\bullet\pi=(\monadic{v}_1,\monadic{c}_1)\bullet\pi\in\set{\return v_1,\return|v_1|}$ where the outcome depends on whether $t=\uintmodty{\mbox{\lstinline+N+}}$ or $t=\boolmodty{\mbox{\lstinline+N+}}$. In both cases, $\monadic{v}'_1=\return v_1$, leading to $(\monadic{v},\monadic{c})\bullet\pi\sim\monadic{v}'$ as needed.
\item If $d\ne\publicdom$ then, by exactness, $\monadic{v}_1=\top$. Hence
\[
\begin{array}{lcl}
(\monadic{v},\monadic{c})\bullet\pi
&=&(\top,\return(\mknode{\mkinput_d(\left(\nu_1\right)_d)}))\bullet\pi\\
&=&\return(\mknode{\mkinput_d(\left(\nu_1\right)_d)}(\pi))\\
&=&\return(\pi_d(\left(\nu_1\right)_d))\mbox{.}
\end{array}
\]
From $\nu_1\bullet\pi\sim\omicron'_1$, we get $\map\return\left(\drop\left(\nu_1\right)_d\pi_d\right)\sim\left(\omicron'_1\right)_d$, and taking the first elements in the lists occurring in the latter statement gives $\return(\pi_d(\left(\nu_1\right)_d))\sim\head\left(\omicron'_1\right)_d$ as needed.
\end{itemize}

Conversely, assume that $\tilde{\omicron}$ accepts~$\pi$. Then $\omicron_1$ accepts~$\pi$. By the induction hypothesis, $\sem{e_1}\gamma'\phi\omicron'=\return(\monadic{v}'_1,\gamma'_1,\omicron'_1)$ where $\gamma_1\bullet\pi\sim\gamma'_1$ and $\nu_1\bullet\pi\sim\omicron'_1$. Then $\sem{e}\gamma'\phi\omicron'=\return(\monadic{v}',\tilde{\gamma},\tilde{\omicron}')$ where $\monadic{v}'=\branching{\monadic{v}'_1&\mbox{if $d=\publicdom$}\\\head\left(\omicron'_1\right)_d&\mbox{otherwise}}$. The latter is well-defined if $d=\publicdom$; if $d\ne\publicdom$ then it is well defined, provided that $\left(\omicron'_1\right)_d$ is non-empty. As $\nu_1\bullet\pi\sim\omicron'_1$, this condition is equivalent to $(\nu_1)_d$ being smaller than the length of $\pi_d$. Since $(\monadic{v},\monadic{c})=(\top,\return(\mknode{\mkinput_d((\nu_1)_d)}))$ and $(\monadic{v},\monadic{c})\bullet\pi$ is well-defined, $(\nu_1)_d$ must indeed be smaller than the length of~$\pi_d$. The desired claim follows.

\item Let $e=\castexpr{e_1}{q}$. Then
\[
\begin{array}{l}
\Gamma\vd e_1:\qualty{t}{s'}{d'}\mbox{,}\\
s'\subtype s\mbox{,}\quad d'\subtype d\mbox{.}
\end{array}
\]
Furthermore,
\[
\begin{array}{l}
\semc{e_1}\gamma\phi\nu=\return((\monadic{v}_1,\monadic{c}_1),\gamma_1,\omicron_1,\nu_1)\mbox{,}\\
\end{array}
\]
and $\semc{e}\gamma\phi\nu=\return((\monadic{v},\monadic{c}),\tilde{\gamma},\tilde{\omicron},\tilde{\nu})$ where
\[
\begin{array}{l}
(\monadic{v},\monadic{c})=(\branching{\monadic{v}_1&\mbox{if $d=\publicdom$}\\\top&\mbox{otherwise}},\branching{\monadic{c}_1&\mbox{if $s=\poststg$}\\\top&\mbox{otherwise}})\mbox{,}\\
\tilde{\gamma}=\gamma_1\mbox{,}\quad\tilde{\omicron}=\omicron_1\mbox{,}\quad\tilde{\nu}=\nu_1\mbox{.}
\end{array}
\]

Firstly, suppose that $\sem{e}\gamma'\phi\omicron'=\return(\monadic{v}',\tilde{\gamma}',\tilde{\omicron}')$. Then
\[
\begin{array}{l}
\sem{e_1}\gamma'\phi\omicron'=\return(\monadic{v}'_1,\gamma'_1,\omicron'_1)\mbox{,}\\
\monadic{v}'=\branching{\monadic{v}'_1&\mbox{if $d=\publicdom$ or $s=\poststg$}\\\top&\mbox{otherwise}}\mbox{,}\\\tilde{\gamma}'=\gamma'_1\mbox{,}\quad\tilde{\omicron}'=\omicron'_1\mbox{.}
\end{array}
\]
By the induction hypothesis, $\omicron_1$ accepts~$\pi$, whereby $\gamma_1\bullet\pi\sim\gamma'_1$, $\nu_1\bullet\pi\sim\omicron'_1$, and $d'=\publicdom$ or $s'=\poststg$ implies $(\monadic{v}_1,\monadic{c}_1)\bullet\pi\sim\monadic{v}'_1$. Hence $\tilde{\omicron}$ accepts~$\pi$, whereby $\tilde{\gamma}\bullet\pi\sim\tilde{\gamma}'$ and $\tilde{\nu}\bullet\pi\sim\tilde{\omicron}'$. Suppose that  $d=\publicdom$ or $s=\poststg$. Then $d'=\publicdom$ or $s'=\poststg$ since $d'\subtype d$ and $s'\subtype s$. Thus $(\monadic{v}_1,\monadic{c}_1)\bullet\pi\sim\monadic{v}'_1$ and $\monadic{v}'=\monadic{v}'_1$. Consider two cases:
\begin{itemize}
\item If $d=\publicdom$ then $(\monadic{v},\monadic{c})\bullet\pi=(\monadic{v},\monadic{c}_1)\bullet\pi=(\monadic{v}_1,\monadic{c}_1)\bullet\pi$ and the desired claim follows.
\item If $d\ne\publicdom$ and $s=\poststg$ then $s'=\poststg$. If $d'\ne\publicdom$ then $\monadic{v}_1=\top$ by exactness, whence $(\monadic{v},\monadic{c})=(\monadic{v}_1,\monadic{c}_1)$ and the desired claim follows. Suppose now that $d'=\publicdom$. By types being well-structured, $t=\uintmodty{\mbox{\lstinline+N+}}$ or $t=\boolmodty{\mbox{\lstinline+N+}}$, whence exactness implies $\monadic{v}_1=\return v_1$ where $v_1\in\NN$ or $v_1\in\BB$. Thus $(\monadic{v}_1,\monadic{c}_1)\bullet\pi=(\return v_1,\monadic{c}_1)\bullet\pi\sim\return v_1$. As by exactness, $\monadic{c}_1=\return c_1$ and $\return(c_1(\pi))\sim\return v_1$, we obtain
\[
(\monadic{v},\monadic{c})\bullet\pi=(\top,\monadic{c}_1)\bullet\pi=\return(c_1(\pi))\sim\return v_1\mbox{.}
\]
But $(\return v_1,\return c_1)\bullet\pi\sim\return v_1$, too, while we also know that $(\return v_1,\return c_1)\bullet\pi\sim\monadic{v}'_1$. Hence $\return v_1\sim\monadic{v}'_1$ and the desired claim follows.
\end{itemize}

Conversely, assume that $\tilde{\omicron}$ accepts~$\pi$. Then $\omicron_1$ accepts~$\pi$. By the induction hypothesis, $\sem{e_1}\gamma'\phi\omicron'=\return(\monadic{v}'_1,\gamma'_1,\omicron'_1)$ for some $\monadic{v}'_1,\gamma'_1,\omicron'_1$. Hence $\sem{e}\gamma'\phi\omicron'$ does not fail.

\item Let $e=\assignexpr{e_1}{e_2}$ where $e_1=\loadexpr{\loadexpr{\loadexpr{x}{y_1}}{y_2}\ldots}{y_n}$. By Lemma~\ref{typesystem:lhslemma},
\[
\begin{array}{l}
\Gamma\vd x:q'\mbox{,}\\
\Gamma\vd y_k:q_k\mbox{ for every $k=1,\ldots,n$,}\\
\Gamma\vd e_1:q_{n+1}\mbox{,}\\
\Gamma\vd e_2:q_{n+1}\mbox{,}\\
t=\unitty\mbox{,}\quad s=\prestg\mbox{,}\quad d=\publicdom\mbox{,}
\end{array}
\]
where $q_k=\qualty{\uintty}{\prestg}{d_k}$ for each $k=1,\ldots,n$, and
\[
q'=\qualty{\listty{\ldots\listty{\qualty{\listty{\qualty{\listty{q_{n+1}}}{\prestg}{d_n}}}{\prestg}{d_{n-1}}}\ldots}}{\prestg}{d_1}\mbox{.}
\]
Denote $q_{n+1}=(\qualty{t_{n+1}}{s_{n+1}}{d_{n+1}})$. Furthermore,
\[
\begin{array}{l}
\semc{y_1}\gamma\phi\nu=\return((\monadic{i}_1,\monadic{c}_1),\gamma_1,\omicron_1,\nu_1)\mbox{,}\\
\semc{y_2}\gamma_1\phi\nu_1=\return((\monadic{i}_2,\monadic{c}_2),\gamma_2,\omicron_2,\nu_2)\mbox{,}\\
\dotfill\\
\semc{y_n}\gamma_{n-1}\phi\nu_{n-1}=\return((\monadic{i}_n,\monadic{c}_n),\gamma_n,\omicron_n,\nu_n)\mbox{,}\\
\sem{e_2}\gamma_n\phi\nu_n=\return((\monadic{v}_1,\monadic{c}_{n+1}),\gamma_{n+1},\omicron_{n+1},\nu_{n+1})\mbox{,}
\end{array}
\]
and $\semc{e}\gamma\phi\nu=\return((\monadic{v},\monadic{c}),\tilde{\gamma},\tilde{\omicron},\tilde{\nu})$ where
\[
\begin{array}{l}
(\monadic{v},\monadic{c})=(\return\singleton,\top)\mbox{,}\\
\tilde{\gamma}=[x\mapsto\updatec(\gamma(x),\monadic{i}_1\ldots\monadic{i}_n,(\monadic{v}_1,\monadic{c}_{n+1}))]\gamma_{n+1}\mbox{,}\quad\tilde{\omicron}=\omicron_1\ldots\omicron_{n+1}\mbox{,}\quad\tilde{\nu}=\nu_{n+1}\mbox{.}
\end{array}
\]

Firstly, suppose that $\sem{e}\gamma'\phi\omicron'=\return(\monadic{v}',\tilde{\gamma}',\tilde{\omicron}')$. Then
\[
\begin{array}{l}
\sem{y_1}\gamma'\phi\omicron'=\return(\monadic{i}'_1,\gamma'_1,\omicron'_1)\mbox{,}\\
\sem{y_2}\gamma'_1\phi\omicron'_1=\return(\monadic{i}'_2,\gamma'_2,\omicron'_2)\mbox{,}\\
\dotfill\\
\sem{y_n}\gamma'_{n-1}\phi\omicron'_{n-1}=\return(\monadic{i}'_n,\gamma'_n,\omicron'_n)\mbox{,}\\
\sem{e_2}\gamma'_n\phi\omicron'_n=\return(\monadic{v}'_1,\gamma'_{n+1},\omicron'_{n+1})\mbox{,}\\
\monadic{v}'=\return\singleton\mbox{,}\quad\tilde{\gamma}'=[x\mapsto\update(\gamma'(x),\monadic{i}'_1\ldots\monadic{i}'_n,\monadic{v}'_1)]\gamma'_{n+1}\mbox{,}\quad\tilde{\omicron}'=\omicron'_{n+1}\mbox{.}
\end{array}
\]
By the induction hypothesis, $\omicron_1$ accepts~$\pi$, whereby $\gamma_1\bullet\pi\sim\gamma'_1$, $\nu_1\bullet\pi\sim\omicron'_1$, and $d_1=\publicdom$ implies $(\monadic{i}_1,\monadic{c}_1)\bullet\pi\sim\monadic{i}'_1$. Similarly, we see that $\omicron_k$ accepts~$\pi$, whereby $\gamma_k\bullet\pi\sim\gamma'_k$, $\nu_k\bullet\pi\sim\omicron'_k$, and $d_k=\publicdom$ implies $(\monadic{i}_k,\monadic{c}_k)\bullet\pi\sim\monadic{i}'_k$ for every $k=2,\ldots,n$. Furthermore, the induction hypothesis implies $\omicron_{n+1}$ accepting~$\pi$, whereby $\gamma_{n+1}\bullet\pi\sim\gamma'_1$, $\nu_{n+1}\bullet\pi\sim\omicron'_{n+1}$, and $d_{n+1}=\publicdom$ or $s_{n+1}=\poststg$ implying $(\monadic{v}_1,\monadic{c}_{n+1})\bullet\pi\sim\monadic{v}'_1$. Hence $\tilde{\omicron}$ accepts~$\pi$ and $\tilde{\nu}\bullet\pi\sim\tilde{\omicron}'$. We also have $(\monadic{v},\monadic{c})\bullet\pi\sim\monadic{v}'$ since $(\return\singleton,\top)\bullet\pi=\return\singleton$. Finally, note that $d_k\ne\publicdom$ for any $k=1,\ldots,n$ implies $(\monadic{i}_k,\monadic{c}_k)=(\top,\top)$ and $\monadic{i}'_k=\top$ by exactness and exactness in circuit, respectively. As $(\top,\top)\bullet\pi=\top$, we have $(\monadic{i}_k,\monadic{c}_k)\bullet\pi\sim\monadic{i}'_k$ for every $k=1,\ldots,n$. In addition, if $d_{n+1}\ne\publicdom$ and $s_{n+1}=\prestg$ then $(\monadic{v}_1,\monadic{c}_{n+1})=(\top,\top)$ and $\monadic{v}'_1=\top$ by exactness and exactness in circuit, respectively, whence $(\monadic{v}_1,\monadic{c}_{n+1})\bullet\pi\sim\monadic{v}'_1$ in all cases. By Lemma~\ref{compilation:updlemma}, $\updatec(\gamma(x),\monadic{i}_1\ldots\monadic{i}_n,(\monadic{v}_1,\monadic{c}_{n+1}))\bullet\pi\sim\update(\gamma'(x),\monadic{i}'_1\ldots\monadic{i}'_n,\monadic{v}'_1)$. Consequently, $[x\mapsto\updatec(\gamma(x),\monadic{i}_1\ldots\monadic{i}_n,(\monadic{v}_1,\monadic{c}_{n+1}))]\gamma_{n+1}\bullet\pi\sim[x\mapsto\update(\gamma'(x),\monadic{i}'_1\ldots\monadic{i}'_n,\monadic{v}'_1)]\gamma'_{n+1}$ and the desired claim follows.

Conversely, assume that $\tilde{\omicron}$ accepts~$\pi$. Then $\omicron_1,\ldots,\omicron_{n+1}$ all accept~$\pi$. By the induction hypothesis, $\sem{y_1}\gamma'\phi\omicron'=\return(\monadic{i}'_1,\gamma'_1,\omicron'_1)$ where $\gamma_1\bullet\pi\sim\gamma'_1$, $\nu_1\bullet\pi\sim\omicron'_1$ and $d_1=\publicdom$ implies $(\monadic{i}_1,\monadic{c}_1)\bullet\pi\sim\monadic{i}'_1$. Similarly, we see for every $k=2,\ldots,n$ that $\sem{y_k}\gamma'_{k-1}\phi\omicron'_{k-1}=\return(\monadic{i}'_k,\gamma'_k,\omicron'_k)$ where $\gamma_k\bullet\pi\sim\gamma'_k$, $\nu_k\bullet\pi\sim\omicron'_k$ and $d_k=\publicdom$ implies $(\monadic{i}_k,\monadic{c}_k)\bullet\pi\sim\monadic{i}'_k$. Furthermore, $\sem{e_2}\gamma'_n\phi\omicron'_n=\return(\monadic{v}'_1,\gamma'_{n+1},\omicron'_{n+1})$ where $d_{n+1}=\publicdom$ or $s_{n+1}=\poststg$ implies $(\monadic{v}_1,\monadic{c}_{n+1})\bullet\pi\sim\monadic{v}'_1$. Similarly to the first part of the proof, we see that $(\monadic{i}_k,\monadic{c}_k)\bullet\pi\sim\monadic{i}'_k$ and $(\monadic{v}_1,\monadic{c}_{n+1})\bullet\pi\sim\monadic{v}'_1$ in all cases. By Lemma~\ref{compilation:updlemma}, $\update(\gamma'(x),\monadic{i}'_1\ldots\monadic{i}'_n,\monadic{v}'_1)$ is well-defined. Therefore $\sem{e}\gamma'\phi\omicron'$ does not fail.

\item Let $e=\loadexpr{e_1}{e_2}$. Then
\[
\begin{array}{l}
\Gamma\vd e_1:\qualty{\listty{q}}{\prestg}{d_1}\mbox{,}\\
\Gamma\vd e_2:\qualty{\uintty}{\prestg}{d_1}\mbox{.}
\end{array}
\]
Furthermore,
\[
\begin{array}{l}
\semc{e_1}\gamma\phi\nu=\return((\monadic{a},\monadic{c}_1),\gamma_1,\omicron_1,\nu_1)\mbox{,}\\
\semc{e_2}\gamma_1\phi\nu_1=\return((\monadic{i},\monadic{c}_2),\gamma_2,\omicron_2,\nu_2)\mbox{,}
\end{array}
\]
and $\semc{e}\gamma\phi\nu=\return((\monadic{v},\monadic{c}),\tilde{\gamma},\tilde{\omicron},\tilde{\nu})$ where
\[
\begin{array}{l}
(\monadic{v},\monadic{c})=\branching{(\monadic{v}_1,\monadic{c}_3)&\mbox{if $\mcomp{a\gets\monadic{a}\hstop i\gets\monadic{i}\hstop\return a_i}=\return(\monadic{v}_1,\monadic{c}_3)$}\\(\top,\top)&\mbox{otherwise}}\mbox{,}\\
\tilde{\gamma}=\gamma_2\mbox{,}\quad\tilde{\omicron}=\omicron_1\omicron_2\mbox{,}\quad\tilde{\nu}=\nu_2\mbox{.}
\end{array}
\]

Firstly, suppose that $\sem{e}\gamma'\phi\omicron'=\return(\monadic{v}',\tilde{\gamma}',\tilde{\omicron}')$. Then
\[
\begin{array}{l}
\sem{e_1}\gamma'\phi\omicron'=\return(\monadic{a}',\gamma'_1,\omicron'_1)\mbox{,}\\
\sem{e_2}\gamma'_1\phi\omicron'_1=\return(\monadic{i}',\gamma'_2,\omicron'_2)\mbox{,}\\
\monadic{v}'=\mcomp{a\gets\monadic{a}'\hstop i\gets\monadic{i}'\hstop a_i}\mbox{,}\quad\tilde{\gamma}'=\gamma'_2\mbox{,}\quad\tilde{\omicron}'=\omicron'_2\mbox{.}
\end{array}
\]
By the induction hypothesis, $\omicron_1$ accepts~$\pi$, whereby $\gamma_1\bullet\pi\sim\gamma'_1$, $\nu_1\bullet\pi\sim\omicron'_1$, and $d_1=\publicdom$ implies $(\monadic{a},\monadic{c}_1)\bullet\pi\sim\monadic{a}'$. By the induction hypothesis again, $\omicron_2$ accepts~$\pi$, whereby $\gamma_2\bullet\pi\sim\gamma'_2$, $\nu_2\bullet\pi\sim\omicron'_2$, and $d_1=\publicdom$ implies $(\monadic{i},\monadic{c}_2)\bullet\pi\sim\monadic{i}'$. Hence $\tilde{\omicron}$ accepts~$\pi$, whereby $\tilde{\gamma}\bullet\pi\sim\tilde{\gamma}'$ and $\tilde{\nu}\bullet\pi\sim\tilde{\omicron}'$. Suppose that $d=\publicdom$ or $s=\poststg$. Then $d_1=\publicdom$ because types are well-structured. Thus $(\monadic{a},\monadic{c}_1)\bullet\pi\sim\monadic{a}'$ and $(\monadic{i},\monadic{c}_2)\bullet\pi\sim\monadic{i}'$. By exactness and exactness in circuit, $\monadic{a}=\return((\monadic{a}_1,\monadic{c}'_1),\ldots,(\monadic{a}_l,\monadic{c}'_l))$ and $\monadic{a}'=\return(\monadic{a}'_1,\ldots,\monadic{a}'_{l'})$. By $(\monadic{a},\monadic{i})\bullet\pi\sim\monadic{a}'$, we obtain $l=l'$ and $(\monadic{a}_k,\monadic{c}'_k)\bullet\pi\sim\monadic{a}'_k$ for every $k=1,\ldots,l$. For similar reasons, we obtain $\monadic{i}=\monadic{i}'=\return i$. Thus $(\monadic{v},\monadic{c})=(\monadic{a}_i,\monadic{c}'_i)$ and $\monadic{v}'=\monadic{a}'_i$. The desired result follows by $(\monadic{a}_i,\monadic{c}'_i)\bullet\pi\sim\monadic{a}'_i$.

Conversely, assume that $\tilde{\omicron}$ accepts~$\pi$. Then both $\omicron_1$ and $\omicron_2$ accept~$\pi$. By the induction hypothesis, $\sem{e_1}\gamma'\phi\omicron'=\return(\monadic{a}',\gamma'_1,\omicron'_1)$ where $\gamma_1\bullet\pi\sim\gamma'_1$, $\nu_1\bullet\pi\sim\omicron'_1$ and $d_1=\publicdom$ implies $(\monadic{a},\monadic{c}_1)\bullet\pi\sim\monadic{a}'$. By the induction hypothesis, $\sem{e_2}\gamma'_1\phi\omicron'_1=\return(\monadic{i}',\gamma'_2,\omicron'_2)$, whereby $d_1=\publicdom$ implies $(\monadic{i},\monadic{c}_2)\bullet\pi\sim\monadic{i}'$. Consider two cases:
\begin{itemize}
\item If $d=\publicdom$ or $s=\poststg$ then, like in the first half of the proof, we obtain $\monadic{a}=\return((\monadic{a}_1,\monadic{c}'_1),\ldots,(\monadic{a}_l,\monadic{c}'_l))$, $\monadic{a}'=\return(\monadic{a}'_1,\ldots,\monadic{a}'_l)$ and $\monadic{i}=\monadic{i}'=\return i$, whereby $i\leq l$ since $(\monadic{v},\monadic{c})$ is well-defined. Hence $\sem{e}\gamma'\phi\omicron'$ is of the form $\return(\monadic{v}',\tilde{\gamma},\tilde{\omicron}')$, implying the desired result.
\item If $d\ne\publicdom$ and $s=\prestg$ then, by exactness in circuit, $\monadic{a}'=\top$ and $\monadic{i}'=\top$, whence $\sem{e}\gamma'\phi\omicron'=\return(\top,\gamma'_2,\omicron'_2)$, implying the desired result.
\end{itemize}

\item Let $e=\stmtcomp{\letexpr{x}{e_1}}{e_2}$. Then
\[
\begin{array}{l}
\Gamma\vd e_1:\qualty{t_1}{s_1}{d_1}\mbox{,}\\
(x:\qualty{t_1}{s_1}{d_1}),\Gamma\vd e_2:q\mbox{.}
\end{array}
\]
Furthermore,
\[
\begin{array}{l}
\semc{e_1}\gamma\phi\nu=\return((\monadic{v}_1,\monadic{c}_1),\gamma_1,\omicron_1,\nu_1)\mbox{,}\\
\semc{e_2}((x,(\monadic{v}_1,\monadic{c}_1)),\gamma_1)\phi\nu_1=\return((\monadic{v}_2,\monadic{c}_2),\gamma_2,\omicron_2,\nu_2)\mbox{,}
\end{array}
\]
and $\semc{e}\gamma\phi\nu=\return((\monadic{v},\monadic{c}),\tilde{\gamma},\tilde{\omicron},\tilde{\nu})$ where
\[
\begin{array}{l}
(\monadic{v},\monadic{c})=(\monadic{v}_2,\monadic{c}_2)\mbox{,}\quad\tilde{\gamma}=\tail\gamma_2\mbox{,}\quad\tilde{\omicron}=\omicron_1\omicron_2\mbox{,}\quad\tilde{\nu}=\nu_2\mbox{.}
\end{array}
\]

Firstly, suppose that $\sem{e}\gamma'\phi\omicron'=\return(\monadic{v}',\tilde{\gamma}',\tilde{\omicron}')$. Then
\[
\begin{array}{l}
\sem{e_1}\gamma'\phi\omicron'=\return(\monadic{v}'_1,\gamma'_1,\omicron'_1)\mbox{,}\\
\sem{e_2}((x,\monadic{v}'_1),\gamma'_1)\phi\omicron'_1=\return(\monadic{v}'_2,\gamma'_2,\omicron'_2)\mbox{,}\\
\monadic{v}'=\monadic{v}'_2\mbox{,}\quad\tilde{\gamma}'=\tail\gamma'_2\mbox{,}\quad\tilde{\omicron}'=\omicron'_2\mbox{.}
\end{array}
\]
By the induction hypothesis, $\omicron_1$ accepts~$\pi$, whereby $\gamma_1\bullet\pi\sim\gamma'_1$, $\nu_1\bullet\pi\sim\omicron'_1$, and $d_1=\publicdom$ or $s_1=\poststg$ implies $(\monadic{v}_1,\monadic{c}_1)\bullet\pi\sim\monadic{v}'_1$. If $d_1\ne\publicdom$ and $s_1=\prestg$ then $(\monadic{v}_1,\monadic{c}_1)=(\top,\top)$ and $\monadic{v}'_1=\top$ by exactness and exactness in circuit, respectively. Hence $(\monadic{v}_1,\monadic{c}_1)\bullet\pi\sim\monadic{v}'_1$ in all cases, establishing $((x,(\monadic{v}_1,\monadic{c}_1)),\gamma_1)\bullet\pi\sim((x,\monadic{v}'_1),\gamma'_1)$. By the induction hypothesis again, $\omicron_2$ accepts~$\pi$, whereby $\gamma_2\bullet\pi\sim\gamma'_2$, $\nu_2\bullet\pi\sim\omicron'_2$, and $d=\publicdom$ or $s=\poststg$ implies $(\monadic{v}_2,\monadic{c}_2)\bullet\pi\sim\monadic{v}'_2$. Hence $\tilde{\omicron}$ accepts~$\pi$, whereby $\tilde{\nu}\bullet\pi\sim\tilde{\omicron}'$ and $d=\publicdom$ or $s=\poststg$ implies $(\monadic{v},\monadic{c})\bullet\pi\sim\monadic{v}'$. We obviously obtain also $\tilde{\gamma}\bullet\pi\sim\tilde{\gamma}'$.

Conversely, assume that $\tilde{\omicron}$ accepts~$\pi$. Then both $\omicron_1$ and $\omicron_2$ accept~$\pi$. By the induction hypothesis, $\sem{e_1}\gamma'\phi\omicron'=\return(\monadic{v}'_1,\gamma'_1,\omicron'_1)$ where $\gamma_1\bullet\pi\sim\gamma'_1$, $\nu_1\bullet\pi\sim\omicron'_1$, and $d_1=\publicdom$ or $s_1=\poststg$ implies $(\monadic{v}_1,\monadic{c}_1)\bullet\pi\sim\monadic{v}'_1$. Like in the proof of the first half, we obtain $((x,(\monadic{v}_1,\monadic{c}_1)),\gamma_1)\bullet\pi\sim((x,\monadic{v}'_1),\gamma'_1)$. By the induction hypothesis, $\sem{e_2}((x,\monadic{v}'_1),\gamma'_1)\phi\omicron'_1=\return(\monadic{v}'_2,\gamma'_2,\omicron'_2)$. Therefore $\sem{e}\gamma'\phi\omicron'=\return(\monadic{v}'_2,\tail\gamma'_2,\omicron'_2)$ and the desired result follows.

\item Let $e=\stmtcomp{e_1}{e_2}$. Then
\[
\begin{array}{l}
\Gamma\vd e_1:\qualty{t_1}{s_1}{d_1}\mbox{,}\\
\Gamma\vd e_2:q\mbox{.}
\end{array}
\]
Furthermore,
\[
\begin{array}{l}
\semc{e_1}\gamma\phi\nu=\return((\monadic{v}_1,\monadic{c}_1),\gamma_1,\omicron_1,\nu_1)\mbox{,}\\
\semc{e_2}\gamma_1\phi\nu_1=\return((\monadic{v}_2,\monadic{c}_2),\gamma_2,\omicron_2,\nu_2)\mbox{,}
\end{array}
\]
and $\semc{e}\gamma\phi\nu=\return((\monadic{v},\monadic{c}),\tilde{\gamma},\tilde{\omicron},\tilde{\nu})$ where
\[
\begin{array}{l}
(\monadic{v},\monadic{c})=(\monadic{v}_2,\monadic{c}_2)\mbox{,}\quad\tilde{\gamma}=\gamma_2\mbox{,}\quad\tilde{\omicron}=\omicron_1\omicron_2\mbox{,}\quad\tilde{\nu}=\nu_2\mbox{.}
\end{array}
\]

Firstly, suppose that $\sem{e}\gamma'\phi\omicron'=\return(\monadic{v}',\tilde{\gamma}',\tilde{\omicron}')$. Then
\[
\begin{array}{l}
\sem{e_1}\gamma'\phi\omicron'=\return(\monadic{v}'_1,\gamma'_1,\omicron'_1)\mbox{,}\\
\sem{e_2}\gamma'_1\phi\omicron'_1=\return(\monadic{v}'_2,\gamma'_2,\omicron'_2)\mbox{,}\\
\monadic{v}'=\monadic{v}'_2\mbox{,}\quad\tilde{\gamma}'=\gamma'_2\mbox{,}\quad\tilde{\omicron}'=\omicron'_2\mbox{.}
\end{array}
\]
By the induction hypothesis, $\omicron_1$ accepts~$\pi$, whereby $\gamma_1\bullet\pi\sim\gamma'_1$, $\nu_1\bullet\pi\sim\omicron'_1$, and $d_1=\publicdom$ or $s_1=\poststg$ implies $(\monadic{v}_1,\monadic{c}_1)\bullet\pi\sim\monadic{v}'_1$. By the induction hypothesis again, $\omicron_2$ accepts~$\pi$, whereby $\gamma_2\bullet\pi\sim\gamma'_2$, $\nu_2\bullet\pi\sim\omicron'_2$, and $d=\publicdom$ or $s=\poststg$ implies $(\monadic{v}_2,\monadic{c}_2)\bullet\pi\sim\monadic{v}'_2$. Hence $\tilde{\omicron}$ accepts~$\pi$, whereby $\tilde{\gamma}\bullet\pi\sim\tilde{\gamma}'$, $\tilde{\nu}\bullet\pi\sim\tilde{\omicron}'$, and $d=\publicdom$ or $s=\poststg$ implies $(\monadic{v},\monadic{c})\bullet\pi\sim\monadic{v}'$.

Conversely, assume that $\tilde{\omicron}$ accepts~$\pi$. Then both $\omicron_1$ and $\omicron_2$ accept~$\pi$. By the induction hypothesis, $\sem{e_1}\gamma'\phi\omicron'=\return(\monadic{v}'_1,\gamma'_1,\omicron'_1)$ where $\gamma_1\bullet\pi\sim\gamma'_1$ and $\nu_1\bullet\pi\sim\omicron'_1$. By the induction hypothesis, $\sem{e_2}\gamma'_1\phi\omicron'_1=\return(\monadic{v}'_2,\gamma'_2,\omicron'_2)$. Therefore $\sem{e}\gamma'\phi\omicron'=\return(\monadic{v}'_2,\gamma'_2,\omicron'_2)$, implying the desired result.

\end{itemize}
\end{proof}

%\section{Appendix}

%Text of appendix \ldots

\end{document}